\theoremstyle{plain}
\newtheorem{theorem}{Theorem} 
\newtheorem{proposition}{Proposition} 
\newtheorem{corollary}{Corollary}
\newtheorem{lemma}{Lemma}
\theoremstyle{remark}
\newtheorem{remark}{Remark}
\theoremstyle{definition}
\newtheorem{definition}{Definition}
\newcommand{\rmr}{\mathrm{r}}
\DeclareMathOperator\E{E}
\let\P\relax
\DeclareMathOperator\P{P}
\newcommand{\Nc}{\mathcal{N}}
\newcommand{\Tc}{\mathcal{T}}
\newcommand{\Uc}{\mathcal{U}}
\newcommand{\Vc}{\mathcal{V}}
\newcommand{\Wc}{\mathcal{W}}
\newcommand{\Xc}{\mathcal{X}}
\newcommand{\Yc}{\mathcal{Y}}
\def\eps{\epsilon}
\def\la{\lambda}
\renewcommand{\thesection}{\arabic{section}}
\renewcommand{\thesubsection}{\thesection.\arabic{subsection}}
\def\@seccntformat#1{\@ifundefined{#1@cntformat}%
   {\csname the#1\endcsname\space}
   {\csname #1@cntformat\endcsname}}
\newcommand\section@cntformat{\thesection.\space}       
\newcommand\subsection@cntformat{\thesubsection.\space} 
\tikzstyle{vecArrow} = [thick, decoration={markings,mark=at position
\tikzstyle{innerWhite} = [semithick, white,line width=1.4pt, shorten >= 4.5pt]
\tikzstyle{arrow} = [thick,->]
\tikzstyle{mynode} = [rectangle, rounded corners, minimum width=3cm, minimum height=1cm,text centered, draw=black]
\tikzstyle{mynode1} = [rectangle, dashed, rounded corners, minimum width=2.5cm, minimum height=1cm,text centered, draw=black]
\begin{document}

\title{\huge A Strengthened Cutset Upper Bound on the Capacity of the Relay Channel and Applications}
\author{Abbas El Gamal, Amin Gohari and Chandra Nair}
\maketitle

\begin{abstract}
We develop a new upper bound on the capacity of the relay channel that is tighter than previously known upper bounds. This upper bound is proved using traditional weak converse techniques involving mutual information inequalities and Gallager-type explicit identification of auxiliary random variables. We show that the new upper bound is strictly tighter than all previous bounds for the Gaussian relay channel with non-zero channel gains. When specialized to the relay channel with orthogonal receiver components, the bound resolves a conjecture by Kim on a class of deterministic relay channels. When further specialized to the class of product-form relay channels with orthogonal receiver components, 
the bound resolves a generalized version of Cover's relay channel problem, recovers the recent upper bound for the Gaussian case by Wu et al., and improves upon the recent bounds for the binary symmetric case by Wu et al. and Barnes et al., which were obtained using non-traditional geometric proof techniques. For the special class of a relay channel with orthogonal receiver components, we develop another upper bound on the capacity which utilizes an auxiliary receiver and show that it is strictly tighter than the bound by Tandon and Ulukus. Finally, we show through the Gaussian relay channel with i.i.d. relay output sequence that the bound with the auxiliary receiver can be strictly tighter than our main bound.\footnote{This paper was presented in part at the IEEE Symposium on Information Theory (ISIT) 2021.}

 \end{abstract}

\section{Introduction}

The relay channel, first introduced by van-der Meulen in 1971~\cite{van1971three}, is a canonical model of a multi-hop communication network in which a sender $X$ wishes to communicate to a receiver $Y$ with the help of a relay $(X_\mathrm{r},Y_\mathrm{r})$ over a memoryless channel of the form $p(y,y_\mathrm{r}|x,x_\mathrm{r})$. The capacity of this channel, which is the highest achievable rate from the sender $X$ to the receiver $Y$, is not known in general. In~\cite{cover1979capacity}, the cutset upper bound and several lower bounds on the capacity, later termed decode-forward, partial decode-forward, and compress-forward, were established. The cutset bound was shown to coincide with one or more of these lower bounds for several classes of relay channels, including degraded~\cite{cover1979capacity}, semi-deterministic~\cite{elgamal1982capacity}, and relay channels with orthogonal sender components~\cite{elgamal2005capacity}. In~\cite{Aleksic09}, the cutset upper bound was shown to not be tight, i.e., to be strictly larger than the capacity, for a specific relay channel with orthogonal receiver components. These results and others are detailed in Chapter 19 of \cite{elk11}. 

Motivated by Cover's problem concerning a relay channel with orthogonal receiver components~\cite{cover1987capacity}, a series of recent papers~\cite{wu2017improving, Wu19, Liu19} developed specialized upper bounds that are also tighter than the cutset bound. While the older bounds in~\cite{cover1979capacity,tandon2008new,Aleksic09} use standard weak converse techniques involving basic mutual information inequalities and identities and Gallager-type auxiliary random variable identifications, the recent bounds in~\cite{wu2017improving, Wu19, Liu19} for symmetric Gaussian and binary symmetric relay channels with orthogonal receiver components use more sophisticated arguments involving the blowing-up lemma, spherical rearrangement and hypercontractivity. More recently, Gohari-Nair~\cite{gon22} developed a new upper bound on the capacity of the general relay channel and showed that it can be strictly tighter than the cutset bound. This upper bound combines traditional converse techniques, including identification of auxiliary random variables using past and future channel variable sequences which have been used in previous converse proofs, for example, in~\cite{csiszar1978broadcast,gamal1979capacity}, along with a new idea of introducing auxiliary receivers. 

In this paper, which is a more complete and extended version of~\cite{ISITVersion}, we establish a new upper bound on the capacity of the relay channel. We show through several applications and examples that this bound is tighter than all previous upper bounds, including the cutset bound and the specialized bounds in~\cite{wu2017improving, Wu19, Liu19,gon22}. While our primary upper bound is motivated by the arguments in~\cite{gon22} and uses standard general converse techniques, it turns out that for most of the settings that are considered here improvements over previous bounds can be made without needing to introduce auxiliary receivers. Nonetheless, we provide an upper bound for the relay channel with orthogonal receiver components involving an auxiliary receiver and show through an example that it is strictly tighter than our main bound. 

Although the techniques used to establish the upper bounds in this paper have been employed in many previous works, our contributions are in (i) the judicious manner in which we minimize the discarded terms in the derivations of the constraints, consequently tightening the existing upper bounds on the capacity of the relay channel, and (ii) the rather nontrivial computation of the resulting bound to demonstrate strict improvements over previous bounds for several classes and examples of relay channels. 

\smallskip

\noindent{\bf Organization of the paper and summary of the results}.
In the following section, we introduce the relay channel capacity problem and state and discuss our results, providing proof outlines and a couple of shorter proofs of these results. The rest of the proofs are given in Section \ref{sec:proofs}. To help navigate the paper, Figure~\ref{fig:structure} depicts the classes of relay channels for which we provide new upper bounds and applications with references to the corresponding sections. Note that all our applications are for classes of relay channels \textit{without self-interference} $p(y_\mathrm{r}|x)p(y|x,x_\mathrm{r},y_\mathrm{r})$ because they include and generalize several interesting relay channel settings that have been receiving significant attention in recent years.
\begin{figure}[htpb]
\centering
\begin{tikzpicture}[node distance = 1.5cm]
\node (rc)  [mynode] {Relay channel (Section~\ref{sec:1})};
\node (rwsint)  [mynode,below of=rc,xshift=0cm] {Relay without self-interference (Section~\ref{sec:1})};
\node (gaur)  [mynode1,below of=rwsint,xshift=3.2cm] {Gaussian};
\node (primrel) [mynode,below of=rwsint,xshift=-3cm] {Relay with orthogonal receiver components (Section~\ref{sec:2})};
\node (conk) [mynode,below of=primrel,xshift=4.8cm] {{
i.i.d. output (Section \ref{sec:aux-receiver}) }
};
\node (Alek) [mynode1,below of=primrel,xshift=-4.8cm] {Kim's conjecture};
\node (prod) [mynode,below of=primrel,xshift=-0.75cm] {Product-form (Section~\ref{sec:3})};
\node (gencov) [mynode1,below of=prod,xshift=-3.1cm] {Generalized Cover's problem};
\node (aux) [mynode1,below of=conk,xshift=2cm] {Aux-receiver helps};
\node (gaupri) [mynode1,below of=prod,xshift=0.85cm] {Gaussian};
\node (bscpri) [mynode1,below of=prod,xshift=4.0cm] {Binary Symmetric};
\draw[arrow] (rc) -- (rwsint);
\draw [arrow] (rwsint) -- (gaur);
\draw [arrow] (rwsint) -- (primrel);
\draw [arrow] (primrel) -- (conk);
\draw [arrow] (primrel) -- (Alek);
\draw [arrow] (primrel) -- (prod);
\draw [arrow] (prod) -- (gencov);
\draw [arrow] (prod) -- (gaupri);
\draw [arrow] (prod) -- (bscpri);
\draw [arrow] (conk) -- (aux);
\end{tikzpicture}
\caption{Classes of relay channels for which upper bounds are established. An arrow from box A to box B indicates that class A includes class B. Dashed boxes indicate applications for which we compute the bounds.}
\label{fig:structure}
\end{figure}
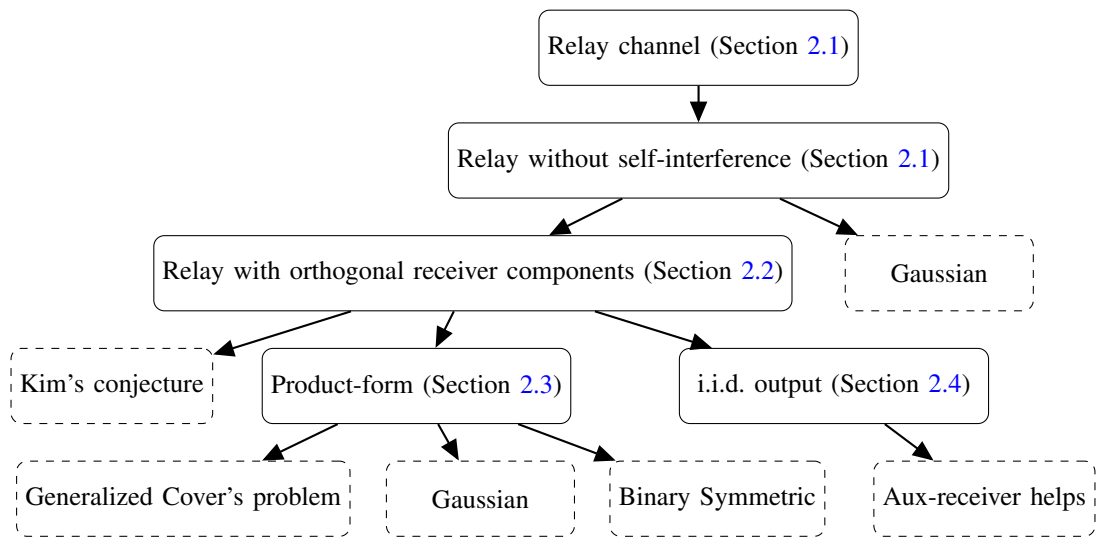

The following is a summary of our results. 
\smallskip

\noindent{\bf Section~\ref{sec:1}}. Theorem~\ref{thm:maintheorem} states our main upper bound. Although the statement and subsequent bounds  and applications of this bound are for relay channels without self-interference, in Remark~\ref{rem:1}, we point out that with a minor relaxation, the upper bound holds for the general relay channel. Corollary~\ref{cor:weakmainOB} is a weakened but easier to evaluate version  of Theorem~\ref{thm:maintheorem} which is used in Theorem~\ref{thm:cutsubopt}  to show the suboptimality of the cutset upper bound and the improved bound in~\cite{gon22} for the Gaussian relay channel; see Figure~\ref{fig1a}.
\smallskip

\noindent{\bf Section~\ref{sec:2}}. Proposition~\ref{prop:mainprirel} provides an equivalent characterization of the upper bound in Theorem~\ref{thm:maintheorem} for relay channels with orthogonal receiver components (also referred to as primitive relay). This proposition is used in Theorem~\ref{thm:KimConjecture} to prove a conjecture by Kim~\cite[Question 2]{kim2007coding}. 
\smallskip

\noindent{\bf Section~\ref{sec:3}}. Theorem~\ref{thm:covopgen} uses
Proposition~\ref{prop:mainprirel} to answer a generalized version of Cover's relay channel problem, posed in~\cite{cover1987capacity}, which extends and improves upon results in~\cite{wu2017improving, Wu19}. Proposition \ref{Proposition5} uses Proposition~\ref{prop:mainprirel} to 
derive an upper bound for the Gaussian case; see Figure~\ref{figGPrimitive}. As shown in Lemma~\ref{thm:GSCase} it indirectly recovers previous results in~\cite{Wu19,wu2017geometry} by showing that the two seemingly different optimization problems evaluate to the same expression. Theorem~\ref{thmBSCPrimitive} uses
Proposition~\ref{prop:mainprirel} to establish a tighter upper bound for a class of symmetric binary relay channels than the bounds in~\cite{wu2017improving,barnes2017solution}; see Figure~\ref{fig1b}. 

\noindent{\bf Section~\ref{sec:aux-receiver}}. Theorem \ref{prop2jm2s} provides a new upper bound on the capacity of the relay channel with orthogonal receiver components using the auxiliary receiver approach in~\cite{gon22}. Corollary~\ref{prop2} specializes Theorem \ref{prop2jm2s} to relay channels with orthogonal receiver components in which the relay output is an i.i.d. sequence, independent of the sender and relay transmissions, previously studied in~\cite{aleksic2009capacity,tandon2008new, ahlswede1983source}. 
This bound is shown to be tight for the class of channels considered in~\cite{aleksic2009capacity} and  strictly tighter than the upper bound by Tandon and Ulukus in~\cite{tandon2008new}. We then show through a Gaussian relay channel with i.i.d. relay output sequence  that the bound in Corollary~\ref{prop2} is strictly tighter than the upper bound in Proposition~\ref{prop:mainprirel}. Hence, including an auxiliary receiver into the upper bound can strictly improve upon our main upper bound in Theorem~\ref{thm:maintheorem}.

\smallskip

\section{Definitions and statement of the results}

We adopt most of our notation from \cite{elk11}. In particular, we use $Y^{i}$ to denote the sequence $(Y_1, Y_2, \cdots, Y_{i})$, and $Y_{i}^j$ to denote $(Y_i, Y_{i+1}, \cdots, Y_j)$. Unless stated otherwise, logarithms are to the base $2$. We use $p(x)$ to indicate the probability mass function of a discrete random variable $X$ and $P_Y$ to indicate the probability distribution of an arbitrary random variable $Y$. We define \(\displaystyle \mathsf{C}(x)= (1/2)\log(1+x)\) for $x \ge 0$.

The discrete memoryless relay channel depicted in Figure~\ref{setup:relay} consists of four alphabets $\mathcal{X}$, $\mathcal{X}_\mathrm{r}$, $\mathcal{Y}_\mathrm{r}$,  $\mathcal{Y}$, and a collection of conditional pmfs $p(y_{\mathrm{r}},y|x,x_{\mathrm{r}})$ on $\mathcal{Y}_{\mathrm{r}}\times \mathcal{Y}$.  A $(2^{nR},n)$ code for this channel  consists of a message set $[1:2^{nR}]$, an encoder that assigns a codeword $x^n(m)$ to each message $m
  \in [1: 2^{nR}]$, a relay encoder that assigns a symbol $x_{\mathrm{r}i}(y_\mathrm{r}^{i-1})$ to each past received sequence $y_\mathrm{r}^{i-1}$ for each time $i\in [1:n]$, and a decoder that assigns an estimate $\hat{M}$ or an error message $\varepsilon$ to each received sequence $y^n$. We assume that the message $M$ is uniformly distributed over $[1: 2^{nR}]$. The average probability of error is defined as $P_e(n) = \P\{\hat{M} \neq M\}$. A rate $R$ is said to be achievable for the discrete memoryless relay channel if there exists a sequence of $(2^{nR}, n)$ codes such that $\lim_{n\rightarrow\infty} P_e(n) = 0$. The
capacity $C$ of the discrete memoryless relay channel is the supremum of all achievable rates.

\begin{figure}[htpb]
\centering
	\begin{tikzpicture}
	\node at (-0.95,-0.95) {$M$};
	\draw [->,thick] (-0.6,-1) -- (0.2,-1);
	\draw (0.2,.-1.5) rectangle +(1.6,1); \node at (1,-1) {Encoder};
	\draw [->,thick] (1.8,-1)-- (3.1,-1); \node at (2.45,-0.7) {$X^n$};
	\draw [->,thick] (3.6,-0.5)-- (3.6,0.75); \node at (3.1,0.25) {$Y_{\mathrm{r}}^{i-1}$};
	\draw [<-,thick] (4.9,-0.5)-- (4.9,0.75); \node at (5.3,0.25) {$X_{\mathrm{r}i}$};
	\draw (3.1,-1.5) rectangle +(2.4,1); \node at (4.3,-1) {$p(y,y_{\mathrm{r}}|x,x_{\mathrm{r}})$};
	\draw (3.1,0.75) rectangle +(2.4,1); \node at (4.3,1.25) {Relay Encoder};
	\draw [->,thick] (5.5,-1) --(6.6,-1); \node at (6.05,-.7) {$Y^n$};
	\draw (6.6,-1.5) rectangle +(1.6,1); \node at (7.4,-1) {Decoder};
	\draw [->,thick] (8.2,-1) --(9.0,-1); \node at (9.25,-0.945) {$\hat M$};
	\end{tikzpicture}
	\caption{Discrete memoryless relay channel setup.}
	\label{setup:relay}
\end{figure}
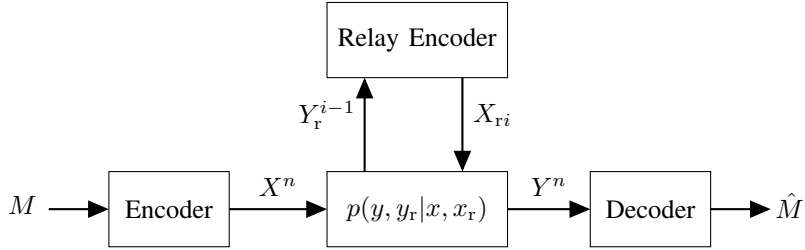

\subsection{Upper bound for the relay channel without self-interference}
\label{sec:1}

A relay channel is said to be \textit{without self-interference} if $p(y,y_\mathrm{r}|x,x_\mathrm{r})=p(y_\mathrm{r}|x)p(y|x,x_\mathrm{r},y_\mathrm{r})$. We establish the following upper bound on the capacity of this class of relay channels.
\begin{theorem} Any achievable rate $R$ for a discrete memoryless relay channel without self-interference $p(y_\mathrm{r}|x)p(y|x,x_\mathrm{r},y_\mathrm{r})$ must satisfy the following inequalities 
\begin{align}
    R&\leq I(X;Y,Y_{\mathrm{r}}|X_{\mathrm{r}})-I(U;Y|X_{\mathrm{r}},Y_{\mathrm{r}}), \label{eqnUB1}\\
    &{
    =I(X;Y|X_\rmr,U)+I(U;Y_\rmr|X_\rmr)+I(X;Y_\rmr|X_\rmr,U,Y)
    }\label{eqnUB1.1}
    \\
R&\leq I(X;Y,Y_{\mathrm{r}}|X_{\mathrm{r}})-I(V;Y|X_{\mathrm{r}},Y_{\mathrm{r}})-I(X;Y_{\mathrm{r}}|V,X_{\mathrm{r}},Y)\label{eqnUB1.5}\\
 &=  I(X;Y_{\mathrm{r}}|X_{\mathrm{r}})+I(X;Y|V,X_{\mathrm{r}})-I(X;Y_{\mathrm{r}}|V,X_{\mathrm{r}})\label{eqnUB2}\\
 &=I(X;Y,V|X_\rmr)-I(V;X|X_\rmr,Y_\rmr),\label{eqnUB2m1} \\
R&\leq I(X,X_{\mathrm{r}};Y)-I(V;Y_{\mathrm{r}}|X_{\mathrm{r}},X,Y),\label{eqnUB0}
 \end{align}
for some $p(u,x,x_{\mathrm{r}})p(y,y_{\mathrm{r}}|x,x_{\mathrm{r}})p(v|x,x_{\mathrm{r}},y_{\mathrm{r}})$ satisfying
\begin{align}
    I(V,X_{\mathrm{r}};Y_{\mathrm{r}})-I(V,X_{\mathrm{r}};Y)&=I(U;Y_{\mathrm{r}})-I(U;Y).\label{eq:UBcon}\end{align}
Moreover, we obtain an equivalent characterization of the bound if we drop the constraint on $R$ in \eqref{eqnUB0} and strengthen \eqref{eq:UBcon} to
\begin{align}
    I(V,X_{\mathrm{r}};Y_{\mathrm{r}})-I(V,X_{\mathrm{r}};Y)&=I(U;Y_{\mathrm{r}})-I(U;Y)\leq I(X_\rmr;Y_\rmr).\label{eq:UBcon2} 
    \end{align} 
In both characterizations of the bound, it suffices to consider 
 $|\Vc| \leq |\Xc| |\Xc_{\mathrm{r}}||\Yc_{\mathrm{r}}|+2$ and
$|\Uc| \leq |\Xc| |\Xc_{\mathrm{r}}|+1$.
 \label{thm:maintheorem}
\end{theorem}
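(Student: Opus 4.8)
The plan is to run a weak-converse argument in which all the real work lies in choosing how much of $Y_\mathrm{r}^n$ to reveal to the decoder and in identifying the auxiliaries so that the subtracted mutual-information terms are matched to single-letter quantities rather than merely discarded — this is the ``judicious minimization of discarded terms'' referred to in the introduction. Fix a $(2^{nR},n)$ code with vanishing error probability; Fano's inequality gives $nR\le I(M;Y^n)+n\epsilon_n$. I would derive the three inequalities \eqref{eqnUB1}, \eqref{eqnUB1.5} and \eqref{eqnUB0} \emph{simultaneously} from this bound: the broadcast-type split $I(M;Y^n)=I(M;Y^n,Y_\mathrm{r}^n)-I(M;Y_\mathrm{r}^n\mid Y^n)$ underlies the first two, and the multiple-access-type expansion $\sum_i I(M;Y_i\mid Y^{i-1})$ underlies the third. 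Each term is then single-letterized using memorylessness of the channel, the no-self-interference factorization $p(y_\mathrm{r}\mid x)p(y\mid x,x_\mathrm{r},y_\mathrm{r})$, and the fact that $X_{\mathrm{r}i}$ is a deterministic function of $Y_\mathrm{r}^{i-1}$, so it may be freely inserted into any conditioning that already contains $Y_\mathrm{r}^{i-1}$.

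For the auxiliaries I would take $U_i$ to be a ``past'' tuple such as $(M,Y^{i-1},Y_\mathrm{r}^{i-1})$, and $V_i$ a tuple formed from $M$, the relay output block $Y_\mathrm{r}^n$, and the decoder's past $Y^{i-1}$, with the precise windows chosen so that no term is thrown away during single-letterization. The key structural point is that for such choices the per-letter Markov relations $U_i-(X_i,X_{\mathrm{r}i})-(Y_i,Y_{\mathrm{r}i})$ and $V_i-(X_i,X_{\mathrm{r}i},Y_{\mathrm{r}i})-Y_i$ both hold by memorylessness (the time-$i$ channel noise is independent of the past, and $Y_i$ affects nothing downstream). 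After introducing a uniform time-sharing variable $Q$ and absorbing it into the auxiliaries, averaging the per-letter inequalities over $i$ produces \eqref{eqnUB1}, \eqref{eqnUB1.5}, \eqref{eqnUB0}. Since every expression in the theorem involves $U$ and $V$ only through their separate joint laws with $(X,X_\mathrm{r},Y,Y_\mathrm{r})$, one may finally replace the (possibly correlated) pair $(U,V)$ by a joint distribution of the product form $p(u,x,x_\mathrm{r})p(y,y_\mathrm{r}\mid x,x_\mathrm{r})p(v\mid x,x_\mathrm{r},y_\mathrm{r})$ having the same two marginals; this leaves all the mutual informations unchanged, and the two Markov relations above are exactly what guarantees such a distribution exists.

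The hard part will be the coupling constraint \eqref{eq:UBcon}: it must emerge from the derivation rather than be imposed, and it asserts that the two differences $I(U;Y_\mathrm{r})-I(U;Y)$ and $I(V,X_\mathrm{r};Y_\mathrm{r})-I(V,X_\mathrm{r};Y)$ coincide. The mechanism is the Csisz\'ar sum identity: after single-letterization each of these differences equals $\tfrac1n$ times the same telescoped residual, so the two are forced to be equal. Arranging the conditioning windows so that both computations land on exactly that residual — with $X_{\mathrm{r}i}$ appearing explicitly inside the $V$-side but already subsumed on the $U$-side — is where the care is needed, and I expect this to be the main obstacle. The same bookkeeping should also yield the strengthened constraint \eqref{eq:UBcon2}: the common residual is at most $\tfrac1n\sum_i I(X_{\mathrm{r}i};Y_{\mathrm{r}i})$, hence at most $I(X_\mathrm{r};Y_\mathrm{r})$ after symmetrizing.

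The remaining items are routine. The identities \eqref{eqnUB1} $=$ \eqref{eqnUB1.1} and \eqref{eqnUB1.5} $=$ \eqref{eqnUB2} $=$ \eqref{eqnUB2m1} are chain-rule manipulations using only the Markov relations $U-(X,X_\mathrm{r})-(Y,Y_\mathrm{r})$, $V-(X,X_\mathrm{r},Y_\mathrm{r})-Y$ and $Y_\mathrm{r}-X-(U,X_\mathrm{r})$ (the last from no self-interference), together with their immediate consequences such as $I(U;Y\mid X,X_\mathrm{r},Y_\mathrm{r})=0$; I would carry out one in detail and note the rest are identical. For the equivalence of the two characterizations, a direct computation using the constraint shows that the right-hand side of \eqref{eqnUB0} exceeds that of \eqref{eqnUB1.5} by exactly $I(X_\mathrm{r};Y_\mathrm{r})-\Delta$, where $\Delta$ denotes the common value in \eqref{eq:UBcon}. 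Hence under \eqref{eq:UBcon2}, where $\Delta\le I(X_\mathrm{r};Y_\mathrm{r})$, the bound \eqref{eqnUB0} is implied by \eqref{eqnUB1.5} and may be dropped; conversely, starting from a feasible tuple for the first characterization with $\Delta>I(X_\mathrm{r};Y_\mathrm{r})$, one mixes $U$ and $V$ each with a deterministic auxiliary through independent time-sharing bits, with the mixing weights chosen so that both sides of the constraint drop to exactly $I(X_\mathrm{r};Y_\mathrm{r})$ — such mixing only lowers the subtracted terms in \eqref{eqnUB1} and \eqref{eqnUB0} and makes \eqref{eqnUB1.5} coincide with \eqref{eqnUB0}, so the resulting tuple is feasible for the second characterization. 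Finally, the cardinalities follow from the Fenchel--Eggleston--Carath\'eodory method: fixing $p(x,x_\mathrm{r},y_\mathrm{r},y)$, one reduces $|\Vc|$ by preserving $p(x,x_\mathrm{r},y_\mathrm{r})$, the right-hand sides of \eqref{eqnUB1.5} and \eqref{eqnUB0}, and the $V$-dependent side of \eqref{eq:UBcon}, and then reduces $|\Uc|$ by preserving $p(x,x_\mathrm{r})$, the right-hand side of \eqref{eqnUB1}, and the $U$-dependent side of \eqref{eq:UBcon}; the constraint counts give $|\Vc|\le|\Xc||\Xc_{\mathrm{r}}||\Yc_{\mathrm{r}}|+2$ and $|\Uc|\le|\Xc||\Xc_{\mathrm{r}}|+1$.
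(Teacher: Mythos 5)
Your overall strategy matches the paper's (Fano, splitting off $I(M;Y_{\mathrm{r}}^n)$, Csisz\'ar--K\"orner--Marton identity, time-sharing, marginal-preserving factorization of the $(U,V)$ law, Carath\'eodory for cardinalities), and your treatment of the equivalence of the two characterizations via time-sharing mixtures is essentially the paper's argument (the paper mixes $U$ with $X_{\mathrm{r}}$ rather than with a constant so that a single mixing weight preserves the equality in \eqref{eq:UBcon}, but your variant with separately tuned weights also works). However, there is a genuine gap at exactly the point you flag as ``the main obstacle'': the identification of the auxiliaries. The paper's choice is the \emph{crossed} past/future pair
$V_i=(Y_{i+1}^n,\,Y_{\mathrm{r}}^{i-1})$ and $U_i=(Y_{\mathrm{r}\,i+1}^n,\,Y^{i-1})$, and this crossing is not a bookkeeping detail but the central idea: applying the Csisz\'ar sum identity to each of $\sum_i [I(V_i;Y_{\mathrm{r}i})-I(V_i;Y_i)]$ and $\sum_i [I(U_i;Y_{\mathrm{r}i})-I(U_i;Y_i)]$ collapses both to the \emph{same} residual $\sum_i I(Y_{\mathrm{r}}^{i-1};Y_{\mathrm{r}i})-\sum_i I(Y^{i-1};Y_i)$, which is what produces the coupling constraint \eqref{eq:UBcon}. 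Your proposed $U_i=(M,Y^{i-1},Y_{\mathrm{r}}^{i-1})$ (two pasts, no future) and $V_i\supseteq (M,Y_{\mathrm{r}}^n,Y^{i-1})$ do not have this structure: the two differences do not telescope to a common quantity, so the constraint that makes the bound tighter than the cutset bound would not emerge. Including $M$ in the auxiliaries also breaks the single-letterization of \eqref{eqnUB2}, which relies on discarding $I(M;Y_i\mid X_i,X_{\mathrm{r}i},Y_{i+1}^n,Y_{\mathrm{r}}^{i-1},Y_{\mathrm{r}i})=0$ precisely so that $V_i$ is free of $M$.

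Two further points. First, your claim that the strengthened constraint \eqref{eq:UBcon2} follows because ``the common residual is at most $\tfrac1n\sum_i I(X_{\mathrm{r}i};Y_{\mathrm{r}i})$'' is unsupported and appears false in general ($Y_{\mathrm{r}}^{i-1}$ can reveal far more about $Y_{\mathrm{r}i}$ than the single symbol $X_{\mathrm{r}i}$ does); the paper does not prove \eqref{eq:UBcon2} at the $n$-letter level at all, but obtains it a posteriori by the single-letter time-sharing argument you sketch later. Second, the bound \eqref{eqnUB0} is not a routine multiple-access expansion: extracting the correction term $-I(V;Y_{\mathrm{r}}\mid X_{\mathrm{r}},X,Y)$ requires a separate, fairly delicate chain of Csisz\'ar-sum manipulations starting from $\sum_i I(X^{n\setminus i},Y_{i+1}^n;Y_i\mid X_i)$, which your outline does not engage with. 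As written, the proposal defers rather than resolves the two steps on which the theorem actually rests.
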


We begin with a few remarks and then provide an outline of the proof. The details of the proof can be found in Section \ref{sec:proofThm1}.

\begin{remark}In Appendix \ref{AppendixNew2}, another equivalent form of the bound in Theorem \ref{thm:maintheorem} without the equality constraint \eqref{eq:UBcon} is given.
\end{remark}

\begin{remark}
\label{rem:1}
As the proof of the theorem shows, the  without self-interference assumption is used only to establish the Markov chain relationship $U\rightarrow X,X_\rmr\rightarrow Y,Y_\rmr$. Hence, by removing this condition, the above theorem readily extends to the general relay channel.
\end{remark}

\begin{remark} It is immediate that the above bound is at least as tight as the cutset upper bound in~\cite{cover1979capacity},
\begin{align}
    C\leq \max_{p(x,x_\rmr)} \min \{I(X_\rmr,X;Y),I(X;Y,Y_\rmr|X_\rmr)\}.
    \label{eqnCT1}
\end{align}
\end{remark}
\begin{remark}
\label{rem:comforrem}
From \eqref{eqnUB2m1} and  \eqref{eqnUB0}, we deduce that any achievable rate $R$ must satisfy the condition
\[
R\leq \max_{p(x,x_\rmr)p(v|x,x_\rmr,y_\rmr)}\min\big\{I(X;Y,V|X_\rmr)-I(V;X|X_\rmr,Y_\rmr), I(X,X_{\mathrm{r}};Y)-I(V;Y_{\mathrm{r}}|X_{\mathrm{r}},X,Y)\big\}.
\]
If we replace the maximum over $p(x,x_\rmr)p(v|x,x_\rmr,y_\rmr)$ with a maximum over the more restrictive joint distributions of the form  $p(x)p(x_\rmr)p(v|x_\rmr,y_\rmr)$, we obtain the equivalent form of the compress-forward lower bound without time-sharing random variable $Q$ in~\cite{elk11},
\begin{equation}
\label{eq:cflb}
C \ge \max_{p(x)p(x_\rmr)p(v|x_\rmr,y_\rmr)}\min\big\{I(X;Y,V|X_\rmr), I(X,X_{\mathrm{r}};Y)-I(V;Y_{\mathrm{r}}|X_{\mathrm{r}},X,Y)\big\}.
\end{equation}
Thus, the auxiliary random variable $V$ in the upper bound seems to have a correspondence to the auxiliary random variable in the compress-forward lower bound. 
\end{remark}

\begin{remark}
\label{rem:pdfrem}
Recall the partial decode-forward lower bound on the capacity of the relay channel in~\cite{cover1979capacity},
\begin{equation}
\label{eq:pdflb}
C \ge \max_{p(u,x,x_\rmr)}\min\big\{ I(X,X_{\mathrm{r}};Y),~I(X;Y|X_\rmr, U)+I(U;Y_\rmr|X_\rmr)\big\}.
\end{equation}
The constraint \eqref{eqnUB1.1} has the terms $I(X;Y|X_\rmr,U)+I(U;Y_\rmr|X_\rmr)$ of the partial decode-forward lower bound along with an excess term $I(X;Y_\rmr|X_\rmr,U,Y)$. Therefore.
the auxiliary random variable $U$ in the upper bound seems to have a correspondence to the auxiliary random variable in the partial decode-forward lower bound. 
\end{remark}
\begin{remark}\label{remark3}
If the cutset bound is tight for a relay channel of the form $p(y,y_\rmr |x,x_\rmr)=p(y|x,x_\rmr)p(y_\rmr|x)$ and the capacity coincides with the MAC bound in the cutset bound, i.e., $C=I(X_\rmr,X;Y)$ for the maximizing $p(x,x_\rmr)$, then the capacity is achievable by partial decode-forward, since from \eqref{eqnUB0} we obtain that $I(V;Y_\rmr|X_\rmr,X,Y)=0$. The assumption $p(y,y_\rmr |x,x_\rmr)=p(y|x,x_\rmr)p(y_\rmr|x)$ implies that $I(V;Y_\rmr|X_\rmr,X)=0$. The constraint $V\rightarrow X,X_\rmr\rightarrow Y_\rmr$, then implies that \eqref{eqnUB1.5} is equivalent to 
\[
R\leq I(X;Y,Y_\rmr|X_\rmr)-I(V;Y|X_\rmr,Y_\rmr)-I(X;Y_\rmr|V,X_\rmr,Y)= I(X;Y|V,X_\rmr)+I(V;Y_\rmr|X_\rmr),
\]
which is the partial decode-forward lower bound (with auxiliary random variable $V$). This also implies that the correspondence between the auxiliary random variables $U,V$ appearing in the outer bound and those in the two inner bounds mentioned in Remarks \ref{rem:comforrem} and \ref{rem:pdfrem} are not exact.

As a concrete application, consider the following example which has the same flavor as the one introduced by Cover in \cite{cover1987capacity}: consider a relay channel of the
form $X=(X_a, X_b)$ and $Y_\rmr=(Y_{\rmr a}, Y_{\rmr b})$,
where $p(y,y_\rmr|x,x_\rmr)=p(y_{\rmr a}|x_a) p(y|x_a,x_\rmr)p(y_{\rmr b}|x_b)$, and 
 $p(y_{\rmr b}|x_b)$ is a channel with capacity $C_1$ while $p(y_{\rmr a}|x_a)$ and $p(y|x_a,x_\rmr)$ are arbitrary. 
 In Appendix \ref{AppendixNew}, we show that replacing $p(y_{\rmr a}|x_a)$ with a noiseless link of capacity $C_1$ does not alter the capacity.
 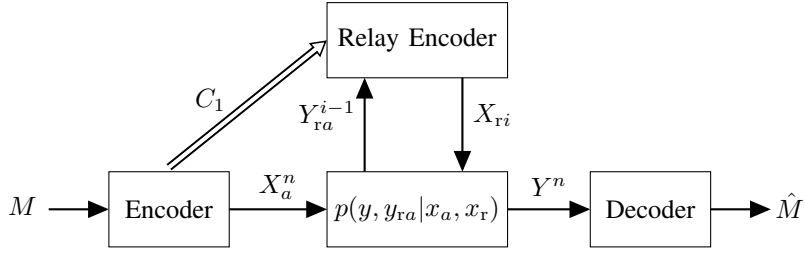
\begin{figure}[htpb]
\centering
	\begin{tikzpicture}
	\node at (-0.95,-0.95) {$M$};
	\draw [->,thick] (-0.6,-1) -- (0.2,-1);
	\draw (0.2,.-1.5) rectangle +(1.6,1); \node at (1,-1) {Encoder};
	\draw[vecArrow]  (1,-0.44) -- (3.1,1.25) ; \node at (1.55,0.45) {$C_1$};
	\draw [->,thick] (1.8,-1)-- (3.1,-1); \node at (2.45,-0.7) {$X_a^n$};
	\draw [->,thick] (3.6,-0.5)-- (3.6,0.75); \node at (3.1,0.25) {$Y_{\mathrm{r}a}^{i-1}$};
	\draw [<-,thick] (4.9,-0.5)-- (4.9,0.75); \node at (5.3,0.25) {$X_{\mathrm{r}i}$};
	\draw (3.1,-1.5) rectangle +(2.4,1); \node at (4.3,-1) {$p(y,y_{\mathrm{r}a}|x_a,x_{\mathrm{r}})$};
	\draw (3.1,0.75) rectangle +(2.4,1); \node at (4.3,1.25) {Relay Encoder};
	\draw [->,thick] (5.5,-1) --(6.6,-1); \node at (6.05,-.7) {$Y^n$};
	\draw (6.6,-1.5) rectangle +(1.6,1); \node at (7.4,-1) {Decoder};
	\draw [->,thick] (8.2,-1) --(9.0,-1); \node at (9.25,-0.945) {$\hat M$};
	\end{tikzpicture}
	\caption{Relay channel with noiseless link from sender to relay.}
	\label{setup:Cover-like}
\end{figure}
 In other words, we have a noiseless link of capacity $C_1$ from the sender to the relay in parallel to the  channel $p(y_{\rmr a}|x_a) p(y|x_a,x_\rmr)$ as depicted in Figure \ref{setup:Cover-like}.  
Let $\mathcal{C}(C_1)$ be the capacity of this relay channel in terms of $C_1$ for a fixed channel $p(y_{\rmr a}|x_a) p(y|x_a,x_\rmr)$.
For $C_1=\infty$, we have  $\mathcal{C}(\infty)=\max_{p(x, x_\rmr)}I(X,X_\rmr;Y)=\max_{p(x_a, x_\rmr)}I(X_a,X_\rmr;Y)$. One can then ask what is the critical value $C_1^*$ such that $C_1^*=\inf \{C_1:\mathcal{C}(C_1) = \mathcal{C}(\infty)\}$? It is immediate from the above discussion  that $C_1^*$ can be characterized as the minimum value $C_1$ such that $\mathcal{C}(\infty)$ is equal to the rate achieved by partial decode-forward:
\[R_{\mathsf{PDF}}(C_1)=
\max_{p(u,x_a,x_\rmr)}\min\big\{ I(X_a,X_{\mathrm{r}};Y),~I(X_a;Y|X_\rmr, U)+C_1+I(U;Y_{\rmr a}|X_\rmr)\big\}.
\]

\end{remark}

\textit{Outline of the proof for Theorem \ref{thm:maintheorem}}:
As in every weak converse, the starting point of the proof is to consider the joint distribution $(M,X^n,X_\rmr^n,Y^n,Y_\rmr^n)$ induced by any given code. Since the message $M$ is uniform on its support and can be recovered from $Y^n$ with a probability of error bounded by $\eps_n$, by Fano's inequality we have
\begin{equation}\label{eq:Fano}
nR =   H(M)\leq I(M;Y^n) + n\eps_n.\end{equation} 
Note that  the broadcast bound of the cutset bound is obtained as follows \cite[Sec 16.2]{elk11}
\begin{align*}I(M;Y^n)&\leq I(M;Y^n, Y_\rmr^n)\\
&\leq I(X^n;Y^n,Y_\rmr^n)\\
& = \sum_{i=1}^n \left(I(X_i;Y_i,Y_{\rmr,i}|X_{\rmr,i}) - I(Y^{i-1},Y_\rmr^{i-1};Y_i,Y_{\rmr,i}|X_{\rmr,i})\right)\\
&\leq \sum_{i=1}^n I(X_i;Y_i,Y_{\rmr,i}|X_{\rmr,i})\\
& = n I(X_Q;Y_Q,Y_{\rmr,Q}|X_{\rmr,Q},Q) \\
& \leq n I(X_Q;Y_Q,Y_{\rmr,Q}|X_{\rmr,Q}),
\end{align*}
which yields the single-letter term $I(X;Y,Y_{\rmr}|X_{\rmr})$ in which $Y$ and $Y_{\rmr}$ are lumped together. In the above, $Q$ is a random variable independent of other random variables and uniformly distributed in $[1:n]$. The various inequalities above can be justified using the chain-rule, the data-processing inequality, and the non-negativity of the mutual information.
Observe that while obtaining this cutset bound, the term $\frac{1}{n} \sum_{i=1}^n I(Y^{i-1},Y_\rmr^{i-1};Y_i,Y_{\rmr,i}|X_{\rmr,i})$ is discarded from the rate, hence can cause the bound to be loose.

In our proof, we instead begin from
\begin{align*}I(M;Y^n)&= I(M;Y_\rmr^n)+I(M;Y^n)-I(M;Y_\rmr^n),
\end{align*}
and single-letterize $I(M;Y_\rmr^n)$ and $I(M;Y^n)-I(M;Y_\rmr^n)$ separately. We expand the second term, which represents the information \textit{not} recovered by the relay, using the Csisz\'ar-K\"orner-Marton sum lemma,  in two ways
\begin{align*}
    I(M;Y^n)-I(M;Y_\rmr^n)&=\sum_{i}I(M;Y_i|{ Y_{\rmr i+1}^n,Y^{i-1}})-\sum_{i}I(M;Y_{\rmr i}|{Y_{\rmr i+1}^n,Y^{i-1}})
   \\&=\sum_{i}I(M;Y_i|{Y_{ i+1}^n,Y_\rmr^{i-1}})-\sum_{i}I(M;Y_{\rmr i}|{Y_{ i+1}^n,Y_\rmr^{i-1}}).
\end{align*}
This naturally suggests the identifications of the auxiliary random variables $V$ and $U$ as
\begin{align*}
V_i&=(Y_{i+1}^n, Y_{\mathrm{r}}^{i-1}),\quad
    U_i =  (Y_{\mathrm{r}i+1}^n, Y^{i-1}),
\end{align*}
The rest of the proof in Section~\ref{sec:proofThm1} involves bounding similar carefully selected  terms and using standard mutual information inequalities and identities to establish the stated bounds on the rate $R$.

The following corollary is an immediate weakening of the upper bound in Theorem \ref{thm:maintheorem}, which is easier to evaluate for the Gaussian relay channel.
\begin{corollary}
\label{cor:weakmainOB}
Any achievable rate $R$ for a discrete memoryless relay channel without interference $p(y_\mathrm{r} |x)p(y|x, x_\mathrm{r},y_\mathrm{r} )$ must satisfy the following conditions 
\begin{align}
R&\leq I(X;Y,Y_{\mathrm{r}}|X_{\mathrm{r}})-I(V;Y|X_{\mathrm{r}},Y_{\mathrm{r}})-I(X;Y_{\mathrm{r}}|V,X_{\mathrm{r}},Y)\label{eqnCoreq1}
\\&=
I(X;Y_{\mathrm{r}}|X_{\mathrm{r}})
-I(X;Y_{\mathrm{r}}|V,X_{\mathrm{r}})+I(X;Y|V,X_{\mathrm{r}})\label{eqnCoreq2}
 \end{align}
for some $p(x,x_{\mathrm{r}})p(y,y_{\mathrm{r}}|x,x_{\mathrm{r}})p(v|x,x_{\mathrm{r}},y_{\mathrm{r}})$ satisfying
\begin{align*}
    I(V,X_{\mathrm{r}};Y_{\mathrm{r}})-I(V,X_{\mathrm{r}};Y)&\leq {\min\left[I(X_\rmr;Y_\rmr),~\max_{p(u|x,x_{\mathrm{r}})}\left(I(U;Y_{\mathrm{r}})-I(U;Y)\right)\right]}.
\end{align*}
Further it suffices to consider $|\Vc| \leq |\Xc| |\Xc_{\mathrm{r}}||\Yc_{\mathrm{r}}|+2$.
\end{corollary}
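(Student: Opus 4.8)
The plan is to read off Corollary~\ref{cor:weakmainOB} as a one-step relaxation of Theorem~\ref{thm:maintheorem} in its second (equivalent) form, namely the one in which the constraint on $R$ in~\eqref{eqnUB0} has been dropped and~\eqref{eq:UBcon} replaced by~\eqref{eq:UBcon2}. In that form, achievability of $R$ guarantees some $p(u,x,x_\rmr)p(y,y_\rmr|x,x_\rmr)p(v|x,x_\rmr,y_\rmr)$, with $|\Vc|\le|\Xc||\Xc_\rmr||\Yc_\rmr|+2$ and $|\Uc|\le|\Xc||\Xc_\rmr|+1$, such that~\eqref{eq:UBcon2} holds and both~\eqref{eqnUB1} and~\eqref{eqnUB1.5} are valid. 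I would discard~\eqref{eqnUB1} and keep only~\eqref{eqnUB1.5}, which is literally~\eqref{eqnCoreq1}, retaining the same $(V,X,X_\rmr)$ together with its cardinality bound; then the only thing left to check is that the resulting tuple satisfies the (weaker) constraint stated in the corollary.

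For that, fix the tuple and write $\delta:=I(V,X_\rmr;Y_\rmr)-I(V,X_\rmr;Y)$. Constraint~\eqref{eq:UBcon2} asserts precisely that there is a conditional pmf $p(u|x,x_\rmr)$ --- with $U$ conditionally independent of $(Y,Y_\rmr,V)$ given $(X,X_\rmr)$, as the joint factorization forces --- for which $I(U;Y_\rmr)-I(U;Y)=\delta$, and in addition $\delta\le I(X_\rmr;Y_\rmr)$. The first statement gives $\delta\le\max_{p(u|x,x_\rmr)}\big(I(U;Y_\rmr)-I(U;Y)\big)$; here the maximum is over the fixed channel $p(x,x_\rmr)p(y,y_\rmr|x,x_\rmr)$, so it is finite and attained (with $|\Uc|\le|\Xc||\Xc_\rmr|+1$ by the support lemma), making the quantity well defined. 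Combining with $\delta\le I(X_\rmr;Y_\rmr)$ yields $\delta\le\min\big[I(X_\rmr;Y_\rmr),\ \max_{p(u|x,x_\rmr)}(I(U;Y_\rmr)-I(U;Y))\big]$, which is exactly the corollary's constraint. Thus every tuple feasible for the second form of Theorem~\ref{thm:maintheorem} projects onto a $(V,X,X_\rmr)$ feasible for Corollary~\ref{cor:weakmainOB} with the same value of the right-hand side of~\eqref{eqnCoreq1}, so the bound on $R$ carries over, as does the cardinality bound $|\Vc|\le|\Xc||\Xc_\rmr||\Yc_\rmr|+2$.

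What remains is the identity~\eqref{eqnCoreq1}$=$\eqref{eqnCoreq2}, which is the same chain-rule computation that relates~\eqref{eqnUB1.5} and~\eqref{eqnUB2} in Theorem~\ref{thm:maintheorem}: expand $I(X;Y,Y_\rmr|X_\rmr)=I(X;Y_\rmr|X_\rmr)+I(X;Y|X_\rmr,Y_\rmr)$, use $I(V;Y|X,X_\rmr,Y_\rmr)=0$ (from the factorization $p(v|x,x_\rmr,y_\rmr)$) to rewrite $I(X;Y|X_\rmr,Y_\rmr)-I(V;Y|X_\rmr,Y_\rmr)=I(X;Y|V,X_\rmr,Y_\rmr)$, and then apply the chain rule twice to trade $Y$ and $Y_\rmr$; I would simply cite the proof of Theorem~\ref{thm:maintheorem} for this. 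I do not anticipate a real obstacle: the whole argument is a weakening of an already-proved bound, and replacing the equality in~\eqref{eq:UBcon2} by an inequality can only enlarge the feasible set and hence the value of the bound. The only points that need a moment's care are (i) that $U$ in the corollary is a free auxiliary constrained only through $p(u|x,x_\rmr)$, matching the role it plays in~\eqref{eq:UBcon2}, and (ii) noting the maximization defining the constraint is over a compact set so that ``$\le\max$'' is meaningful.
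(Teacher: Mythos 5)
Your proposal is correct and follows exactly the paper's route: the paper establishes Corollary~\ref{cor:weakmainOB} by taking the alternate (equivalent) form of Theorem~\ref{thm:maintheorem}, discarding the constraint~\eqref{eqnUB1}, and relaxing the equality~\eqref{eq:UBcon2} to the stated inequality, with the identity~\eqref{eqnCoreq1}$=$\eqref{eqnCoreq2} inherited from the proof of the theorem. Your write-up merely spells out the same weakening in more detail.
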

The corollary is established by weakening the alternate form of the upper bound in Theorem \ref{thm:maintheorem} simply by removing  the constraint in \eqref{eqnUB1} and relaxing the condition in \eqref{eq:UBcon2}.
\begin{remark}\label{rmknw7} The above bound is a strengthening of the bound given in \cite[Theorem 1]{gon22} for the choice of $J=Y_{\mathrm{r}}$. To see this, observe that  from \eqref{eqnUB2} we have
\begin{align}
    R&\leq
    I(X;Y_{\mathrm{r}}|X_{\mathrm{r}})+I(X;Y|V,X_{\mathrm{r}})-I(X;Y_{\mathrm{r}}|V,X_{\mathrm{r}})\nonumber
    \\&=
    I(X;Y_{\mathrm{r}}|X_{\mathrm{r}})+I(V,X_{\mathrm{r}},X;Y)-I(V,X_{\mathrm{r}},X;Y_{\mathrm{r}})-I(V,X_{\mathrm{r}};Y)+I(V,X_{\mathrm{r}};Y_{\mathrm{r}})\nonumber
    \\&
    \leq 
    I(X;Y_{\mathrm{r}}|X_{\mathrm{r}})+I(V,X_{\mathrm{r}},X;Y)-I(V,X_{\mathrm{r}},X;Y_{\mathrm{r}})+\max_{p(u|x,x_{\mathrm{r}})}\left[I(U;Y_{\mathrm{r}})-I(U;Y)\right]\label{eqnAAA}
    \\&
    =
    I(X;Y_{\mathrm{r}}|X_{\mathrm{r}})+\max_{p(u|x,x_{\mathrm{r}})}\left[I(X,X_{\mathrm{r}};Y|U)-I(X,X_{\mathrm{r}};Y_{\mathrm{r}}|U)\right]-I(V;Y_{\mathrm{r}}|X,X_{\mathrm{r}},Y)\nonumber
    \\&
    \leq
    I(X;Y_{\mathrm{r}}|X_{\mathrm{r}})+\max_{p(u|x,x_{\mathrm{r}})}\left[I(X,X_{\mathrm{r}};Y|U)-I(X,X_{\mathrm{r}};Y_{\mathrm{r}}|U)\right], \label{eqnAAA2}
\end{align}
where \eqref{eqnAAA} follows from \eqref{eq:UBcon}. The expression in \eqref{eqnAAA2} is the bound given in \cite[Theorem 1]{gon22} for the choice of $J=Y_{\mathrm{r}}$.
\end{remark}

\smallskip

\noindent{\bf Gaussian relay channel.} The Gaussian relay channel is defined by
 \begin{equation} 
\begin{split}
Y_\mathrm{r} &= g_{12} X+Z_\mathrm{r},\\
Y &= g_{13} X+g_{23} X_\mathrm{r}+Z,
\end{split}\label{eqnGRC}
\end{equation}
where $g_{12}, g_{13}$, and $g_{23}$ are channel gains, and 
$Z \sim \Nc(0,1)$ and $Z_\mathrm{r} \sim \Nc(0,1)$ are
independent noise components. We assume average power constraint $P$ on each of $X$ and $X_\rmr$.
\begin{remark}
Note that as defined, the Gaussian relay channel belongs to the class of relay channels  without self-interference.
\end{remark}
In the following discussion, we use the SNRs $S_{12}=g_{12}^2P$, $S_{13}=g_{13}^2P$ and $S_{23}=g_{23}^2P$ to characterize the Gaussian relay channel. 
Recall that the cutset bound for the Gaussian relay channel reduces to~\cite{Cover79}
  \begin{align}
  C&\leq \max_{0\leq \rho\leq 1}\min \big\{\mathsf{C}(S_{13}+S_{23}+2\rho\sqrt{S_{13}S_{23}}\big),
\mathsf{C}((1-\rho^2)(S_{13}+S_{12}))\big\} \nonumber
\\&=\begin{cases}
 \mathsf{C}\Big(\frac{\left(\sqrt{S_{12}S_{23}}+\sqrt{S_{13}(S_{13}+S_{12}-S_{23})}\right)^2}{S_{13}+S_{12}}\Big) & \text{if }S_{12}\geq S_{23},\\[3pt]
\mathsf{C}(S_{13}+S_{12}) &\text{otherwise.} \label{eq:csrelay}
\end{cases}
\end{align}
Also recall that the decode-forward lower bound (evaluated using Gaussian distributions) for the Gaussian relay channel reduces to~\cite[Eq. 16.6]{elk11}
\begin{align}
\label{eq:dec-forwd-gau}
C &\geq \begin{cases}
 \mathsf{C}\Big(\frac{\left(\sqrt{S_{13}(S_{12}-S_{23})}+\sqrt{S_{23}(S_{12}-S_{13})}\right)^2}{S_{12}}\Big) & \text{if }S_{12}\geq S_{23}+S_{13},\\[3pt]
\mathsf{C}(S_{12}) &\text{otherwise,} 
\end{cases}
\end{align}
while the compress-forward lower bound (evaluated using Gaussian distributions) for the Gaussian relay channel reduces to~\cite[Eq. 16.12]{elk11}
\begin{align}
\label{eq:comp-forwd-gau}
C &\geq\mathsf{C}\left(S_{13}+\frac{S_{12}S_{23}}{S_{13}+S_{12}+S_{23}+1}\right).
\end{align}
 The compress–forward lower bound outperforms decode–forward if and only if $S_{12}(1 +
S_{12})\leq S_{13}(S_{13}+S_{23}+1)$. It is worth noting that when $S_{12}(1 +
S_{12})= S_{13}(S_{13}+S_{23}+1)$, i.e., when compress–forward and decode–forward yield the same rate, the
mixed strategy lower bound in \cite[Theorem 7]{Cover79} evaluated using Gaussian distributions strictly improves upon both the compress–forward and decode–forward strategies~\cite[Theorem 3]{Luo13}. It is also known that for small values of $S_{12}$, a simple two-letter amplify-forward strategy outperforms both compress–forward and decode–forward strategies~\cite[Example 1]{el2006bounds}. Additionally, for certain values of channel gains, the lower bound of Chong, Motani and Garg~\cite{ChongMotaniGarg} improves upon the
mixed strategy lower bound in~\cite{Cover79} evaluated using Gaussian distributions~\cite[Remark 6]{ChongMotaniGarg}.

We are now ready to show that our upper bound is strictly tighter than the cutset bound.
\begin{theorem}\label{thm:cutsubopt} 
For the Gaussian relay channel, the bound in Corollary \ref{cor:weakmainOB}  is strictly tighter than the cutset upper bound for every non-zero values of $S_{12},S_{13},S_{23}$. Furthermore, the bound reduces to the following.
Any achievable rate $R$ for the Gaussian relay channel must satisfy:
\begin{align}
R&\leq 
\frac12\log\left( (1-\rho^2)S_{12}+1\right)
-\frac12\log\left(\beta + S_{12}(1-\rho^2)\alpha + 2\sigma\sqrt{S_{12}(1-\rho^2)\alpha\beta}  \right)+\nonumber
\\&\quad
+\frac12\log\left( \beta(1-\sigma^2)\right)
+\frac12\log\left( (1-\rho^2)\alpha S_{13}+1\right)
 \end{align}
for some 
$0\leq \alpha,\beta\leq 1$, $\rho\in[-1,1]$ such that 
$(1-\alpha)(1-\beta)\geq \sigma^2\alpha \beta$, 
where
\begin{align*}
    &\sigma 
=
\frac{(1-\rho^2)\alpha S_{13}
+1}{2T\sqrt{S_{12}(1-\rho^2)\alpha \beta}}-\frac{(1-\rho^2)\alpha S_{12}
+\beta}{2\sqrt{S_{12}(1-\rho^2)\alpha \beta}},
\end{align*}
\[
T=\min\left[\frac{1+S_{13}+S_{23}+2\rho\sqrt{S_{13}S_{23}} }{(1-\rho^2) S_{12}
+1},~~
 \lambda_{\max}
\right],
\]
and $\lambda_{\max}$ is the larger root of the quadratic equation
\begin{align}
2\rho \sqrt{S_{13}S_{23}}+S_{13}+S_{23}+1-\lambda\big(S_{23}S_{12}(1-\rho^2)+S_{13}+S_{23}+S_{12}+2+2\rho\sqrt{S_{13}S_{23}}\big)+\lambda^2(S_{12}+1)=0.\end{align}

\end{theorem}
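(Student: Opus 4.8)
\emph{Proof plan.} The starting point is Corollary~\ref{cor:weakmainOB} applied to the Gaussian channel~\eqref{eqnGRC}, and the plan is to (i) reduce the auxiliary optimization to jointly Gaussian random variables, (ii) parametrize the resulting Gaussian problem and evaluate the objective and the constraint in closed form, and (iii) establish the strict gap to the cut-set bound. For step~(i) I would show that for any feasible $\big(p(x,x_\rmr),V\big)$ there is a jointly Gaussian choice with matched second-order statistics that is still feasible and has objective value at least as large; here $(X,X_\rmr)$ may be taken zero-mean jointly Gaussian with $\E[X^2]=\E[X_\rmr^2]=P$ and correlation $\rho\in[-1,1]$, and $V$ jointly Gaussian with $(X,X_\rmr,Y_\rmr)$. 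Since $Z$ is independent of $(X,X_\rmr,V)$ and $Y,Y_\rmr$ are linear in $X,X_\rmr$ and the noises, every term of~\eqref{eqnCoreq1}--\eqref{eqnCoreq2} becomes a function of a few conditional variances and one conditional correlation. The delicate point---and, I expect, the main technical obstacle---is that $-I(X;Y_\rmr|V,X_\rmr)$ calls for a \emph{lower} bound on $h(Y_\rmr|V,X_\rmr)=h(g_{12}X+Z_\rmr\mid V,X_\rmr)$, for which the conditional entropy-power inequality is not directly usable because $X$ and $Z_\rmr$ need not be conditionally independent given $(V,X_\rmr)$ (the auxiliary $V$ may be correlated with $Z_\rmr$ through $Y_\rmr$); I would handle this by a perturbation/rotation argument in the spirit of Gaussian extremal inequalities, or by using the two forms~\eqref{eqnCoreq1} and~\eqref{eqnCoreq2} together so that the problematic entropies cancel. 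The inner maximization $\max_{p(u|x,x_\rmr)}(I(U;Y_\rmr)-I(U;Y))$ in the constraint is likewise attained by a Gaussian $U$ jointly distributed with $(X,X_\rmr)$; it reduces to maximizing $\mathrm{Var}(Y|U)/\mathrm{Var}(Y_\rmr|U)$ over admissible conditional covariances, a linear-fractional program whose optimum is the larger root $\lambda_{\max}$ of the displayed quadratic. Substituting Gaussian $(X,X_\rmr,U,V)$ back in shows the reduction is exact.

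For step~(ii), set $\mathrm{Var}(X\mid V,X_\rmr)=\alpha(1-\rho^2)P$ and $\mathrm{Var}(Z_\rmr\mid V,X_\rmr)=\beta$, and let $\sigma$ be the conditional correlation coefficient of $g_{12}(X-\E[X\mid X_\rmr])$ and $Z_\rmr$ given $(V,X_\rmr)$. Positive semidefiniteness of the ``explained'' part of the conditional covariance of $(g_{12}(X-\E[X\mid X_\rmr]),Z_\rmr)$ is exactly $(1-\alpha)(1-\beta)\ge\sigma^2\alpha\beta$, while $|\sigma|\le1$ comes from positive semidefiniteness of the conditional covariance itself. Then $I(X;Y_\rmr|X_\rmr)=\tfrac12\log((1-\rho^2)S_{12}+1)$, $I(X;Y|V,X_\rmr)=\tfrac12\log((1-\rho^2)\alpha S_{13}+1)$ by a maximum-entropy/Jensen estimate, and $-I(X;Y_\rmr|V,X_\rmr)=\tfrac12\log(\beta(1-\sigma^2))-\tfrac12\log(\beta+(1-\rho^2)\alpha S_{12}+2\sigma\sqrt{(1-\rho^2)\alpha\beta S_{12}})$, which sum to the displayed four-logarithm expression. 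Both the objective and the left side $I(V,X_\rmr;Y_\rmr)-I(V,X_\rmr;Y)$ of the Corollary constraint depend on $\sigma$ only through $\mathrm{Var}(Y_\rmr\mid V,X_\rmr)=\beta+(1-\rho^2)\alpha S_{12}+2\sigma\sqrt{(1-\rho^2)\alpha\beta S_{12}}$; checking that on the feasible range of $\sigma$ the objective is decreasing while the constraint opposes decreasing $\sigma$, the supremum occurs with the Corollary constraint tight, and solving that equality for $\sigma$ gives the stated formula with $T=2^{2B}\,\mathrm{Var}(Y_\rmr)/(S_{12}+1)$, where $B$ is the right side of the Corollary~\ref{cor:weakmainOB} constraint. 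The $\min$ defining $T$ is inherited from the $\min$ defining $B$: its first branch is $I(X_\rmr;Y_\rmr)$, producing $\tfrac{1+S_{13}+S_{23}+2\rho\sqrt{S_{13}S_{23}}}{(1-\rho^2)S_{12}+1}$, and its second is the Gaussian value of $\max_u(I(U;Y_\rmr)-I(U;Y))$, producing $\lambda_{\max}$. One also verifies that this constraint, rather than $(1-\alpha)(1-\beta)\ge\sigma^2\alpha\beta$, is the binding one at the optimum.

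For step~(iii), write $\mathsf{B}(\rho)=I(X;Y,Y_\rmr|X_\rmr)=\tfrac12\log(1+(1-\rho^2)(S_{12}+S_{13}))$ and $\mathsf{M}(\rho)=I(X,X_\rmr;Y)=\tfrac12\log(1+S_{13}+S_{23}+2\rho\sqrt{S_{13}S_{23}})$, so the cut-set bound~\eqref{eq:csrelay} is $\max_\rho\min\{\mathsf{B}(\rho),\mathsf{M}(\rho)\}$, and let $G(\rho)$ be the supremum of the Corollary objective over admissible $(\alpha,\beta)$ at correlation $\rho$. I would prove $G(\rho)<\min\{\mathsf{B}(\rho),\mathsf{M}(\rho)\}$ for \emph{every} $\rho$ whenever $g_{12},g_{13},g_{23}\neq0$: from~\eqref{eqnCoreq1}, $F=\mathsf{B}(\rho)-(I(V;Y|X_\rmr,Y_\rmr)+I(X;Y_\rmr|V,X_\rmr,Y))$, and the Corollary constraint keeps the feasible auxiliaries bounded away from $V=Y_\rmr$, forcing the bracketed sum to be bounded below by a positive constant (in the Gaussian parametrization, making $I(X;Y_\rmr|V,X_\rmr,Y)$ vanish would force $\beta=1$, hence $\sigma=0$, and then $I(V;Y|X_\rmr,Y_\rmr)$ cannot vanish with nonzero gains and $\rho\in(-1,1)$, while $\rho=\pm1$ gives $\mathsf{B}=0$); a parallel regrouping of~\eqref{eqnCoreq2} against the right side of the constraint gives $G(\rho)<\mathsf{M}(\rho)$. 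Since $G,\mathsf{B},\mathsf{M}$ are continuous on the compact interval of admissible $\rho$, a strict pointwise inequality upgrades to $\max_\rho G(\rho)<\max_\rho\min\{\mathsf{B}(\rho),\mathsf{M}(\rho)\}$; as the left side is exactly the Corollary~\ref{cor:weakmainOB} bound (equivalently, the explicit bound of step~(ii)), this is the claimed strict improvement. Besides the Gaussianization in step~(i), I expect the case bookkeeping here---the two cut-set regimes $S_{12}\ge S_{23}$ and $S_{12}<S_{23}$ together with the two branches of the $\min$ defining $T$---to be the most laborious part.
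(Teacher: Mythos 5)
Your steps (i) and (ii) track the paper's proof closely: Gaussian optimality is indeed established by the doubling/rotation perturbation argument (Lemma~\ref{le:gauoptcor}), the parametrization of $K_{X,Z_\rmr|V,X_\rmr}$ by $(\rho,\alpha,\beta,\sigma)$ is exactly the paper's, and the optimizer is indeed characterized by the side constraint holding with equality, a linear equation in $\sigma$ whose two branches of $T$ come from the two branches of the $\min$ in the constraint. Two caveats there: your justification for constraint-tightness rests on the claim that the objective depends on $\sigma$ only through $\mathrm{Var}(Y_\rmr\mid V,X_\rmr)$, which is false because of the $\tfrac12\log(\beta(1-\sigma^2))$ term (the paper instead proves tightness by time-sharing $V$ with $Y_\rmr+\zeta W$, Lemma~\ref{le:auxgau}); and your closed form for $T$ should carry the factor $\mathrm{Var}(Y)/(1+S_{12})$, not $\mathrm{Var}(Y_\rmr)/(1+S_{12})$ (the latter is $1$ under your normalization and does not reproduce the first branch of $T$).

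The genuine gap is in step (iii). The mechanism you propose for strictness --- ``making $I(X;Y_\rmr|V,X_\rmr,Y)$ vanish would force $\beta=1$, hence $\sigma=0$'' --- is wrong. In the Gaussian parametrization one computes $\mathrm{Cov}(X,Y_\rmr\mid V,X_\rmr,Y)\propto g_{12}\mathrm{Var}(X\mid V,X_\rmr)+\mathrm{Cov}(X,Z_\rmr\mid V,X_\rmr)$, so vanishing forces $\sigma\sqrt{\beta}=-\sqrt{S_{12}(1-\rho^2)\alpha}$ (a specific \emph{negative} conditional correlation), not $\beta=1$. Combining this with $I(V;Y|X_\rmr,Y_\rmr)=0$ (which pins $\alpha=1/(1+S_{12}(1-\rho^2))$) and the PSD constraint $(1-\alpha)(1-\beta)\geq\sigma^2\alpha\beta$ forces $\mathrm{Var}(Y_\rmr\mid V,X_\rmr)=0$: the only way both subtracted terms vanish is for $V$ to reveal $Y_\rmr$ given $X_\rmr$, and it is \emph{this} degeneracy --- $I(V,X_\rmr;Y_\rmr)-I(V,X_\rmr;Y)=+\infty$, violating the side constraint (equivalently, driving the MAC-side expression to $-\infty$) --- that yields the contradiction. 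The paper reaches the same conclusion without presupposing a Gaussian $V$, via Lemma~\ref{lmm1} and the double-Markovity Lemma~\ref{le:dbma}. Separately, you aim to prove $G(\rho)<\min\{\mathsf{B}(\rho),\mathsf{M}(\rho)\}$ for \emph{every} $\rho$, which is more than is needed and harder to carry out: since the cut-set maximizer \eqref{eq:XXropt} is unique and the cut-set value there equals $I(X;Y,Y_\rmr|X_\rmr)$, it suffices to rule out the simultaneous vanishing of $I(V;Y|X_\rmr,Y_\rmr)$ and $I(X;Y_\rmr|V,X_\rmr,Y)$ at that single input distribution, which is all the paper does.
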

The proof of this theorem is given in Section \ref{sec:proof:thm:cutsubopt}. Since the new upper bound implies the cutset upper bound, the proof idea is to arrive at a contradiction assuming that the two upper bounds matched. As an outline, recall that the cutset bound in~\eqref{eqnCT1} is maximized by a jointly Gaussian $(X,X_\mathrm{r})$ and that at the maximizing distribution, the bound coincides with the broadcast bound $I(X;Y,Y_\rmr|X_\rmr)$ \cite[Section 16.5]{elk11}. Our bound involves the inequality 
\begin{align*}
    R&\leq I(X;Y,Y_{\mathrm{r}}|X_{\mathrm{r}})-I(V;Y|X_{\mathrm{r}},Y_{\mathrm{r}})-I(X;Y_{\mathrm{r}}|V,X_{\mathrm{r}},Y).
\end{align*}
If the cutset bound is equal to the new upper bound, then it must be that
$I(V;Y|X_{\mathrm{r}},Y_{\mathrm{r}})=I(X;Y_{\mathrm{r}}|V,X_{\mathrm{r}},Y)=0$. These two conditions are shown to imply that given $X_{\mathrm{r}}=x_{\mathrm{r}}$, $V\rightarrow Y_{\mathrm{r}}\rightarrow X$ and $Y_{\mathrm{r}}\rightarrow V\rightarrow X$.
    Given $X_{\mathrm{r}}=x_{\mathrm{r}}$, this ``double Markovity condition" implies that $X$ is independent of $(V,Y_\rmr)$. However, this is a contradiction since $X$ is not independent of $Y_\rmr$ given $X_{\mathrm{r}}=x_{\mathrm{r}}$. This allows us to conclude that the new bound strictly improves upon the cutset bound.
The rest of the proof uses the sub-additivity, doubling, and rotation techniques from \cite{gen14} to show that our upper bound is maximized by jointly Gaussian random variables.

Figure \ref{fig1a} compares the bound in Corollary \ref{cor:weakmainOB} to the bound in \cite[Proposition 1]{gon22} for the scalar Gaussian relay channel, the cutset bound, and the compress-forward lower bound (evaluated using Gaussian distributions). Note that for the example in Figure \ref{fig1a}, compress-forward outperforms  decode-forward. 

\begin{figure}
    \centering
    \includegraphics[scale=0.4]{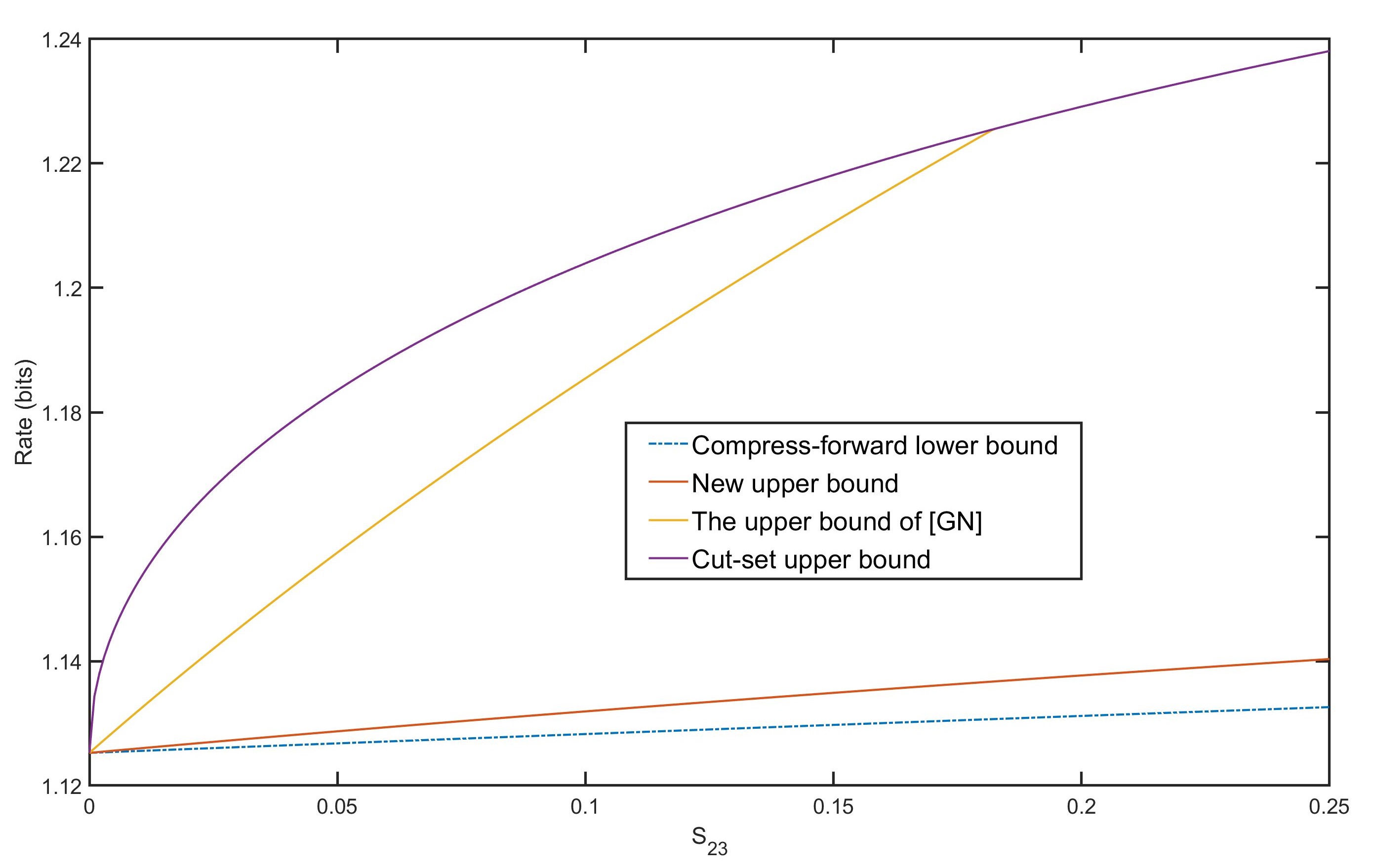}
    \caption{Plots of the bounds for the Gaussian relay channel with $S_{13}=3.7585$, $S_{12}=1.2139$. The new upper bound is the bound in Corollary \ref{cor:weakmainOB}. }
    \label{fig1a}
\end{figure}

\subsection{Relay channels with orthogonal receiver components}
\label{sec:2}

In this section we present results for the following sub-class of relay channels without self-interference.
\begin{definition} 
A relay channel is said to be \textit{with orthogonal receiver components} (also referred to as \textit{primitive}) (see Section 16.7.3 in~\cite{elk11}) if $Y=(Y_1, Y_2)$, where $p(y_1, y_2, y_{\mathrm{r}}|x,x_{\mathrm{r}})=p(y_1,y_{\mathrm{r}}|x)p(y_2|x_{\mathrm{r}})$. 
It is known, see \cite{kim2007coding}, that in this case relaxing the strictly causal relay function $x_{\rmr i}(y_\rmr^{i-1})$ to a non-causal relay function $x_{\rmr i}(y_\rmr^n)$ does not change the capacity. Furthermore,
the capacity of this relay channel depends on $p(y_2|x_{\mathrm{r}})$ only through the capacity of the point-to-point channel $p(y_2|x_{\mathrm{r}})$. Hence 
we can substitute the relay-to-receiver channel $p(y_2|x_{\mathrm{r}})$ with a noiseless link of the same capacity $C_0$~\cite{kim2007coding} as shown in Figure \ref{setup:primitive relay}. For completeness, we give an outline of the proof in  Appendix \ref{AppendixNew}.
\end{definition}
\begin{figure}\centering
	\begin{tikzpicture}
	\node at (-0.95,-0.95) {$M$};
	\draw [->,thick] (-0.6,-1) -- (0.2,-1);
	\draw (0.2,.-1.5) rectangle +(1.6,1); \node at (1,-1) {Encoder};
	\draw [->,thick] (1.8,-1)-- (3.1,-1); \node at (2.45,-0.7) {$X^n$};
	\draw [->,thick] (4.4,-0.5)-- (4.4,0.5); \node at (4.8,0) {$Y_{\mathrm{r}}^n$};
	\draw[vecArrow]  (5.53,1) -- (7.1,-0.5) ; \node at (6.55,0.45) {$C_0$};
	\draw (3.1,-1.5) rectangle +(2.4,1); \node at (4.3,-1) {$p(y_1,y_{\mathrm{r}}|x)$};
	\draw (3.1,0.5) rectangle +(2.4,1); \node at (4.3,1) {Relay Encoder};
	\draw [->,thick] (5.5,-1) --(6.4,-1); \node at (5.95,-.7) {$Y_{1}^n$};
	\draw (6.4,-1.5) rectangle +(1.6,1); \node at (7.2,-1) {Decoder};
	\draw [->,thick] (8.0,-1) --(8.8,-1); \node at (9.15,-0.945) {$\hat M$};
	\end{tikzpicture}
		\caption{Relay channel with orthogonal receiver components.}
	\label{setup:primitive relay}
\end{figure}
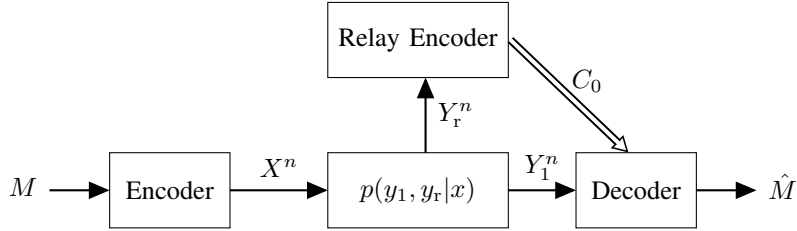

The following  provides an equivalent characterization of the upper bound in Theorem \ref{thm:maintheorem} for relay channels with orthogonal receiver components.

\begin{proposition}
\label{prop:mainprirel} Any achievable rate $R$ for the relay channel with orthogonal receiver components $p(y_1,y_{\mathrm{r}}|x)$ with a relay-to-receiver link of capacity $C_0$ must satisfy the following conditions
 \begin{align}
R&\leq I(X;Y_1,Y_{\mathrm{r}})-I( V;Y_1|Y_{\mathrm{r}})-I(X;Y_{\mathrm{r}}| V,Y_1)\label{propeq0.5o}
\\&=I(X;Y_1,V)-I(V;X|Y_\rmr),\label{propeq1o}
 \end{align}
 for some $p(x)p(y_1,y_{\mathrm{r}}|x)p(v|x,y_{\mathrm{r}})$ such that
\begin{align}
 I( V;Y_{\mathrm{r}})-I( V;Y_1)\leq C_0.\label{propconstraint}
 \end{align}
An equivalent form the bound without constraints is as follows: a rate $R$ is achievable if
 \begin{align}
R&\leq I(X;Y_1,Y_{\mathrm{r}})-I( V;Y_1|Y_{\mathrm{r}})-I(X;Y_{\mathrm{r}}| V,Y_1)=I(X;Y_1,V)-I(V;X|Y_\rmr)\label{Prop1neqn1}
\\
R&\leq I(X;Y_1)+C_0-I(V;Y_{\mathrm{r}}|X,Y_1)\label{Prop1neqn2}
 \end{align}
 for some $p(x)p(y_1,y_{\mathrm{r}}|x)p(v|x,y_{\mathrm{r}})$.
 
 Further it suffices to consider $|\Vc| \leq |\Xc| |\Yc_{\mathrm{r}}|+1$ when evaluating either form of the bound.

\end{proposition}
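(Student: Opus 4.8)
The plan is to obtain Proposition~\ref{prop:mainprirel} by specializing Theorem~\ref{thm:maintheorem} to the primitive structure and then reconciling the resulting expression with the two stated forms. I would model the capacity-$C_0$ link as a per-letter noiseless channel $Y_2=X_\rmr$ over an alphabet with $\log|\Xc_\rmr|=C_0$ (only $H(X_\rmr)\le C_0$ is used below; non-dyadic $C_0$ is handled by the usual blocklength/rounding argument), so that the channel becomes $p(y_1,y_\rmr|x)$ with $Y=(Y_1,X_\rmr)$ and, crucially, $X_\rmr\perp(Y_1,Y_\rmr)\mid X$. Under this structure every mutual-information term of Theorem~\ref{thm:maintheorem} that is conditioned on $X_\rmr$ and contains $Y$ collapses to the corresponding term with $Y_1$, and the $X_\rmr$-information telescopes since $Y_2=X_\rmr$. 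The central move is to absorb the relay input into the late auxiliary, i.e., to set $\Vv:=(V,X_\rmr)$; this is a legitimate kernel $p(\vv\mid x,y_\rmr)$ for the Proposition precisely because $X_\rmr\perp Y_\rmr\mid X$. A short chain-rule computation then shows that $I(X;Y_1,Y_\rmr)-I(\Vv;Y_1\mid Y_\rmr)-I(X;Y_\rmr\mid\Vv,Y_1)$ equals the specialization of \eqref{eqnUB1.5} augmented by $I(X_\rmr;Y_\rmr)\ge 0$, while $I(X;Y_1)+C_0-I(\Vv;Y_\rmr\mid X,Y_1)$ equals the specialization of \eqref{eqnUB0} augmented by $(C_0-H(X_\rmr))+I(X_\rmr;Y_1)\ge 0$. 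Hence for every admissible choice in Theorem~\ref{thm:maintheorem} the pair $(\Vv,X)$ is admissible in the unconstrained form \eqref{Prop1neqn1}--\eqref{Prop1neqn2} with a value no smaller than Theorem~\ref{thm:maintheorem}'s, which gives one inclusion.

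For the reverse inclusion I would work with the constrained form and first argue that it suffices to consider $V$ with $t:=I(V;Y_\rmr)-I(V;Y_1)\ge 0$: from $\eqref{propeq0.5o}=\eqref{propeq1o}$ and the relation $I(X;V\mid Y_1)-I(V;X\mid Y_\rmr)=t-I(V;Y_\rmr\mid X,Y_1)$, the right side of \eqref{propeq0.5o} can exceed $I(X;Y_1)$ only when $t>0$, so the trivial choice $V=\emptyset$ already beats every $t<0$ choice. Given $V$ with $0\le t\le C_0$, I would realize it inside Theorem~\ref{thm:maintheorem} by taking $X_\rmr$ an independent flag with $H(X_\rmr)=t$ (possible as $t\le C_0=\log|\Xc_\rmr|$), $\Vv=(V,X_\rmr)$, and $U=\emptyset$. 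Then the balance condition \eqref{eq:UBcon} reduces to $I(V;Y_\rmr)-t-I(V;Y_1)=0$, which holds; the specialization of \eqref{eqnUB1.5} equals the right side of \eqref{propeq0.5o}, the specialization of \eqref{eqnUB0} equals it as well (again via the displayed relation), and \eqref{eqnUB1} specializes to $I(X;Y_1,Y_\rmr)$, which dominates it. So Theorem~\ref{thm:maintheorem} already yields $R\le\eqref{propeq0.5o}$ for this $V$, giving the other inclusion and hence equality of the two bounds.

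It then remains to reconcile the two forms of the Proposition. The identity $\eqref{propeq0.5o}=\eqref{propeq1o}$ is routine from the chain rule and the Markov relation $V\to(X,Y_\rmr)\to Y_1$. For ``constrained $\Rightarrow$ unconstrained'', the relation $I(X;V\mid Y_1)+I(V;Y_\rmr\mid X,Y_1)-I(V;X\mid Y_\rmr)=t$ shows that \eqref{propconstraint} forces the right side of \eqref{propeq0.5o} to be at most that of \eqref{Prop1neqn2}, so the constrained bound equals the minimum of \eqref{propeq0.5o} and \eqref{Prop1neqn2} and is at most the unconstrained one. For the converse, given $(V,X)$ with $t>C_0$ I would prorate $V$: with an independent $\{0,1\}$-flag $Q$ that equals $1$ with probability $\lambda$, let $\Vv$ be $V$ when $Q=1$ and $\emptyset$ when $Q=0$, recorded together with $Q$; then $I(\Vv;Y_\rmr)-I(\Vv;Y_1)=\lambda t$, so $\lambda=C_0/t$ satisfies \eqref{propconstraint}, and a direct computation gives that the right side of \eqref{propeq0.5o} at $\Vv$ equals $\lambda$ times its value at $V$ plus $(1-\lambda)I(X;Y_1)$, from which the displayed relation yields that it dominates the right side of \eqref{Prop1neqn2} at $V$; hence the unconstrained bound is at most the constrained one. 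Finally the cardinality bound $|\Vc|\le|\Xc||\Yc_\rmr|+1$ follows from the Fenchel--Eggleston--Carath\'eodory theorem applied to $p(v\mid x,y_\rmr)$, preserving $p(x,y_\rmr)$ together with the two functionals $I(X;Y_1,V)-I(V;X\mid Y_\rmr)$ and $I(V;Y_\rmr)-I(V;Y_1)$.

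The hard part will be the bookkeeping around the balance condition \eqref{eq:UBcon}: one has to check that, under the primitive structure, it collapses \emph{exactly} to the scalar slack $C_0\ge t$, and this hinges both on the reduction to $t\ge 0$ (so that $V=\emptyset$ absorbs the awkward regime) and on the single algebraic identity $I(X;V\mid Y_1)-I(V;X\mid Y_\rmr)=I(V;Y_\rmr)-I(V;Y_1)-I(V;Y_\rmr\mid X,Y_1)$, which is what simultaneously pins down the specializations of \eqref{eqnUB1.5}, of \eqref{eqnUB0}, and of the constraint. A lesser but genuine point is the per-letter modeling of the $C_0$-link for non-dyadic $C_0$, dealt with by a standard limiting argument that does not affect the single-letter bound.
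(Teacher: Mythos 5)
Your proposal is correct and follows essentially the same route as the paper's proof: a cyclic chain of inclusions between Theorem~\ref{thm:maintheorem}, the unconstrained form, and the constrained form, using $\tilde V=(V,X_\rmr)$ to absorb the relay input, the reduction to $I(V;Y_\rmr)-I(V;Y_1)\ge 0$ via comparison with direct transmission, time-sharing to saturate \eqref{propconstraint}, and an independent relay link to embed the constrained form back into Theorem~\ref{thm:maintheorem}. The only (harmless) deviation is in that last embedding, where you take $U=\emptyset$ with $H(X_\rmr)=t$ on a noiseless link, whereas the paper keeps a general $p(y_2|x_\rmr)$ and picks a nontrivial $U$ with $U\to X_\rmr\to Y_2$ and $I(U;Y_2)=C_0-t$; both satisfy the balance condition \eqref{eq:UBcon}.
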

The proof of this proposition is given in Section \ref{sec:proofProp1}.  An alternative direct proof of the above proposition is given in Section \ref{sec:alternaproof}. 

\begin{remark}  \label{rmk8d}
The only difference between the upper bound in Proposition \ref{prop:mainprirel} and the equivalent forms of the compress-forward lower bound in 
\cite[Eq. 16.14]{elk11} and \cite[Proposition 3]{kim2007coding} is that the upper bound takes the maximum over $p(x)p(v|x,y_{\mathrm{r}})$ while the compress-forward lower bound takes maximum of the same expression over
$p(x)p(v|y_{\mathrm{r}})$.
\end{remark}

\smallskip

\noindent{\bf Kim's conjecture.} We use Proposition \ref{prop:mainprirel} to prove a conjecture posed by Kim in \cite[Question 2]{kim2007coding} for a class of deterministic relay channels with orthogonal receiver components defined by $p(y_1,y_\rmr|x)$, where $X=f(Y_1, Y_\rmr)$ for some function $f$. 
\begin{theorem}
Let $\mathcal{C}(C_0)$ be the supremum of achievable rates $R$ for a given $C_0$. Let $C_0^*$ be the minimum value of $C_0$ for which $\mathcal{C}(C_0) = \mathcal{C}(\infty) =\log|\mathcal{X}|$. Then 
$C_0^*=H_G(Y_\rmr|Y_1)$ and is achieved by a uniform distribution on $X$. Here $H_G(Y_\rmr|Y_1)$ denotes the conditional graph entropy of the characteristic
graph of $(Y_\rmr, Y_1)$ and the function $f$ (as defined in \cite{orlitsky1995coding}).
\label{thm:KimConjecture}
\end{theorem}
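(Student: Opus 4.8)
\emph{Proof proposal.}
The plan is to sandwich $C_0^*$ between $H_G(Y_\rmr|Y_1)$ evaluated at the uniform input from both sides: the converse direction uses the upper bound of Proposition~\ref{prop:mainprirel}, while the achievability direction uses its compress--forward counterpart (Remark~\ref{rmk8d}) together with the single-letter characterization of conditional graph entropy due to Orlitsky and Roche~\cite{orlitsky1995coding}, namely, for the characteristic graph $G$ of $(Y_\rmr,Y_1)$ and $f$,
\[
H_G(Y_\rmr|Y_1)=\min\bigl\{\,I(W;Y_\rmr|Y_1)\ :\ W\rightarrow Y_\rmr\rightarrow Y_1,\ \ f(Y_1,Y_\rmr)\text{ is a function of }(Y_1,W)\,\bigr\},
\]
where the minimum is attained. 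Before that I would record two elementary facts: for every $C_0$, \eqref{propeq1o} gives $R\le I(X;Y_1,V)-I(V;X|Y_\rmr)\le I(X;Y_1,V)\le H(X)\le\log|\mathcal{X}|$, so $\mathcal{C}(C_0)\le\log|\mathcal{X}|$; and for $C_0=\infty$ the relay can forward $Y_\rmr^n$ losslessly and the decoder then computes $X^n=f(Y_1^n,Y_\rmr^n)$, so $\mathcal{C}(\infty)=\log|\mathcal{X}|$, attained by the uniform input. Thus the task reduces to locating the threshold on $C_0$.

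\emph{Achievability} ($C_0^*\le H_G(Y_\rmr|Y_1)$). Fix $X$ uniform on $\mathcal{X}$, which induces a law on $(Y_1,Y_\rmr)$, and let $V$ attain the minimum in the displayed characterization: $V\rightarrow Y_\rmr\rightarrow(X,Y_1)$, with $X=f(Y_1,Y_\rmr)$ a deterministic function of $(Y_1,V)$ and $I(V;Y_\rmr|Y_1)=H_G(Y_\rmr|Y_1)$. This $V$ has the admissible form $p(x)p(v|y_\rmr)$ for the compress--forward lower bound, which by Remark~\ref{rmk8d} is the expression of Proposition~\ref{prop:mainprirel} with the maximization restricted to such distributions, i.e.\ $I(X;Y_1,V)-I(V;X|Y_\rmr)$ subject to $I(V;Y_\rmr)-I(V;Y_1)\le C_0$. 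Here $I(V;X|Y_\rmr)=0$ since $V\perp X\mid Y_\rmr$; $I(X;Y_1,V)=H(X)=\log|\mathcal{X}|$ since $H(X|Y_1,V)=0$; and the constraint reads $I(V;Y_\rmr|Y_1)=H_G(Y_\rmr|Y_1)\le C_0$. So $\mathcal{C}(C_0)=\log|\mathcal{X}|$ whenever $C_0\ge H_G(Y_\rmr|Y_1)$.

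\emph{Converse} ($C_0^*\ge H_G(Y_\rmr|Y_1)$). Suppose $\mathcal{C}(C_0)=\log|\mathcal{X}|$. By Proposition~\ref{prop:mainprirel} in the form \eqref{propeq1o}--\eqref{propconstraint} (the cardinality bound makes the optimum attained), there exist $p(x)$ and $p(v|x,y_\rmr)$ with $I(V;Y_\rmr)-I(V;Y_1)\le C_0$ and $\log|\mathcal{X}|\le I(X;Y_1,V)-I(V;X|Y_\rmr)$. Since $I(X;Y_1,V)-I(V;X|Y_\rmr)\le I(X;Y_1,V)\le H(X)\le\log|\mathcal{X}|$, all of these are equalities, which forces: (i) $X$ uniform on $\mathcal{X}$; (ii) $I(V;X|Y_\rmr)=0$, which together with the Markov relation $V\rightarrow(X,Y_\rmr)\rightarrow Y_1$ inherent in $p(v|x,y_\rmr)$ upgrades to $V\rightarrow Y_\rmr\rightarrow(X,Y_1)$; and (iii) $H(X|Y_1,V)=0$, i.e.\ $X$ is a function of $(Y_1,V)$. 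Because $X=f(Y_1,Y_\rmr)$ by hypothesis, (ii)--(iii) are precisely the feasibility constraints in the Orlitsky--Roche minimization defining $H_G(Y_\rmr|Y_1)$ at the uniform input of (i); hence $C_0\ge I(V;Y_\rmr)-I(V;Y_1)=I(V;Y_\rmr|Y_1)\ge H_G(Y_\rmr|Y_1)$. Since this holds for every qualifying $C_0$, we get $C_0^*\ge H_G(Y_\rmr|Y_1)$, and combined with achievability, $C_0^*=H_G(Y_\rmr|Y_1)$, attained by the uniform input.

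I expect the main obstacle to be the converse: extracting conditions (i)--(iii) cleanly from the tightness of the bound in Proposition~\ref{prop:mainprirel}, and, crucially, recognizing that (ii)--(iii) coincide with the feasibility constraints in the single-letter description of $H_G(Y_\rmr|Y_1)$, which is exactly what lets the structural identity $X=f(Y_1,Y_\rmr)$ be used. A subtler point that needs care is that the relevant conditional graph entropy must be taken at the uniform input; this is an \emph{output} of the converse argument (condition (i)), not an assumption, and it matches the input used in the achievability construction.
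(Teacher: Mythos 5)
Your proposal is correct and follows essentially the same route as the paper: tightness of Proposition~\ref{prop:mainprirel} forces $X$ uniform, $I(V;X|Y_\rmr)=0$ and $H(X|Y_1,V)=0$, after which the Orlitsky--Roche characterization gives the converse, and compress--forward with the Orlitsky--Roche minimizer gives achievability (the paper simply cites Kim for this direction). The only cosmetic difference is that you extract $I(V;X|Y_\rmr)=0$ directly from the form \eqref{propeq1o}, whereas the paper first deduces $I(V;Y_1|Y_\rmr)=0$ from \eqref{propeq0.5o} and then uses $X=f(Y_1,Y_\rmr)$ to conclude $I(V;X|Y_\rmr)=0$.
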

 \begin{proof}
 As argued in \cite{kim2007coding}, $C_0^*\leq H_G(Y_\rmr|Y_1)$. It remains to show that $C_0^*\geq H_G(Y_\rmr|Y_1)$.
 Let $C_0=C_0^*$. 
 Consider the following constraint in the upper bound of Proposition \ref{prop:mainprirel}:
\begin{align}
R&\leq I(X;Y_1,Y_{\mathrm{r}})-I( V;Y_1|Y_{\mathrm{r}})-I(X;Y_{\mathrm{r}}| V,Y_1).
\end{align}
If $\max_{p(x)}I(X;Y_1,Y_{\mathrm{r}})$ is achievable, it forces
\begin{align}
I(V;Y_1|Y_{\mathrm{r}})=I(X;Y_{\mathrm{r}}| V,Y_1)=0.
\label{eqnneqaaa}
\end{align}
 Since $X$ is a function of $(Y_1, Y_\rmr)$, $I(V;Y_1|Y_{\mathrm{r}})=I(V;X,Y_1|Y_{\mathrm{r}})\geq
 I(V;X|Y_\rmr)$. This yields $I(V;X|Y_\rmr)=0$. Consequently, \eqref{propeq1o} and \eqref{propconstraint} imply that the achievable rate is achievable by compress-forward (see \cite[Proposition 3]{kim2007coding} and Remark \ref{rmk8d}). From 
$\log|\mathcal{X}|=I(X; V,Y_1)$, we deduce that $X$ is uniformly distributed and $H(X|V,Y_1)=0$. From \cite[Theorem 2]{orlitsky1995coding}, we deduce that $C_0^*\geq H_G(Y_\rmr|Y_1)$. This confirms Kim's conjecture in \cite{kim2007coding}.
\end{proof}

\subsection{Product-form relay channels}
\label{sec:3}
Consider the following class of relay channels with orthogonal receiver components.
\begin{definition}
A relay channel with orthogonal receiver components is said to be \textit{product-form} if $p(y_1,y_{\mathrm{r}}|x) = p(y_1|x) p(y_{\mathrm{r}}|x)$.
\end{definition}
\begin{remark}
As Zhang observed in \cite{zhang1988partial}, $\max_{p(x)}I(X;Y_1)+C_0$  is an upper bound on the capacity of the product-form relay channel. Further, he identified a class of product-form relay channels for which the upper bound  is achievable. Proposition \ref{prop:mainprirel} can be used to recover and generalize his result. A similar argument as in Remark \ref{remark3} shows  that for any arbitrary product-form relay channel,
the rate $\max_{p(x)}I(X;Y_1)+C_0$ is achievable if and only if it is achievable by partial decode-forward.
To see this, note that  if the rate $\max_{p(x)}I(X;Y_1)+C_0$ is achievable, from \eqref{Prop1neqn2} we deduce that $I(V;Y_{\mathrm{r}}|X,Y_1)=0$. For a product-form relay channel we have $p(x)p(y_1|x)p(y_{\mathrm{r}}|x)p(v|x,y_{\mathrm{r}})$, which implies that $I(V;Y_1|X)=0$. Thus, $I(V;Y_\rmr,Y_1|X)=0$ or $p(v|x,y_r)=p(v|x)$. Thus, from \eqref{propeq1o} we obtain the following constraint on any achievable rate
\begin{align*}
R&\leq  I(X;Y_1,V)-I(V;X|Y_\rmr)
 \\&=I(V;Y_\rmr)-I(V;Y_\rmr|X)+I(X;Y_1|V)
 \\&= I(V;Y_\rmr)+I(X;Y_1|V).
 \end{align*}
On the other hand, the partial decode-forward rate~\cite[Eq. (5)]{kim2007coding} is
\[\max_{p(v,x)}\min\left(I(V;Y_\rmr)+I(X;Y_1|V), I(X;Y_1)+C_0\right).\]
This completes the proof.
\end{remark}

We will also consider the following special case of the above class.
\begin{definition}
\label{defn:spr}
A product-form relay channel is said to be \textit{symmetric} if $\mathcal{Y}_1=\mathcal{Y}_{\mathrm{r}}$, $p(y_{\mathrm{r}},y_1|x)=p(y_{\mathrm{r}}|x)p(y_1|x)$, and $p_{Y_1|X}(y|x)=p_{Y_{\mathrm{r}}|X}(y|x)$ for all $x,y$.
\end{definition}
In the following we consider several applications for these classes of relay channels.
\smallskip

\noindent{\bf Generalized Cover Relay Channel Problem.} We will need the following definitions.
\begin{definition}
A discrete-memoryless channel $p(y|x)$ is said to be \textit{generic} if the channel matrix $P$ with entries $P_{x,y}=p(y|x)$, $x \in \mathcal{X}$ and $y \in \mathcal{Y}$, is full row rank. 
\label{def:gencha}
\end{definition}

\begin{remark}
Observe that when $|\mathcal Y|\geq |\mathcal X|$, the class of row rank deficient matrices lie in a smaller dimensional space. Hence generic matrices are dense in the space of channel matrices when $|\mathcal Y| \geq |\mathcal X|$. When $|\mathcal Y| < |\mathcal X|$, let $\Yc=\{y_1,\ldots,y_k\}$. Append a new symbol $y_{k+1}$ to $\Yc$ and define the new channel matrix
\begin{align*}
    p_{new}(y_i|x) = \begin{cases} \begin{array}{cc} p(y_i|x) & 1\leq i \leq k-1\\
    \frac{1}{2} p(y_i|x) & i \in \{k,k+1\} \end{array} \end{cases}.
\end{align*}
Observe that the new receiver $Y_{new}$ can simulate $Y$ by collapsing the outputs $y_{k+1}$ and $y_k$ into a single symbol, and $Y$ can locally simulate $Y_{new}$. Hence the capacity for the new setting is the same as that in the previous setting. Therefore, from a capacity perspective, without loss of generality we can assume that $|\mathcal Y| \geq |\mathcal X|$. Thus the generic assumption that we make above allows us to deal with a dense class of matrices. Observe that if the expressions involved in determining the capacity satisfy a continuity property with respect to channel transition matrices, then one could deal with the capacity of a non-generic channel as the limit of the capacity of generic channels.
\end{remark}

\begin{remark}
It is immediate that if $p(y_1|x_1)$ and $p(y_2|x_2)$ are generic, then so is $p(y_1|x_1) \otimes p(y_2|x_2)$. That is, the class of generic channels is closed under a product operation.
\end{remark}

\begin{definition}
A product-form relay channel is said to be \textit{generic} if  the channel $p(y_1|x)$ is  generic.
\end{definition}

In \cite{cover1987capacity}, Cover posed a special (symmetric) case of the following problem. Consider a   product-form relay channel and let $\mathcal{C}(C_0)$ be the supremum of achievable rates $R$ for a given $C_0$. What is the critical value $C_0^*$ for which $C_0^*=\inf \{C_0:\mathcal{C}(C_0) = \mathcal{C}(\infty) = \max_{p(x)}I(X;Y_1,Y_{\mathrm{r}})\}$?

This problem has recently attracted a fair amount of attention and non-traditional methods have been used to answer the question as well as obtain new upper bounds for $\mathcal{C}(C_0)$ for symmetric Gaussian channels and binary-symmetric channels. As we will show in the next subsections our new upper bound, which uses traditional converse techniques, recovers and (in the binary-symmetric case) improves on these recent results. 

We now show that we can answer the generalized Cover relay channel problem when $p(y|x)$ is generic by evaluating the bound in Proposition~\ref{prop:mainprirel}.
\begin{theorem}
\label{thm:covopgen}
For a generic product-form relay channel, let $C_0^*$ be the minimum value of $C_0$ such that $\mathcal{C}(C_0) =   \mathcal{C}(\infty) = \max_{p(x)}I(X;Y_1,Y_{\mathrm{r}})$  and $R_{0}^*$ be the minimum value of $C_0$ such that $R_{\mathsf{CF}}(C_0) =   \mathcal{C}(\infty) = \max_{p(x)}I(X;Y_1,Y_{\mathrm{r}})$. Then $C_0^*=R_{0}^*$.
\end{theorem}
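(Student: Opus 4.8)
The plan is to prove the two inequalities $C_0^* \le R_0^*$ and $C_0^* \ge R_0^*$ separately, using Proposition~\ref{prop:mainprirel} for the first (upper bound) direction and the compress--forward lower bound together with Remark~\ref{rmk8d} for the second. For the direction $C_0^* \le R_0^*$: by definition of $R_0^*$, compress--forward achieves $\mathcal{C}(\infty) = \max_{p(x)} I(X;Y_1,Y_\rmr)$ at link capacity $C_0 = R_0^*$, so $\mathcal{C}(R_0^*) = \mathcal{C}(\infty)$, which immediately gives $C_0^* \le R_0^*$. The real content is the reverse inequality $C_0^* \ge R_0^*$, i.e. that for every $C_0 < R_0^*$ the capacity is \emph{strictly} below $\mathcal{C}(\infty)$; equivalently, that whenever the upper bound of Proposition~\ref{prop:mainprirel} equals $\max_{p(x)} I(X;Y_1,Y_\rmr)$, then in fact compress--forward already achieves that value, forcing $C_0 \ge R_0^*$.

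The key step is therefore the following rigidity argument, modeled on the outline given for Theorem~\ref{thm:KimConjecture}. Suppose $C_0$ is such that the bound in Proposition~\ref{prop:mainprirel} equals $\max_{p(x)} I(X;Y_1,Y_\rmr)$, achieved by some $p^*(x)$ with auxiliary $V$. From \eqref{propeq0.5o},
\begin{align*}
\max_{p(x)} I(X;Y_1,Y_\rmr) &= I(X;Y_1,Y_\rmr) - I(V;Y_1|Y_\rmr) - I(X;Y_\rmr|V,Y_1),
\end{align*}
and since $I(X;Y_1,Y_\rmr) \le \max_{p(x)} I(X;Y_1,Y_\rmr)$ with the two non-negative subtracted terms, we must have $I(X;Y_1,Y_\rmr) = \max_{p(x)} I(X;Y_1,Y_\rmr)$ (so $p^*(x)$ is a maximizer), $I(V;Y_1|Y_\rmr) = 0$, and $I(X;Y_\rmr|V,Y_1) = 0$. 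Now I would use the genericity of the product-form channel: because $p(y_1|x)$ is full row rank, the map $p(x) \mapsto p(y_1)$ is injective, and more to the point the maximizer $p^*(x)$ of $I(X;Y_1,Y_\rmr)$ over the product-form channel has full support (or at least, $X$ is not a deterministic function of $Y_1$ alone unless the channel is trivial), so $I(X;Y_\rmr|Y_1) > 0$. Combined with the double-Markovity $V \to Y_\rmr \to X$ and $Y_\rmr \to V \to X$ (conditioned on $Y_1$, extracted from the two vanishing terms exactly as in the Theorem~\ref{thm:cutsubopt} sketch), this forces $V$ and $Y_\rmr$ to carry the same information about $X$ given $Y_1$; then the constraint \eqref{propconstraint} reads $I(V;Y_\rmr) - I(V;Y_1) \le C_0$, and I claim $V = Y_\rmr$ (or $V$ playing the role of $Y_\rmr$) is in fact the extremal choice, which makes $I(X;Y_1,V) - I(V;X|Y_\rmr)$ reduce to $I(X;Y_1,Y_\rmr)$ at link cost exactly the compress--forward requirement. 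This shows $C_0$ is at least the minimal link capacity at which compress--forward (which, per Remark~\ref{rmk8d}, is the same expression maximized over the smaller set $p(x)p(v|y_\rmr)$) reaches $\mathcal{C}(\infty)$; but the double-Markovity means the optimal $V$ already satisfies $V \to Y_\rmr \to X$, i.e. $p(v|x,y_\rmr) = p(v|y_\rmr)$, so the upper-bound optimizer \emph{is} a valid compress--forward choice, giving $R_{\mathsf{CF}}(C_0) = \mathcal{C}(\infty)$ and hence $C_0 \ge R_0^*$.

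The main obstacle I anticipate is making the rigidity step fully rigorous: extracting the double-Markov conditions $V \to Y_\rmr \to X$ and $Y_\rmr \to V \to X$ given $Y_1$ from $I(V;Y_1|Y_\rmr) = I(X;Y_\rmr|V,Y_1) = 0$ requires care (the first gives $V \to Y_\rmr \to Y_1$, not directly involving $X$), and then invoking the standard fact that double Markovity $A \to B \to C$ and $B \to A \to C$ implies $C \perp (A,B)$ --- or rather, the conditional version --- and checking that the resulting independence structure, together with genericity, is genuinely incompatible with $C_0 < R_0^*$ unless compress--forward already saturates. A secondary subtlety is confirming that the auxiliary random variable $V^*$ from the upper bound, once shown to satisfy $V^* \to Y_\rmr \to X$, indeed certifies $R_{\mathsf{CF}}(C_0) = \mathcal{C}(\infty)$ via the exact expression in Remark~\ref{rmk8d}; this should be a direct substitution but must be spelled out. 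If the clean chain ``$V^*$ is Markov'' $\Rightarrow$ ``$V^*$ is a compress--forward auxiliary'' goes through, the theorem follows without any channel-specific computation.
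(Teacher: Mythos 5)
Your overall skeleton is right: the easy inequality $C_0^*\le R_0^*$ from achievability of compress--forward, then a rigidity argument showing that whenever the bound of Proposition~\ref{prop:mainprirel} equals $\max_{p(x)}I(X;Y_1,Y_\rmr)$ the optimal $V$ must satisfy $V\to Y_\rmr\to X$, hence is a valid compress--forward auxiliary by Remark~\ref{rmk8d} and the characterization in \cite[Proposition~3]{kim2007coding}. You also correctly extract $I(V;Y_1|Y_\rmr)=0$ and $I(X;Y_\rmr|V,Y_1)=0$ from \eqref{propeq0.5o}. However, the mechanism you propose for getting from these to $I(V;X|Y_\rmr)=0$ --- a double-Markovity argument conditioned on $Y_1$, followed by the claim that ``$V=Y_\rmr$ is the extremal choice'' --- does not go through, and you half-acknowledge this yourself: the two vanishing terms are $V\to Y_\rmr\to Y_1$ and $X\to(V,Y_1)\to Y_\rmr$, which are not of the double-Markov form $A\to B\to C$, $B\to A\to C$ with respect to any common conditioning, so the double-Markovity lemma has nothing to bite on. Moreover your stated use of genericity (injectivity of $p(x)\mapsto p(y_1)$, full support of the maximizer, $I(X;Y_\rmr|Y_1)>0$) is not the relevant property and does not feed into the argument.

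The step that actually closes the gap is a direct, conditional application of the genericity of $p(y_1|x)$ (Definition~\ref{def:gencha}) to the \emph{first} vanishing term alone. Because the channel is product-form and $V$ is generated from $(X,Y_\rmr)$, conditioned on $Y_\rmr=y_\rmr$ the chain $V\to X\to Y_1$ is Markov. Then $I(V;Y_1|Y_\rmr=y_\rmr)=0$ means $P_{Y_1|V=v,Y_\rmr=y_\rmr}$ does not depend on $v$; writing this as a linear combination of the full-row-rank vectors $\{p(\cdot|x)\}_x$ forces $P_{X|V=v,Y_\rmr=y_\rmr}$ not to depend on $v$ either (this is exactly Lemma~\ref{lem:genind} applied conditionally). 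Averaging over $y_\rmr$ gives $I(V;X|Y_\rmr)=0$, i.e.\ $V\to Y_\rmr\to X\to Y_1$, and then \eqref{propeq1o} reduces to $R\le I(X;Y_1,V)$ with constraint \eqref{propconstraint}, which is precisely the compress--forward rate with compression variable $V$. The second vanishing term and any claim about $V$ ``being'' $Y_\rmr$ are not needed. I recommend replacing the double-Markovity paragraph with this one-step argument; as written, your proof of the direction $C_0^*\ge R_0^*$ rests on a step that would fail.
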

\begin{proof}
It is immediate  that $R_0^* \geq C_0^*$ since the compress-forward lower bound achieves $\mathcal{C}(\infty)$ when $C_0=R_0^*$. Therefore it suffices to show that $R_0^* \leq C_0^*$.
Let $C_0$ be such that $\mathcal{C}(C_0) = \mathcal{C}(\infty) = \max_{p(x)}I(X;Y_1,Y_{\mathrm{r}})$. 
From the constraint \eqref{propeq0.5o} of the upper bound in Proposition \ref{prop:mainprirel}, it follows that if $\mathcal{C}(\infty) = \max_{p(x)}I(X;Y_1,Y_{\mathrm{r}})$ is achievable, for a maximizing distribution $p^*(x)$, there exists a distribution  $p(v|x,y_{\mathrm{r}})$ such that $I(V;Y_1|Y_{\mathrm{r}})=0$. Since the channel $p_{Y_1|X}$ is generic, it follows from Lemma \ref{lem:genind} that $ I(V;Y_1|Y_{\mathrm{r}})=0$ implies that $I(V;X|Y_{\mathrm{r}})=0$. Therefore $V \to Y_{\mathrm{r}} \to X \to Y_1$ form a Markov chain. On the other hand, the compress-forward rate in~\cite[Proposition 3]{kim2007coding} is
\[\max_{p(x)p(v|y_\rmr)}\{(I(X;Y_1,V):I(V;Y_\rmr|Y_1)\leq C_0\}.\]
Thus, the constraints in \eqref{propeq1o} and \eqref{propconstraint} 
imply that the rate $R$ is achievable by compress-forward with the compression random variable $V$. Hence, we have that $R_{\mathsf{CF}}(C_0) = \mathcal C(C_0) = \mathcal C(\infty)$. Since this holds for any $C_0$ such that $\mathcal{C}(C_0) = \mathcal{C}(\infty) = \max_{p(x)}I(X;Y_1,Y_{\mathrm{r}})$, we have that $R_0^* \leq C_0^*$. This completes the proof.\end{proof}

\begin{remark}
In a concurrent work \cite{liu2020minorization} uses results and techniques from convex geometry to arrive at a solution for Cover's problem. In particular, Theorem 1 of \cite{liu2020minorization} states that there is a constant $c>0$ such that for any $R_0\in[H(\bar Z|\bar Y)-c^{-1}, H(\bar Z|\bar Y)]$, we have
\[H(\bar Z|\bar Y)-R_0\leq c\lambda^{\frac{1}{10}}\log^{\frac65}\frac{1}{\lambda},
\]
where 
$\lambda=\mathcal{C}(\infty)-\mathcal{C}(R_0)$, and $\bar Y$ and $\bar Z$ are random variables defined in \cite{liu2020minorization}.
In contrast, our proof follows from the upper bound established in Proposition~\ref{prop:mainprirel}. 
\end{remark}

\smallskip

\noindent{\bf  Gaussian product-form relay channel}. The Gaussian product-form relay channel is defined as
\begin{align}
Y_1&=X+W_1,\nonumber \\
Y_{\rmr}&=X+W_\rmr, \nonumber
\end{align}
where $W_1\sim \Nc(0,N_1)$ and $W_\rmr \sim \Nc(0,N_\rmr)$ are independent of each other and of $X$, and a link of capacity $C_0$ from the relay to the destination. We assume average power constraint $P$ on $X$ and define $S_{12}=P/N_\rmr$, $S_{13}=P/N_1$ and $S_{23}=2^{2C_0}-1$. 

The upper bound in Proposition \ref{prop:mainprirel} reduces to the following.
\begin{proposition}
\label{Proposition5}
Any achievable rate $R$ for the Gaussian product-form relay channel must satisfy the condition
\begin{align}\label{eqnPrimitiveGaussian}
        R\leq
        \begin{cases}
        \frac12\log\left(1+S_{13}+\frac{S_{12}(S_{13}+1)S_{23}}{(S_{13}+1)(S_{23}+1)-1}\right), & \textit{for } S_{12}\leq S_{13}+S_{23}+S_{13}S_{23},\\
        \frac{1}{2}\log((1+S_{13})(1+S_{23})), & \textit{otherwise.} 
        \end{cases}
\end{align}
\end{proposition}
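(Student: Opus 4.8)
The plan is to evaluate the optimization defining the bound of Proposition~\ref{prop:mainprirel} for the Gaussian product-form channel, working with the unconstrained two-inequality form \eqref{Prop1neqn1}--\eqref{Prop1neqn2}. First I exploit the Markov relations peculiar to the product structure $p(y_1,y_\rmr|x)=p(y_1|x)p(y_\rmr|x)$: since $V$ is generated from $(X,Y_\rmr)$ while $Y_1-X-Y_\rmr$, one has $V-X-Y_1$ (so $I(V;Y_1|X)=0$) and $Y_\rmr-(X,V)-Y_1$, hence $I(V;Y_\rmr|X,Y_1)=I(V;Y_\rmr|X)=:t\ge 0$. Using $I(X;Y_1|V)=h(Y_1|V)-h(W_1)$, $I(X;Y_\rmr)=h(Y_\rmr)-h(W_\rmr)$, $I(X;Y_\rmr|V)=h(Y_\rmr|V)-h(W_\rmr)+t$, and writing \eqref{Prop1neqn1} as $R\le I(X;Y_1|V)+I(X;Y_\rmr)-I(X;Y_\rmr|V)$, the two inequalities reduce to
\[
R\ \le\ \bigl(h(Y_1|V)-h(Y_\rmr|V)\bigr)+h(Y_\rmr)-h(W_1)-t,
\qquad
R\ \le\ I(X;Y_1)+C_0-t .
\]
The second branch of the claimed formula is then immediate: by maximum entropy $I(X;Y_1)\le\tfrac12\log(1+S_{13})$ under the power constraint, $C_0=\tfrac12\log(1+S_{23})$, and $t\ge 0$, so the second inequality always gives $R\le\tfrac12\log((1+S_{13})(1+S_{23}))$, which is the binding bound once $S_{12}$ is large.

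For the first regime one must extract $R\le\tfrac12\log B$ with $B=1+S_{13}+\tfrac{S_{12}(S_{13}+1)S_{23}}{(S_{13}+1)(S_{23}+1)-1}$ from the first inequality; since the second inequality decreases in $t$ and drops below $\tfrac12\log B$ once $t$ passes an explicit threshold, it suffices to control the first inequality for $t$ in a bounded interval. The crucial ingredient is the extremal inequality that, for all admissible $(X,V)$ with $\E[X^2]\le P$ and $I(V;Y_\rmr|X)=t$,
\[
\bigl(h(Y_1|V)-h(Y_\rmr|V)\bigr)+h(Y_\rmr)-h(W_1)-t\ \le\ \phi(t),
\]
where $\phi(t)$ is the value attained when $X\sim\Nc(0,P)$ and $V$ is jointly Gaussian with $(X,W_1,W_\rmr)$. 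This is the main obstacle. Conditional EPI controls $h(Y_1|V)$ because $W_1\perp(X,V)$, but $W_\rmr$ is in general correlated with $V$ --- exactly what makes this bound larger than the compress--forward rate (cf.\ Remark~\ref{rmk8d}) --- so $h(Y_\rmr|V)$ cannot be lower-bounded naively in terms of $h(X|V)$, and the cheap relaxation $h(Y_\rmr|V)\ge h(W_\rmr)-t$ collapses the right-hand side to the cut-set value. I would establish the extremal inequality by the sub-additivity/doubling/rotation method used for Theorem~\ref{thm:cutsubopt} (from~\cite{gen14}): apply the functional to two i.i.d.\ copies $(X_1,V_1),(X_2,V_2)$ and to the $45^{\circ}$ rotation $(X_\pm,V_\pm)$, check that it is sub-additive and that the power and $t$-constraints are respected under the rotation and single-letterization, and conclude that an optimizer must be Gaussian since Gaussians are the only rotation-invariant inputs allowing equality; an I-MMSE / worst-additive-noise argument controlling $h(Y_\rmr|V)$ through the constraint $t$ is an alternative route.

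With Gaussian optimality in hand, take $X\sim\Nc(0,P)$ (the power constraint is active at the optimum by scaling) and $V=aX+bW_\rmr+Q$ with $Q\sim\Nc(0,\tau)$ independent; up to scale invariance this is a two-parameter family, and fixing $t$ fixes $b^2N_\rmr/\tau=2^{2t}-1$, leaving one parameter. Substituting the Gaussian values of $\mathrm{Var}(Y_1\mid V)$ and $\mathrm{Var}(Y_\rmr\mid V)$ into the two inequalities and maximizing $\min\{\text{first},\text{second}\}$ over this family and over $t\ge 0$ is then routine calculus: the optimum is at the crossover of the two inequalities, with common value $\tfrac12\log B$ in the first regime and $\tfrac12\log((1+S_{13})(1+S_{23}))$ in the second (these agree at $S_{12}=(1+S_{13})(1+S_{23})-1$, confirming continuity). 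The hard part is the extremal inequality above; everything after it is Gaussian bookkeeping.
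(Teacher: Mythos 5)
Your plan follows essentially the same route as the paper: reduce Proposition~\ref{prop:mainprirel} to a Gaussian extremal problem via the doubling/rotation technique of~\cite{gen14}, then solve a low-dimensional optimization. Your single-letter reductions are correct (the identity $I(X;Y_1,V)-I(V;X|Y_\rmr)=I(X;Y_1|V)+I(X;Y_\rmr)-I(X;Y_\rmr|V)$, the Markov facts $V\to X\to Y_1$ and $I(V;Y_\rmr|X,Y_1)=I(V;Y_\rmr|X)$, and the observation that the crossover of your two inequalities is exactly the condition $I(V;Y_\rmr)-I(V;Y_1)=C_0$, which is the paper's ``constraint tight'' property). The second branch of \eqref{eqnPrimitiveGaussian} indeed falls out of \eqref{Prop1neqn2} alone. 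However, two steps you defer are precisely where the work lies, and one of them is set up in a way that creates an avoidable difficulty.

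First, your extremal inequality is posed at \emph{fixed} $t=I(V;Y_\rmr|X)$. Under the doubling-plus-rotation step the quantity $I(V_\pm;Y_{\rmr,\pm}|X_\pm)$ for the rotated identifications $V_+=(V_1,V_2,Y_{\rmr,-})$, $V_-=(V_1,V_2,Y_{1,+})$ does not obviously return to the same value of $t$, so the class over which you are proving sub-additivity is not closed under the operation; you would need to either Lagrangian-ize $t$ into the objective or verify how the $t$-constraint transforms. The paper avoids this by keeping the constraint in the form $I(V;Y_\rmr)-I(V;Y_1)\le C_0$, which it verifies is preserved by the rotated/time-shared variables (Lemma~\ref{lem:gausuffcor}); it also needs the $\eps$- and $\delta$-perturbations of $Y_1,Y_\rmr$ to force the Skitovich--Darmois conclusion, which your sketch omits. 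Second, your Gaussian parametrization $V=aX+bW_\rmr+Q$ with scalar $V$ silently imposes that $K_{X,Y_\rmr}-K_{X,Y_\rmr|V}$ has rank one. For a general (vector) Gaussian $V$ the only constraint is $0\preceq K_{X,Y_\rmr|V}\preceq K_{X,Y_\rmr}$, a three-parameter family; the paper must and does \emph{prove} that the PSD constraint is active at the optimum (Lemma~\ref{NewLemma}), which is what legitimizes the two-parameter family you start from. Finally, the ``routine calculus'' is the content of Lemma~\ref{NewLemma}: after imposing constraint-tightness and rank-one reduction one still has a one-dimensional problem whose stationarity condition is a quadratic with two roots, only one of which lies in $[0,P]$, and which root it is depends on the sign of $S_{12}-(S_{13}+S_{23}+S_{13}S_{23})$ --- this is exactly where the two regimes of \eqref{eqnPrimitiveGaussian} come from, so it cannot be waved away.
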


The proof of this proposition is given in Section~\ref{proofthm:Prop5}. We use ideas from \cite{gen14} to show the optimality of jointly Gaussian $(V,X,Y_1,Y_\rmr)$ in computing the upper bound of Proposition \ref{prop:mainprirel}, leading to an optimization problem with only three free variables, which we then reduce to a one-dimensional optimization problem by observing two properties of any maximizer: (i) equation \eqref{propconstraint} holds with equality for the maximizing distribution and (ii) 
$K_{X,Y_\rmr}-K_{X,Y_\rmr|{V}}$ is a rank one matrix, where $K_{X,Y_\rmr}$ is the covariance matrix of $(X,Y_\rmr)$ and $K_{X,Y_\rmr|{V}}$ is the conditional covariance matrix of $(X,Y_\rmr)$ given $V$.

\begin{remark}
The decode-forward lower bound for the Gaussian product-form relay channel in \cite[Eq. 16.16]{elk11} is
\begin{align}\label{eqnDF}
        C\geq
        \begin{cases}
        \frac12\log\left(1+S_{12}\right), & \text{for } S_{12}\leq S_{13}+S_{23}+S_{13}S_{23},\\
        \frac{1}{2}\log((1+S_{13})(1+S_{23})), & \text{otherwise}. 
        \end{cases}
\end{align}
When $S_{12}\geq S_{13}+S_{23}+S_{13}S_{23}$, the upper bound in \eqref{eqnPrimitiveGaussian}, the cutset bound and the decode-forward lower bound all coincide.  The condition $S_{12}\leq S_{13}+S_{23}+S_{13}S_{23}$ is equivalent to $I(X; Y_\rmr)\leq I(X; Y_1) + C_0$ for a Gaussian input $X\sim \mathcal{N}(0,P)$. 

The compress-forward lower bound for this relay channel in \cite[Eq. 16.17]{elk11} implies that
\begin{align}
    C\geq \frac12\log\left(1+S_{13}+\frac{S_{12}(S_{13}+1)S_{23}}{S_{12}+(S_{13}+1)(S_{23}+1)}\right).
\end{align}
Furthermore, this lower bound can be improved by time-sharing at the transmitter \cite[Sec. 16.8]{elk11} or at the relay \cite[Footnote 2]{Wu19}. 
\end{remark}

Figure \ref{figGPrimitive} plots the upper bound in Proposition \ref{Proposition5} along with the cutset upper bound and the compress-forward lower bound for an example Gaussian product-form relay channel.

\begin{figure}
    \centering
    \includegraphics[scale=0.46]{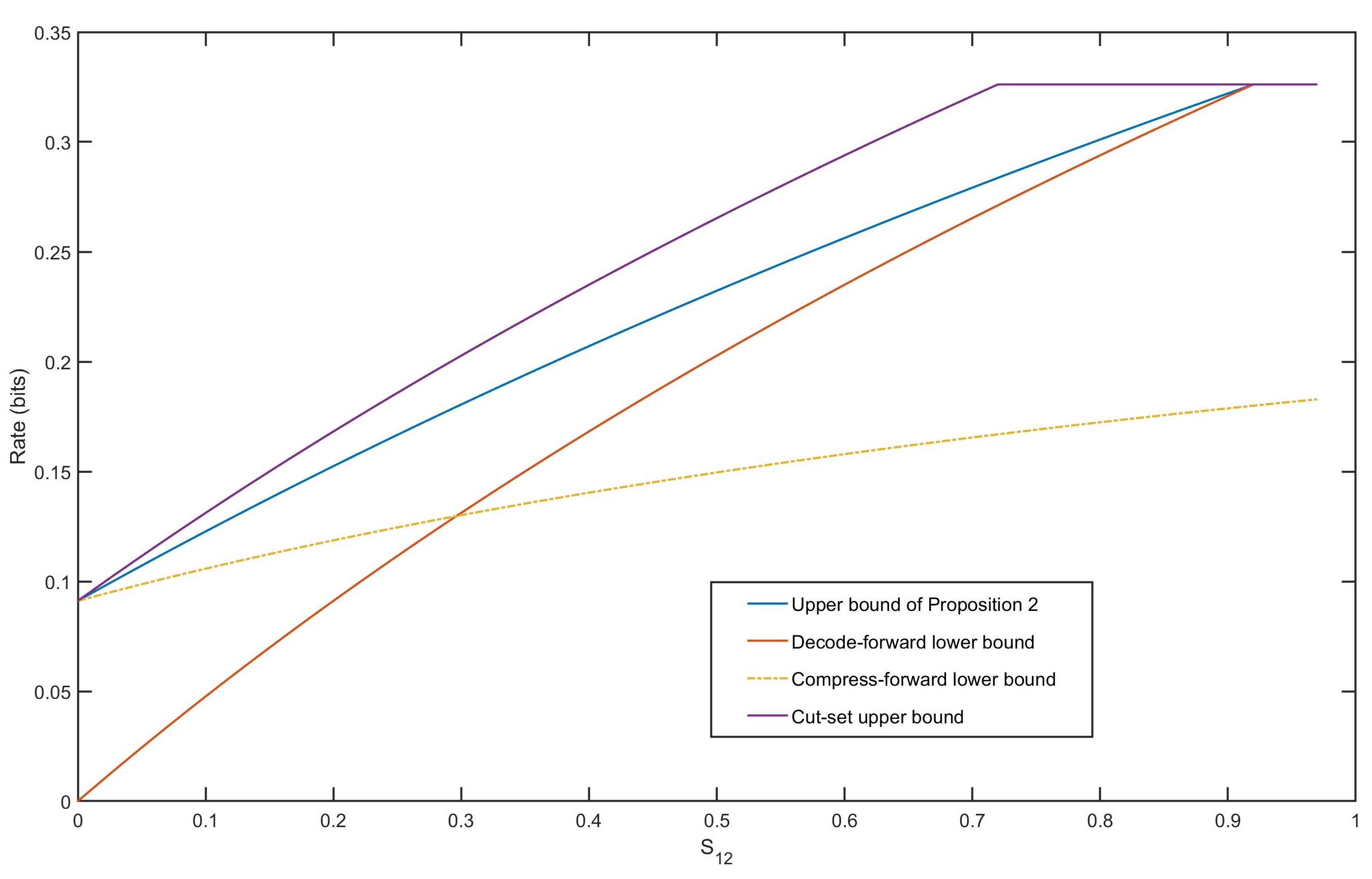}
    \caption{Plots of the bounds for the Gaussian product-form relay channel with $S_{13}=0.2$, $S_{23}=0.6$. }
    \label{figGPrimitive}
\end{figure}

The following upper bound on the capacity of the Gaussian product-form relay channel was established in~\cite{wu2017geometry}.
\begin{theorem}[\cite{wu2017geometry}]\label{thm:Wu19thm}
For the Gaussian product-form relay channel, any achievable rate $R$ must satisfy the condition
\begin{align} 
 R \leq \ &\frac{1}{2}\log \left(1+S_{13}\right)+\sup_{\theta\in\left[\arcsin (\frac{1}{1+S_{23}}),\frac{\pi}{2}\right]}\min \Bigg\{ C_0+\log \sin \theta, 
 \min_{\omega\in \left(\frac{\pi}{2}-\theta, \frac{\pi}{2}\right]}  h_{\theta}(\omega)
\Bigg\},\label{eqn:htheta}
\end{align}
where
\begin{align}&~h_{\theta}(\omega) =    \frac{1}{2}\log \left(\frac{
\left(S_{12}+S_{13}+\sin ^2(\omega)-2\cos(\omega)\sqrt{S_{12}S_{13}}
\right)\sin^2\theta}{(S_{13}+1)(\sin ^2 \theta - \cos^2 \omega)} \right).\label{eqn:htheta2}
 \end{align}
\end{theorem}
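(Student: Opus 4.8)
The plan is to derive this bound directly from Proposition~\ref{prop:mainprirel} specialized to the Gaussian product-form channel, following the same philosophy as the proof of Proposition~\ref{Proposition5}: first establish that a jointly Gaussian quadruple $(V,X,Y_1,Y_\rmr)$ is optimal in the optimization problem defining the bound, then parametrize the Gaussian optimizers geometrically and extract the expression in~\eqref{eqn:htheta}. The starting point is the constraint-free form of the bound, i.e.\ the pair of inequalities \eqref{Prop1neqn1}--\eqref{Prop1neqn2}: $R\le I(X;Y_1,V)-I(V;X|Y_\rmr)$ and $R\le I(X;Y_1)+C_0-I(V;Y_\rmr|X,Y_1)$. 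For a Gaussian input $X\sim\mathcal N(0,P)$ and a jointly Gaussian $V$, each of these terms is a difference of log-determinants of conditional covariance matrices, so the whole problem becomes an optimization over a single scalar correlation parameter after Gaussianity is established; the $\theta$ in the statement should be identifiable with an angle encoding the conditional variance of $X$ given $V$ (the $\log\sin\theta$ term), and $\omega$ with the geometry of how $V$ correlates with $Y_1$ versus $Y_\rmr$.

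The key steps, in order, would be: (i) rewrite the two bounding expressions from Proposition~\ref{prop:mainprirel} entirely in terms of entropies/covariances; (ii) invoke the sub-additivity, doubling, and rotation machinery of~\cite{gen14} — exactly as cited in the proof sketch of Proposition~\ref{Proposition5} — to show that the supremum is attained by jointly Gaussian $(V,X,Y_1,Y_\rmr)$, so that $V$ may be taken to be a scalar Gaussian and $X$ Gaussian; (iii) since the channel is product-form with $Y_1=X+W_1$, $Y_\rmr=X+W_\rmr$, parametrize the optimal Gaussian test channel $p(v|x,y_\rmr)$ by the conditional variance $\mathrm{Var}(X|V)$ and one further correlation, convert these into the angular variables $\theta,\omega$ so that $I(X;Y_1)-I(X;Y_1|V)$, $I(V;X|Y_\rmr)$, and $I(V;Y_\rmr|X,Y_1)$ each take the trigonometric forms appearing in $h_\theta(\omega)$ and in $C_0+\log\sin\theta$; (iv) recognize the first bound \eqref{Prop1neqn1} as giving $\min_\omega h_\theta(\omega)$ after optimizing over the residual freedom $\omega$ (the inner minimization over $\omega\in(\pi/2-\theta,\pi/2]$ comes from the constraint that the conditional covariance matrix $K_{X,Y_\rmr|V}$ be positive semidefinite, which bounds $\omega$ away from $\pi/2-\theta$), and the second bound \eqref{Prop1neqn2} as giving $C_0+\log\sin\theta$; (v) take the outer supremum over the remaining parameter $\theta\in[\arcsin(1/(1+S_{23})),\pi/2]$, where the lower endpoint again reflects a feasibility/PSD constraint on the Gaussian optimizer, and add the common $\frac12\log(1+S_{13})$ term that factors out of $I(X;Y_1,V)$ when $X$ is Gaussian with power $P$.

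The main obstacle I anticipate is step (ii)–(iii): establishing Gaussian optimality rigorously for this particular difference-of-mutual-informations functional (the subtracted terms $-I(V;X|Y_\rmr)$ and $-I(V;Y_\rmr|X,Y_1)$ go the "wrong way" for a naive maximum-entropy argument), and then performing the change of variables cleanly enough that the messy covariance algebra collapses to precisely the $h_\theta(\omega)$ of~\eqref{eqn:htheta2}. Concretely, one must track which of the two active constraints — the rank-one condition on $K_{X,Y_\rmr}-K_{X,Y_\rmr|V}$ noted in the Proposition~\ref{Proposition5} sketch, and the tightness of \eqref{propconstraint} — are binding at the optimum, since only by imposing the rank-one structure does the two-parameter Gaussian family reduce to the single angle $\theta$ with $\omega$ as the auxiliary optimization variable. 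The endpoint constants $\arcsin(1/(1+S_{23}))$ and the requirement $\omega>\pi/2-\theta$ are both artifacts of these PSD constraints, and verifying they come out exactly as stated is where most of the careful bookkeeping will lie; everything after that is routine single-variable calculus, which I would not grind through here.
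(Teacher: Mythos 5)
You are attempting to prove a statement that the paper does not prove at all: Theorem~\ref{thm:Wu19thm} is quoted verbatim from \cite{wu2017geometry}, where it is established by convex-geometric arguments entirely external to this paper. The paper's only engagement with it is Lemma~\ref{thm:GSCase}, which shows \emph{a posteriori} that the expression \eqref{eqn:htheta} evaluates to the same closed form as Proposition~\ref{Proposition5}; the authors explicitly describe this coincidence of ``two seemingly different optimization problems'' as surprising. Your indirect route is logically viable: since Proposition~\ref{Proposition5} gives $C\le B_5$ and Lemma~\ref{thm:GSCase} gives $B_5=B_{\mathrm{Wu}}$, the inequality $R\le B_{\mathrm{Wu}}$ follows. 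But closing that route requires actually carrying out the evaluation of the $\sup_\theta\min\{\cdot,\min_\omega h_\theta(\omega)\}$ expression (the content of Lemma~\ref{thm:GSCase}, a case analysis on $S_{12}\lessgtr S_{13}+S_{23}+S_{13}S_{23}$ with an explicit minimizer $\omega_*$), which you defer as ``routine'' without doing it.

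The more serious problem is that your direct plan --- identifying $\theta$ and $\omega$ with parameters of the Gaussian test channel $p(v|x,y_\rmr)$ --- is structurally inconsistent with the form of \eqref{eqn:htheta}. In a converse of the ``there exists a distribution such that $R\le\cdots$'' type, any residual freedom in the auxiliary $V$ is \emph{maximized} over when passing to a capacity bound; yet $\omega$ enters \eqref{eqn:htheta} through an inner \emph{minimization}, and moreover appears in only one branch of the outer $\min$, which could not happen if $\omega$ were a genuine parameter of $V$ (it would then affect both rate constraints \eqref{Prop1neqn1} and \eqref{Prop1neqn2}). In Wu et al.'s derivation $\omega$ indexes a family of valid bounds, not a test channel, so the ranges $\left[\arcsin\bigl(\tfrac{1}{1+S_{23}}\bigr),\tfrac{\pi}{2}\right]$ and $\left(\tfrac{\pi}{2}-\theta,\tfrac{\pi}{2}\right]$ are not positive-semidefiniteness constraints on $K_{X,Y_\rmr|V}$ as you conjecture. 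If you want a self-contained proof from this paper's machinery, the correct shape of the argument is: (a) prove Proposition~\ref{Proposition5} (Gaussian optimality via Lemma~\ref{lem:gausuffcor}, then the reduction in Lemma~\ref{NewLemma}), and (b) prove that the right-hand side of \eqref{eqn:htheta} is \emph{at least} the closed form of \eqref{eqnPrimitiveGaussian}, by exhibiting the specific $\theta_*$ of \eqref{eqndefs} (or $\theta_*=\pi/2$ in the other regime) and verifying the two branches of the $\min$ there. Your proposal as written does neither.
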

Although the techniques used to prove this theorem are completely different from those used in this paper, it turns out quite surprisingly that bound \eqref{eqn:htheta} coincides with the bound in Proposition~\ref{Proposition5}.

\begin{lemma}\label{thm:GSCase}
The bound in Theorem \ref{thm:Wu19thm} simplifies to
\begin{align}
        R\leq
        \begin{cases}
        \frac12\log\left(1+S_{13}+\frac{S_{12}(S_{13}+1)S_{23}}{(S_{13}+1)(S_{23}+1)-1}\right), & \textit{for } S_{12}\leq S_{13}+S_{23}+S_{13}S_{23},\\
        \frac{1}{2}\log((1+S_{13})(1+S_{23})), & \text{otherwise}, 
        \end{cases}
\end{align}
which coincides with the bound in Proposition~\ref{Proposition5}.
\end{lemma}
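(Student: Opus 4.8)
The plan is to show that the variational problem defining the bound in Theorem~\ref{thm:Wu19thm} can be solved in closed form, and that its optimal value equals the explicit expression in Proposition~\ref{Proposition5}. The bound has the shape $\frac12\log(1+S_{13})+V^*$ where $V^*=\sup_\theta\min\{f_1(\theta),f_2(\theta)\}$ with $f_1(\theta)=C_0+\log\sin\theta$ and $f_2(\theta)=\min_{\omega}h_\theta(\omega)$. The first thing I would do is carry out the inner minimization over $\omega$ in \eqref{eqn:htheta2}. Treating $\theta$ as fixed, $h_\theta(\omega)$ is a one-variable function on $(\frac\pi2-\theta,\frac\pi2]$; I would differentiate $h_\theta$ with respect to $\omega$ (or more conveniently, with respect to $\cos\omega$), set the derivative to zero, and solve the resulting algebraic equation. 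This should produce either an interior stationary point $\omega^*(\theta)$ or force the minimum to the boundary $\omega=\frac\pi2$, at which point $h_\theta(\frac\pi2)=\frac12\log\big(\frac{(S_{12}+S_{13})\sin^2\theta}{(S_{13}+1)\sin^2\theta}\big)=\frac12\log\frac{S_{12}+S_{13}}{S_{13}+1}$, independent of $\theta$. I expect the interior critical point to exist in the relevant regime and to yield a clean closed form for $f_2(\theta)$, which I would simplify to something of the form $\frac12\log\big(\text{affine in }\sin^2\theta\big)$.

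**Next I would analyze the outer $\sup$-$\min$.** With $f_1(\theta)=C_0+\log\sin\theta$ strictly increasing in $\theta$ on the interval and $f_2(\theta)$ (after simplification) monotone in $\sin^2\theta$, the $\sup$ of the pointwise minimum is achieved either at a crossing point $f_1(\theta^*)=f_2(\theta^*)$ or at the right endpoint $\theta=\frac\pi2$. I would solve the crossing equation for $\sin\theta^*$; because $f_1$ involves $\log\sin\theta$ and $f_2$ involves a log of an affine function of $\sin^2\theta$, exponentiating both sides gives a polynomial (quadratic) equation in $\sin^2\theta$, which is solvable. Substituting the root back into either $f_i$ then gives $V^*$ explicitly. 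The two cases in the statement ($S_{12}\le S_{13}+S_{23}+S_{13}S_{23}$ versus otherwise) should correspond precisely to whether this crossing point $\theta^*$ lies in $[\arcsin\frac1{1+S_{23}},\frac\pi2)$ or whether instead $f_1$ dominates throughout and the sup is pinned at $\theta=\frac\pi2$; I would verify that the threshold translating the geometric condition on $\theta^*$ into a condition on the SNRs is exactly $S_{12}=S_{13}+S_{23}+S_{13}S_{23}$, e.g. by checking $f_1$ versus $f_2$ at $\theta=\frac\pi2$. Finally I would assemble $\frac12\log(1+S_{13})+V^*$ and check algebraically that in the first regime it equals $\frac12\log\big(1+S_{13}+\frac{S_{12}(S_{13}+1)S_{23}}{(S_{13}+1)(S_{23}+1)-1}\big)$ and in the second it equals $\frac12\log((1+S_{13})(1+S_{23}))$, noting $2^{2C_0}=S_{23}+1$ so $\sin\theta=1$ gives $f_1=C_0$ and the boundary value matches.

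**The main obstacle** will be the inner minimization over $\omega$: the function $h_\theta(\omega)$ couples $\omega$ and $\theta$ through both $\sin^2\omega$ in the numerator and $\cos^2\omega$ in the denominator, so the stationarity condition is a genuine algebraic equation whose solution one must track carefully, and one must confirm that the critical point actually lies in the open interval $(\frac\pi2-\theta,\frac\pi2]$ for the relevant $\theta$ (otherwise the minimum migrates to the boundary and the analysis branches). A secondary nuisance is bookkeeping across the two regimes and making sure the endpoint $\theta=\arcsin\frac1{1+S_{23}}$ of the admissible interval never becomes the active constraint — I would check that $f_1$ evaluated there is small enough that the crossing, when it exists, happens at larger $\theta$. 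Once the inner problem is dispatched, everything else is monotonicity of scalar functions and solving a quadratic, which is routine; I would also remark that this gives an independent, self-contained re-derivation of \eqref{eqnPrimitiveGaussian} purely from the analysis of \eqref{eqn:htheta}, complementing the probabilistic proof of Proposition~\ref{Proposition5} via Proposition~\ref{prop:mainprirel}.
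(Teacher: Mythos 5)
Your plan matches the paper's proof essentially step for step: the paper establishes the monotonicity of $C_0+\log\sin\theta$ (increasing) and of $\min_\omega h_\theta(\omega)$ (decreasing) in $\theta$, computes the inner minimizer $\omega_*$ in closed form as the smaller root of the quadratic $x^2-\frac{S_{12}+S_{13}+\cos^2\theta}{\sqrt{S_{12}S_{13}}}\,x+\sin^2\theta=0$, and then splits into the endpoint case $\theta_*=\pi/2$ (when $S_{12}\ge S_{13}+S_{23}+S_{13}S_{23}$) versus the crossing-point case with $\sin^2\theta_*=\bigl(1+\tfrac{S_{23}S_{12}}{(S_{13}+1)(S_{23}+1)-1}\bigr)\tfrac{1}{1+S_{23}}$, exactly as you propose. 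The only nit is that $h_\theta(\pi/2)=\tfrac12\log\tfrac{S_{12}+S_{13}+1}{S_{13}+1}$ (you dropped the $\sin^2\omega=1$ contribution in the numerator), but this does not affect the strategy.
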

The proof of this lemma is given in Section \ref{proofthm:GSCase}. 
The proof works by explicitly computing the optimal values of $\theta$ and $\omega$ in \eqref{eqn:htheta}.

\smallskip

\noindent{\bf Symmetric binary relay channel with orthogonal receiver components}. Consider a symmetric product-form relay channel described by $p(y_1,y_{\rmr}|x)$ and a link of rate $C_0$ from relay to the destination such that
$p(y_1,y_{\rmr}|x)=p(y_1|x)p(y_{\rmr}|x)$, where $x, y_1, y_{\rmr}\in\{0,1\}$. Further, assume that  
both the channels  $p(y_1|x)$ and $p(y_{\rmr}|x)$ are binary symmetric channels with crossover probability $\rho\in[0,1/2]$. By specializing Proposition \ref{prop:mainprirel} to this channel, we obtain the following bound.

\begin{theorem}\label{thmBSCPrimitive} Given $\lambda\in [0,1]$ and $c\in[0,1]$, let $g_{\lambda}(c)$ be the maximum of $(1-\la) \left(H(Y_1) - H(Y_{\rmr})\right) + H(Y_{\rmr}|X)$
over all joint probability distributions $p(x,y_{\rmr})$ on $\{0,1\}\times\{0,1\}$ satisfying $p_{X,Y_\rmr}(0, 1) + p_{X,Y_\rmr}(1, 0)=c$. 
For any fixed $\la\in[0,1]$, let $\mathscr{C}[g_{\la}]:[0,1]\mapsto \mathbb{R}$ be the upper concave envelope of the function $g_{\la}(\cdot)$, i.e., the smallest concave function that dominates $g_{\la}(\cdot)$ from above. 
Any achievable rate $R$ for a symmetric binary relay channel with orthogonal receiver components with parameter $\rho$ must satisfy the condition
\[
R\leq 1-2H_2(\rho) +\lambda C_0+\mathscr{C}[g_\la](\rho)
\]
 for any $\la\in [0,1]$, where $H_2(x)=-x\log(x)-(1-x)\log(1-x)$ is the binary entropy function. 
 \end{theorem}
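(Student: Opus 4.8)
The plan is to specialize the unconstrained form of Proposition~\ref{prop:mainprirel}, i.e.\ the pair of inequalities \eqref{Prop1neqn1}--\eqref{Prop1neqn2}, to the symmetric binary channel, take a $\lambda$-weighted combination of the two, and then run a single-letter simplification that, after conditioning on the auxiliary $V$, exhibits exactly the function $g_\lambda$. Fix $\lambda\in[0,1]$ and a test distribution $p(x)p(y_1,y_\rmr|x)p(v|x,y_\rmr)$. Since any achievable $R$ satisfies both \eqref{Prop1neqn1} and \eqref{Prop1neqn2}, it satisfies their convex combination, and a short chain-rule manipulation shows
\[
R\le F:=I(X;Y_1,Y_\rmr)-I(V;Y_1|Y_\rmr)-I(X;Y_\rmr|V,Y_1)+\lambda\bigl(C_0-I(V;Y_\rmr)+I(V;Y_1)\bigr).
\]

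The next step is to simplify $F$ using the channel structure. The product-form assumption gives $Y_1\perp Y_\rmr\mid X$, which together with the Markov relation $V\perp Y_1\mid(X,Y_\rmr)$ yields $Y_1\perp(Y_\rmr,V)\mid X$; and since both constituent channels are $\mathrm{BSC}(\rho)$ we have $H(Y_1|X)=H(Y_\rmr|X)=H_2(\rho)$, hence $I(X;Y_1,Y_\rmr)=H(Y_1,Y_\rmr)-2H_2(\rho)$. Telescoping the conditional-entropy terms then gives
\[
F=\lambda H(Y_1)+(1-\lambda)H(Y_\rmr)-2H_2(\rho)+\lambda C_0+(1-\lambda)\bigl(H(Y_1|V)-H(Y_\rmr|V)\bigr)+H(Y_\rmr|X,V).
\]
Bounding $\lambda H(Y_1)+(1-\lambda)H(Y_\rmr)\le 1$ (legitimate precisely because $\lambda\in[0,1]$) isolates the constant $1-2H_2(\rho)+\lambda C_0$ and leaves the term $\E_V\bigl[(1-\lambda)(H(Y_1|V{=}v)-H(Y_\rmr|V{=}v))+H(Y_\rmr|X,V{=}v)\bigr]$.

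To handle the latter, note that for each $v$ we have $p(y_1|x,v)=p(y_1|x)$, so $Y_1\mid V{=}v$ is the $\mathrm{BSC}(\rho)$ output fed by $X\mid V{=}v$; consequently the bracketed quantity is exactly the objective defining $g_\lambda$, evaluated at the joint law $p(x,y_\rmr|v)$ whose flip mass is $c(v):=\Pr[X\ne Y_\rmr\mid V{=}v]$, hence it is at most $g_\lambda(c(v))$. Since $\mathscr C[g_\lambda]$ is concave and dominates $g_\lambda$, Jensen's inequality gives $\E_V[g_\lambda(c(V))]\le\mathscr C[g_\lambda](\E_V[c(V)])$, and the key point is that $\E_V[c(V)]=\Pr[X\ne Y_\rmr]=\rho$ for \emph{every} input distribution, because $Y_\rmr=X\oplus N_\rmr$ with $N_\rmr\sim\mathrm{Bern}(\rho)$ independent of $X$. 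Assembling everything yields $R\le 1-2H_2(\rho)+\lambda C_0+\mathscr C[g_\lambda](\rho)$, and the cardinality bound $|\Vc|\le|\Xc|\,|\Yc_\rmr|+1$ is inherited directly from Proposition~\ref{prop:mainprirel}.

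The main obstacle is the telescoping step: one must choose the grouping of the mutual-information terms so that the per-$v$ contribution becomes \emph{literally} the $g_\lambda$ objective with the correct parameter, and so that the averaging constraint \eqref{propconstraint} collapses to the automatically-satisfied identity $\E_V[c(V)]=\rho$ (this is where the symmetry $H(Y_1|X)=H(Y_\rmr|X)$ and the $\mathrm{BSC}$ structure are really used). Everything else --- checking the chain-rule identity that defines $F$, the entropy-at-most-$1$ bound, and the Jensen/concave-envelope step --- is routine bookkeeping.
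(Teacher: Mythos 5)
Your proof is correct, and while it follows the same skeleton as the paper's argument (specialize Proposition~\ref{prop:mainprirel}, form a $\lambda$-weighted combination that absorbs the $C_0$ constraint, telescope the entropies down to $(1-\lambda)\left[H(Y_1|V)-H(Y_\rmr|V)\right]+H(Y_\rmr|X,V)$, then pass to a concave envelope), it executes two steps genuinely differently. First, the paper invokes a symmetrization argument to restrict to uniform $X$ so that $H(Y_1)=H(Y_\rmr)=1$ exactly; you instead keep $p(x)$ arbitrary and simply bound $\lambda H(Y_1)+(1-\lambda)H(Y_\rmr)\le 1$, which suffices for an upper bound and removes a lemma. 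Second, and more substantively, the paper identifies $\max_{p(v|x,y_\rmr)}\left[(1-\lambda)(H(Y_1|V)-H(Y_\rmr|V))+H(Y_\rmr|X,V)\right]$ with the upper concave envelope of $f$ over the full three-dimensional simplex of joints $p(x,y_\rmr)$ evaluated at the symmetric point, and then uses the symmetry $f(p_{00},p_{01},p_{10},p_{11})=f(p_{11},p_{10},p_{01},p_{00})$ to prove this \emph{equals} the one-dimensional envelope $\mathscr{C}[g_\lambda](\rho)$; you instead bound each per-$v$ term by $g_\lambda(c(v))$ and apply Jensen to $\mathscr{C}[g_\lambda]$, using the observation that $\E_V[c(V)]=\Pr[X\ne Y_\rmr]=\rho$ for every input distribution because the noise is additive and independent. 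Your route is shorter and yields the stated inequality directly; the paper's route proves more (an exact identification of the maximum with the envelope at $\rho$), which matters for actually computing and plotting the bound as in Figure~\ref{fig1b}, but is not needed for the theorem as stated. All the intermediate identities you use (the chain-rule identity equating the convex combination of \eqref{Prop1neqn1}--\eqref{Prop1neqn2} with $F$, the Markov relation $Y_1\perp(Y_\rmr,V)\mid X$ from the product form, and the resulting entropy telescoping) check out.
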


The proof of this theorem is given in Section \ref{proofthmBSCPrimitive}.

Figure \ref{fig1b} shows that our new upper bound strictly improves upon the bounds given in \cite{wu2017improving} and \cite{barnes2017solution}.

\begin{figure}[htpb]
    \centering
    \includegraphics[scale=0.15]{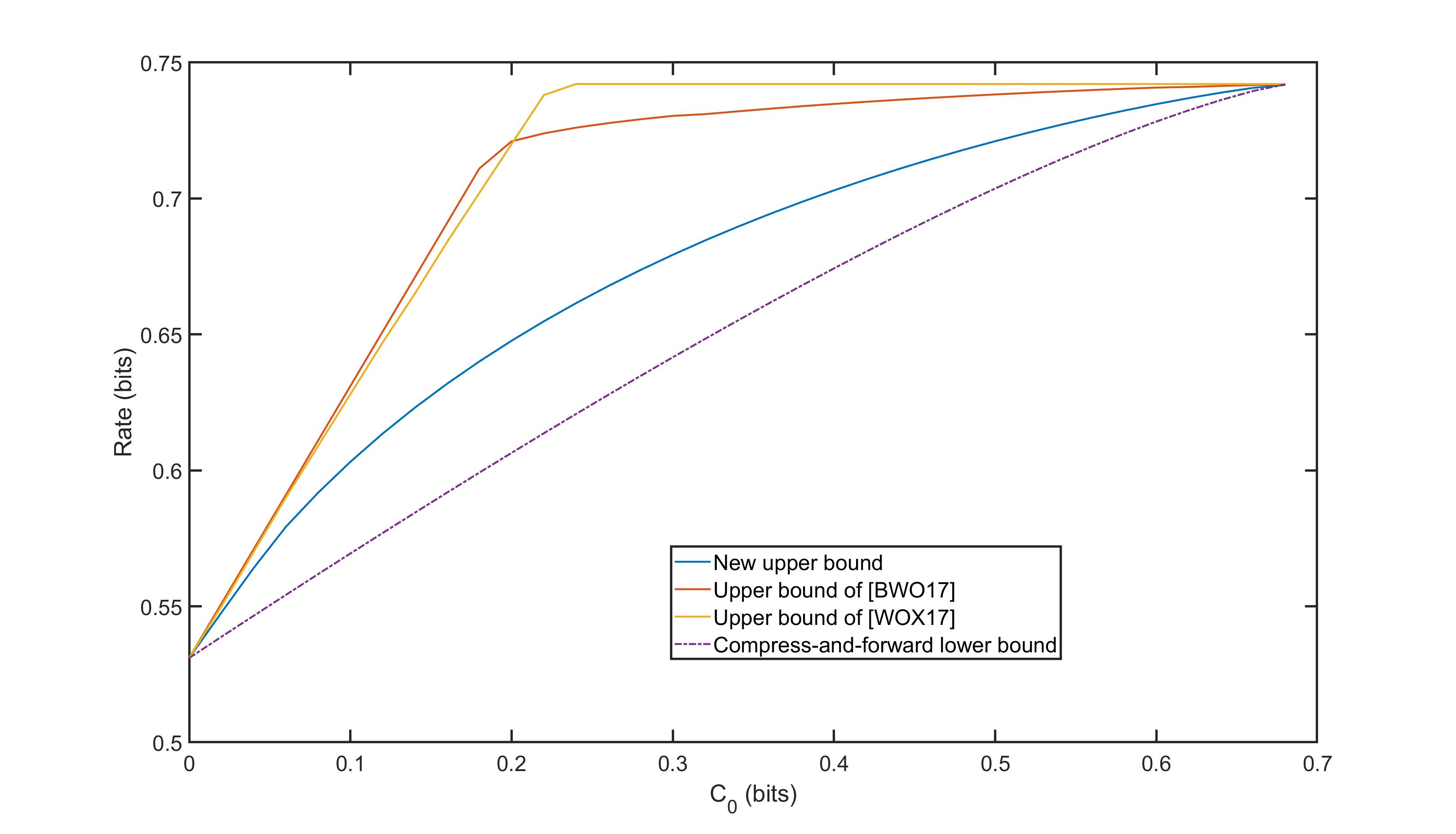}
    \caption{Plots of the minimum of the two upper bounds give in \cite{wu2017improving}, the upper bound given in \cite{barnes2017solution}, our new bound and the compress-forward lower bound for a symmetric binary relay channel with orthogonal receiver components with parameter $\rho=0.1$. }
    \label{fig1b}
\end{figure}

\smallskip

\subsection{Upper bound with auxiliary receiver and applications}
\label{sec:aux-receiver}

We establish an upper bound on the capacity of the relay channel with orthogonal receiver components which uses an auxiliary receiver~\cite{gon22}. We use this bound to to obtain a tighter upper bound than the one given in \cite{tandon2008new} and to show the suboptimality of our main bound in Proposition \ref{prop:mainprirel}.

As in \cite{gon22}, the auxiliary receiver $J$ is an enhancement of the relay's output variable $Y_\rmr$. However, the identification of the auxiliary variables in the following upper bound are different from those in \cite{gon22}.

\begin{theorem}\label{prop2jm2s} Consider a relay channel with orthogonal receiver components. Assume that for an auxiliary receiver $J$ we have $p(j,y_\rmr,y_1|x)=p(j|x)p(y_\rmr|j)p(y_1|y_\rmr,j,x)$.
Then any achievable rate $R$ must satisfy the condition 
\begin{align*}
R&\leq I(X;J)+I(X;Y_1|V,W,J,T)-I(X;Y_\rmr|V,W,J,T),
\end{align*}
 for some $p(t,x)p(j|x)p(y_\rmr|j)p(y_1|j,y_\rmr,x)p(w|t,y_{\mathrm{r}})p(v|t,x,y_{\mathrm{r}},w)$ such that
\begin{align*}
  I(V,W;Y_\rmr|J,T)-I(V,W;Y_1|J,T)&\leq I(W;Y_{\mathrm{r}}|J,T) \leq  C_0.
\end{align*}
Further it suffices to consider 
$|\Tc|\leq |\Xc|+3$, 
$|\Wc| \leq |\Yc_{\mathrm{r}}|+3$ and
$|\Vc| \leq |\Xc| |\Yc_{\mathrm{r}}|+1$.
\end{theorem}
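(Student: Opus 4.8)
The plan is to follow the same weak-converse template used for Theorem~\ref{thm:maintheorem} and Proposition~\ref{prop:mainprirel}, but with the relay output $Y_\rmr^n$ replaced by its enhancement $J^n$ and with an additional time-sharing/auxiliary variable $T$ and an extra auxiliary $W$ to capture the two-stage degradation $X\to J\to Y_\rmr\to Y_1$ structure. First I would fix a $(2^{nR},n)$ code and start from Fano's inequality, writing $nR \le I(M;Y_1^n, F) + n\epsilon_n$ where $F$ is the index sent over the noiseless $C_0$-link (a deterministic function of $Y_\rmr^n$, hence of $J^n$ since $J$ is an enhancement). Because $J$ is an enhancement of $Y_\rmr$, we have $H(F)\le nC_0$ and $F\to J^n\to(Y_\rmr^n,Y_1^n)$. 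The key step is to split $I(M;Y_1^n,F)$ into $I(M;F)+I(M;Y_1^n|F)$, bound $I(M;F)\le H(F)\le nC_0$, and then, on the term $I(M;Y_1^n|F)$, introduce the auxiliary receiver by writing $I(M;Y_1^n|F)\le I(M;J^n) - [I(M;J^n) - I(M;Y_1^n|F)]$ and controlling the bracketed ``information not recovered'' term via the Csisz\'ar--K\"orner--Marton sum identity applied to the pair $(J^n, Y_1^n)$ (enhanced relay output vs.\ direct output), exactly as in the proof of Theorem~\ref{thm:maintheorem} but one level up in the Markov chain.

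Next I would perform the single-letterization. Define $T$ to be the usual uniform time index together with whatever "past/future" randomness is common to both chains; concretely I expect $T_i$ to absorb a time-sharing variable and $W_i=(Y_{\rmr,i+1}^n \text{ or } Y_{1,i+1}^n, \dots)$ and $V_i$ to be identified from the CKM decomposition of $I(M;J^n)-I(M;Y_1^n|F)$, mirroring the identifications $V_i=(Y_{i+1}^n,Y_\rmr^{i-1})$, $U_i=(Y_{\rmr i+1}^n,Y^{i-1})$ in Theorem~\ref{thm:maintheorem}, but now with $J$ playing the role that $Y_\rmr$ played there and $Y_\rmr$ playing a new intermediate role between $J$ and $Y_1$. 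The term $I(X;J)$ comes out of single-letterizing $I(M;J^n)$ with the standard telescoping and the Markov chain $J_i\to X_i\to(\text{rest})$ coming from the memorylessness and the assumed factorization $p(j,y_\rmr,y_1|x)=p(j|x)p(y_\rmr|j)p(y_1|y_\rmr,j,x)$. The difference term $I(X;Y_1|V,W,J,T)-I(X;Y_\rmr|V,W,J,T)$ is assembled from the CKM pieces; the extra conditioning on $J$ and $T$ is what distinguishes this from Proposition~\ref{prop:mainprirel}. The constraint $I(V,W;Y_\rmr|J,T)-I(V,W;Y_1|J,T)\le I(W;Y_\rmr|J,T)\le C_0$ should emerge exactly as constraint~\eqref{propconstraint} did: the first inequality is a chain-rule/nonnegativity bookkeeping identity once $W$ is identified (playing the role of the ``$U$-side'' auxiliary), and the second reflects $H(F)\le nC_0$ after single-letterization with the time-sharing $T$.

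For the cardinality bounds, I would invoke the standard Fenchel--Eggleston--Carath\'eodory argument: fix the joint law of the ``inner'' variables and view the bound and the constraint as continuous functionals of the conditional distributions being optimized, preserving $|\Xc||\Yc_\rmr|$ values of a distribution plus a constant number of functionals (the objective, the constraint, and the relevant marginals), which yields $|\Vc|\le|\Xc||\Yc_\rmr|+1$, $|\Wc|\le|\Yc_\rmr|+3$, $|\Tc|\le|\Xc|+3$; the exact constants come from counting how many quantities must be held fixed when peeling off $V$, then $W$, then $T$ in that order.

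The main obstacle I anticipate is the bookkeeping in the CKM sum-identity step: there are now \emph{two} nested ``information not recovered'' gaps --- the gap between $J^n$ and $Y_1^n$ (governing $V$) and the gap between $Y_\rmr^n$ and $J^n$/$Y_1^n$ (governing $W$) --- and one must choose the two different orderings of the time-sharing and past/future sequences so that the discarded (nonnegative) terms all land on the ``right side'' and collapse to the clean single-letter constraint $I(W;Y_\rmr|J,T)\le C_0$ rather than something weaker. Getting the Markov relations $V\to(T,X,Y_\rmr,W)\to\cdots$ and $W\to(T,Y_\rmr)\to\cdots$ to hold single-letter, while simultaneously keeping $I(X;J)$ cleanly separated and ensuring the enhancement chain $X\to J\to Y_\rmr\to Y_1$ is used exactly where needed (and nowhere it would be false), is the delicate part; everything else is the same machinery already deployed for Theorem~\ref{thm:maintheorem} and Proposition~\ref{prop:mainprirel}.
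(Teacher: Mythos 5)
Your overall architecture (weak converse, Csisz\'ar--K\"orner--Marton identity, past/future auxiliary identification, a constraint coming from $H(M_\rmr)\le nC_0$) matches the paper's, but your opening decomposition cannot produce the stated bound. You bound $I(M;F)\le nC_0$ and then work on $I(M;Y_1^n|F)$; this places $C_0$ additively in the rate expression, whereas in the theorem $C_0$ appears only in the constraint on the auxiliaries (indeed in Corollary~\ref{prop2}, where $J$ is constant, the rate bound $I(X;Y_1|T,W,V)-I(X;Y_\rmr|T,W,V)$ contains no additive $C_0$ at all, so no rearrangement of your chain can reach it). Your follow-up step $I(M;Y_1^n|F)\le I(M;J^n)-[\,I(M;J^n)-I(M;Y_1^n|F)\,]$ is a tautology, and lower-bounding the bracket is exactly as hard as upper-bounding $I(M;Y_1^n|F)$ itself; moreover the bracket mixes a term conditioned on $F$ with one that is not, so the sum identity does not apply to it directly. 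The paper instead writes $I(X^n;M_\rmr,Y_1^n)\le I(X^n;J^n)+I(X^n;M_\rmr|J^n)+I(X^n;Y_1^n|M_\rmr,J^n)$ and uses that $M_\rmr$ is a function of $Y_\rmr^n$ together with $I(X^n;Y_\rmr^n|J^n)=0$ (from the enhancement structure $p(y_\rmr|j)$) to obtain $I(X^n;M_\rmr|J^n)=-I(X^n;Y_\rmr^n|M_\rmr,J^n)\le 0$; the relay message is thus absorbed into the negative term $-I(X^n;Y_\rmr^n|M_\rmr,J^n)$, and $C_0$ never enters the rate chain, surfacing only later via $nI(W;Y_\rmr|J,T)=I(M_\rmr;Y_\rmr^n|J^n)\le H(M_\rmr)\le nC_0$.

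The second gap is the auxiliary identification, which you leave open precisely where the argument is delicate. The proof needs $W_i=(M_\rmr,Y_\rmr^{i-1})$ --- it must contain the relay message so that the constraint $I(W;Y_\rmr|J,T)\le C_0$ follows from $H(M_\rmr)\le nC_0$, and your tentative choice (a future output sequence, with the $C_0$ constraint attributed vaguely to ``single-letterization with $T$'') does not have this property --- and it needs $T_i=(J^{i-1},J_{i+1}^n)$, the past and future of the \emph{auxiliary receiver} rather than a generic time-sharing index; this specific choice is what makes $T\to X\to(J,Y_\rmr,Y_1)$ and $W\to(T,Y_\rmr)\to(X,J,Y_1)$ hold and hence validates the factorization $p(t,x)p(j|x)p(y_\rmr|j)p(y_1|j,y_\rmr,x)p(w|t,y_\rmr)p(v|t,x,y_\rmr,w)$ in the statement. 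With the correct starting decomposition the remainder is indeed the CKM bookkeeping you describe, applied to $I(X^n;Y_1^n|M_\rmr,J^n)-I(X^n;Y_\rmr^n|M_\rmr,J^n)$ with $V_i=Y_{1,i+1}^n$; but as written the proposal does not reach the theorem.
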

The proof of this theorem is given in Section \ref{proofprop2jm2s}. As an outline, let  $M_\rmr\in[1:2^{2C_0}]$ be  a function of $Y_\rmr^n$ that can be viewed as the ``message" from the relay to the receiver. Proposition \ref{prop:mainprirel} is established via the  identification   $V_i=(Y_{1i+1}^{n}, Y_{\rmr }^{i-1}, M_\rmr)$ in Section \ref{sec:alternaproof}. In Theorem \ref{prop2jm2s}, we instead use the more refined identification
\begin{align*}
V_i&=Y_{1i+1}^{n},\quad
    W_i = (M_\rmr,Y_{\mathrm{r}}^{i-1}), \quad T_i=(J^{i-1}, J_{i+1}^n),
\end{align*}
Crucially, the introduction of the auxiliary receiver $J$ and the auxiliary random variable $T$ allow us to induce certain Markov chains and constraints on $V$ and $W$. Note that the bound in Theorem \ref{thm:maintheorem} applies to arbitrary relay channels with no self-interference while the bound in Theorem \ref{prop2jm2s}  applies only to relay channels with orthogonal receiver components, but is tighter for this class.

\smallskip

\noindent{\bf Relay channels with i.i.d. output sequence}.
Consider the following class of relay channels.

\begin{definition}
A relay channel with orthogonal receiver components is said to be with i.i.d. output if its family of conditional probabilities have the form $p(y,y_{\mathrm{r}}|x,x_{\mathrm{r}})=p(y_{\mathrm{r}})p(y_1|x,y_{\mathrm{r}})p(y_2|x_{\mathrm{r}})$.
\end{definition}
\begin{remark}
In~\cite{aleksic2009capacity}, a special case of this relay channel in which the channel from the transmitter to the receiver is binary symmetric and the relay observes a corrupted version of the transmitter-receiver channel noise was used to establish the suboptimality of the cutset bound.
\end{remark}
\begin{remark}
Communication over this relay channel is equivalent to communication over channels with rate-limited state information available at the receiver~\cite{aleksic2009capacity}. In \cite{ahlswede1983source},  Ahlswede and Han conjectured that the capacity should be equal to $\max I(X;Y_1|W)$,
where the maximum is over  $p(x)$ and $p(w|y_\rmr)$ such that
$I(W;Y_{\mathrm{r}}|Y_1)\leq C_0$. Note that the joint distribution here is of the form $p(x)p(y_{\mathrm{r}})p(y_1|x,y_{\mathrm{r}})p(w|y_{\mathrm{r}})$.
\end{remark}

For the relay channel with orthogonal receiver components and i.i.d. output, we obtain the following corollary to Theorem~\ref{prop2jm2s}.
\begin{corollary}
\label{prop2} Any achievable rate $R$ for the relay channel with orthogonal receiver components and i.i.d. output must satisfy the condition 
\begin{align*}
R&\leq I(X;Y_1|T,W,V)-I(X;Y_{\mathrm{r}}|T,W,V),
\end{align*}
 for some $p(t,x)p(y_1|x,y_{\mathrm{r}})p(y_{\mathrm{r}})p(w|t,y_{\mathrm{r}})p(v|t,x,y_{\mathrm{r}},w)$ such that
 \begin{align*}
  I(V,W;Y_\rmr|T)-I(V,W;Y_1|T)&\leq I(W;Y_{\mathrm{r}}|T) \leq  C_0.
\end{align*}
\end{corollary}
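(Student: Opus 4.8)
The plan is to obtain Corollary~\ref{prop2} as a direct specialization of Theorem~\ref{prop2jm2s}, exploiting the fact that for a relay channel with orthogonal receiver components and i.i.d.\ output the relay observation $Y_\rmr$ is independent of the channel input $X$; indeed the defining factorization gives $p(y_1,y_\rmr|x)=p(y_\rmr)p(y_1|x,y_\rmr)$. The key observation is that this independence lets us take the auxiliary receiver $J$ in Theorem~\ref{prop2jm2s} to be a \emph{trivial} (constant) random variable. First I would verify that such a $J$ satisfies the structural hypothesis $p(j,y_\rmr,y_1|x)=p(j|x)p(y_\rmr|j)p(y_1|y_\rmr,j,x)$ of the theorem: with $J$ constant, $p(j|x)$ is degenerate, $p(y_\rmr|j)=p(y_\rmr)$, and $p(y_1|y_\rmr,j,x)=p(y_1|x,y_\rmr)$, so the required factorization reduces exactly to $p(y_1,y_\rmr|x)=p(y_\rmr)p(y_1|x,y_\rmr)$, which holds by assumption. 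Hence the constant $J$ is an admissible choice of auxiliary receiver.

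It then remains only to evaluate the conclusion of Theorem~\ref{prop2jm2s} at this $J$. Since $I(X;J)=0$ and conditioning on a constant is vacuous, the rate bound $R\leq I(X;J)+I(X;Y_1|V,W,J,T)-I(X;Y_\rmr|V,W,J,T)$ collapses to $R\leq I(X;Y_1|T,W,V)-I(X;Y_\rmr|T,W,V)$; the constraint $I(V,W;Y_\rmr|J,T)-I(V,W;Y_1|J,T)\leq I(W;Y_\rmr|J,T)\leq C_0$ becomes $I(V,W;Y_\rmr|T)-I(V,W;Y_1|T)\leq I(W;Y_\rmr|T)\leq C_0$; and the admissible joint pmf $p(t,x)p(j|x)p(y_\rmr|j)p(y_1|j,y_\rmr,x)p(w|t,y_\rmr)p(v|t,x,y_\rmr,w)$ reduces to $p(t,x)p(y_\rmr)p(y_1|x,y_\rmr)p(w|t,y_\rmr)p(v|t,x,y_\rmr,w)$. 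This is precisely the statement of the corollary, and the proof is complete.

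Since the argument is essentially a substitution, there is no real technical obstacle; the only point requiring (minor) care is checking that the constant $J$ meets the structural hypothesis of Theorem~\ref{prop2jm2s}, which is exactly where the i.i.d.-output assumption $Y_\rmr\perp X$ enters. I would also add a one-line remark that the superficially more natural choice $J=Y_\rmr$ does not reproduce the stated form: it yields instead $R\leq I(X;Y_1|V,W,Y_\rmr,T)$, which is in general strictly weaker, so the trivial-$J$ specialization is the one that gives the bound as written.
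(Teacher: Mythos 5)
Your proposal is correct and is exactly the paper's argument: the paper proves the corollary in one line by setting the auxiliary receiver $J$ to a constant in Theorem~\ref{prop2jm2s}, noting that this choice is feasible precisely because the i.i.d.-output assumption makes $Y_\rmr$ independent of $X$. Your verification of the structural hypothesis and the substitution into the rate bound and constraint are the same specialization, just written out in more detail.
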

The proof of this corollary follows by setting the auxiliary receiver $J$ to be a constant in the statement of Theorem~\ref{prop2jm2s}. This choice is feasible since $Y_\rmr$ is independent of $X$.

In~\cite{tandon2008new}, the following upper bound on the capacity of the above relay channel is given. 
\begin{theorem}[\cite{tandon2008new}]\label{thm:TU08}
Any achievable rate $R$ for the relay with orthogonal receiver components channel and i.i.d. output must satisfy the condition 
\begin{align*}
R &\leq \min\{I(W,X;Y_1|T), I(X;Y_1|Y_{\mathrm{r}},T)\}
 \end{align*}
 for some $p(x,t)p(y_1|x,y_{\mathrm{r}})p(y_{\mathrm{r}})p(w|y_{\mathrm{r}},t)$ such that
$I(W;Y_{\mathrm{r}}|T)\leq C_0$.
\end{theorem}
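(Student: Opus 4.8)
The plan is to prove Theorem~\ref{thm:TU08} by a standard weak converse: Fano's inequality followed by single-letterization, exploiting the two defining features of this channel, namely that (i) the relay output sequence $Y_\rmr^n$ is i.i.d.\ $\sim p(y_\rmr)$ and independent of $X^n$, hence of the message $M$, and (ii) the relay transmits only a single index $M_\rmr=\phi(Y_\rmr^n)\in[1:2^{nC_0}]$, a deterministic function of $Y_\rmr^n$, so that the decoder observes $(Y_1^n,M_\rmr)$. First I would record the reductions: by Fano, $nR\le I(M;Y_1^n,M_\rmr)+n\epsilon_n$; since $M_\rmr$ is a function of $Y_\rmr^n$ and $Y_\rmr^n\perp M$ we have $I(M;M_\rmr)=0$, so $I(M;Y_1^n,M_\rmr)=I(M;Y_1^n\mid M_\rmr)$; and since, given $X^n$, the pair $(Y_1^n,M_\rmr)$ is produced from $(X^n,Y_\rmr^n)$ with $Y_\rmr^n\perp M$, the Markov chain $M\to X^n\to(Y_1^n,M_\rmr)$ holds, so $M$ may be replaced by $X^n$ when convenient. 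Throughout I would use the single identification $T_i=(Y_\rmr^{i-1},Y_1^{i-1})$, $W_i=M_\rmr$, with a uniform time-sharing index $Q$ and $T=(Q,T_Q)$, $W=W_Q$, $X=X_Q$, $Y_1=Y_{1Q}$, $Y_\rmr=Y_{\rmr Q}$; one checks directly that $Y_{\rmr i}\perp(X_i,T_i)$ (by i.i.d.-ness of $Y_\rmr^n$ and its independence of $X^n$) and that $W_i$ is conditionally independent of $X_i$ given $(Y_{\rmr i},T_i)$ (because the only randomness in $M_\rmr$ beyond $(Y_\rmr^{i-1},Y_{\rmr i})$ comes from $Y_{\rmr,i+1}^n\perp X_i$), so the resulting joint law has the required form $p(x,t)p(y_\rmr)p(y_1|x,y_\rmr)p(w|y_\rmr,t)$.

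For the genie term I would hand the decoder the entire $Y_\rmr^n$: $I(M;Y_1^n,M_\rmr)\le I(M;Y_1^n,Y_\rmr^n)=I(M;Y_1^n\mid Y_\rmr^n)\le I(X^n;Y_1^n\mid Y_\rmr^n)$, using $I(X^n;Y_\rmr^n)=0$. Expanding $I(X^n;Y_1^n\mid Y_\rmr^n)=\sum_i I(X^n;Y_{1i}\mid Y_\rmr^n,Y_1^{i-1})$, I would drop $X^n\setminus X_i$ (valid because $Y_{\rmr i}$ is conditioned on and $p(y_1|x,y_\rmr)$ is memoryless), then peel off $Y_{\rmr,i+1}^n$ by ``conditioning reduces entropy'' on the $H(Y_{1i}\mid\cdot)$ term while keeping the term $H(Y_{1i}\mid X_i,Y_{\rmr i})$ intact by memorylessness, to obtain $\sum_i I(X_i;Y_{1i}\mid Y_{\rmr i},T_i)$; single-letterizing gives $R\le I(X;Y_1\mid Y_\rmr,T)+\epsilon_n$. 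For the rate constraint I would use $H(M_\rmr)\le nC_0$: $nC_0\ge H(M_\rmr)=I(M_\rmr;Y_\rmr^n)=\sum_i I(M_\rmr;Y_{\rmr i}\mid Y_\rmr^{i-1})$, and since $Y_{\rmr i}\perp Y_1^{i-1}\mid(M_\rmr,Y_\rmr^{i-1})$ (the conditional law of $Y_1^{i-1}$ given $(M_\rmr,Y_\rmr^{i-1},Y_{\rmr i})$ depends only on $Y_\rmr^{i-1}$), this equals $\sum_i I(M_\rmr;Y_{\rmr i}\mid T_i)=\sum_i I(W_i;Y_{\rmr i}\mid T_i)$, i.e.\ $I(W;Y_\rmr\mid T)\le C_0$.

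The remaining and most delicate part is the bound $R\le I(W,X;Y_1\mid T)$. Starting from $nR\le I(M;Y_1^n\mid M_\rmr)+n\epsilon_n=\sum_i I(M,X_i;Y_{1i}\mid M_\rmr,Y_1^{i-1})+n\epsilon_n$, I would bring $Y_\rmr^{i-1}$ into the conditioning and write each term as $I(Y_\rmr^{i-1};Y_{1i}\mid M_\rmr,Y_1^{i-1})+I(X_i;Y_{1i}\mid W_i,T_i)$ — the cross term $I(M;Y_{1i}\mid X_i,M_\rmr,Y_\rmr^{i-1},Y_1^{i-1})$ vanishes because, given $(X_i,M_\rmr,Y_\rmr^{i-1},Y_1^{i-1})$, $Y_{\rmr i}$ is independent of $M$ — and then show that the leftovers $\sum_i I(Y_\rmr^{i-1};Y_{1i}\mid M_\rmr,Y_1^{i-1})$ are absorbed into $\sum_i I(W_i;Y_{1i}\mid T_i)$ after regrouping with the Csisz\'ar--K\"orner sum identity. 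Combining the two single-letter inequalities yields $R\le\min\{I(W,X;Y_1\mid T),\,I(X;Y_1\mid Y_\rmr,T)\}+\epsilon_n$, and letting $n\to\infty$ finishes the proof. The main obstacle is exactly this last bookkeeping: because $M_\rmr$ is a function of the \emph{entire} $Y_\rmr^n$, its dependence on $Y_{\rmr i}$ is entangled with the past and future relay coordinates, so the residual cross-terms do not cancel term by term; one must choose the auxiliary/time-sharing identification — and the direction (past versus future) in which the chain rule and the sum identity are applied — judiciously enough that the rate inequality $R\le I(W,X;Y_1\mid T)$ and the constraint $I(W;Y_\rmr\mid T)\le C_0$ close up simultaneously with one and the same joint distribution $p(x,t)p(y_\rmr)p(y_1|x,y_\rmr)p(w|y_\rmr,t)$. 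Everything else is routine manipulation of mutual-information identities.
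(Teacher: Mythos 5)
First, a remark on scope: the paper does not prove Theorem~\ref{thm:TU08} at all --- it quotes it from \cite{tandon2008new} and only adds a remark about the necessity of the time-sharing variable $T$ --- so your attempt must be judged on its own merits. Most of it is sound: the reduction to a single relay index $M_\rmr=\phi(Y_\rmr^n)$, the verification that $T_i=(Y_\rmr^{i-1},Y_1^{i-1})$, $W_i=M_\rmr$ induce the factorization $p(x,t)p(y_\rmr)p(w|y_\rmr,t)p(y_1|x,y_\rmr)$, the genie bound $R\le I(X;Y_1|Y_\rmr,T)$, and the constraint $I(W;Y_\rmr|T)\le C_0$ are all correct. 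The gap is exactly the step you defer, and it cannot be repaired for your choice of $T$: the inequality $R\le I(W,X;Y_1|T)$ is not merely unproven but false with this identification. Your decomposition requires $\sum_i I(Y_\rmr^{i-1};Y_{1i}|M_\rmr,Y_1^{i-1})\le\sum_i I(M_\rmr;Y_{1i}|Y_\rmr^{i-1},Y_1^{i-1})$; expanding $I(M_\rmr,Y_\rmr^{i-1};Y_{1i}|Y_1^{i-1})$ in both orders shows this is equivalent to $\sum_i I(Y_\rmr^{i-1};Y_{1i}|Y_1^{i-1})\le I(M_\rmr;Y_1^n)$, which nothing in the setup implies. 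Indeed, when $C_0=0$ (so $M_\rmr$ is constant and the right side is zero) one has the \emph{exact} identity $I(M;Y_1^n)=\sum_i I(X_i;Y_{1i}|T_i)+\sum_i I(Y_\rmr^{i-1};Y_{1i}|Y_1^{i-1})$, and the second sum is strictly positive whenever $Y_\rmr$ genuinely enters $p(y_1|x,y_\rmr)$ and the code correlates $X_i$ with $X^{i-1}$: conditioning on $Y_1^{i-1}$ correlates $Y_\rmr^{i-1}$ with $X^{i-1}$, hence with $X_i$, hence with $Y_{1i}$ (explaining away). So $\sum_i I(W_i,X_i;Y_{1i}|T_i)$ falls strictly short of $I(M;Y_1^n)\ge nR-n\epsilon_n$, and no regrouping via the Csisz\'ar--K\"orner identity changes the sign of an information quantity.

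The repair is to move the relay past out of $T$ and into $W$: take $T$ to be only the uniform time index $Q$ and $W_i=(M_\rmr,Y_\rmr^{i-1})$ --- the same $W$ the paper uses in proving Theorem~\ref{prop2jm2s} and Corollary~\ref{prop2}. Then (i) $W_i$ is independent of $X_i$ given $Y_{\rmr i}$ because $(M_\rmr,Y_\rmr^{i})$ is a function of $Y_\rmr^n$, which is independent of $X^n$, so the factorization holds with $T=Q$; (ii) $\sum_i I(W_i;Y_{\rmr i})=\sum_i I(M_\rmr;Y_{\rmr i}|Y_\rmr^{i-1})=I(M_\rmr;Y_\rmr^n)\le nC_0$ since $Y_\rmr^n$ is i.i.d.; (iii) your genie argument gives $\sum_i I(X^n;Y_{1i}|Y_\rmr^n,Y_1^{i-1})\le\sum_i I(X_i;Y_{1i}|Y_{\rmr i})$ unchanged; and (iv) $I(M;Y_1^n|M_\rmr)\le\sum_i I(M,M_\rmr,Y_1^{i-1},Y_\rmr^{i-1};Y_{1i})=\sum_i I(X_i,W_i;Y_{1i})$, the last equality because $(M,Y_1^{i-1})\to(X_i,M_\rmr,Y_\rmr^{i-1})\to Y_{1i}$ is a Markov chain (given $Y_\rmr^{i-1}$, the pair $(Y_{\rmr i},Y_{\rmr,i+1}^n)$, which together with $Y_\rmr^{i-1}$ determines $(M_\rmr,Y_{\rmr i})$, is independent of $M$ and of the past channel noise). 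All three single-letter requirements then close simultaneously on the one distribution indexed by $Q$, which is also consistent with the paper's remark that only a bare time-sharing $T$ needs to be appended to the original statement of \cite{tandon2008new}.
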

\begin{remark}
The time-sharing random variable $T$ was not included in the  statement of the above theorem in \cite{tandon2008new}. We believe it is necessary, however, since the Markov chain $(W_T,T)\rightarrow Y_{\rmr, T}\rightarrow X_T$ does not hold with the identification of auxiliary random variables in \cite{tandon2008new}. The upper bound in Theorem \ref{thm:TU08} with  $T$ still provides a converse  for the example considered in~\cite{aleksic2009capacity}.
\end{remark}

Next, we show that the bound in Corollary \ref{prop2} strictly improves over the bound in Theorem \ref{thm:TU08}.

\begin{proposition}\label{lemmaTU} The bound in Corollary \ref{prop2} is tighter than the bound in Theorem \ref{thm:TU08}. Moreover, the tightness is strict for the following class of relay channels.
Consider a relay channel with i.i.d. output sequence $p(y_{\rmr})p(y_1|x,y_{\rmr})$ such that the channel $p(y_1|x, Y_{\rmr}=y_{\rmr})$ is generic for every $y_{\rmr}$, and for which there is no random variable $K$ such that $H(K|X,Y_1)=H(K|X,Y_\rmr)=0$ but $H(K|X)>0$. For such a channel, if the compress-forward lower bound  with time-sharing does not match the bound in Theorem \ref{thm:TU08}, then the bound in Corollary \ref{prop2} strictly improves upon the bound in Theorem \ref{thm:TU08}.
\end{proposition}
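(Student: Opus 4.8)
The plan is to split the proof into two parts: (i) showing the bound in Corollary~\ref{prop2} is never worse than that in Theorem~\ref{thm:TU08}, and (ii) exhibiting the strict improvement under the stated hypotheses. For part (i), I would start from the expression $R\leq I(X;Y_1|T,W,V)-I(X;Y_\rmr|T,W,V)$ in Corollary~\ref{prop2} and argue that any distribution feasible there can be mapped to a feasible distribution in Theorem~\ref{thm:TU08} (dropping $V$, keeping $T$ and $W$) whose bound is at least as large. The key chain of inequalities would be: $I(X;Y_1|T,W,V)-I(X;Y_\rmr|T,W,V)\leq I(X;Y_1|T,W)$ when $V\to(X,T,W)\to(Y_1,Y_\rmr)$ and the relevant difference is controlled — more carefully, one uses $I(X;Y_1|T,W,V)\leq I(X,V;Y_1|T,W)=I(X;Y_1|T,W)+I(V;Y_1|T,W,X)$ and then needs the negative $-I(X;Y_\rmr|T,W,V)$ term together with the constraint $I(V,W;Y_\rmr|T)-I(V,W;Y_1|T)\leq I(W;Y_\rmr|T)$ to absorb the slack, recovering $I(W,X;Y_1|T)$ (with the constraint $I(W;Y_\rmr|T)\leq C_0$ inherited). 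The second half of the minimum, $I(X;Y_1|Y_\rmr,T)$, should dominate $I(X;Y_1|T,W,V)-I(X;Y_\rmr|T,W,V)$ by a data-processing/chaining argument since conditioning on $Y_\rmr$ makes the relay output fully available.

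For part (ii), the strategy is contrapositive in spirit: suppose for contradiction that the bound in Corollary~\ref{prop2} equals the bound in Theorem~\ref{thm:TU08} for such a channel. I would look at the optimizing distribution $(T,W)$ for Theorem~\ref{thm:TU08} and track which inequalities from part (i) must hold with equality. Equality in the relevant steps should force auxiliary Markov/independence structure — concretely, that $I(V;X|T,W,Y_\rmr)=0$, i.e., the auxiliary $V$ carries no extra information about $X$ beyond $Y_\rmr$ (given $T,W$), which is exactly the condition that collapses the Corollary~\ref{prop2} bound down to the compress-forward rate $I(X;Y_1|T,W)$ with the CF-type constraint (compare Remark~\ref{rmk8d}, where the only difference between the upper bound and compress-forward is whether the maximum is over $p(v|x,y_\rmr)$ or $p(v|y_\rmr)$). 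The genericity of $p(y_1|x,Y_\rmr=y_\rmr)$ for each $y_\rmr$ is what lets us push the equality conditions — analogous to the use of genericity in Theorem~\ref{thm:covopgen} via \eqref{eqnneqaaa} — to deduce that conditioning structure is degenerate; the ``no random variable $K$ with $H(K|X,Y_1)=H(K|X,Y_\rmr)=0$ but $H(K|X)>0$'' hypothesis rules out the pathological case where a common-information variable could sustain a gap without collapsing to compress-forward. Hence under the equality assumption the Corollary~\ref{prop2} bound would reduce to compress-forward with time-sharing, and if the latter does not match Theorem~\ref{thm:TU08}, we reach a contradiction, so the inclusion is strict.

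The main obstacle I anticipate is step~(ii)'s equality analysis: carefully identifying \emph{which} of the several inequalities used in part~(i) become tight at a common optimizer of both bounds, and then converting those tightness conditions into the clean algebraic statement $I(V;X|T,W,Y_\rmr)=0$. This requires handling the interplay between the two branches of the minimum in Theorem~\ref{thm:TU08} (it is not a priori clear which branch is active at the optimizer) and correctly invoking the genericity hypothesis channel-by-channel over the values $y_\rmr$, since genericity is a statement about the fixed matrices $p(y_1|x,Y_\rmr=y_\rmr)$ rather than the averaged channel. A secondary technical point is verifying that the constraint relaxation in part~(i) — going from the pair of constraints in Corollary~\ref{prop2} to the single constraint $I(W;Y_\rmr|T)\leq C_0$ in Theorem~\ref{thm:TU08} — is genuinely a relaxation and does not inadvertently enlarge the feasible set in a way that breaks monotonicity of the objective.
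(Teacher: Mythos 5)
Your plan matches the paper's proof essentially step for step: part (i) is carried out there by showing that any feasible $(T,W,V)$ for Corollary~\ref{prop2} yields a rate dominated by both branches $I(X;Y_1|Y_\rmr,T)$ and $I(W,X;Y_1|T)$ of Theorem~\ref{thm:TU08}, and part (ii) is exactly the equality-tracking argument you describe, split into the two cases according to which branch of the minimum is active, with genericity (applied conditionally on each $y_\rmr$) forcing $I(V,W;X|Y_\rmr,T)=0$ in one case and the ``no common $K$'' hypothesis feeding the double-Markovity lemma in the other, so that both cases collapse to compress-forward with time-sharing via Kim's characterization. The approach is correct and the same as the paper's; what remains is only the mechanical execution of the tightness conditions you already anticipate.
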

The proof of this proposition is given in Section \ref{prooflemmaTU}.

We can show that the class for which the improvement is strict is non-empty.
In \cite[Section VII]{tandon2008new} it is shown that for a  channel of the form $Y_1=XY_\rmr + N$, where $N$ and $Y_\rmr$ are Bernoulli random variables independent of the binary $X$, the compress-forward lower bound (without time-sharing) is significantly smaller than the upper bound in Theorem \ref{thm:TU08}. Further from Figure 4 in  \cite{tandon2008new}, it is evident that even with time-sharing the compress-forward lower bound continues to be below the bound in Theorem \ref{thm:TU08}. Now consider the perturbed version of the channel 
$Y_1=(XY_\rmr + N)Z + (X + N)(1-Z)$, where $Z$ is independent of all other random variables and $\P(Z=1)=1-\eps$. 
 We argue below that the perturbed channel satisfies the assumptions in Proposition \ref{lemmaTU}. Therefore, by the continuity of the various bounds in the channel parameters, the gap between the compress-forward rate (with time-sharing) and the upper bound continues to be non-zero for small enough values of $\eps$.

To see why the perturbed channel satisfies the assumptions in Proposition \ref{lemmaTU}, assume that there is a random variable $K$ with the properties in
Proposition \ref{lemmaTU} for this example. Since $H(K|X)>0$, there is some $X=x$ such that $H(K|X=x)>0$. For  $X=x$, $K$ is the Gacs-Korner common part of $Y_r$ and $(xY_\rmr + N)Z + (x + N)(1-Z)$. Since $Y_\rmr$ is binary, the only non-trivial Gacs-Korner common part $K$ is $K=Y_\rmr$. For this $K$ to work, we  require that $H\left(Y_\rmr|(xY_\rmr + N)Z + (x + N)(1-Z)\right)=0$. However, because of the noises in $Z$ and $N$, $Y_\rmr$ is not a function of
$(xY_\rmr + N)Z + (x + N)(1-Z)$. Therefore, this Gacs-Korner common part is empty.

\smallskip

\noindent{\bf Auxiliary receiver can help}.
Consider the Gaussian relay channel with orthogonal receiver components and i.i.d.\ output for which
\begin{align}
Y_{\rmr}&=Z_\rmr,\label{GaussianExampleEq1}\\
Y_1&=X+Y_\rmr+Z_1,\label{GaussianExampleEq2}
\end{align}
where $Z_1\sim \Nc(0,N_1)$ and $Z_\rmr \sim \Nc(0,N_\rmr)$ are independent of each other and of $X$. Assume average power constraint $P$ on $X$. In the following, we show that for this class of channels the bound with auxiliary receiver is strictly tighter than our main upper bound with no auxiliary receivers in Proposition \ref{prop:mainprirel}. 

\begin{theorem}\label{comparisonBD} The bound in Corollary \ref{prop2} strictly improves upon the bound in Proposition \ref{prop:mainprirel} for the Gaussian relay channel in \eqref{GaussianExampleEq1}-\eqref{GaussianExampleEq2} for any $N_1, N_\rmr, C_0>0$.
\end{theorem}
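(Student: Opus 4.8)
The plan is to establish the strict inequality by a direct comparison of the two bounds on this specific channel, exploiting the fact that here $Y_\rmr = Z_\rmr$ is pure noise independent of $X$. First I would identify an explicit auxiliary receiver $J$ that genuinely helps: take $J = X + Z_1'$ for some Gaussian $Z_1'$ with variance $N_1' < N_1$ chosen so that $Y_1 = X + Y_\rmr + Z_1$ can be written as a degraded version of $(J, Y_\rmr)$, i.e.\ $p(j,y_\rmr,y_1|x) = p(j|x)p(y_\rmr)p(y_1|j,y_\rmr,x)$ with $Z_1 = (J - X) + \text{(extra independent noise)}$; this requires the noise decomposition $Z_1 \sim Z_1' + \tilde Z$ with $\tilde Z \sim \Nc(0, N_1 - N_1')$, which is valid for any $N_1' \in (0, N_1)$. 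With this enhancement the hypotheses of Theorem~\ref{prop2jm2s}, and hence Corollary~\ref{prop2} after noting $Y_\rmr \perp X$, are satisfied.

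Next I would evaluate both bounds assuming jointly Gaussian auxiliaries — justified by the usual rotation/doubling argument from \cite{gen14} as used in Proposition~\ref{Proposition5} — so that each bound reduces to a finite-dimensional optimization over covariance parameters. For Proposition~\ref{prop:mainprirel} on this channel, writing $S_{13} = P/N_1$ and noting $Y_\rmr$ carries no information about $X$ directly, the constraint $I(V;Y_\rmr) - I(V;Y_1) \le C_0$ couples $V$ to $Y_\rmr$ only through the correlation that $Y_\rmr$ has with $Y_1$; the optimal $V$ is a noisy version of $Y_\rmr$, and one obtains a one-parameter problem. For Corollary~\ref{prop2}, the extra freedom is the auxiliary receiver variance $N_1'$ together with the refined split of $V, W, T$: the key point is that $I(X;Y_1|T,W,V) - I(X;Y_\rmr|T,W,V)$ can be made larger (relative to what the constraint costs) because the constraint now reads $I(V,W;Y_\rmr|T) - I(V,W;Y_1|T) \le I(W;Y_\rmr|T) \le C_0$, and the presence of $J$ lets $W$ "spend" its rate more efficiently against the relay output. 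I would compute the maximizing configurations for each and show the optimal value of the Corollary~\ref{prop2} bound is strictly below that of Proposition~\ref{prop:mainprirel} for every $N_1, N_\rmr, C_0 > 0$.

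To actually nail the strict inequality without grinding through the full optimization, I would instead argue by contradiction in the spirit of the proof of Theorem~\ref{thm:cutsubopt}: suppose the two bounds are equal for some parameter values. Equality would force the auxiliary receiver to be useless, which by the structure of the bound means the slack terms tied to $J$ vanish — concretely, that the optimal $(V,W,T)$ for Corollary~\ref{prop2} can be taken with $J$ constant and still be optimal. I would then show this forces a degenerate Markov/double-Markov condition between $X$, $Y_\rmr$, and the compression variable (analogous to the "$X$ independent of $(V,Y_\rmr)$" contradiction in Theorem~\ref{thm:cutsubopt}), contradicting the fact that in the Corollary~\ref{prop2} bound $Y_1$ genuinely depends on $Y_\rmr$ in a way an enhanced receiver can exploit. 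The main obstacle I anticipate is the second step: carefully carrying out the Gaussian optimization for Corollary~\ref{prop2} with four coupled auxiliary parameters and verifying that the optimum is attained at a nontrivial $N_1' \in (0, N_1)$ rather than at the degenerate endpoint $N_1' = N_1$ (which would recover Proposition~\ref{prop:mainprirel}); establishing that the derivative in $N_1'$ at the endpoint is strictly favorable is the crux and is where a delicate but finite computation will be unavoidable.
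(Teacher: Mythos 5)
Your outline shares the paper's two structural ingredients (reduce to jointly Gaussian auxiliaries via the doubling/rotation lemma, then argue by contradiction), but it is built on a misreading of what Corollary~\ref{prop2} is, and the concrete steps you propose would not prove the stated theorem. Corollary~\ref{prop2} is Theorem~\ref{prop2jm2s} specialized to $J=\text{constant}$ (feasible precisely because $Y_\rmr$ is independent of $X$ for this channel); there is no auxiliary-receiver noise level $N_1'$ to tune, and no ``endpoint $N_1'=N_1$'' whose derivative you could examine. Introducing $J=X+Z_1'$ gives a \emph{different} instance of Theorem~\ref{prop2jm2s}, and showing that instance beats Proposition~\ref{prop:mainprirel} says nothing about the bound in Corollary~\ref{prop2}. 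The actual source of the improvement is the refined auxiliary structure $(V,W,T)$ inherited from the auxiliary-receiver machinery: $W$ is constrained by $p(w|t,y_\rmr)$ (so $W\to(T,Y_\rmr)\to X$), and the constraint is strengthened to $I(V,W;Y_\rmr|T)-I(V,W;Y_1|T)\leq I(W;Y_\rmr|T)\leq C_0$, which forces the ``$C_0$ budget'' to be carried by a variable that sees $X$ only through $Y_\rmr$ and $T$.

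The missing content is the pair of equality properties of any Gaussian maximizer of Proposition~\ref{prop:mainprirel} that drive the contradiction: (i) the constraint $I(V;Y_\rmr)-I(V;Y_1)\leq C_0$ must be tight (else time-share $V$ with $Y_\rmr$), and (ii) the covariance constraint $(P-K_1)(N_\rmr-K_2)\geq\rho^2K_1K_2$ must also be tight — this second fact is not obvious and requires a separate perturbation argument (including ruling out $\rho<0$). Granting these, if the Corollary~\ref{prop2} bound with $\tilde V=(V,W,T)$ equaled the Proposition~\ref{prop:mainprirel} bound, tightness of (i) forces $I(T;Y_1)=0$, hence $I(T;X)=0$ and $I(W,T;X)=0$, while tightness of (ii) forces $K_{X,Y_\rmr|W,T}=K_{X,Y_\rmr}$, i.e., $(W,T)$ independent of $Y_\rmr$ — contradicting $I(W;Y_\rmr|T)\geq I(\tilde V;Y_\rmr)-I(\tilde V;Y_1)=C_0>0$. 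Your proposed contradiction (``the slack terms tied to $J$ vanish'') does not engage with either equality property and, as formulated around the nonexistent parameter $N_1'$, cannot be carried out.
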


The proof of this theorem is given in Section \ref{proofcomparisonBD}.


\section{Proofs of the results}
\label{sec:proofs}
In the following sections we present the proofs of the results stated in the previous section in their order of appearance. 

\subsection{Proof of Theorem \ref{thm:maintheorem}}
\label{sec:proofThm1}

We will first prove the theorem with the original constraints and then show the equivalence of the region to that obtained by dropping the constraint on $R$ in \eqref{eqnUB0} and instead strengthening \eqref{eq:UBcon} to \eqref{eq:UBcon2}.

Consider a sequence of codes achieving a rate $R$. Each code in this sequence induces a joint distribution on $(M,X^n,X_\rmr^n,Y^n,Y_\rmr^n)$. The proof essentially follows from routine manipulations along with the following identifications:
\begin{align*}
V_i&=(Y_{i+1}^n, Y_{\mathrm{r}}^{i-1}),\quad
    U_i =  (Y_{\mathrm{r}i+1}^n, Y^{i-1}),
\end{align*}
and
$V=(Q,V_Q)$, $U=(Q,U_Q)$, $X=X_Q, X_{\mathrm{r}}=X_{\mathrm{r}Q}, Y=Y_Q, Y_{\mathrm{r}}=Y_{\mathrm{r}Q}$, where $Q\stackrel{(d)}{=} \textrm{Uniform}[1:n]$ is a time-sharing random variable. We employ the data-processing inequality, chain-rule, and the Csisz\'ar-K\"orner-Marton identity repeatedly. It may be useful for the reader to have the  Bayesian Network diagram in \Cref{fig:bayesrel} and the d-separation theorem for deducing the various Markov chains employed.

 The Markov chain
    $V\rightarrow (X,X_{\mathrm{r}},Y_{\mathrm{r}})\rightarrow Y$
    follows from the fact that the channel $p(y|x,x_{\mathrm{r}},y_{\mathrm{r}})$ is memoryless. 
    The  Markov chain $U\rightarrow (X,X_{\mathrm{r}})\rightarrow (Y,Y_{\mathrm{r}})$  for
    relay channels of the form $p(y_\mathrm{r} |x)p(y|x, x_\mathrm{r},y_\mathrm{r})$ 
    can be deduced from  Figure~\ref{fig:bayesrel}.

    Since the expressions in the statement of the theorem depend only on the marginal distributions of 
    $p(u,x,x_{\mathrm{r}})$ and $p(v|x,x_{\mathrm{r}},y_{\mathrm{r}})$, the union in the statement of the theorem is taken over $p(u,x,x_{\mathrm{r}})p(y,y_{\mathrm{r}}|x,x_{\mathrm{r}})p(v|x,x_{\mathrm{r}},y_{\mathrm{r}})$.

To show  \eqref{eqnUB1}, we write
    \begin{align*}
       I(M;Y^n)=I(M;Y_{\mathrm{r}}^n)+I(M;Y^n)-I(M;Y_{\mathrm{r}}^n).
    \end{align*}
Following the steps in the proof of the cutset bound, we obtain
\[
 I(M;Y_{\mathrm{r}}^n)\leq  \sum_{i=1}^n I(X_i;Y_{\mathrm{r}i}|X_{\mathrm{r}i}).
 \]
On the other hand, we have{\allowdisplaybreaks
\begin{align*}I(M;Y^n)- I(M;Y_{\mathrm{r}}^n)
&\leq I(M;Y^n|Y_{\mathrm{r}}^n)
\\    &=\sum_{i=1}^n I(M;Y_i|Y^{i-1},Y_{\mathrm{r}}^n) \\
    & = \sum_{i=1}^n I(M,X_i;Y_i|X_{\mathrm{r}i},Y^{i-1},Y_{\mathrm{r}}^n)  \\
    & \stackrel{(a)}{=}  \sum_{i=1}^n I(X_i;Y_i|X_{\mathrm{r}i},Y^{i-1},Y_{\mathrm{r}}^n) 
    \\&\stackrel{(b)}{\leq}
    \sum_{i=1}^n I(X_i;Y_i|X_{\mathrm{r}i},Y^{i-1},Y_{\mathrm{r}i}^n) 
    \\& = \sum_{i=1}^n I(X_i;Y_i|X_{\mathrm{r}i},Y_{\mathrm{r}i},U_i),
    \end{align*}}
    where in $(a)$  and $(b)$ we use the Markov chain 
    $(Y_{\mathrm{r}}^{i-1},Y_{\mathrm{r}i+1}^n,M,Y^{i-1})\rightarrow (X_i,X_{\mathrm{r}i},Y_{\mathrm{r}i})\rightarrow Y_i$. This Markov chain follows from the fact that the channel $p(y|x,x_{\mathrm{r}},y_{\mathrm{r}})$ is memoryless.
    {\allowdisplaybreaks
Thus, we obtain
       \begin{align*}
      I(M;Y^n)&=I(M;Y_{\mathrm{r}}^n)+I(M;Y^n)-I(M;Y_{\mathrm{r}}^n)
       \\&\leq 
       \sum_{i=1}^n I(X_i;Y_{\mathrm{r}i}|X_{\mathrm{r}i})+ I(X_i;Y_i|X_{\mathrm{r}i},Y_{\mathrm{r}i},U_i)
       \\&\leq n\left( 
       I(X;Y_{\mathrm{r}}|X_{\mathrm{r}})+ I(X;Y|X_{\mathrm{r}},Y_{\mathrm{r}},U)\right).
    \end{align*}}
    
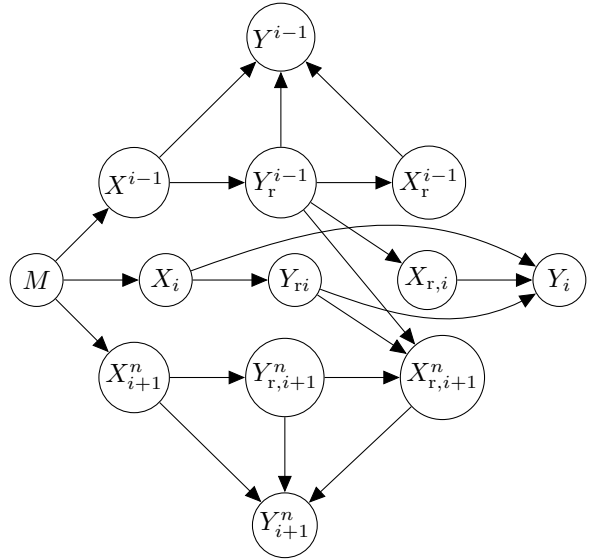
\begin{figure}[b]
\begin{center}
    \begin{tikzpicture}[->]
    \node[latent] (m) {$M$};
    \node[latent, right=of m] (xpr) {$X_i$};
    \node[latent, above right=of m] (xpa) {$X^{i-1}$};
    \node[latent, below right=of m] (xf) {$X_{i+1}^n$};
    \node[latent, right=of xpr] (yrpr) {$Y_{\rmr i}$};
    \node[latent, right=of xpa] (yrpa) {$Y_\rmr^{i-1}$};
    \node[latent, right=of xf] (yrf) {$Y_{\rmr,i+1}^n$};
    \node[latent, right=of yrpa] (xrpa) {$X_\rmr^{i-1}$};
    \node[latent, right=of yrpr] (xrpr) {$X_{\rmr,i}$};
    \node[latent, right=of yrf] (xrf) {$X_{\rmr,i+1}^n$};
    \node[latent, above=of yrpa] (ypa) {$Y^{i-1}$};
    \node[latent, right=of xrpr] (ypr) {$Y_{i}$};
    \node[latent, below=of yrf] (yf) {$Y_{i+1}^n$};
    \edge {m} {xpa,xpr,xf};
    \edge {xpa} {yrpa};
    \edge {xpr} {yrpr};
    \edge {xf} {yrf};
    \edge {yrpa} {xrpa,xrpr};
    \edge {yrpa,yrpr,yrf} {xrf};
    \edge {yrpa,xrpa,xpa} {ypa};
    \edge {xrpr} {ypr};
    \path (xpr) edge[out=20, in=150]  (ypr) ;%
    \path (yrpr) edge[out=-20, in=210]  (ypr) ;
    \edge {yrf,xrf,xf} {yf};
    \end{tikzpicture}
    \end{center}
    \caption{A Bayesian network representing the causal relationships in the joint distribution of the random variables induced by the codebooks for the relay channel without self-interference.}
    \label{fig:bayesrel}
\end{figure}

Next, we establish \eqref{eqnUB2}. Observe that by definition, $V_i$ implies that $X_{\mathrm{r}i}$. We have
\begin{align}\nonumber
\frac1n I(M;Y^n)&\leq \frac1n \left(I(M;Y_{\mathrm{r}}^n)+I(M;Y^n)-I(M;Y_{\mathrm{r}}^n)\right)
\\&\nonumber\leq I(X;Y_{\mathrm{r}}|X_{\mathrm{r}})+\frac1n\left(\sum_{i}I(M;Y_i|Y_{\mathrm{r}}^{i-1},Y_{i+1}^n)-\sum_{i}I(M;Y_{\mathrm{r}i}|Y_{\mathrm{r}}^{i-1},Y_{i+1}^n)\right)
\\\nonumber&= I(X;Y_{\mathrm{r}}|X_{\mathrm{r}})+\frac1n\left(\sum_{i}I(M,X_i;Y_i|Y_{\mathrm{r}}^{i-1},X_{\mathrm{r}i},Y_{i+1}^n)-\sum_{i}I(M,X_i;Y_{\mathrm{r}i}|Y_{\mathrm{r}}^{i-1},X_{\mathrm{r}i},Y_{i+1}^n)\right)
\\\nonumber&= I(X;Y_{\mathrm{r}}|X_{\mathrm{r}})+\frac1n\left(\sum_{i}I(X_i;Y_i|Y_{\mathrm{r}}^{i-1},X_{\mathrm{r}i},Y_{i+1}^n)+
\sum_{i}I(M;Y_i|X_i,Y_{\mathrm{r}}^{i-1},X_{\mathrm{r}i},Y_{i+1}^n)\right.
\\&\nonumber\qquad\qquad\left.
-\sum_{i}I(X_i;Y_{\mathrm{r}i}|Y_{\mathrm{r}}^{i-1},X_{\mathrm{r}i},Y_{i+1}^n)
-\sum_{i}I(M;Y_{\mathrm{r}i}|X_i,Y_{\mathrm{r}}^{i-1},X_{\mathrm{r}i},Y_{i+1}^n)\right)
\\&\nonumber\leq I(X;Y_{\mathrm{r}}|X_{\mathrm{r}})+\frac1n\left(\sum_{i}I(X_i;Y_i|Y_{\mathrm{r}}^{i-1},X_{\mathrm{r}i},Y_{i+1}^n)-\sum_{i}I(X_i;Y_{\mathrm{r}i}|Y_{\mathrm{r}}^{i-1},X_{\mathrm{r}i},Y_{i+1}^n)\right.
\\&\nonumber\qquad\qquad\left.
+\sum_{i}I(M;Y_i|X_i,Y_{\mathrm{r}}^{i-1},X_{\mathrm{r}i},Y_{i+1}^n,Y_{\mathrm{r}i})\right)\\ \label{eqnEEa}
 &= I(X;Y_\rmr|X_\rmr) + I(X;Y|V,X_\rmr) - I(X;Y_\rmr|V,X_\rmr),
\end{align}
where in equation \eqref{eqnEEa}, we used the fact that
\[
I(M;Y_{i}|X_{i}, X_{\mathrm{r}i}, Y_{i+1}^n, Y_{\mathrm{r}}^{i-1},Y_{\mathrm{r}i})=0.
\] 

To show \eqref{eqnUB0}, note that 
\begin{align}
      \frac1n I(M;Y^n)&\leq \frac1n I(X^n;Y^n)\nonumber
\\&\leq \frac1n\sum_i I(X^{n},Y_{i+1}^n;Y_{i})\nonumber
\\&= \frac1n\sum_i I(X_i;Y_i)+
\frac1n\sum_i I(X^{n\setminus i},Y_{i+1}^n;Y_{i}|X_i)\nonumber
\\&\leq \frac1n\left(\sum_i I(X_i;Y_i)+
I(V_i,X_{\mathrm{r}i};Y_i|X_i)-I(V_i;Y_{\mathrm{r}i}|X_i,X_{\mathrm{r}i})\right)\label{step2}
\\&\leq I(X;Y)+
I(V,X_{\mathrm{r}};Y|X)-I(V;Y_{\mathrm{r}}|X,X_{\mathrm{r}})\nonumber
\\&= I(X,X_{\mathrm{r}};Y)+
I(V;Y|X,X_{\mathrm{r}})-I(V;Y_{\mathrm{r}}|X,X_{\mathrm{r}})\nonumber
\\&=I(X,X_{\mathrm{r}};Y)-I(V;Y_{\mathrm{r}}|X_{\mathrm{r}},X,Y),\label{step4}
    \end{align}{\allowdisplaybreaks 
where \eqref{step4} follows from the
Markov chain relationship
    $V\rightarrow (X,X_{\mathrm{r}},Y_{\mathrm{r}})\rightarrow Y$ and \eqref{step2} follows from
    
\begin{align}
    \sum_i I(Y_{\mathrm{r}}^{i-1},Y_{i+1}^n;Y_{\mathrm{r}i}|X_i,X_{\mathrm{r}i}) &\stackrel{(a)}{=} \sum_i I(X^{n\setminus i},Y_{\mathrm{r}}^{i-1},Y_{i+1}^n;Y_{\mathrm{r}i}|X_i) - \sum_i I(X^{n\setminus i};Y_{\mathrm{r}i}|X_i,Y_{\mathrm{r}}^{i-1},Y_{i+1}^n)
    \nonumber \\&\qquad - \sum_i I(X_{\mathrm{r}i};Y_{\mathrm{r}i}|X_i)\\
    &\stackrel{(b)}{=} \sum_iI(Y_{i+1}^n;Y_{\mathrm{r}i}|X^n,Y_{\mathrm{r}}^{i-1}) - \sum_i I(X^{n\setminus i};Y_{\mathrm{r}i}|X_i,Y_{\mathrm{r}}^{i-1},Y_{i+1}^n) \nonumber\\
    & \stackrel{(c)}{=} \sum_iI(Y_{\mathrm{r}}^{i-1};Y_{i}|X^n,Y_{i+1}^n) - \sum_i I(X^{n\setminus i};Y_{\mathrm{r}i}|X_i,Y_{\mathrm{r}}^{i-1},Y_{i+1}^n)\nonumber\\
    & \stackrel{(d)}{=} \sum_iI(Y_{\mathrm{r}}^{i-1};Y_{i}|X^n,Y_{i+1}^n) - \sum_i I(X^{n\setminus i};Y_{\mathrm{r}i},Y_{i}|X_i,Y_{\mathrm{r}}^{i-1},Y_{i+1}^n) \nonumber\\
    & = \sum_i I(X^{n\setminus i},Y_{\mathrm{r}}^{i-1};Y_{i}|X_i,Y_{i+1}^n) - \sum_i I(X^{n\setminus i};Y_{i}|X_i,Y_{i+1}^n) \nonumber\\
    & \quad - \sum_i I(X^{n\setminus i};Y_{\mathrm{r}i},Y_{i}|X_i,Y_{\mathrm{r}}^{i-1},Y_{i+1}^n)\nonumber \\
    & = \sum_i I(Y_{\mathrm{r}}^{i-1},Y_{i+1}^n,X_{\mathrm{r}i};Y_{i}|X_i) -\sum_i I(X^{n\setminus i},Y_{i+1}^n;Y_{i}|X_i) \nonumber\\
    & \qquad - \sum_i I(X^{n\setminus i};Y_{\mathrm{r}i}|X_i,Y_{\mathrm{r}}^{i-1},Y_{i+1}^n,Y_{i})\nonumber\\
    & \leq \sum_i I(Y_{\mathrm{r}}^{i-1},Y_{i+1}^n,X_{\mathrm{r}i};Y_{i}|X_i)-\sum_i I(X^{n\setminus i},Y_{i+1}^n;Y_{i}|X_i),\nonumber
    \end{align}}
 where $(a)$ follows from the fact that $X_{\mathrm{r}i}$ is a function of $Y_{\mathrm{r}}^{i-1}$, $(b)$ follows because $(X^{n\setminus i},Y_{\mathrm{r}}^{i-1}) \to (X_i,X_{\rmr i}) \to Y_{\mathrm{r}i}$ form a Markov chain, $(c)$ follows from Csisz\'ar-K\"orner-Marton identity and $(d)$
follows because $(X^{n\setminus i},Y_{i+1}^n) \to (X_i,Y_{\mathrm{r}}^{i-1},Y_{\mathrm{r}i}) \to Y_{i}$ form a Markov chain.
Next, observe that
\begin{align*}I(V;Y_{\mathrm{r}}|Q)-I(V;Y|Q)&=
I(U;Y_{\mathrm{r}}|Q)-I(U;Y|Q)
=\sum_i I(Y_{\mathrm{r}}^{i-1};Y_{\mathrm{r}i})-
\sum_i I(Y^{i-1};Y_i)
\end{align*}
From the identification of $V_i=(Y_{i+1}^n, Y_{\mathrm{r}}^{i-1})$, we see that $H(X_{\rmr i}|V_i)=0$. Therefore, $I(V,X_\rmr;Y_{\mathrm{r}}|Q)-I(V,X_\rmr;Y|Q)=
I(U;Y_{\mathrm{r}}|Q)-I(U;Y|Q)$. 
Thus,
$I(V,X_\rmr;Y_{\mathrm{r}})-I(V,X_\rmr;Y)=
I(U;Y_{\mathrm{r}})-I(U;Y)$. This yields \eqref{eq:UBcon}.

\smallskip

\noindent{\bf Proof of the equivalent form}.
From \eqref{eq:UBcon2} and \eqref{eqnUB1.5}, we deduce that
\begin{align}
    R&\leq 
    I(X;Y,Y_{\mathrm{r}}|X_{\mathrm{r}})-I(V;Y|X_{\mathrm{r}},Y_{\mathrm{r}})-I(X;Y_{\mathrm{r}}|V,X_{\mathrm{r}},Y)+
    I(X_\rmr;Y_\rmr)-I(V,X_{\mathrm{r}};Y_{\mathrm{r}})+I(V,X_{\mathrm{r}};Y)\nonumber
    \\&=
    I(X,X_{\mathrm{r}};Y)-I(V;Y_{\mathrm{r}}|X_{\mathrm{r}},X,Y).
\end{align}
Therefore the bound described in the alternate form is tighter than the bound in Theorem \ref{thm:maintheorem}. It remains to  prove the other direction. Note that to show the other direction, it suffices that \eqref{eq:UBcon2} holds for a maximizing distribution $p(u,x,x_{\mathrm{r}})p(v|x,x_{\mathrm{r}},y_{\mathrm{r}})$. 
This is established by identifying new maximizing auxiliary random variables $\tilde{U}$ and $\tilde{V}$ for which \eqref{eq:UBcon2} holds.

Assume that  \eqref{eq:UBcon2} does not hold for  a maximizing distribution $p(u,x,x_{\mathrm{r}})p(v|x,x_{\mathrm{r}},y_{\mathrm{r}})$. That is, 
\begin{align}
     I(U;Y_\rmr)-I(U;Y) = I(V,X_{\mathrm{r}};Y_{\mathrm{r}})-I(V,X_{\mathrm{r}};Y)> I(X_\rmr;Y_\rmr).\end{align}
Then, the inequality in \eqref{eqnUB1.5} is strict since
\begin{align}
    R&\leq 
    I(X,X_{\mathrm{r}};Y)-I(V;Y_{\mathrm{r}}|X_{\mathrm{r}},X,Y)\nonumber
    \\&=
    I(X;Y,Y_{\mathrm{r}}|X_{\mathrm{r}})-I(V;Y|X_{\mathrm{r}},Y_{\mathrm{r}})-I(X;Y_{\mathrm{r}}|V,X_{\mathrm{r}},Y)+
    I(X_\rmr;Y_\rmr)-I(V,X_{\mathrm{r}};Y_{\mathrm{r}})+I(V,X_{\mathrm{r}};Y)\label{eqnExpansiona}
    \\&<I(X;Y,Y_{\mathrm{r}}|X_{\mathrm{r}})-I(V;Y|X_{\mathrm{r}},Y_{\mathrm{r}})-I(X;Y_{\mathrm{r}}|V,X_{\mathrm{r}},Y).\nonumber
\end{align}
Consider two independent Bernoulli time-sharing random variables $Q_1\sim \mathrm{B}(\theta)$ and
$Q_2\sim \mathrm{B}(\theta)$ which are independent of previously defined random variables. Set $\tilde{V}=(V,Q_1)$ if $Q_1=0$ and $\tilde{V}=Q_1$ if $Q_1=1$. Similarly, set $\tilde{U}=(U,Q_2)$ if $Q_2=0$ and $\tilde{U}=(X_\rmr,Q_2)$ if $Q_2=1$. Observe that for any $\theta\in[0,1]$, the following equality  
\[  I(\tilde{U};Y_\rmr)-I(\tilde{U};Y) =  I(\tilde V,X_{\mathrm{r}};Y_{\mathrm{r}})-I(\tilde V,X_{\mathrm{r}};Y)  \]
still holds. Moreover, inequalities \eqref{eqnUB0} and \eqref{eqnUB1} also continue to hold. For $\theta=0$, we have
\[
    I(\tilde V,X_{\mathrm{r}};Y_{\mathrm{r}})-I(\tilde V,X_{\mathrm{r}};Y)> I(X_\rmr;Y_\rmr)
    \]
    and for $\theta=1$, we have
\[
    I(\tilde V,X_{\mathrm{r}};Y_{\mathrm{r}})-I(\tilde V,X_{\mathrm{r}};Y)\leq I(X_\rmr;Y_\rmr).
    \]
one can find  some $\theta^*\in[0,1]$ such that $\tilde{U}$ and $\tilde{V}$  satisfy the equality
\begin{align}
I(\tilde V,X_{\mathrm{r}};Y_{\mathrm{r}})-I(\tilde V,X_{\mathrm{r}};Y)=
I(\tilde U;Y_\rmr)-I(\tilde U;Y) = I(X_\rmr;Y_\rmr).\label{eqnexus1}
\end{align}
For this choice of $\theta^*$, we claim that  $p(\tilde u,x,x_{\mathrm{r}})p(\tilde v|x,x_{\mathrm{r}},y_{\mathrm{r}})$ is a maximizing distribution for which
\eqref{eq:UBcon2} holds with equality.  To show this, it suffices to show that for the new choice of $\tilde{U}, \tilde{V}$, as defined by \eqref{eqnexus1} the constraints in \eqref{eqnUB0} and \eqref{eqnUB1} continue to hold. This is immediate as the right hand-side of two constraints is only enlarged for any choice of $\theta \in [0,1]$. This follows from the fact that inequalities \eqref{eqnUB0} and \eqref{eqnUB1} hold for any $\theta\in[0,1]$.  Using \eqref{eqnexus1} and the expansion in \eqref{eqnExpansiona}, the constraint in \eqref{eqnUB1.5} is identical to \eqref{eqnUB0} and holds for $\theta^*$. 

The cardinality bounds on the auxiliary random variables come from the standard Caratheodory-Bunt \cite{bun34} arguments and is omitted. 

\noindent{\bf Equivalence of \eqref{eqnUB1} and \eqref{eqnUB1.1}, and the expressions in \eqref{eqnUB1.5}-\eqref{eqnUB2m1}.} Note that since $I(U;Y,Y_\rmr|X,X_\rmr)=0$, we have
\begin{align*}
    &I(X;Y,Y_{\mathrm{r}}|X_{\mathrm{r}})-I(U;Y|X_{\mathrm{r}},Y_{\mathrm{r}})  \\
    &\quad= I(X;Y,Y_{\mathrm{r}}|X_{\mathrm{r}},U) + I(U;Y,Y_{\mathrm{r}}|X_{\mathrm{r}}) -I(U;Y|X_{\mathrm{r}},Y_{\mathrm{r}})\\
    &\quad = I(X;Y|X_{\mathrm{r}},U) + I(X;Y_{\mathrm{r}}|X_{\mathrm{r}},U,Y) + I(U;Y_{\mathrm{r}}|X_{\mathrm{r}}),
\end{align*}    
which establishes the equivalence of \eqref{eqnUB1} and \eqref{eqnUB1.1}.

Note that since $I(V;Y|X,X_\rmr,Y_\rmr)=0$, we have
\begin{align*}
    &I(X;Y,Y_{\mathrm{r}}|X_{\mathrm{r}})-I(V;Y|X_{\mathrm{r}},Y_{\mathrm{r}})-I(X;Y_{\mathrm{r}}|V,X_{\mathrm{r}},Y)\\
    &\quad = I(X;Y_{\mathrm{r}}|X_{\mathrm{r}}) + I(V,X;Y|X_{\mathrm{r}},Y_{\mathrm{r}}) -I(V;Y|X_{\mathrm{r}},Y_{\mathrm{r}})-I(X;Y_{\mathrm{r}}|V,X_{\mathrm{r}},Y)\\
    &\quad  = I(X;Y_{\mathrm{r}}|X_{\mathrm{r}}) + I(X;Y|V,X_{\mathrm{r}},Y_{\mathrm{r}}) -I(X;Y_{\mathrm{r}}|V,X_{\mathrm{r}},Y)\\
    &\quad  = I(X;Y_{\mathrm{r}}|X_{\mathrm{r}})+I(X;Y|V,X_{\mathrm{r}})-I(X;Y_{\mathrm{r}}|V,X_{\mathrm{r}}),
    \end{align*}
    which establishes the equivalence of \eqref{eqnUB1.5} and \eqref{eqnUB2}.

Finally note that
\begin{align*}
    & I(X;Y_{\mathrm{r}}|X_{\mathrm{r}})+I(X;Y|V,X_{\mathrm{r}})-I(X;Y_{\mathrm{r}}|V,X_{\mathrm{r}})\\
    &\quad = I(X;Y_{\mathrm{r}}|X_{\mathrm{r}})+I(X;Y,V|X_{\mathrm{r}})-I(X;Y_{\mathrm{r}},V|X_{\mathrm{r}})\\
    &\quad = I(X;Y,V|X_\rmr)-I(V;X|X_\rmr,Y_\rmr),
   \end{align*}
   which establishes the equivalence of \eqref{eqnUB2} and \eqref{eqnUB2m1}.
This completes the proof.

\subsection{Proof of Theorem \ref{thm:cutsubopt}}
\label{sec:proof:thm:cutsubopt}

First, we show that the cutset bound is not tight for any non-zero values of $S_{12},S_{13},S_{23}$. The proof is by contradiction. The cutset upper bound is given by
\begin{align}
\label{eq:csobgau}
C_\mathrm{CS} = \max_{P_{X,X_{\mathrm{r}}}:\,\E(X^2)\le P,\, \E(X_\mathrm{r}^2)\le P} \min \{I(X,X_{\mathrm{r}};Y), I(X;Y,Y_{\mathrm{r}}|X_{\mathrm{r}})\}. 
\end{align}
Further, we know (from Section 16.5 of \cite{elk11}) that the maximum is attained via the unique jointly Gaussian distribution 
\begin{align}
    (X,X_{\mathrm{r}})_{opt} \sim \begin{cases}  \mathcal{N}\left(0,\begin{bmatrix} P & \rho^* P \\ \rho^* P & P \end{bmatrix}\right) & \text{if }S_{12} > S_{23},\\[13pt]
    \mathcal{N}\left(0,\begin{bmatrix} P &0 \\ 0 & P \end{bmatrix}\right) & \text{if }S_{12} \leq  S_{23}, \end{cases}
    \label{eq:XXropt}
\end{align}
where $\rho^* \in (0,1)$ satisfies $I(X,X_{\mathrm{r}};Y)= I(X;Y,Y_{\mathrm{r}}|X_{\mathrm{r}})$. Note that $C_\mathrm{CS} = I(X;Y,Y_{\mathrm{r}}|X_{\mathrm{r}})$ holds (is tight) in both of these cases.

The upper bound in Corollary \ref{cor:weakmainOB} is
\begin{align}
\label{eq:corubgau}
    C_{\mathrm{UB}} &= \max_{P_{X,X_{\mathrm{r}}}:\,\E(X^2)\le P,\, \E(X_\mathrm{r}^2)\le P} \max_{P_{V|X,X_{\mathrm{r}},Y_{\mathrm{r}}}\in \mathcal{M}} I(X;Y,Y_{\mathrm{r}}|X_{\mathrm{r}})-I(V;Y|X_{\mathrm{r}},Y_{\mathrm{r}})-I(X;Y_{\mathrm{r}}|V,X_{\mathrm{r}},Y),
\end{align}
where $\mathcal{M}$ is the set of distributions satisfying
\begin{align}
\label{eq:corubcons}
    I(V,X_{\mathrm{r}};Y_{\mathrm{r}})-I(V,X_{\mathrm{r}};Y)&\leq {\min\left[I(X_\rmr;Y_\rmr),~\max_{p(u|x,x_{\mathrm{r}})}\left(I(U;Y_{\mathrm{r}})-I(U;Y)\right)\right]}.
\end{align}
Observe that for $P_{V|X,X_{\mathrm{r}},Y_{\mathrm{r}}}\in \mathcal{M}$
\begin{align*}
    & I(X;Y,Y_{\mathrm{r}}|X_{\mathrm{r}})-I(V;Y|X_{\mathrm{r}},Y_{\mathrm{r}})-I(X;Y_{\mathrm{r}}|V,X_{\mathrm{r}},Y)\\
    & \quad \leq I(X;Y,Y_{\mathrm{r}}|X_{\mathrm{r}})-I(V;Y|X_{\mathrm{r}},Y_{\mathrm{r}})-I(X;Y_{\mathrm{r}}|V,X_{\mathrm{r}},Y)\\
    & \qquad \quad + I(X_\rmr;Y_\rmr) + I(V,X_{\mathrm{r}};Y) - I(V,X_{\mathrm{r}};Y_{\mathrm{r}})\\
    & \quad = I(X;Y,Y_{\mathrm{r}}|X_{\mathrm{r}}) + I(X_\rmr;Y) -I(V;Y_\rmr|X_{\mathrm{r}},Y)-I(X;Y_{\mathrm{r}}|V,X_{\mathrm{r}},Y)\\
    & \quad = I(X,X_\rmr;Y) - I(V;Y_\rmr|X,X_\rmr,Y).
\end{align*}
Hence for every $P_{X,X_{\mathrm{r}}}$, the expression on the right-hand-side of \eqref{eq:csobgau} is at least as large that of \eqref{eq:corubgau}. Therefore, if $C_{\mathrm {UB}}=C_{CS}$, the expressions in \eqref{eq:csobgau} and \eqref{eq:corubgau} must match for the unique maximizer of \eqref{eq:csobgau} given in \eqref{eq:XXropt}. Consequently for the $P_{X,X_{\mathrm{r}}}$ given in \eqref{eq:XXropt}, from \eqref{eq:corubgau} and $C_\mathrm{CS} = I(X;Y,Y_{\mathrm{r}}|X_{\mathrm{r}})$,
there must exist a $P_{V|X,X_{\mathrm{r}},Y_{\mathrm{r}}} \in \mathcal{M}$ such that 
\begin{align*}
 I(V;Y|X_{\mathrm{r}},Y_{\mathrm{r}})+ I(X;Y_{\mathrm{r}}|V,X_{\mathrm{r}},Y) &= 0.
\end{align*}
Our argument below shows that no such distribution exists.

Using part $(i)$ of Lemma \ref{lmm1},  $I(V;Y|X_{\mathrm{r}},Y_{\mathrm{r}})=0$ yields $I(X;V|X_{\mathrm{r}},Y_{\mathrm{r}}) = 0$, and using part $(ii)$ of Lemma \ref{lmm1}   $I(X;Y_{\mathrm{r}}|V,X_{\mathrm{r}},Y)=0$ yields   $I(X;Y_{\mathrm{r}}|V,X_{\mathrm{r}})=0$. This gives the double Markovity conditions in Lemma \ref{le:dbma}: given $X_{\mathrm{r}}=x_{\mathrm{r}}$, $V\rightarrow Y_{\mathrm{r}}\rightarrow X$ and $Y_{\mathrm{r}}\rightarrow V\rightarrow X$. Thus there exists functions $g(X_{\mathrm{r}},V)$ and $f(X_{\mathrm{r}},Y_{\mathrm{r}})$ such that $f(X_{\mathrm{r}},Y_{\mathrm{r}})=g(X_{\mathrm{r}},V)$ with probability 1, and 
\[
I(X;Y_{\mathrm{r}},V|X_{\mathrm{r}},f(X_{\mathrm{r}},Y_{\mathrm{r}}),g(X_{\mathrm{r}},V))=0.
\]
Using an abuse of notation, let $B=f(X_{\mathrm{r}},Y_{\mathrm{r}})=g(X_{\mathrm{r}},V)$ almost surely, hence  we have $I(X;V,Y_{\mathrm{r}}|B,X_{\mathrm{r}})=0$. This implies that $I(X;Y_{\mathrm{r}}|B,X_{\mathrm{r}})=0$. We also have the Markov chain $B\rightarrow (X_{\mathrm{r}},Y_{\mathrm{r}})\rightarrow X$.  Observe that since $X,X_{\mathrm{r}},Y_{\mathrm{r}}$ are jointly Gaussian, we can express $X=a X_{\mathrm{r}} + cY_{\mathrm{r}} + \hat{Z}$, where $\hat{Z}$ is independent of $Y_{\mathrm{r}},X_{\mathrm{r}}$.  Here 
\[c= \frac{S_{12}(1-\rho^*)}{g_{12}(1+S_{12}(1-\rho^*))} \neq 0.
\]
Hence $\hat{Z}$ is also independent of $B$. Now we have $I(Y_{\mathrm{r}};cY_{\mathrm{r}} + \hat{Z}|B,X_{\mathrm{r}})=0$. This holds only if  $Y_{\mathrm{r}}$ is a function of $(B,X_{\mathrm{r}})$ (see part $(iii)$ of Lemma \ref{lmm1}). Consequently $Y_{\mathrm{r}}$ is a function of $(V,X_{\mathrm{r}})$ (since $B=g(X_{\mathrm{r}},V)$) which implies that
$I(V,X_\rmr;Y_{\mathrm{r}})=\infty$.  Since $I(X_\rmr;Y_\rmr)$ and $I(V,X_\rmr;Y)$ are finite and bounded from above, due to the presence of independent additive Gaussian noise, the constraint  $I(V,X_\rmr;Y_{\mathrm{r}}) - I(V,X_\rmr;Y)\leq I(X_\rmr;Y_\rmr)$ cannot hold. Therefore, from \eqref{eq:corubcons}, $P_{V|X,X_{\mathrm{r}},Y_{\mathrm{r}}} \notin \mathcal{M}$. 
This establishes the requisite contradiction.

The optimality of Gaussian random variables for the evaluation of Corollary \ref{cor:weakmainOB} is established in Lemma \ref{le:gauoptcor} in Appendix \ref{appndxB}.  For Gaussian random variables we parameterize the joint distribution as follows: 
\begin{equation} 
K_{X,X_{\rmr},Z_1}=\begin{bmatrix} P & \rho P & 0\\ \rho P & P & 0
\\0&0&1 \end{bmatrix}
\end{equation}
for some $\rho \in [-1,1]$ and
\begin{equation} 
K_{X,Z_1|VX_{\rmr}}=\begin{bmatrix} P(1-\rho^2)\alpha & \sigma \sqrt{P(1-\rho^2)\alpha \beta} \\ \sigma \sqrt{P(1-\rho^2)\alpha \beta} & \beta  \end{bmatrix} \preceq\begin{bmatrix} P(1-\rho^2) & 0\\ 0 & 1\end{bmatrix}.
\end{equation}
This gives us the conditions: $0\leq \alpha,\beta\leq 1$ and 
$(1-\alpha)(1-\beta)\geq \sigma^2\alpha \beta$.

Now note that
\begin{align*}
\max_{P_{U|X,X_{\rmr}}} (I(U;Y_{\rmr})-I(U;Y))&=
I(X,X_\rmr;Y_\rmr)-I(X,X_\rmr;Y)+\max_{P_{U|X,X_{\rmr}}}(I(X,X_\rmr;Y|U)-I(X,X_\rmr;Y_\rmr|U))
\\&\stackrel{(a)}{=}I(X,X_\rmr;Y_\rmr)-I(X,X_\rmr;Y)+\frac{1}{2}\log\lambda_{\max}
\\&=
    \frac12\log\left( S_{12}+1\right)-\frac12\log\left(1+S_{13}+S_{23}+2\rho\sqrt{S_{13}S_{23}} \right)
    +\frac{1}{2}\log\lambda_{\max},
\end{align*}
where $(a)$ follows from \cite[Remark 20]{gon22} and 
$\lambda_{\max}$ is the larger root of the quadratic polynomial
\begin{align}
2\rho \sqrt{S_{13}S_{23}}+S_{13}+S_{23}+1-\lambda\big(S_{23}S_{12}(1-\rho^2)+S_{13}+S_{23}+S_{12}+2+2\rho\sqrt{S_{13}S_{23}}\big)+\lambda^2(S_{12}+1)=0.\label{e1}
\end{align}

Next, we use an observation about the maximizing variables from Lemma \ref{le:auxgau} below to determine the value of $\sigma$. The constraint in \eqref{eqn:NC} is a linear equality in $\sigma$ and yields
\begin{align*}
    &\sigma 
=
\frac{(1-\rho^2)\alpha S_{13}
+1}{2T\sqrt{S_{12}(1-\rho^2)\alpha \beta}}-\frac{(1-\rho^2)\alpha S_{12}
+\beta}{2\sqrt{S_{12}(1-\rho^2)\alpha \beta}},
\end{align*}
where 
\[
\frac{1}{2}\log(T)= \min\big\{-
I(X;Y_\rmr|X_\rmr)+I(X,X_\rmr;Y),  \max_{P_{U|X,X_{\rmr}}}(I(X,X_\rmr;Y|U)-I(X,X_\rmr;Y_\rmr|U))\big\}.
\]
Hence, we obtain that any achievable rate $R$ must satisfy the conditions 
\begin{align}
R&\leq 
\frac12\log\left( (1-\rho^2)S_{12}+1\right)
-\frac12\log\left(\beta + S_{12}(1-\rho^2)\alpha + 2\sigma\sqrt{S_{12}(1-\rho^2)\alpha\beta}  \right)\nonumber
\\&\quad
+\frac12\log\left( \beta(1-\sigma^2)\right)
+\frac12\log\left( (1-\rho^2)\alpha S_{13}+1\right)
 \end{align}
for some $0\leq \alpha,\beta\leq 1$, $\sigma,\rho\in[-1,1]$ such that 
$(1-\alpha)(1-\beta)\geq \sigma^2\alpha \beta$. This completes the proof.

\begin{lemma}
\label{le:auxgau}
Any maximizing distribution for the optimization problem of computing the maximum rate given by Theorem \ref{thm:cutsubopt} must satisfy the condition
\begin{align}
    I(V,X_{\rmr};Y_{\rmr})-I(V,X_{\rmr};Y)&=\min\big\{ I(X_\rmr;Y_\rmr),~~ \max_{P_{U|X,X_{\rmr}}}(I(U;Y_{\rmr})-I(U;Y))\big\}.\label{eqn:NC}
\end{align}
\end{lemma}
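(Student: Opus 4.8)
The plan is to show that if the constraint in Corollary~\ref{cor:weakmainOB} were slack at an optimum one could perturb the auxiliary $V$ and strictly improve the bound. Fix a maximizing distribution. By Lemma~\ref{le:gauoptcor} (the Gaussian optimality result used in the proof of Theorem~\ref{thm:cutsubopt}) we may take $(V,X,X_{\rmr})$ jointly Gaussian, so that for the fixed correlation $\rho$ of $(X,X_{\rmr})$ the auxiliary is summarized by the conditional covariance $M=K_{X,Z_{\rmr}|V,X_{\rmr}}$, constrained only by $0\preceq M\preceq K_{X,Z_{\rmr}|X_{\rmr}}$; write $M_{11}=\mathrm{Var}(X|V,X_{\rmr})$, etc. Put $c(V):=I(V,X_{\rmr};Y_{\rmr})-I(V,X_{\rmr};Y)$; the claim is that any maximizer has $c(V)$ equal to the right-hand side $c_{\max}$ of \eqref{eqn:NC}.

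The first step is an identity valid for every admissible $V$: using the Markov chain $V\to(X,X_{\rmr},Y_{\rmr})\to Y$ from the proof of Theorem~\ref{thm:maintheorem}, the rate in \eqref{eqnCoreq2} rearranges to
\[
I(X;Y_{\rmr}|X_{\rmr})-I(X;Y_{\rmr}|V,X_{\rmr})+I(X;Y|V,X_{\rmr})
=\big(I(X,X_{\rmr};Y)-I(X_{\rmr};Y_{\rmr})\big)+c(V)-I(V;Y_{\rmr}|X,X_{\rmr},Y),
\]
where the parenthesized term does not involve $V$ and the last term is nonnegative. In the Gaussian case one computes $\mathrm{Var}(Y|V,X_{\rmr})=g_{13}^2M_{11}+1$ (depending on $V$ only through $M_{11}$, since $Z$ is independent of $(V,X,X_{\rmr},Y_{\rmr})$), while $\mathrm{Var}(Y_{\rmr}|V,X_{\rmr})$ and $\mathrm{Var}(Y_{\rmr}|X,V,X_{\rmr})$ are explicit in all the entries of $M$; moreover $c(V)\le c_{\max}$ is equivalent to $\mathrm{Var}(Y|V,X_{\rmr})\le T\,\mathrm{Var}(Y_{\rmr}|V,X_{\rmr})$ with $T$ exactly the quantity in the statement of Theorem~\ref{thm:cutsubopt}, and \eqref{eqn:NC} is the equality case. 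Hence maximizing the bound is the same as maximizing $\tfrac12\log\!\big(\mathrm{Var}(Y|V,X_{\rmr})\,\mathrm{Var}(Y_{\rmr}|X,V,X_{\rmr})/\mathrm{Var}(Y_{\rmr}|V,X_{\rmr})\big)$ over $M$ subject to $\mathrm{Var}(Y|V,X_{\rmr})\le T\,\mathrm{Var}(Y_{\rmr}|V,X_{\rmr})$.

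Now suppose, for contradiction, that a maximizer has $\mathrm{Var}(Y|V,X_{\rmr})<T\,\mathrm{Var}(Y_{\rmr}|V,X_{\rmr})$; after replacing $Y_{\rmr}$ by $-Y_{\rmr}$ if needed we may assume $g_{12}>0$. I would argue in two moves. First, with $M_{11},M_{22}$ held fixed, the objective depends on the off-diagonal entry $M_{12}$ only through the ratio $\mathrm{Var}(Y_{\rmr}|X,V,X_{\rmr})/\mathrm{Var}(Y_{\rmr}|V,X_{\rmr})$, and since $\mathrm{Var}(Y_{\rmr}|V,X_{\rmr})-\mathrm{Var}(Y_{\rmr}|X,V,X_{\rmr})=(g_{12}M_{11}+M_{12})^2/M_{11}$, this ratio is $\le 1$ with equality exactly at $M_{12}=-g_{12}M_{11}$ (where $X$ and $Y_{\rmr}$ become conditionally uncorrelated given $(V,X_{\rmr})$); hence at a maximizer $M$ sits at $M_{12}=-g_{12}M_{11}$, or on the face of the cone $0\preceq M\preceq K_{X,Z_{\rmr}|X_{\rmr}}$ where this value is inadmissible. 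Second, along $M_{12}=-g_{12}M_{11}$ the objective collapses to $\tfrac12\log(g_{13}^2M_{11}+1)+\mathrm{const}$, strictly increasing in $M_{11}$, so a maximizer pushes $M_{11}$ to its largest feasible value; pushing it up is blocked either by the cone — but reaching that boundary forces $\mathrm{Var}(Y_{\rmr}|X,V,X_{\rmr})\to 0$, i.e.\ $c(V)\to\infty$, contradicting $c(V)\le c_{\max}$ — or by the constraint $\mathrm{Var}(Y|V,X_{\rmr})\le T\,\mathrm{Var}(Y_{\rmr}|V,X_{\rmr})$ itself, which must therefore be tight, giving \eqref{eqn:NC}. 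The boundary subcases left by the first move are dealt with by the same monotonicity in $M_{11}$, now run along the relevant cone face. I expect the main obstacle to be precisely this last piece of bookkeeping: tracking, in every regime of the channel gains $g_{12},g_{13},g_{23}$ and on every face of $\{0\preceq M\preceq K_{X,Z_{\rmr}|X_{\rmr}}\}$, which of the cone constraints versus the $c$-constraint is binding along the direction of increase of $M_{11}$, so as to conclude it is always the $c$-constraint — equivalently, exhibiting for any feasible $M$ with slack an admissible improving perturbation.
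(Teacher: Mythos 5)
Your overall strategy (assume the constraint is slack at a maximizer, exhibit an admissible perturbation that strictly increases \eqref{eqnCoreq2}, contradiction) is the right one, and your preliminary identities are correct: the rearrangement of \eqref{eqnCoreq2} via the Markov chain $V\to(X,X_{\rmr},Y_{\rmr})\to Y$, the formula $\mathrm{Var}(Y_{\rmr}|V,X_{\rmr})-\mathrm{Var}(Y_{\rmr}|X,V,X_{\rmr})=(g_{12}M_{11}+M_{12})^2/M_{11}$, and the translation of $c(V)\le c_{\max}$ into $\mathrm{Var}(Y|V,X_{\rmr})\le T\,\mathrm{Var}(Y_{\rmr}|V,X_{\rmr})$ all check out. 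But the proof is not complete, and the gap is exactly where you suspect it is. Your first move asserts that a maximizer either sits at $M_{12}=-g_{12}M_{11}$ or on a face of the cone; this trichotomy is wrong. Writing $s=g_{12}M_{11}+M_{12}$ and $D=M_{22}-g_{12}^2M_{11}$, the ratio you are maximizing equals $1-s^2/(M_{11}(2g_{12}s+D))$, and when $D<0$ (i.e. $M_{22}<g_{12}^2M_{11}$) the point $s=0$ violates $M\succeq 0$, while the constrained optimum over $M_{12}$ is the interior critical point $s=-D/g_{12}$ (equivalently $M_{12}=-M_{22}/g_{12}$), at which $\det M=(g_{12}^2M_{11}-M_{22})M_{22}/g_{12}^2>0$. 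So the blocked case lands strictly inside the cone, not on a face, and your proposed patch (``run the $M_{11}$ monotonicity along the relevant cone face'') does not apply there; a genuinely two-dimensional perturbation of $(M_{11},M_{22},M_{12})$ would be needed, and you have not supplied it. Since you yourself identify the exhaustive regime-by-regime bookkeeping as the main obstacle and leave it undone, the argument as written does not establish \eqref{eqn:NC}.

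The paper avoids all of this with a one-line perturbation that never touches the covariance parametrization. It time-shares: $\tilde V=(V,Q)$ on $\{Q=0\}$ and $\tilde V=(Y_{\rmr}+\zeta W,Q)$ on $\{Q=1\}$, with $Q\sim\mathrm{B}(\theta)$ and $W$ standard Gaussian independent of everything. Because any feasible maximizer has $I(V;Y|X_{\rmr},Y_{\rmr})+I(X;Y_{\rmr}|V,X_{\rmr},Y)>0$ (otherwise $Y_{\rmr}$ is a function of $(V,X_{\rmr})$ and $c(V)=\infty$), the branch $Y_{\rmr}+\zeta W$ with small $\zeta$ makes the per-branch value of \eqref{eqnCoreq1} strictly larger than that of the $V$-branch, so the mixture strictly increases the objective for every $\theta>0$; meanwhile $c(\tilde V)=(1-\theta)c(V)+\theta\,c(Y_{\rmr}+\zeta W)$ with $c(Y_{\rmr}+\zeta W)$ finite for fixed $\zeta>0$, so choosing $\theta$ small enough absorbs the assumed slack $c(V)<c_{\max}$. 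This gives the contradiction in all regimes of the gains simultaneously and does not even require $V$ to be Gaussian. If you want to salvage your computational route, you would need to either complete the face-by-face analysis including the interior critical point above, or replace your two one-dimensional moves by the single mixture move used in the paper.
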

\begin{proof}
Assume that the maximizer does not satisfy \eqref{eqn:NC}.
Then, from the alternate form of Theorem \ref{thm:maintheorem}
\[I(V,X_{\rmr};Y_{\rmr})-I(V,X_{\rmr};Y)<\min\big\{ I(X_\rmr;Y_\rmr),~~ \max_{P_{U|X,X_{\rmr}}}(I(U;Y_{\rmr})-I(U;Y))\big\}.\]
Consider a Bernoulli time-sharing random variable $Q\sim \mathrm{B}(\theta)$ and set $\tilde{V}=(V,Q)$ if $Q=0$ and $\tilde{V}=(Y_\rmr+\zeta W,Q)$ if $Q=1$, where $W$ is a standard Gaussian random variable, independent of previously defined random variables. When $\zeta=0$, replacing $V$ by $\tilde{V}$ strictly increases the right hand side of \eqref{eqnCoreq1} for any $\theta>0$. On the other hand, for any arbitrary $\theta>0$, we have
\[
\lim_{\zeta\rightarrow 0}(I(\tilde V,X_{\mathrm{r}};Y_{\mathrm{r}})-I(\tilde V,X_{\mathrm{r}};Y))=\infty.
\]
Also for any $\zeta>0$, 
\[
\lim_{\theta\rightarrow 0}I(\tilde V,X_{\mathrm{r}};Y_{\mathrm{r}})-I(\tilde V,X_{\mathrm{r}};Y)=I( V,X_{\mathrm{r}};Y_{\mathrm{r}})-I( V,X_{\mathrm{r}};Y).
\]
Therefore, one can find suitable $\theta,\zeta>0$ such that for $\tilde{V}$ \eqref{eqnCoreq2} has a larger value and still the constraints are satisfied for $\tilde{V}$. This is a contradiction.

\end{proof}

\subsection{Proof of Proposition \ref{prop:mainprirel}}
\label{sec:proofProp1}

Let $R_1$ be the upper bound of Theorem \ref{thm:maintheorem}, $R_2$ be the upper bound of Proposition \ref{prop:mainprirel} in its  form without any constraints, and $R_3$ be the the upper bound of Proposition \ref{prop:mainprirel} in its  form with a constraint. It suffices to show that $R_1\leq R_2$, $R_2\leq R_3$ and $R_3\leq R_1$.

To show $R_1\leq R_2$, assume that a rate $R$ satisfies the inequalities given in Theorem \ref{thm:maintheorem} for a pair auxiliary variables $(U,V)$. Theorem \ref{thm:maintheorem} implies that{\allowdisplaybreaks
\begin{align}
R&\leq I(X;Y,Y_{\mathrm{r}}|X_{\mathrm{r}})-I(V;Y|X_{\mathrm{r}},Y_{\mathrm{r}})-I(X;Y_{\mathrm{r}}|V,X_{\mathrm{r}},Y)
\nonumber\\&\leq I(X;Y_1,Y_{\mathrm{r}})-I(V,X_{\mathrm{r}};Y_1|Y_{\mathrm{r}})-I(X;Y_{\mathrm{r}}|V,X_{\mathrm{r}},Y_1),\nonumber\\
R&\leq I(X,X_{\mathrm{r}};Y)-I(V;Y_{\mathrm{r}}|X_{\mathrm{r}},X,Y)\nonumber\\
&=  I(X;Y_1) + I(X_\rmr;Y_2|Y_1) - I(Y_\rmr;X_\rmr|X) -I(V,X_{\mathrm{r}};Y_{\mathrm{r}}|X,Y_1)\label{eqnredad}
\\&\leq  I(X;Y_1)+C_0-I(V,X_{\mathrm{r}};Y_{\mathrm{r}}|X,Y_1),\nonumber
 \end{align}}
 where \eqref{eqnredad} follows from $I(Y_1,Y_\rmr;X_\rmr|X)=0$ and $I(Y_2;X,Y_1|X_\rmr)=0$.
 This corresponds to the constraints in Proposition \ref{prop:mainprirel} for $\tilde V=(V,X_{\mathrm{r}})$.

Next, we show that $R_2\leq R_3$. Let $R$ be a rate satisfying
 \begin{align}
R&\leq I(X;Y_1,Y_{\mathrm{r}})-I( V;Y_1|Y_{\mathrm{r}})-I(X;Y_{\mathrm{r}}| V,Y_1)\label{eqnSF1}
\\
R&\leq I(X;Y_1)+C_0-I(V;Y_{\mathrm{r}}|X,Y_1)\label{eqnSF2}
 \end{align}
for some $p(x)p(y_1,y_{\mathrm{r}}|x)p(v|x,y_{\mathrm{r}})$. It suffices to show that 
\begin{align}
 I( V;Y_{\mathrm{r}})-I( V;Y_1)\leq C_0\label{eqnNCCP}
 \end{align}
is satisfied by a maximizer of the minimum of \eqref{eqnSF1} and \eqref{eqnSF2}.
Assume that \eqref{eqnNCCP} is violated by a maximizing distribution and we have
\[I( V;Y_{\mathrm{r}})-I( V;Y_1)> C_0.\]
In this case, the inequality \eqref{eqnSF1} is strictly redundant since
\begin{align*}
    R&\leq I(X;Y_1) + C_0 - I(V;Y_{\mathrm{r}}|X,Y_1)
    \\&
    <I(X;Y_1) +I( V;Y_{\mathrm{r}})-I( V;Y_1)
    - I(V;Y_{\mathrm{r}}|X,Y_1)
    \\&=
    I(X;Y_1,Y_{\mathrm{r}})-I(V;Y_1|Y_{\mathrm{r}}) - I(X;Y_{\mathrm{r}}|V,Y_1).
\end{align*}
Now, let $Q\sim B(\epsilon)$ be a time-sharing random variable and let $\tilde{V}=(Q,V)$ if $Q=1$, and $\tilde{V}=Q$ if $Q=0$. Then for an appropriate choice for $\epsilon$, we have
\[I(\tilde V;Y_{\mathrm{r}})-I(\tilde V;Y_1)=C_0.\]
For  $\tilde{V}$, the constraint \eqref{eqnSF1} is still redundant. Moreover, \eqref{eqnSF1} and \eqref{eqnSF2} have not decreased, and the constraint still holds. This shows that without loss of generality we can impose the constraint in  \eqref{eqnNCCP} when evaluating the region.

Finally, we show that $R_3\leq R_1$. Assume that a rate $R$ satisfies
 \begin{align}
R&\leq I(X;Y_1,Y_{\mathrm{r}})-I( V;Y_1|Y_{\mathrm{r}})-I(X;Y_{\mathrm{r}}| V,Y_1)
 \end{align}
 for some $p(x)p(y_1,y_{\mathrm{r}}|x)p(v|x,y_{\mathrm{r}})$ such that
\begin{align}
 0\leq I( V;Y_{\mathrm{r}})-I( V;Y_1)\leq C_0.
 \end{align}
 The constraint $0\leq I( V;Y_{\mathrm{r}})-I( V;Y_1)$ is added above since it holds for any maximizer $V$. To see this, assume that 
\[
I(V;Y_{\mathrm{r}})-I(V;Y_1)< 0,
\]
then we have
\begin{align*}
    R&\leq I(X;Y_1,Y_{\mathrm{r}})-I(V;Y_1|Y_{\mathrm{r}}) - I(X;Y_{\mathrm{r}}|V,Y_1)\\
    & = I(X;Y_1) + I(V;Y_{\mathrm{r}}) - I(V;Y_1) - I(V;Y_{\mathrm{r}}|X,Y_1)\\
    & < I(X;Y_1),
\end{align*}
which is less than the direct transmission bound. 
For such a joint distribution on $(V,X,Y_1,Y_{\mathrm{r}})$, let  
 $(U,X_{\mathrm{r}},Y_2)$ be independent of $(V,X,Y_1,Y_{\mathrm{r}})$ such that (i) $I(X_{\mathrm{r}};Y_2)=C_0$, (ii) $U\rightarrow X_{\mathrm{r}}\rightarrow Y_2$ forms a Markov chain, and (iii) 
$$I(U;Y_2)=I(X_{\mathrm{r}};Y_2)-[I(V;Y_{\mathrm{r}})-I(V;Y_1)].$$
This is feasible for some $U\rightarrow X_{\mathrm{r}}\rightarrow Y_2$ since 
\[
0\leq I(V;Y_{\mathrm{r}})-I(V;Y_1)\leq C_0=I(X_{\mathrm{r}};Y_2).
\]
Now, consider the choice of $(U,V)$ in  
Theorem \ref{thm:maintheorem}. 
One can directly verify that the rate $R$ satisfies the inequalities given in the alternate form of Theorem  \ref{thm:maintheorem} for our choice of joint distribution of $( U,V,X,X_{\mathrm{r}},Y_1,Y_2,Y_{\mathrm{r}})$. Furthermore,
\begin{align*}
    I(V,X_{\mathrm{r}};Y_{\mathrm{r}})-I(V,X_{\mathrm{r}};Y)
    &=I(V;Y_{\mathrm{r}})-I(V;Y_1)-I(X_{\mathrm{r}};Y_2)
    \\&=I( U;Y_{\mathrm{r}})-I(U;Y).
\end{align*}
This completes the proof.

\subsection{Direct Proof of Proposition \ref{prop:mainprirel}}\label{sec:alternaproof}
It is also possible to establish Proposition \ref{prop:mainprirel} directly as follows. Let  $M_\rmr\in[1:2^{2C_0}]$ be the message from relay to the receiver ($M_\rmr$ is a function of $Y_\rmr^n$), and define $V_i=(Y_{1i+1}^{n}, Y_{\rmr }^{i-1}, M_\rmr)$ and
$V=(Q,V_Q)$, $X=X_Q, Y_1=Y_{1Q}, Y_{\mathrm{r}}=Y_{\mathrm{r}Q}$ for a time-sharing random variable $Q\stackrel{(d)}{=} \textrm{Uniform}[1:n]$. Since the message $M$ can be recovered from $(M_\rmr,Y_1^n)$ with a probability of error bounded by $\eps_n$, by Fano's inequality we have $nR\leq I(X^n;M_\rmr,Y_1^n)+n\epsilon_n$. Next, we have
\begin{align*}
    nC_0&\geq H(M_\rmr)\\&\geq H(M_\rmr|Y_1^n)
    \\&
    =I(M_\rmr;Y_\rmr^n|X^n,Y_1^n)+I(X^n;M_\rmr|Y_1^n)
    \\
    &
    =\sum_i I(M_\rmr;Y_{\rmr i}|Y_{\rmr }^{i-1},X^n,Y_1^n)+I(X^n;M_\rmr|Y_1^n)
    \\
    &
    =\sum_i I(M_\rmr,X^n,Y_{\rmr }^{i-1},Y_1^n;Y_{\rmr i}|X_i,Y_{1i})+I(X^n;M_\rmr|Y_1^n)
    \\&\geq \sum_i I(V_i;Y_{\rmr i}|X_i,Y_{1i})+ I(X^n;M_\rmr,Y_1^n)-I(X^n;Y_1^n)
    \\&\geq \sum_i I(V_i;Y_{\rmr i}|X_i,Y_{1i})+ nR-n\epsilon_n-\sum_i I(X_i;Y_{1i})
    \\&=n\cdot \big(I(V;Y_{\rmr }|X,Y_{1},Q)+ R-\epsilon_n- I(X;Y_{1}|Q)\big)
    \\&\geq n\cdot \big(I(V;Y_{\rmr }|X,Y_{1})+ R-\epsilon_n- I(X;Y_{1})\big).
    \end{align*}
Therefore, we obtain the upper bound
\begin{align*}
R&\leq I(X;Y_1)+C_0-I(Y_\rmr;V|X,Y_1).
\end{align*}
Next, consider
\begin{align}
\frac1n I(X^n;M_\rmr, Y_1^n)&\leq \frac1n \left(I(X^n;Y_{\mathrm{r}}^n)+I(X^n; Y_1^n|M_\rmr)-I(X^n;Y_{\mathrm{r}}^n|M_\rmr)\right)\nonumber
\\&\nonumber\leq \frac1n\sum_i I(X_i;Y_{\mathrm{r}i})+\frac1n\left(\sum_{i}I(X^n;Y_{1i}|Y_{\mathrm{r}}^{i-1},Y_{1 i+1}^n,M_\rmr)-\sum_{i}I(X^n;Y_{\mathrm{r}i}|Y_{\mathrm{r}}^{i-1},Y_{1 i+1}^n,M_\rmr)\right)
\\&\leq I(X;Y_{\mathrm{r}})+\frac1n\left(\sum_{i}I(X_i;Y_{1i}|Y_{\mathrm{r}}^{i-1},Y_{1i+1}^n,M_\rmr)-\sum_{i}I(X_i;Y_{\mathrm{r}i}|Y_{\mathrm{r}}^{i-1},Y_{1i+1}^n,M_\rmr)\right)\label{eqnToExplain}
\\&=I(X;Y_\rmr)+I(X;Y_1|V)-I(X;Y_\rmr|V)\nonumber
\\&=
I(X;V,Y_1)-I(V;X|Y_\rmr),\nonumber
\end{align}
where \eqref{eqnToExplain} follows from the fact that $\frac1n\sum_i I(X_i;Y_{\mathrm{r}i})=I(X;Y_\rmr|Q)\leq I(X;Y_\rmr)$ and
\begin{align*}
I(X^n;Y_{1i}|X_i, Y_{\mathrm{r}}^{i-1},Y_{1 i+1}^n,M_\rmr)
-I(X^n;Y_{\rmr i}|X_i, Y_{\mathrm{r}}^{i-1},Y_{1 i+1}^n,M_\rmr)
&
\leq I(X^n;Y_{1i}|X_i, Y_{\rmr i},Y_{\mathrm{r}}^{i-1},Y_{1 i+1}^n,M_\rmr)
\\&=0. 
\end{align*}
Thus,
\[R\leq I(X;V,Y_1)-I(V;X|Y_\rmr).
\]

The Markov chain $V_i\rightarrow (X_{i},Y_{\rmr i})\rightarrow Y_{1i}$ follows from the memorylessness of the channel $p(y_1,y_\rmr|x)$.

\subsection{Proof of Proposition \ref{Proposition5}}
\label{proofthm:Prop5}

The sufficiency of considering Gaussian random variables for the evaluation of Proposition \ref{prop:mainprirel} in this context is established in Lemma \ref{lem:gausuffcor} in the Appendix. Let the covariance of $X,Y_\rmr$ given ${V}$ be
\[
K_{X,Y_\rmr|{V}}=
\begin{bmatrix} K_1 & \rho \sqrt{K_1K_2}\\
\rho \sqrt{K_1K_2}& K_2
\end{bmatrix}\preceq \begin{bmatrix} P & P\\
P& P+N_{\rmr}
\end{bmatrix}
\]
for some $0 \leq K_1\leq P$ and $0 \leq K_2\leq N_{\rmr}+P$ and $\rho\in[-1,1]$ such that
\begin{align}
(P-K_1)(P+N_{\rmr}-K_2)\geq (P-\rho\sqrt{K_1K_2})^2.\label{eq:rhocon22}
\end{align}
Then, the upper bound becomes the maximum of
\begin{align}
    \frac12 \log\left(\frac{P+ N_{\rmr}}{ N_{\rmr}}\right)
+\frac12\log\left(\frac{K_1+N_1}{N_1}\right)
+\frac12\log\left(1-\rho^2\right)\label{eqnValPNNr}
\end{align}
subject to
\[
\frac12\log(P+ N_{\rmr})-\frac12\log(K_2)-\frac12\log(P+N_1)+\frac12 \log(K_1+N_1)\leq C_0.
\]
Lemma \ref{NewLemma} shows that the optimal choice of $K_1, K_2$ and $\rho$ is:
\begin{align*}
    K_1^*&=\begin{cases}-N_1\frac{2^{2C_0}(P+N_{\rmr})-(P+N_{\rmr})}{2^{2C_0}(P+N_1)-(P+N_{\rmr})}+P\frac{2^{2C_0}(N_1-N_{\rmr})}{2^{2C_0}(P+N_1)-(P+N_{\rmr})}& S_{12}\geq S_{13}+S_{23}+S_{13}S_{23},\\
    \\
    P\left(1-\frac{P(N_1+P)^2(2^{2C_0}-1)}{
(P+N_{\rmr})(2^{2C_0}-1)((N_1+P)^2-N_1^22^{-2C_0})+(N_{\rmr}-N_1)P^2}\right)&\text{otherwise}.
    \end{cases}\\
K^*_2&=\frac{(K^*_1+N_1)(P+N_{\rmr})}{(P+N_1)2^{2C_0}},\\
    \rho^*&= \frac{P-\sqrt{(P-K^*_1)(P+N_{\rmr}-K^*_2)}}{\sqrt{K^*_1K^*_2}}.
\end{align*}
 Substituting these values in \eqref{eqnValPNNr} and replacing $2^{2C_0} = (1+S_{23})$, $S_{12}=P/N_\rmr$ and $S_{13}=P/N_1$, the upper bound reduces to the form given in the statement of Proposition \ref{Proposition5}. This completes the proof.

\subsection{Proof of Lemma \ref{thm:GSCase}}
\label{proofthm:GSCase}

We seek to calculate
\begin{align} 
 \frac{1}{2}\log \left(1+S_{13}\right)+\sup_{\theta\in\left[\arcsin (\frac{1}{1+S_{23}}),\frac{\pi}{2}\right]}\min \Bigg\{ C_0+\log \sin \theta, 
 \min_{\omega\in \left(\frac{\pi}{2}-\theta, \frac{\pi}{2}\right]}  h_{\theta}(\omega)
\Bigg\},\label{eqnSup1}
\end{align}
where
\begin{align}&~h_{\theta}(\omega) =    \frac{1}{2}\log \left(\frac{
\left(S_{12}+S_{13}+\sin ^2(\omega)-2\cos(\omega)\sqrt{S_{12}S_{13}}
\right)\sin^2\theta}{(S_{13}+1)(\sin ^2 \theta - \cos^2 \omega)} \right).\label{eqn:htheta22}
 \end{align}
We make the following observations:
 \begin{itemize}
     \item Observation 1: The term $C_0 + \log(\sin(\theta))$ is increasing in $\theta \in [0, \frac{\pi}{2}]$.
     \item Observation 2: The term $\min_{\omega\in \left(\frac{\pi}{2}-\theta, \frac{\pi}{2}\right]}  h_{\theta}(\omega)$ is decreasing in $\theta$. To see this, observe that for $0 < \theta_1 < \theta_2 < \frac{\pi}{2}$ we have
 \begin{align*}
 &\min_{\omega \in (\frac{\pi}{2} - \theta_2, \frac{\pi}{2}]}\frac{1}{2}\log \left(\frac{
\left(S_{12}+S_{13}+\sin ^2(\omega)-2\cos(\omega)\sqrt{S_{12}S_{13}}
\right)\sin^2\theta_2}{(S_{13}+1)(\sin ^2 \theta_2 - \cos^2 \omega)} \right) \\& \leq \min_{\omega \in (\frac{\pi}{2} - \theta_1, \frac{\pi}{2}]}\frac{1}{2}\log \left(\frac{
\left(S_{12}+S_{13}+\sin ^2(\omega)-2\cos(\omega)\sqrt{S_{12}S_{13}}
\right)\sin^2\theta_2}{(S_{13}+1)(\sin ^2 \theta_2 - \cos^2 \omega)} \right)\\
 & \leq \min_{\omega \in (\frac{\pi}{2} - \theta_1, \frac{\pi}{2}]}\frac{1}{2}\log \left(\frac{
\left(S_{12}+S_{13}+\sin ^2(\omega)-2\cos(\omega)\sqrt{S_{12}S_{13}}
\right)\sin^2\theta_1}{(S_{13}+1)(\sin ^2 \theta_1 - \cos^2 \omega)} \right).
 \end{align*}
 Therefore $\min_{\omega\in \left(\frac{\pi}{2}-\theta, \frac{\pi}{2}\right]}  h_{\theta}(\omega)$ is decreasing in $\theta$. 
 \item Observation 3: Given any fixed $\theta$, the minimizer $\omega_*$ of $\min_{\omega\in \left(\frac{\pi}{2}-\theta, \frac{\pi}{2}\right]}  h_{\theta}(\omega)$ can be explicitly computed (using routine algebra) as 
 the smaller root  of the quadratic polynomial
 \[x^2-\frac{S_{12}+S_{13}+\cos^2\theta}{\sqrt{S_{12}S_{13}}}x+\sin^2(\theta)=0
 \]
 which is
 \begin{align}
     \cos \omega_* & = \frac{S_{13}+S_{12} + \cos^2 \theta -  \sqrt{(S_{12}-S_{13})^2+\cos^2 \theta(4S_{12}S_{13} + 2S_{12}+2S_{13} + \cos^2\theta)}}{2\sqrt{S_{12}S_{13}}}.\label{valcosomega}
 \end{align}

\end{itemize}

 Consider two cases:
 \begin{itemize}
     \item \textit{Case 1}:  $S_{12}\geq S_{13}+S_{23}+S_{13}S_{23}$: we claim that the supremum in \eqref{eqnSup1} is attained at $\theta_*=\pi/2$. Using observations 1 and 2 above, it suffices to verify that 
     \[
 C_0+ \log(\sin(\theta_*)) \leq   \min_{\omega\in \left(\frac{\pi}{2}-\theta_*, \frac{\pi}{2}\right]}  h_{\theta_*}(\omega).
 \]
 For $\theta_*=\pi/2$, from \eqref{valcosomega} the minimizer $\omega_*$ satisfies
 \begin{align*}
     \cos \omega_* & = \sqrt{\frac{S_{13}}{S_{12}}}.
 \end{align*}
 Consequently, 
 \[h_\theta^*(\omega^*) = \frac{1}{2}\log \left(\frac{
 S_{12}+1}{S_{13}+1} \right).\]
Using the assumption that $S_{12}\geq S_{13}+S_{23}+S_{13}S_{23}$ we note that
 \begin{align*}&~h_{\theta}(\omega_*) =    \frac{1}{2}\log \left(\frac{
 S_{12}+1}{S_{13}+1} \right)\geq \frac{1}{2}\log(1+S_{23})=C_0 =C_0+ \log(\sin(\theta_*)).
 \end{align*}
 Substituting the maximizer $\theta_*=\pi/2$ in \eqref{eqnSup1} shows that the bound in Theorem \ref{thm:Wu19thm} is equivalent to
\begin{align*}
        R\leq
        \frac{1}{2}\log((1+S_{13})(1+S_{23})).
\end{align*}

     \item \textit{Case} 2: $S_{12}< S_{13}+S_{23}+S_{13}S_{23}$: let $\theta_* \in [0,\frac{\pi}{2}]$ be defined according to
 \begin{align}
\sin^2\theta_*
=\left(1+
\frac{S_{23}S_{12}}{(S_{13}+1)(S_{23}+1)-1}\right)\frac{1}{1+S_{23}}.\label{eqndefs}
\end{align}
Observe that the above definition for $\sin^2\theta_*$ makes sense because the right hand side of \eqref{eqndefs} is less than 1 by the assumption that $S_{12}< S_{13}+S_{23}+S_{13}S_{23}$.
     
     In order to show that the supremum in \eqref{eqnSup1} is attained at  $\theta_*$, we again use Observations 1 and 2 above. Now it suffices to verify that 
 \[
 C_0 + \log(\sin(\theta_*)) =  \min_{\omega\in \left(\frac{\pi}{2}-\theta_*, \frac{\pi}{2}\right]}  h_{\theta_*}(\omega).
 \]
 This can be verified by plugging in the value for $\cos(\omega_*)$ from \eqref{valcosomega}. 
Substituting the maximizer $\theta_*$ in \eqref{eqnSup1} shows that the bound in Theorem \ref{thm:Wu19thm} is equivalent to
\begin{align}
        R\leq
        \frac12\log\left(1+S_{13}+\frac{S_{12}(S_{13}+1)S_{23}}{(S_{13}+1)(S_{23}+1)-1}\right).
\end{align}
 \end{itemize}
 
 This completes the proof.


\subsection{Proof of Theorem \ref{thmBSCPrimitive}}
\label{proofthmBSCPrimitive}

Using the symmetrization argument in \cite{nair2013upper}, without loss of generality, we can restrict $X$ to be uniformly distribution when evaluating Proposition \ref{prop:mainprirel}. To see this, given some arbitrary $(V,X)$, consider $Q\sim \mathrm{B}(0.5)$ independent of $(V,X)$. Then, letting $V'=(V,Q)$, $X'=(X+Q \mod 2)$, $Y'_1=(Y_1+Q \mod 2)$ and $Y'_{\rmr}=(Y_{\rmr}+Q \mod 2)$, one can verify that $I(V';Y'_{\rmr})-I(V';Y'_1)=I(V;Y_{\rmr})-I(V;Y_1)$ and
\[
I(X';Y'_1,Y'_{\rmr})-I(V';Y'_1|Y'_{\rmr})-I(X';Y'_{\rmr}|V',Y'_1)\geq I(X;Y_1,Y_{\rmr})-I(V;Y_1|Y_{\rmr})-I(X;Y_{\rmr}|V,Y_1).
\]
Moreover, $X'$ is uniform.

From Proposition \ref{prop:mainprirel} for any $\lambda\geq 0$, any achievable rate $R$ must satisfy
\begin{align*}
  R&\leq  \max_{p(v|x,y_{\rmr})}\big (I(X;Y_1,Y_{\rmr})-I( V;Y_1|Y_{\rmr})-I(X;Y_{\rmr}| V,Y_1)+\lambda [C_0-I( V;Y_{\rmr})+I( V;Y_1)]\big)
   \\
   & =\max_{p(v|x,y_{\rmr})}\big( I(X;Y_{\rmr})
   +\lambda C_0-H(Y_1|X)
   +H(Y_{\rmr}|X,V)+(1-\lambda) [H(Y_1| V)-H(Y_{\rmr}| V)]\big)
   \\
   & = 1-2H_2(\rho)
   +\lambda C_0
   +\max_{p(v|x,y_{\rmr})}\big(H(Y_{\rmr}|X,V)+(1-\lambda) [H(Y_1| V)-H(Y_{\rmr}| V)]\big).
\end{align*}
Without loss of generality we can assume that $\la\in[0,1]$. To see this, observe that for $\la=1$ the optimal choice for $V$ is a constant and the upper bound becomes $1-2H_2(\rho)
   + C_0+H(Y_{\rmr}|X)$. If $\la>1$, the upper bound is greater than or equal to $1-2H_2(\rho)
   + C_0+H(Y_{\rmr}|X)$ since $V$ equal to a constant is one possible choice for $V$. 

Let $p(x,y_{\rmr}) = (p_{00},p_{01},p_{10},p_{11}) $ and $f_\la: p(x,y_{\rmr}) \mapsto (1-\la) \left(H(Y_1) - H(Y_{\rmr})\right) +  H(Y_{\rmr}|X)$. The maximum of $(1-\lambda) [H(Y_1| V)-H(Y_{\rmr}| V)]+H(Y_{\rmr}|X,V)$ is equal  to the upper concave envelope, $\mathscr{C}(f_\la),$ of the function $f_\la$  at
\begin{align*}
    \left(\frac{1-\rho}{2},\frac{\rho}{2},\frac{\rho}{2},\frac{1-\rho}{2} \right).
\end{align*}
Observe that 
\begin{align}f_\la(p_{00},p_{01},p_{10},p_{11})=  f_\la(p_{11},p_{10},p_{01},p_{00}).
\label{eq:symm}
\end{align}  
Let
\[
g_\la(c)= \max_{\substack{p_{00},p_{01},p_{10},p_{11}\\
p_{01}+p_{10}=c}} f_\la(p_{00},p_{01},p_{10},p_{11}).
\]
Now define $\tilde{f}_\la(p_{00},p_{01},p_{10},p_{11})=g_\la(p_{01}+p_{10})$. Clearly $\tilde{f}_\la(p_{00},p_{01},p_{10},p_{11}) \geq f_\la(p_{00},p_{01},p_{10},p_{11})$ pointwise, and consequently the upper concave envelope of $\mathscr{C}(\tilde{f_\la}) \geq \mathscr{C}(f_\la).$  On the other hand, we can invoke the symmetry in \eqref{eq:symm} to immediately conclude that 
\[
g_\la(p_{01}+p_{10}) = \tilde{f}_\la\left(\frac{p_{00} + p_{11}}{2},\frac{p_{01} + p_{10}}{2},\frac{p_{01} + p_{10}}{2}, \frac{p_{00} + p_{11}}{2}  \right) \leq \mathscr{C}(f_\la)\big|_{\left(\frac{p_{00} + p_{11}}{2},\frac{p_{01} + p_{10}}{2},\frac{p_{01} + p_{10}}{2}, \frac{p_{00} + p_{11}}{2}  \right)}.
\]
This implies that $\mathscr{C}(g)$
\[
\mathscr{C}(g_\lambda)\big|_{\rho} = \mathscr{C}(f_\lambda)\big|_{\left(\frac{1-\rho}{2},\frac{\rho}{2},\frac{\rho}{2},\frac{1-\rho}{2} \right)},
\]
which completes the proof.


\subsection{Proof of Theorem \ref{prop2jm2s}}
\label{proofprop2jm2s}

Let  $M_\rmr\in[1:2^{2C_0}]$ be the message from relay to the receiver ($M_\rmr$ is a function of $Y_\rmr^n$) and define
\begin{align*}
V_i&=Y_{1i+1}^{n},\quad
    W_i = (M_\rmr,Y_{\mathrm{r}}^{i-1}), \quad T_i=(J^{i-1}, J_{i+1}^n),
\end{align*}
and
$V=(Q,V_Q)$, $W=(Q,W_Q)$, $T=(Q,T_Q)$ $X=X_Q, Y_1=Y_{1Q}, Y_{\mathrm{r}}=Y_{\mathrm{r}Q}$ for a time-sharing random variable $Q\stackrel{(d)}{=} \textrm{Uniform}[1:n]$.

 Observe that $T\rightarrow X\rightarrow (J,Y_\rmr,Y_1)$ forms a Markov chain. The  assumption $X_i\rightarrow J_i\rightarrow Y_{\rmr i}$ implies that
 \[I(Y_{\rmr}^{n};X_i,J_i,Y_{1i}|J^{i-1}, J_{i+1}^n,Y_{\rmr i})=0,\]
implying that
 \[I(M_\rmr,Y_{\rmr}^{i-1};X_i,J_i,Y_{1i}|J^{i-1}, J_{i+1}^n,Y_{\rmr i})=0.\]
This yields the Markov chain $W\rightarrow (T,Y_\rmr)\rightarrow (X,J,Y_1)$.

Now consider
\begin{align*}
 I(X^n;M_\rmr, Y_1^n)&\leq 
I(X^n;J^n)+
I(X^n;Y_{\mathrm{r}}^n|J^n)+I(X^n; Y_1^n|M_\rmr,J^n)-I(X^n;Y_{\mathrm{r}}^n|M_\rmr,J^n)
\\&\nonumber\leq \sum_iI(X_i;J_i)+\sum_{i}I(X^n;Y_{1i}|Y_{\mathrm{r}}^{i-1},Y_{1 i+1}^n,M_\rmr,J^n)-\sum_{i}I(X^n;Y_{\mathrm{r}i}|Y_{\mathrm{r}}^{i-1},Y_{1 i+1}^n,M_\rmr,J^n)
\\&\nonumber\leq \sum_iI(X_i;J_i)+\sum_{i}I(X_i;Y_{1i}|Y_{\mathrm{r}}^{i-1},Y_{1 i+1}^n,M_\rmr,J^n)-\sum_{i}I(X_i;Y_{\mathrm{r}i}|Y_{\mathrm{r}}^{i-1},Y_{1 i+1}^n,M_\rmr,J^n)
\\&\qquad+
\sum_{i}I(X^n;Y_{1i}|X_i,Y_{\rmr i},Y_{\mathrm{r}}^{i-1},Y_{1 i+1}^n,M_\rmr,J^n)
\\&\nonumber\leq \sum_iI(X_i;J_i)+\sum_{i}I(X_i;Y_{1i}|Y_{\mathrm{r}}^{i-1},Y_{1 i+1}^n,M_\rmr,J^n)-\sum_{i}I(X_i;Y_{\mathrm{r}i}|Y_{\mathrm{r}}^{i-1},Y_{1 i+1}^n,M_\rmr,J^n)
\\&\leq n\left(I(X;J)+I(X;Y_1|V,W,J,T)-I(X;Y_\rmr|V,W,J,T)\right).
\end{align*}
Next, observe that
\begin{align*}
   n\left[ I(V,W;Y_\rmr|J,T)-I(V,W;Y_1|J,T)\right]
    &=\sum_i I(M_\rmr,Y_{\mathrm{r}}^{i-1},Y_{1i+1}^n;Y_{\rmr i}|J^n)-\sum_i I(M_\rmr,Y_{\mathrm{r}}^{i-1},Y_{1i+1}^n;Y_{1i}|J^n)
    \\
    &=\sum_i I(M_\rmr;Y_{\rmr i}|Y_{\mathrm{r}}^{i-1},Y_{1i+1}^n,J^n)-\sum_i I(M_\rmr,Y_{\mathrm{r}}^{i-1},Y_{1i+1}^n;Y_{1i}|J^n)
    \\&\leq\sum_i I(M_\rmr;Y_{\rmr i}|Y_{\mathrm{r}}^{i-1},Y_{1i+1}^n,J^n)-\sum_i I(M_\rmr;Y_{1i}|Y_{\mathrm{r}}^{i-1},Y_{1i+1}^n,J^n)
    \\
    &=
    \sum_i I(M_\rmr;Y_{\rmr i}|Y_{\mathrm{r}}^{i-1},Y_{1i+1}^n,J^n)-\sum_i I(M_\rmr;Y_{1i}|Y_{\mathrm{r}}^{i-1},Y_{1i+1}^n,J^n)
    \\&=  I(M_\rmr;Y_{\rmr}^n|J^n)-I(M_\rmr;Y_{1}^n|J^n)\\
    & \leq \sum_i I(M_\rmr,Y_{\mathrm{r}}^{i-1};Y_{\mathrm{r}i}|J^n)\\
    & = n I(W;Y_{\mathrm{r}}|J,T).
\end{align*}
On the other hand,
\[n I(W;Y_{\mathrm{r}}|J,T)
= \sum_i I(M_\rmr,Y_{\mathrm{r}}^{i-1};Y_{\mathrm{r}i}|J^n)=I(M_\rmr;Y_{\mathrm{r}}^n|J^n)=  H(M_\rmr) - I(M_\rmr;J^n) \leq nC_0 - I(M_\rmr;J^n).
\]
Thus, 
\[
 I(V,W;Y_\rmr|J,T)-I(V,W;Y_1|J,T) \leq I(W;Y_\rmr|J,T)\leq C_0.
 \]

The cardinality bounds on the auxiliary random variables come from the standard Caratheodory-Bunt \cite{bun34} arguments and is omitted.

This completes the proof.

%

\subsection{Proof of Proposition \ref{lemmaTU}}
\label{prooflemmaTU}

Let rate $R$ be the bound given in Corollary \ref{prop2}.
Observe that
\begin{align}
R&\leq \nonumber I(X;Y_1|T,W,V)-I(X;Y_{\mathrm{r}}|T,W,V)
\\&\leq\nonumber I(X;Y_1|T,W,V,Y_{\mathrm{r}})
\\&\leq I(X;Y_1|Y_{\mathrm{r}},T).\label{cor2thm10eq1}
\end{align}
Next, using the constraint in Corollary \ref{prop2},
\begin{align}
R&\leq\nonumber I(X;Y_1|T,W,V)-I(X;Y_{\mathrm{r}}|T,W,V)
\\&\leq\nonumber I(X;Y_1|T,W,V)-I(X;Y_{\mathrm{r}}|T,W,V)+I(W;Y_{\mathrm{r}}|T)-
I(V,W;Y_\rmr|T)+I(V,W;Y_1|T) 
\\&= \nonumber
I(W,X;Y_1|T)+
I(V;Y_1|W,X,T)-I(X,V;Y_{\mathrm{r}}|W,T)\nonumber
\\&\leq
I(W,X;Y_1|T)+
I(V;Y_1|W,X,T)-I(V;Y_{\mathrm{r}}|W,X,T)\nonumber
\\&=
I(W,X;Y_1|T)+
I(V;Y_1|W,X,T)-I(V;Y_1,Y_{\mathrm{r}}|W,X,T)\label{cor2thm10eq1.5}
\\&\leq
I(W,X;Y_1|T),\label{cor2thm10eq2}
\end{align}
 where \eqref{cor2thm10eq1.5} follows from the Markov structure on the auxiliary random variable $V$. Therefore the bound given in Corollary \ref{prop2} is less than or equal to the bound in Theorem \ref{thm:TU08}. 

Next, let $R$ be the bound given in  Theorem \ref{thm:TU08}. We have 
\[
R=\min\{I(W,X;Y_1|T),I(X;Y_1|Y_\rmr,T)\}
\] 
for some $p(x,t)p(w|t,y_\rmr)$ satisfying $I(W,T;Y_\rmr)\leq C_0$. 
Assume that rate $R$ is achieved by the bound in Corollary \ref{prop2}. Consider the following two cases:

\noindent{\textit{Case 1}:} $R= I(X;Y_1|Y_\rmr,T)$. Then the chain of inequalities leading to \eqref{cor2thm10eq1} must be all equality, implying that 
\[I(X;Y_\rmr|W,V,Y_1,T)=I(V,W;Y_1|Y_\rmr,T)=0.\] Since the channel $p(y_1|x)$ is generic, it follows from Lemma \ref{lem:genind} that $I(V,W;Y_1|Y_{\mathrm{r}},T)=0$ implies that $I(V,W;X|Y_{\mathrm{r}},T)=0$. Therefore, $(V,W) \to (Y_{\mathrm{r}},T) \to X$ form a Markov chain. 
Since $Y_\rmr$ is independent of $(X,T)$, we deduce that
$I(V,W,Y_{\mathrm{r}};X|T)=0$. Therefore,
$I(X;Y_{\mathrm{r}}|W,V,T)=0$. Hence,
\begin{align}
    R\leq I(X;Y_1|W,V,T)-I(X;Y_{\mathrm{r}}|W,V,T)=I(X;Y_1|W,V,T).\label{eqn:A1}
\end{align}
Since $(V,W) \to (Y_{\mathrm{r}},T) \to X$ form a Markov chain, from the constraint
\[
I(V,W;Y_{\mathrm{r}}|T)- I(V,W;Y_1|T)  \leq C_0,
\]
 we obtain
\[
I(V,W;Y_{\mathrm{r}}|Y_1,T)  \leq C_0.
\]
From this equation and \eqref{eqn:A1}, it follows that compress-forward with time-sharing is optimal (using the characterization in Proposition 3 of \cite{kim2007coding}). 

\noindent{\textit{Case 2}:} $R=I(W,X;Y_1|T)<I(X;Y_1|Y_\rmr,T)$.  In this case
we must have $I(W,T;Y_\rmr)=C_0$ since if $I(W,T;Y_\rmr)<C_0$ in Theorem \ref{thm:TU08},  time-sharing between $W$ and $Y_\rmr$ would strictly increase $I(W,X;Y_1|T)$.

Then, from the assumption that the rate $R$ is achieved by the bound in Corollary \ref{prop2} we obtain that the chain of inequalities leading to \eqref{cor2thm10eq2} must be all equality. Thus, $I(X;Y_\rmr|T,W)=I(V;Y_\rmr|X,T,W,Y_1)=0$ and
\begin{equation}
    I(V,W;Y_\rmr|T)-I(V,W;Y_1|T)= I(W;Y_{\mathrm{r}}|T). \label{constrequality}
\end{equation}

We have $I(V;Y_\rmr|X,T,W,Y_1)=0$ and $I(V;Y_1|X,T,W,Y_\rmr)=0$. From the assumption of Proposition \ref{lemmaTU} and Lemma \ref{le:dbma}, we have
$I(V;Y_\rmr,Y_1|X,T,W)=0$, and since $I(X;Y_\rmr|W,T)=0$, we have
\[I(V,X;Y_\rmr|T,W)=0
\]
Therefore,
$I(V;Y_\rmr|W,T)=0$. 
Hence, from \eqref{constrequality},  it follows that 
$I(V,W;Y_1|T)=0$, which yields
\[
R\leq I(X;Y_1|W,V,T)-I(X;Y_{\mathrm{r}}|W,V,T)\leq I(X;Y_1|W,T).
\]
From the Markov structure on $W$, $I(W;Y_1|Y_\rmr,T)=0$, therefore
$I(W;Y_\rmr|Y_1,T)\leq I(W;Y_\rmr|T)\leq C_0$. This implies that the rate $R$ is achieved by compress-forward  with time-sharing (using the characterization in Proposition 3 of \cite{kim2007coding}). This completes the proof.


\subsection{Proof of Theorem \ref{comparisonBD}}
\label{proofcomparisonBD}
We first compute the bound given in Proposition \ref{prop:mainprirel} which is 
\[
R\leq I(X;Y_1|V)-I(X;Y_\rmr|V)
\]
for some $p(x)p(v|x,y_\rmr)$ such that
\begin{align}
I(V;Y_\rmr)-I(V;Y_1)\leq C_0.\label{c0constVYr}
\end{align} 
Lemma \ref{le:Gausuffiorth} in Appendix \ref{appndxEval} shows that it suffices to consider jointly Gaussian random variables in evaluating this bound. 
Let the covariance of $X,Y_\rmr$ given ${V}$ be
\[
K_{X,Y_\rmr|{V}}=
\begin{bmatrix} K_1 & \rho \sqrt{K_1K_2}\\
\rho \sqrt{K_1K_2}& K_2
\end{bmatrix}\preceq \begin{bmatrix} P & 0\\
0& N_\rmr
\end{bmatrix}
\]
for some $0 \leq K_1\leq P$ and $0 \leq K_2\leq N_{\rmr}$ and $\rho\in[-1,1]$ such that
\begin{align}
(P-K_1)(N_{\rmr}-K_2)\geq \rho^2K_1K_2.
\end{align}
Then, the upper bound becomes the maximum of
\begin{align}
    \frac12 \log\left(1+\frac{(\sqrt{K_1}+\sqrt{K_2}\rho)^2}{ K_2(1-\rho^2)+N_1}\right)+\frac12\log(1-\rho^2)
\label{eq:boundold}
\end{align}
subject to
\begin{align}
\frac12\log( N_{\rmr})-\frac12\log(K_2)-\frac12\log(P+N_\rmr+N_1)+\frac12 \log(K_1+K_2+2\rho\sqrt{K_1K_2}+N_1)\leq C_0,
\label{C0const}
\end{align}
$0 \leq K_1\leq P$ and $0 \leq K_2\leq N_{\rmr}$, $\rho\in[-1,1]$, and
\begin{align}
(P-K_1)(N_{\rmr}-K_2)\geq \rho^2K_1K_2.\label{consttoshoweq}
\end{align}

We argue that for any maximizing $K_1, K_2, \rho$, equations \eqref{C0const} and \eqref{consttoshoweq} must hold with equality. Regarding equality in \eqref{C0const}, any maximizing auxiliary random variable $V$ in \eqref{c0constVYr} must satisfy $I(V;Y_\rmr)-I(V;Y_1)=C_0$, otherwise, time-sharing between $V$ and $Y_\rmr$ would strictly improve the upper bound on the rate $R$. We next show that \eqref{consttoshoweq} must hold with equality. 

First consider the case of $\rho\geq 0$. Since constraint \eqref{C0const} holds with equality, we seek to maximize
\begin{align*}
    C_0 - \frac12\log( N_{\rmr})+\frac12\log\left(\frac{K_2(1-\rho^2)}{K_2(1-\rho^2)+N_1}\right)+\frac12\log(P+N_\rmr+N_1) 
\end{align*}
subject to $0 \leq K_1\leq P$ and $0 \leq K_2\leq N_{\rmr}$, $\rho\in[0,1]$ and
\begin{align*}
  &\frac12\log( N_{\rmr})-\frac12\log(K_2)-\frac12\log(P+N_\rmr+N_1)+\frac12 \log(K_1+K_2+2\rho\sqrt{K_1K_2}+N_1)= C_0,\\
  & (P-K_1)(N_{\rmr}-K_2)\geq \rho^2K_1K_2.
\end{align*}
Setting $K_3=K_2(1-\rho^2)$ and $K_4=K_2 \rho^2$, this can be rephrased as maximizing
\begin{align*}
    C_0 - \frac12\log( N_{\rmr})+\frac12\log\left(\frac{K_3}{K_3+N_1}\right)+\frac12\log(P+N_\rmr+N_1) 
\end{align*}
subject to $0 \leq K_1\leq P$ and $K_3,K_4\geq 0$, $0 \leq K_3+K_4\leq N_{\rmr}$ and
\begin{align}
  &\frac12\log( N_{\rmr})-\frac12\log(K_3+K_4)-\frac12\log(P+N_\rmr+N_1)+\frac12 \log((\sqrt{K_1}+\sqrt{K_4})^2+K_3+N_1)= C_0,\label{C0constraintc1}\\
  & (P-K_1)(N_{\rmr}-K_3-K_4)\geq K_1K_4.\label{constraintc2}
\end{align}
The constraint 
$(P-K_1)(N_{\rmr}-K_3-K_4)\geq K_1K_4$
gives the upper bound on $K_1$,
\[
K_1\leq P\left(1-\frac{K_4}{N_\rmr-K_3}\right).
\]
Assume that the above bound is strict. 
Let us fix $K_4$, and increase $K_3$ and $K_1$ simultaneously such that 
\[
\frac{(\sqrt{K_1}+\sqrt{K_4})^2+K_3+N_1}{K_3+K_4}
\]
is preserved. This strictly increases the overall expression because $K_3/(K_3+N_1)$ is strictly increasing in $K_3$, and \eqref{C0constraintc1} is preserved. This shows that  \eqref{constraintc2} must hold with equality for any maximizer and completes the argument in the case of $\rho\geq 0$. 

Next, consider the case of $\rho < 0$. We show that no maximizer can have $\rho<0$. Assume otherwise and consider an optimizer $(K_1, K_2, \rho)$, where $\rho<0$.  If $\sqrt{K_1} < 2|\rho|\sqrt{K_2}$, the expression in \eqref{eq:boundold} is non-positive and cannot be the maximizer. To see this, observe that if $\sqrt{K_1} < 2|\rho|\sqrt{K_2}$ then \eqref{eq:boundold} is less than or equal to
\begin{align}
    \frac12 \log\left(1+\frac{K_2\rho^2}{ K_2(1-\rho^2)+N_1}\right)+\frac12\log(1-\rho^2)=\frac12 \log\left(\frac{K_2(1-\rho^2)+N_1(1-\rho^2)}{ K_2(1-\rho^2)+N_1}\right)\leq 0.
\end{align}
 Therefore we can assume that $\sqrt{K_1} \geq 2|\rho|\sqrt{K_2}$.
Replacing $\rho$ by $-\rho$ and $\sqrt{K_1}$ by the smaller $\sqrt{K_1} + 2\rho \sqrt{K_2}$ does not change \eqref{eq:boundold} while still keeps all the constraints satisfied. However, since we have strictly decreased $K_1$,  \eqref{consttoshoweq} will not hold with equality by this transformation. However, the argument for the case of $\rho\geq 0$ shows that \eqref{consttoshoweq} must hold with equality. Therefore, $\rho<0$ cannot hold for any maximizer.

To sum this up, for any maximizing $K_1, K_2, \rho$, equations \eqref{C0const} and \eqref{consttoshoweq} must hold with equality.

Next, consider the bound given in Corollary \ref{prop2},
\begin{align*}
R&\leq I(X;Y_1|T,W,V)-I(X;Y_{\mathrm{r}}|T,W,V)
\end{align*}
 for some $p(t,x)p(y_1|x,y_{\mathrm{r}})p(y_{\mathrm{r}})p(w|t,y_{\mathrm{r}})p(v|t,x,y_{\mathrm{r}},w)$ such that
\begin{align}
  I(V,W,T;Y_{\mathrm{r}})-I(V,W,T;Y_1)+I(T;Y_1)&\leq 
  I(W,T;Y_{\mathrm{r}})\leq C_0.\label{eqC0relax2}
\end{align}

Lemma \ref{le:Gausuffiorth} in Appendix \ref{appndxEval} shows that it suffices to consider jointly Gaussian random variables in evaluating the above bound. We use a proof by contradiction. Setting $\tilde{V}=(V,W,T)$, it follows that the above bound is less than or equal to the bound in Proposition \ref{prop:mainprirel}.
Assume that the bound in Corollary \ref{prop2} matches the one in Proposition \ref{prop:mainprirel}. Since \eqref{c0constVYr} holds with equality for any maximizer of Proposition \ref{prop:mainprirel},  we must have 
$I(T;Y_1)=0$. This implies $I(T;X)=0$. Thus, $I(W,T;X)=0$.
Let 
\[
K_{X,Y_\rmr|V,W,T}=
\begin{bmatrix} K_1 & \rho \sqrt{K_1K_2}\\
\rho \sqrt{K_1K_2}& K_2
\end{bmatrix}\preceq K_{X,Y_\rmr|W,T}=
\begin{bmatrix} P & 0\\
0& N'_\rmr
\end{bmatrix}
\preceq K_{X,Y_\rmr}=
\begin{bmatrix} P & 0\\
0& N_\rmr
\end{bmatrix}
\]
for some $0 \leq K_1\leq P$ and $0 \leq K_2\leq N'_{\rmr}\leq N_\rmr$ and $\rho\in[-1,1]$ such that
\begin{align}
(P-K_1)(N'_{\rmr}-K_2)\geq \rho^2K_1K_2.
\end{align}

Since \eqref{consttoshoweq} holds with equality, we must have $N'_\rmr=N_\rmr$. Therefore, $(W,T)$ is independent of $Y_\rmr$. This contradicts \eqref{eqC0relax2}. This completes the proof.

\section{Conclusion and Final remarks}
We presented new upper bounds on the capacity of several classes of relay channels and showed through several applications that our bounds can strictly improve upon earlier bounds.

One key insight that leads to strict improvements over previous bounds is the appearance of the same auxiliary variables in multiple constraints  with incompatible choice of optimizing auxiliary for each constraint. In particular, in Theorem~\ref{thm:maintheorem}, bounds \eqref{eqnUB1} and \eqref{eqnUB1.5} that strengthen the term $
I(X;Y,Y_{\mathrm{r}}|X_{\mathrm{r}})$ in the cutset bound are maximized when $I(U;Y|X_{\mathrm{r}},Y_{\mathrm{r}})=I(V;Y|X_{\mathrm{r}},Y_{\mathrm{r}})=I(X;Y_{\mathrm{r}}|V,X_{\mathrm{r}},Y)=0$. On the other hand, bound \eqref{eqnUB0} that strengthens the term $
I(X,X_{\mathrm{r}};Y)$ is maximized when $I(V;Y_\rmr|X_\rmr,Y,X)=0$. For generic channels, these constraints cannot hold simultaneously, leading to strict improvement over the cutset bound. This leads to a trade-off in the realization of the auxiliary which is clearly demonstrated in the statement of Proposition \ref{prop:mainprirel}. Similarly, a maximizer for bound \eqref{propeq0.5o} in Proposition~\ref{prop:mainprirel} is $V=Y_\rmr$ while constraint \eqref{propconstraint} prevents $V$ from becoming close to $Y_\rmr$. Observe that  \eqref{propconstraint} is deduced from constraint \eqref{eq:UBcon} on the auxiliary random variables and can tighten the bounds. 

Other key insights that led to the new results are the techniques used in the evaluation of the bounds: either by carefully relaxing the constraints as in Corollary \ref{cor:weakmainOB} or by identifying the optimal auxiliaries in the associated non-convex optimization problems as in Proposition \ref{Proposition5} and Theorem \ref{thm:GSCase}. There are also instances in which properties of the optimal auxiliaries have been identified that make the optimization problem amenable to numerical evaluation by reducing the search dimension as in Theorem \ref{thm:cutsubopt} and Theorem \ref{thmBSCPrimitive}. 

Our results have focused on the class of relay channels without self-interference and subclasses and applications therein. The same techniques can be used to establish upper bounds on the capacity of the general relay channel (e.g., see Remark~\ref{rmknw7} after Theorem \ref{thm:maintheorem}). We do not have concrete examples, however, that motivate such extensions. 

We gave a specific instance for which the addition of an auxiliary receiver helped improve our main bound. However, there are many other ways that auxiliary receivers could be introduced in these bounds. The study of using auxiliary receivers to strictly improve existing bounds is still rather nascent. In particular, we highlight the following challenge in evaluating bounds involving auxiliary receivers.
A bound with an auxiliary receiver is computed as an infimum over all auxiliary receiver realizations; hence every fixed choice of auxiliary receiver yields a valid and computable bound. However, to obtain the best possible bound, the resultant optimization problem takes a max-min-max-min formulation in which the innermost minimum is over the various rate constraints, the next maximum is over the choice of auxiliary random variables, the subsequent minimum is over the choice of auxiliary receivers, and the outermost maximum is over the choice of the input distributions. Making optimization problems of the above form tractable would be an interesting problem to investigate. 

\section*{Acknowledgements}
Chandra Nair research has been funded by the Research Grants Council of Hong Kong via  GRF grants 14206518 and 14210120.

\bibliographystyle{amsalpha}
\bibliography{mybiblio}

\appendix

\subsection{Some Mathematical Preliminaries}

\begin{lemma}[Double-Markovity, Exercise 16.25 in \cite{csk11},  also see Lemma 6 and Remark 19 in \cite{gon22}]
\label{le:dbma}
Let $(X,Y,Z)$ be random variables such that $X\to Y \to Z$ and $X \to Z \to Y$ are Markov chains. Then there exists functions $f(Y)$ and $g(Z)$ such that $\P(f(Y)=g(Z))=1$ and $X$ is conditionally independent of $(Y,Z)$ given $(f(X),g(Y))$.
\end{lemma}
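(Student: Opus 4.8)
The plan is to construct the common function $f,g$ directly from the combinatorial structure of the joint support of $(Y,Z)$ and then verify the conditional independence one connected component at a time. First I would translate the two Markov hypotheses into pointwise statements about conditional laws: the chain $X\to Y\to Z$ gives $p(x\mid y,z)=p(x\mid y)$ whenever $p(y,z)>0$, while $X\to Z\to Y$ gives $p(x\mid y,z)=p(x\mid z)$ whenever $p(y,z)>0$; combining them, at every support pair $(y,z)$ the probability vectors $p(\cdot\mid y)$ and $p(\cdot\mid z)$ coincide.

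Next I would form the bipartite graph $G$ whose two vertex classes are the supports $\mathcal Y$ and $\mathcal Z$, placing an edge between $y$ and $z$ exactly when $p(y,z)>0$, and let $f(y)$ (resp.\ $g(z)$) be the label of the connected component of $G$ containing $y$ (resp.\ $z$). Since any pair of positive probability lies in a single component, $f(Y)=g(Z)$ almost surely. The crucial intermediate claim is that $y\mapsto p(\cdot\mid y)$ factors through $f$: the identity of the previous paragraph shows adjacent vertices of $G$ carry the same conditional law of $X$, so walking along a path shows this conditional law is constant on each component; the same holds for the $z$-vertices, with the same constant.

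Finally, conditioning on the event $\{f(Y)=g(Z)=i\}$, for every $(y,z)$ in component $i$ we have $p(x\mid y,z)=p(x\mid y)$, which equals the component-$i$ constant irrespective of $(y,z)$; this is precisely the assertion that $X$ is conditionally independent of $(Y,Z)$ given $f(Y)$ (equivalently given $(f(Y),g(Z))$), together with $\P(f(Y)=g(Z))=1$. I expect the only real obstacle to lie in measurability issues for non-discrete alphabets: there one should replace "connected components of a graph" by the measurable quotient under the equivalence relation generated by $p(\cdot\mid y)=p(\cdot\mid z)$, or invoke a standard measurable-selection argument, to guarantee that $f$ and $g$ are measurable and agree a.s.; in the discrete setting that the rest of the paper works in, this step is vacuous and the argument above is complete.
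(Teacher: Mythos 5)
Your proof is correct, and it is the standard connected-components argument for the double-Markov lemma; the paper itself gives no proof but defers to Exercise 16.25 of Csisz\'ar--K\"orner and the 1974 reference, where exactly this construction (constancy of $p(\cdot\mid y)=p(\cdot\mid z)$ along edges of the bipartite support graph, with $f,g$ the component labels) is the intended argument. Your closing caveat about the measurable quotient in non-discrete settings matches the paper's own remark following the lemma, and you have correctly read the statement's $(f(X),g(Y))$ as the intended $(f(Y),g(Z))$.
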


\begin{lemma}
\label{lem:genind}
Let $p(y|x)$ be a \textit{generic} channel (see Definition \ref{def:gencha})  and assume that $W\to X \to Y$ form a Markov chain.  Then,
$I(W;Y)=0$ implies $I(W;X)=0$. 
\end{lemma}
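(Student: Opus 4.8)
The plan is to translate the hypothesis $I(W;Y)=0$ into a linear‑algebraic statement about the channel matrix and then invoke the full‑row‑rank (generic) assumption. First I would fix an arbitrary value $w$ with $p(w)>0$ and use the Markov chain $W\to X\to Y$ to write
$p(y\mid w)=\sum_{x} p(x\mid w)\,p(y\mid x)$ and, averaging over $w$, $p(y)=\sum_{x} p(x)\,p(y\mid x)$, for every $y\in\mathcal Y$. Since $I(W;Y)=0$ forces $p(y\mid w)=p(y)$ for all $y$, subtracting these identities yields $\sum_{x}\bigl(p(x\mid w)-p(x)\bigr)\,p(y\mid x)=0$ for every $y\in\mathcal Y$.

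Next I would package this as $\mathbf a_w^{\top}P=\mathbf 0$, where $\mathbf a_w=\bigl(p(x\mid w)-p(x)\bigr)_{x\in\mathcal X}$ is viewed as a row vector and $P$ is the $|\mathcal X|\times|\mathcal Y|$ channel matrix with entries $P_{x,y}=p(y\mid x)$. By Definition~\ref{def:gencha} the matrix $P$ has full row rank, so its left null space is trivial; hence $\mathbf a_w=\mathbf 0$, i.e.\ $p(x\mid w)=p(x)$ for every $x$. Since $w$ was an arbitrary value of positive probability, $X$ and $W$ are independent, which is precisely $I(W;X)=0$.

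I do not expect a genuine obstacle here; the argument is entirely finite‑dimensional and needs none of the measure‑theoretic care flagged after Lemma~\ref{le:dbma}. The only points worth stating carefully are (i) that the channel matrix must be set up with its rows indexed by $x$, so that ``full row rank'' is exactly the condition that kills the left null space, and (ii) that the conditional distributions $p(\cdot\mid w)$ are only used for $w$ in the support of $W$, which is harmless since the other values contribute nothing to $I(W;X)$.
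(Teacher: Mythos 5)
Your proposal is correct and follows essentially the same argument as the paper: both reduce $I(W;Y)=0$ via the Markov chain to a linear relation among the rows of the channel matrix and invoke full row rank to conclude that the conditional distributions of $X$ coincide. The only cosmetic difference is that you compare $p(x\mid w)$ to the marginal $p(x)$, whereas the paper compares $p(x\mid w_1)$ to $p(x\mid w_2)$ for arbitrary pairs; the two are equivalent.
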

\begin{proof}
Given $I(W;Y)=0$. Let $w_1, w_2$ be such that $p(w_i) > 0$ for $i=1,2$. Since $I(W;Y)=0$ we have $P(Y=y|W=w_1)= P(Y=y|W=w_2)$. This implies that $\sum_{x} P(X=x|W=w_1) \vec{v}_x = \sum_{x} P(X=x|W=w_2) \vec{v}_x$, where $\vec{v}_x := p(y|x)$. From full row-rank, i.e., the linear independence of $\{\vec{v}_x\}$,  it follows that $P(X=x|W=w_1) = P(X=x|W=w_2)$ for every $x$. Since this holds for any pair $w_1, w_2$ such that $p(w_i) > 0$, we obtain $I(W;X)=0$, i.e., $W$ is independent of $X$. 
\end{proof}

\begin{lemma}\label{lmm1} Let $Y=X+Z$ be an AWGN channel. Assume that $W\rightarrow X\rightarrow Y$ form a Markov chain. Then, we have
\begin{enumerate}[$(i)$]
\item $I(W;Y)=0$ implies that $I(W;X)=0$.
\item $I(W;X|Y)=0$ (or equivalently $I(W;Y)=I(W;X)$) implies that $I(W;X)=0$.
\item $I(X;Y|W)=0$ implies that $X$ is a function of $W$.
\end{enumerate}
\end{lemma}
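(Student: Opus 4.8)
The plan is to reduce all three parts to one analytic fact: since $Z\sim\Nc(0,\sigma^2)$ has characteristic function $t\mapsto e^{-\sigma^2t^2/2}$, which never vanishes, the convolution map $\mu\mapsto\mu*\Nc(0,\sigma^2)$ is injective on finite (signed) Borel measures on $\mathbb R$ — equivalently, the conditional law of $X$ is recoverable from that of $Y=X+Z$. For $(ii)$ I would additionally use the double-Markovity lemma, Lemma~\ref{le:dbma}. Throughout, $Z$ is independent of $(W,X)$, so conditioning on $\{W=w\}$ leaves $Z$ an independent $\Nc(0,\sigma^2)$ variable. For $(i)$, write $\mu_w=P_{X\mid W=w}$; the Markov chain $W\to X\to Y$ gives $P_{Y\mid W=w}=\mu_w*\Nc(0,\sigma^2)$, and $I(W;Y)=0$ says this equals $P_Y$ for $P_W$-a.e.\ $w$. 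Taking Fourier transforms, $\widehat{\mu_w}(t)=\widehat{P_Y}(t)\,e^{\sigma^2t^2/2}$, a function of $t$ not depending on $w$; hence $\mu_w$ is $P_W$-a.e.\ a fixed measure, i.e.\ $I(W;X)=0$.

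For $(ii)$, the hypothesis $I(W;X\mid Y)=0$ is the Markov chain $W\to Y\to X$, which combined with $W\to X\to Y$ lets me invoke Lemma~\ref{le:dbma} (in the Polish-space form noted in the remark there): there are functions $f,g$ with $f(X)=g(Y)$ almost surely and $W$ conditionally independent of $(X,Y)$ given $f(X)$, so $I(W;X)=I(W;f(X))$ and it suffices to show $f(X)$ is a.s.\ constant. From $f(X)=g(X+Z)$ a.s., independence of $X$ and $Z$, and Fubini, for $P_X$-a.e.\ $x$ one has $g(x+z)=f(x)$ for a.e.\ $z$ (the Gaussian law being equivalent to Lebesgue measure); thus $g$ equals the constant value $f(x)$ Lebesgue-a.e.\ on $\mathbb R$, and comparing two such $x$ forces $f$ to be $P_X$-a.e.\ constant.

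For $(iii)$, fix $w$ in the support of $W$ and work under $P_{\cdot\mid W=w}$, where $X\sim\mu_w$ and $Z\sim\Nc(0,\sigma^2)$ is independent of $X$. The hypothesis $I(X;Y\mid W)=0$ gives $X\perp(X+Z)$ under this law, so its joint characteristic function factors: $\widehat{\mu_w}(s+t)\,e^{-\sigma^2t^2/2}=\widehat{\mu_w}(s)\,\widehat{\mu_w}(t)\,e^{-\sigma^2t^2/2}$ for all $s,t$. Cancelling the nonzero Gaussian factor yields $\widehat{\mu_w}(s+t)=\widehat{\mu_w}(s)\widehat{\mu_w}(t)$; a continuous multiplicative function that is a characteristic function must have modulus $1$ and hence equal $e^{ia(w)s}$ for some $a(w)\in\mathbb R$, so $\mu_w=\delta_{a(w)}$. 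Therefore $X=a(W)$ almost surely, a function of $W$. (When a second moment is available, as in the power-constrained setting, one can instead note that $X\perp(X+Z)$ given $W=w$ forces $0=\operatorname{Cov}(X,X+Z\mid W{=}w)=\operatorname{Var}(X\mid W{=}w)$.)

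I expect the main difficulty to be measure-theoretic bookkeeping rather than any conceptual obstacle: since the lemma is applied with $X$ possibly continuous, care is needed in the a.e.\ manipulations in part $(ii)$ (passing from $f(X)=g(X+Z)$ to ``$f$ constant''), in verifying that conditioning on $\{W=w\}$ preserves the independent-Gaussian structure of $Z$ used in $(i)$ and $(iii)$, and in citing Lemma~\ref{le:dbma} correctly on Polish spaces. For finite-valued $X$ all three statements are essentially immediate.
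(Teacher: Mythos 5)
Your proof is correct and follows essentially the same route as the paper's: part $(i)$ via the non-vanishing characteristic function of the Gaussian, part $(ii)$ via Lemma~\ref{le:dbma} together with the observation that $f(X)=g(X+Z)$ a.s.\ forces both functions to be constant, and part $(iii)$ via the Cauchy functional equation $\Phi_X(t_1+t_2)=\Phi_X(t_1)\Phi_X(t_2)$ for the conditional characteristic function, whose only solution among characteristic functions is $e^{itc}$. The only difference is that you spell out the measure-theoretic details (Fubini, a.e.\ arguments, conditioning on $W=w$) that the paper leaves implicit.
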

\begin{proof}
\noindent{$(i)$:} This is a well-known result and follows from the non-vanishing property of the characteristic function  of a Gaussian distribution.

\noindent{$(ii)$:} This follows from Lemma \ref{le:dbma} along with the observation that if $f(X)=g(X+Z)$ with probability one, then both functions $f(X),g(X+Z)$ have to be constant with probability one.

\noindent{$(iii)$:} Note that $I(X;X+Z)=0$ implies that the characteristic function of $X$ satisfies the equation $\Phi_X(t_1+t_2) = \Phi_X(t_1)\Phi_X(t_2) $, whose only solution (in the space of characteristic functions) is $\Phi_X(t)=e^{itc}$, implying that $X=c$ with probability one. Returning to $I(X;X+Z|W)=0$, we have that conditioned on $W$, $X$ is a constant; implying that $X$ is a function of $W$ as required.
\end{proof}


\subsection{Equivalence of capacity regions for relay setting with orthogonal components}\label{AppendixNew}
In the orthogonal component setting, we assume that $X=(X_a,X_b)$, $Y_\rmr=(Y_{\rmr a}, Y_{\rmr b})$, $X_\rmr=(X_{\rmr a},X_{\rmr b})$, and $Y=(Y_a,Y_b)$. The joint probability distribution decomposes as 
\[p(y_a,y_b,y_{\rmr a},y_{\rmr b}|x_a,x_b,x_{\rmr a},x_{\rmr b})= p(y_a,y_{\rmr a}|x_a,x_{\rmr a})p(y_{\rmr b}|x_b)p(y_b|x_{\rmr b}).\] Here the channels $p(y_{\rmr b}|x_b)$ and $p(y_b|x_{\rmr b})$ are considered orthogonal components from the sender to the relay and the relay to the destination respectively.

We consider the following three scenarios:

\noindent \textit{Scenario A -- strictly causal}: This is the standard scenario where the relay's encoded symbols satisfy: $x_{\rmr ai}(y_{\rmr a}^{i-1}, y_{\rmr b}^{i-1})$ and $x_{\rmr bi}(y_{\rmr a}^{i-1}, y_{\rmr b}^{i-1})$, i.e., it can only depend on the past received symbols. 

\noindent \textit{Scenario B -- non-causal}: In this scenario the relay's encoded symbols satisfy: $x_{\rmr ai}(y_{\rmr a}^{i-1}, y_{\rmr b}^{n})$ and $x_{\rmr bi}(y_{\rmr a}^{n}, y_{\rmr b}^{n})$.

\noindent \textit{Scenario C -- bit-pipe}: In this scenario the channel $p(y_{\rmr b}|x_b)$ is replaced by a bit-pipe with the same capacity, say $C_1$. Similarly the channel $p(y_b|x_{\rmr b})$  is replaced by a bit-pipe with the same capacity, say $C_0$. Further the transmitted sequence $X_b^n$ is replaced by $J_s(M) \in [1:\lfloor 2^{nC_1} \rfloor]$, and $X_{\rmr b}^n$ is replaced by $J_\rmr(Y_{\rmr a}^n,Y_{\rmr b}^n) \in [1:\lfloor 2^{nC_0} \rfloor]$.

The equivalence of capacity for the above three settings are well known, e.g.\ see \cite[Sec. 16.7.3]{elk11}, \cite{kim2007coding}. An outline of proof is  provided just for completeness here. 

\begin{lemma}\label{newlemma1}
The capacities of the relay settings in the above three scenarios are equal.
\end{lemma}
\begin{proof}
We will first outline the equivalence between Scenario A and B. Clearly, we only need to show the non-trivial direction, i.e. the capacity of Scenario A is at least as large as that in Scenario B. Let $C_B$ be the capacity in Scenario B, and let $R < C_B$ be any non-negative rate. Therefore there exists a sequence of $(nR,\eps_n)$-codes for Scenario B, with $\eps_n \to 0$ as $n$ tends to infinity. 

Let $X_a^n(m),X_b^n(m)$, $m \in [1:2{nR}]$ be the encoding mapping for the $(nR,\eps_n)$-code. Let $K > 1$ be a positive integer. Construct a $(n(K-1)R,(K-1)\eps_n)$-code with block length $n(K+1)$ for Scenario A as follows:  we divide block of length $n(K+1)$ into $K+1$ sub-blocks of length $n$ each. In sub-block $i$, $1 \leq j \leq K$, the sender transmits $(X_a^n(m_{j-1}), X_b^n(m_j))$ according to the encoding mapping for Scenario B, where $m_j \in [1:2^{nR}]$, $1\leq j \leq K-1$, and $m_0=m_K=1$. Now we can exactly simulate the encoding function $x_{\rmr ai}(y_{\rmr a}^{i-1},y_{\rmr b}^n)$ in Scenario B in sub-blocks $2 \leq j \leq K$, by using $y_{\rmr b}^n$ from the previous sub-block. Further we can exactly simulate the encoding function $x_{\rmr bi}(y_{\rmr a}^{n},y_{\rmr b}^n)$ in Scenario B in sub-blocks $3 \leq j \leq K+1$, b using $y_{\rmr a}^n$ from the previous sub-block and $y_{\rmr b}^n$ from the second previous sub-block. Thus for sub-blocks $2 \leq j \leq K$, the received sequence $Y_a^n$ in Scenario A is exactly same as that in Scenario B for the preceding sub-block, while  for sub-blocks $3 \leq j \leq K+1$, the received sequence $Y_b^n$ in Scenario A is exactly same as that in Scenario B for the second previous sub-block. Using this observation, a decoding error in Scenario $A$ occurs only if a decoding error occurred in Scenario B in one of the sub-blocks $1 \leq j \leq K-1$, and hence can be upper bounded (using union bound) by $(K-1)\eps_n$. Note that this construction implies that a rate of $\frac{(K-1)R}{K+1}$ is achievable for Scenario A for any $K > 1$. Letting $K \to \infty$, we see that rate $C_A \geq R$, where $C_A$ is the capacity region for scenario $A$. Taking the supremum over $R$ we see that $C_B \leq C_A$, establishing the non-trivial direction.

Equivalence between Scenario B and C follows from a similar block coding idea. From Shannon's channel coding theorem, we can simulate a bit-pipe from an arbitrary noisy channel of the same capacity. More generally, any two channels of the same capacity can be simulated from one another in the presence of infinite shared randomness \cite{bennett2002entanglement}. Since the addition of independent shared randomness between the nodes in a communication network does not change its capacity, we get the desired equivalence.

\end{proof}


\subsection{An equivalent form of the bound in Theorem \ref{thm:maintheorem} without an equality constraint }\label{AppendixNew2}

An equivalent form of the bound in Theorem \ref{thm:maintheorem} without the equality constraint is given in the following proposition.

\begin{proposition}\label{prop-no-equality} Any achievable rate $R$ for a discrete memoryless relay channel without self-interference $p(y_\mathrm{r}|x)p(y|x,x_\mathrm{r},y_\mathrm{r})$ must satisfy the following inequalities 
\begin{align}
    R&\leq I(X;Y,Y_{\mathrm{r}}|X_{\mathrm{r}})-I(U;Y|X_{\mathrm{r}},Y_{\mathrm{r}}), \label{eqnUB1moda}\\
    &{
    =I(X;Y|X_\rmr,U)+I(U;Y_\rmr|X_\rmr)+I(X;Y_\rmr|X_\rmr,U,Y)
    }\label{eqnUB1.1moda}
    \\
R&\leq I(X;Y,Y_{\mathrm{r}}|X_{\mathrm{r}})-I(V;Y|X_{\mathrm{r}},Y_{\mathrm{r}})-I(X;Y_{\mathrm{r}}|V,X_{\mathrm{r}},Y)\label{eqnUB1.5moda}\\
 &=  I(X;Y_{\mathrm{r}}|X_{\mathrm{r}})+I(X;Y|V,X_{\mathrm{r}})-I(X;Y_{\mathrm{r}}|V,X_{\mathrm{r}})\label{eqnUB2moda}\\
 &=I(X;Y,V|X_\rmr)-I(V;X|X_\rmr,Y_\rmr),\label{eqnUB2m1moda} \\
R&\leq I(X,X_{\mathrm{r}};Y)-I(V;Y_{\mathrm{r}}|X_{\mathrm{r}},X,Y),\label{eqnUB0moda}\\
R & \leq I(X;Y_{\mathrm{r}}|X_{\mathrm{r}})+I(X,X_{\mathrm{r}};Y|U)-I(X,X_{\mathrm{r}};Y_{\mathrm{r}}|U)-I(V;Y_{\mathrm{r}}|X,X_{\mathrm{r}},Y)\label{eqnUB2adda}\\
R & \leq I(U;Y)+I(X,X_\rmr;Y_{\mathrm{r}}|U)
     +
     I(X;Y|V,X_{\mathrm{r}})-I(X;Y_{\mathrm{r}}|V,X_{\mathrm{r}}) \label{eqnUB0add}
 \end{align}
for some $p(u,x,x_{\mathrm{r}})p(y,y_{\mathrm{r}}|x,x_{\mathrm{r}})p(v|x,x_{\mathrm{r}},y_{\mathrm{r}})$. Moreover, it suffices to consider 
 $|\Vc| \leq |\Xc| |\Xc_{\mathrm{r}}||\Yc_{\mathrm{r}}|+2$ and
$|\Uc| \leq |\Xc| |\Xc_{\mathrm{r}}|+2$.
\end{proposition}

\begin{proof}
We first show that the upper bound in Proposition \ref{prop-no-equality} is greater than or equal to the upper bound in Theorem \ref{thm:maintheorem}. To show this, consider a rate $R$ satisfying the inequalities in Theorem \ref{thm:maintheorem} for some  $p(u,x,x_{\mathrm{r}})p(y,y_{\mathrm{r}}|x,x_{\mathrm{r}})p(v|x,x_{\mathrm{r}},y_{\mathrm{r}})$. Then, using the equality constraint \eqref{eq:UBcon}, we have
\begin{align*}
    R&\leq
    I(X;Y_{\mathrm{r}}|X_{\mathrm{r}})+I(X;Y|V,X_{\mathrm{r}})-I(X;Y_{\mathrm{r}}|V,X_{\mathrm{r}})
    \\&=
    I(X;Y_{\mathrm{r}}|X_{\mathrm{r}})+I(V,X_{\mathrm{r}},X;Y)-I(V,X_{\mathrm{r}},X;Y_{\mathrm{r}})-I(V,X_{\mathrm{r}};Y)+I(V,X_{\mathrm{r}};Y_{\mathrm{r}})
    \\&
    =
    I(X;Y_{\mathrm{r}}|X_{\mathrm{r}})+I(V,X_{\mathrm{r}},X;Y)-I(V,X_{\mathrm{r}},X;Y_{\mathrm{r}})+I(U;Y_{\mathrm{r}})-I(U;Y)
    \\&
    =
    I(X;Y_{\mathrm{r}}|X_{\mathrm{r}})+I(X,X_{\mathrm{r}};Y|U)-I(X,X_{\mathrm{r}};Y_{\mathrm{r}}|U)-I(V;Y_{\mathrm{r}}|X,X_{\mathrm{r}},Y),
\end{align*}
which yields \eqref{eqnUB2adda}. We also have
\begin{align*}
    R&\leq
     I(X,X_{\mathrm{r}};Y)-I(V;Y_{\mathrm{r}}|X_{\mathrm{r}},X,Y)
    \\&=
     I(X,X_{\mathrm{r}};Y)+I(V;Y|X_{\mathrm{r}},X)-I(V;Y_{\mathrm{r}}|X_{\mathrm{r}},X)
     \\&=
     I(V,X_{\mathrm{r}},X;Y)-I(V,X_{\mathrm{r}},X;Y_{\mathrm{r}})+I(X,X_\rmr;Y_\rmr)
      \\&=
     I(V,X_{\mathrm{r}};Y)-I(V,X_{\mathrm{r}};Y_{\mathrm{r}})
     +
     I(X;Y|V,X_{\mathrm{r}})-I(X;Y_{\mathrm{r}}|V,X_{\mathrm{r}})
     +I(X,X_\rmr;Y_\rmr)
      \\&=
     I(U;Y)-I(U;Y_{\mathrm{r}})
     +
     I(X;Y|V,X_{\mathrm{r}})-I(X;Y_{\mathrm{r}}|V,X_{\mathrm{r}})
     +I(X,X_\rmr;Y_\rmr)
     \\&=
     I(U;Y)+I(X,X_\rmr;Y_{\mathrm{r}}|U)
     +
     I(X;Y|V,X_{\mathrm{r}})-I(X;Y_{\mathrm{r}}|V,X_{\mathrm{r}}),
     \end{align*}
which yields \eqref{eqnUB0add}.

Next, we show that the upper bound in Proposition \ref{prop-no-equality} is less than or equal to the upper bound in Theorem \ref{thm:maintheorem}. Assume that a rate $R$ satisfies the bound in Proposition \ref{prop-no-equality} for some $p(u,x,x_{\mathrm{r}})p(y,y_{\mathrm{r}}|x,x_{\mathrm{r}})p(v|x,x_{\mathrm{r}},y_{\mathrm{r}})$. If for this choice of auxiliary random variables $U$ and $V$, we have 
\[ I(V,X_{\mathrm{r}};Y_{\mathrm{r}})-I(V,X_{\mathrm{r}};Y) = I(U;Y_{\mathrm{r}})-I(U;Y),
\]
then we are done. Otherwise, consider two cases:

\noindent{\textbf{Case 1}:} Suppose that
\begin{align} I(V,X_{\mathrm{r}};Y_{\mathrm{r}})-I(V,X_{\mathrm{r}};Y) < I(U;Y_{\mathrm{r}})-I(U;Y). \label{s--1}
\end{align}

The assumption 
\begin{align}
I(V,X_{\mathrm{r}};Y_{\mathrm{r}})-I(V,X_{\mathrm{r}};Y) < I(U;Y_{\mathrm{r}})-I(U;Y) \label{s0}
\end{align}
implies that \eqref{eqnUB2moda}  strictly  implies \eqref{eqnUB2adda}. Also, \eqref{eqnUB0add} strictly implies \eqref{eqnUB0moda}. Thus after removing the redundant inequalities in the statement of Proposition \ref{prop-no-equality}, we obtain
\begin{align}
    R&\leq I(X;Y,Y_{\mathrm{r}}|X_{\mathrm{r}})-I(U;Y|X_{\mathrm{r}},Y_{\mathrm{r}}), \label{snn1}
    \\
R&\leq I(X;Y_{\mathrm{r}}|X_{\mathrm{r}})+I(X;Y|V,X_{\mathrm{r}})-I(X;Y_{\mathrm{r}}|V,X_{\mathrm{r}}),\label{s1}\\
R & \leq I(U;Y)+I(X,X_\rmr;Y_{\mathrm{r}}|U)
     +
     I(X;Y|V,X_{\mathrm{r}})-I(X;Y_{\mathrm{r}}|V,X_{\mathrm{r}}). \label{s2}
 \end{align}

Observe that replacing $V$ with $Y_\rmr$ increases the right hand sides of \eqref{s1} and \eqref{s2}, since
\[
I(X;Y|V,X_{\mathrm{r}})-I(X;Y_{\mathrm{r}}|V,X_{\mathrm{r}}) \leq 
I(X;Y|V,Y_\rmr,X_{\mathrm{r}})
\leq I(X;Y|Y_\rmr,X_{\mathrm{r}}).
\]
Consider two cases based on whether the inequality \eqref{s--1} is still satisfied after replacing $V$ with $Y_\rmr$:
\begin{itemize}
    \item Suppose that
\begin{align} I(Y_\rmr,X_{\mathrm{r}};Y_{\mathrm{r}})-I(Y_\rmr,X_{\mathrm{r}};Y) \geq I(U;Y_{\mathrm{r}})-I(U;Y).\label{eqn:ssss}\end{align}
Consider a Bernoulli time-sharing random variable $Q\sim \mathrm{B}(\theta)$ independent of previously defined random variables and set $\tilde{V}=(V,Q)$ if $Q=0$ and $\tilde{V}=(Y_\rmr,Q)$ if $Q=1$. From \eqref{eqn:ssss} and \eqref{s0} one can find  $\theta\in[0,1]$ such that
\[
I(\tilde{V},X_{\mathrm{r}};Y_{\mathrm{r}})-I(\tilde{V},X_{\mathrm{r}};Y)= I(U;Y_{\mathrm{r}})-I(U;Y).
\]
One can directly verify that replacing $V$ with $\tilde{V}$ increases the right hand sides of \eqref{s1} and \eqref{s2}. The choices of $U$ and $\tilde{V}$ show that the rate $R$ is  less than or equal to the upper bound in Theorem \ref{thm:maintheorem}, and we would be done in this case.
    \item Suppose that
\begin{align} 
I(Y_\rmr,X_{\mathrm{r}};Y_{\mathrm{r}})-I(Y_\rmr,X_{\mathrm{r}};Y) < I(U;Y_{\mathrm{r}})-I(U;Y).\label{eqn:sssss2} 
\end{align}
Then if we replace $V$ with $Y_\rmr$, the rate $R$ would satisfy the inequalities in Proposition \ref{prop-no-equality}. Thus, without loss of generality we can assume that $V=Y_\rmr$.

Consider a Bernoulli time-sharing random variable $Q\sim \mathrm{B}(\theta)$ independent of previously defined random variables and set $\tilde{U}=(V,Q)$ if $Q=0$ and $\tilde{U}=(X_\rmr,Q)$ if $Q=1$. Observe that when $\theta=1$, $\tilde{U}=X_\rmr$. One can also directly verify that when $\theta=1$,
$$ I(Y_\rmr,X_\rmr;Y_\rmr) - I(Y_\rmr,X_\rmr;Y) \geq   I(\tilde{U};Y_\rmr) - I(\tilde{U};Y). $$
On the other hand, when $\theta=0$ we have $\tilde U=U$ and the above inequality holds in the other direction by \eqref{eqn:sssss2}. We deduce existence of  $\theta\in[0,1]$ such that 
$$ I(Y_\rmr,X_\rmr;Y_\rmr) - I(Y_\rmr,X_\rmr;Y) = I(\tilde{U};Y_\rmr) - I(\tilde{U};Y). $$
Observe that replacing $U$ with $\tilde{U}$ increases the right hand side of \eqref{snn1}. The inequality \eqref{s2} still holds for the rate $R$ after this replacement because right hand side of \eqref{s2} changes to
\begin{align*}&\quad
\theta\left[
    I(X_\rmr;Y) + I(X;Y_{\mathrm{r}}|X_\rmr)
     +
     I(X;Y|V,X_{\mathrm{r}})-I(X;Y_{\mathrm{r}}|V,X_{\mathrm{r}}) \right]
\\&\quad+(1-\theta)\left[
I(U;Y)+I(X,X_\rmr;Y_{\mathrm{r}}|U)
     +
     I(X;Y|V,X_{\mathrm{r}})-I(X;Y_{\mathrm{r}}|V,X_{\mathrm{r}}) 
\right]
\\&\geq 
\theta\left[
    I(X;Y_{\mathrm{r}}|X_{\mathrm{r}})+I(X;Y|V,X_{\mathrm{r}})-I(X;Y_{\mathrm{r}}|V,X_{\mathrm{r}}) \right]
\\&\quad+(1-\theta)\left[
I(U;Y)+I(X,X_\rmr;Y_{\mathrm{r}}|U)
     +
     I(X;Y|V,X_{\mathrm{r}})-I(X;Y_{\mathrm{r}}|V,X_{\mathrm{r}}) 
\right]
\\&\geq R
\end{align*}
where the last step follows from \eqref{s1}. The choices of $\tilde U$ and $V=Y_\rmr$ show that the rate $R$ is  less than or equal to the upper bound in Theorem \ref{thm:maintheorem}, and We would be done in this case.
\end{itemize}

\noindent{\textbf{Case 2}:} Suppose that
\begin{align} I(V,X_{\mathrm{r}};Y_{\mathrm{r}})-I(V,X_{\mathrm{r}};Y) > I(U;Y_{\mathrm{r}})-I(U;Y).\label{s01} \end{align}

In this case \eqref{eqnUB2adda}
  strictly  implies \eqref{eqnUB2moda}. Also, \eqref{eqnUB0moda} strictly implies \eqref{eqnUB0add}. Thus after removing the redundant inequalities in the statement of Proposition \ref{prop-no-equality}, we obtain
\begin{align}
    R&\leq I(X;Y,Y_{\mathrm{r}}|X_{\mathrm{r}})-I(U;Y|X_{\mathrm{r}},Y_{\mathrm{r}}), \label{sn0}
    \\
R&\leq I(X,X_{\mathrm{r}};Y)-I(V;Y_{\mathrm{r}}|X_{\mathrm{r}},X,Y),\label{s7}\\
R & \leq I(X;Y_{\mathrm{r}}|X_{\mathrm{r}})+I(X,X_{\mathrm{r}};Y|U)-I(X,X_{\mathrm{r}};Y_{\mathrm{r}}|U)-I(V;Y_{\mathrm{r}}|X,X_{\mathrm{r}},Y).\label{s8}
 \end{align}
 Consider two cases:
 \begin{itemize}
     \item Suppose that 
     \begin{align} I(U;Y_{\mathrm{r}})-I(U;Y) \geq I(X_{\mathrm{r}};Y_{\mathrm{r}})-I(X_{\mathrm{r}};Y).\label{eqns012}\end{align}
     Consider a Bernoulli time-sharing random variable $Q\sim \mathrm{B}(\theta)$ independent of previously defined random variables and set $\tilde{V}=(V,Q)$ if $Q=0$ and $\tilde{V}=Q$ if $Q=1$. From \eqref{eqns012} and \eqref{s01} one can find  $\theta\in[0,1]$ such that
     \[
I(\tilde{V},X_{\mathrm{r}};Y_{\mathrm{r}})-I(\tilde{V},X_{\mathrm{r}};Y)= I(U;Y_{\mathrm{r}})-I(U;Y).
\]
One can directly verify that replacing $V$ with $\tilde{V}$ increases the right hand sides of 
     \eqref{s7} and \eqref{s8}. The choices of $U$ and $\tilde{V}$ show that the rate $R$ is  less than or equal to the upper bound in Theorem \ref{thm:maintheorem}, and We would be done in this case.
     
     \item Suppose that 
     \begin{align} I(U;Y_{\mathrm{r}})-I(U;Y) < I(X_{\mathrm{r}};Y_{\mathrm{r}})-I(X_{\mathrm{r}};Y).\label{eqns013}\end{align}
     Replacing $V$ with a constant preserves \eqref{s01}. Moreover, this replacement does not decrease the right hand sides of 
     \eqref{s7} and \eqref{s8}. Therefore, without loss of generality we may assume that $V$ is a constant. Next, observe that \eqref{eqns013} implies that
      \[ I(X;Y_{\mathrm{r}}|X_{\mathrm{r}})+I(X,X_{\mathrm{r}};Y|U)-I(X,X_{\mathrm{r}};Y_{\mathrm{r}}|U) < I(X;Y_{\mathrm{r}}|X_{\mathrm{r}})+I(X;Y|X_{\mathrm{r}})-I(X;Y_{\mathrm{r}}|X_{\mathrm{r}}).  \]
Thus, replacing $U$ with $\tilde{U}=X_\rmr$ does not decrease the right hand side of \eqref{s8}. It also does not decrease the right hand side of \eqref{sn0}. Moreover, for the choice of $V=\emptyset$ and $\tilde{U}=X_\rmr$, we obtain the equality
\[ I(V,X_{\mathrm{r}};Y_{\mathrm{r}})-I(V,X_{\mathrm{r}};Y)= I(\tilde U;Y_{\mathrm{r}})-I(\tilde U;Y).
\]
This choice shows that the rate $R$ is  less than or equal to the upper bound in Theorem \ref{thm:maintheorem}, and we would be done in this case.\end{itemize}
 
 This completes the proof.
\end{proof}
\subsection{Evaluation of Corollary \ref{cor:weakmainOB} for the Gaussian relay channel}
\label{appndxB}
\begin{lemma}
\label{le:gauoptcor}
For the evaluation of Corollary \ref{cor:weakmainOB} for the Gaussian relay channel, we can assume that the random variables are jointly Gaussian satisfying the requisite Markov Chains.
\end{lemma}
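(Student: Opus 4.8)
The plan is to prove this Gaussian-optimality statement with the doubling, rotation-invariance and sub-additivity machinery of Geng--Nair~\cite{gen14}, the same toolkit already used in the rest of the proof of Theorem~\ref{thm:cutsubopt}. Here ``satisfying the requisite Markov chains'' means that $(X,X_\rmr)$ is jointly Gaussian, that $Y_\rmr=g_{12}X+Z_\rmr$ and $Y=g_{13}X+g_{23}X_\rmr+Z$ as in~\eqref{eqnGRC}, and that $V$ is jointly Gaussian with $(X,X_\rmr,Y_\rmr)$ and obeys $V\to(X,X_\rmr,Y_\rmr)\to Y$, i.e.\ $Z$ is independent of $V$.

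First I would record that, by the cardinality bound $|\Vc|\le|\Xc||\Xc_\rmr||\Yc_\rmr|+2$ in Corollary~\ref{cor:weakmainOB} and the second-moment constraints $\E X^2\le P$, $\E X_\rmr^2\le P$, the feasible set of distributions with the Gaussian channel appended is compact and the mutual-information functionals are continuous, so a maximizer exists; by the same time-sharing argument as in Lemma~\ref{le:auxgau}, the constraint may be taken to hold with equality at a maximizer. Passing to a Lagrange multiplier $\lambda\ge0$ for the constraint and $\mu,\mu_\rmr\ge0$ for the two power constraints, it then suffices to show that, for each fixed $\lambda,\mu,\mu_\rmr\ge0$, the unconstrained functional
\[
L(X,X_\rmr,V)=I(X;Y|V,X_\rmr)-I(X;Y_\rmr|V,X_\rmr)+I(X;Y_\rmr|X_\rmr)-\lambda\bigl(I(V,X_\rmr;Y_\rmr)-I(V,X_\rmr;Y)-c(X,X_\rmr)\bigr)-\mu\,\E X^2-\mu_\rmr\,\E X_\rmr^2
\]
is maximized by jointly Gaussian $(X,X_\rmr,V)$ with $V\to(X,X_\rmr,Y_\rmr)\to Y$, where $c(X,X_\rmr)=\min\bigl\{I(X_\rmr;Y_\rmr),\ \max_{p(u|x,x_\rmr)}(I(U;Y_\rmr)-I(U;Y))\bigr\}$; tightness of the relaxation follows either from the constraint being active or, as in~\cite{gen14}, from finite-dimensionality of the resulting Gaussian problem.

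For the doubling step I would take two i.i.d.\ copies $(X_j,X_{\rmr j},V_j,Y_j,Y_{\rmr j})$, $j=1,2$, of a maximizer and rotate, setting $X_\pm=(X_1\pm X_2)/\sqrt2$ and, correspondingly, $X_{\rmr\pm}$, $Y_\pm$, $Y_{\rmr\pm}$, with $Z_\pm=(Z_1\pm Z_2)/\sqrt2$ and $Z_{\rmr\pm}=(Z_{\rmr1}\pm Z_{\rmr2})/\sqrt2$ again standard Gaussian and independent of everything. Rotation invariance of the Gaussian channel makes $(X_+,X_{\rmr+},Y_+,Y_{\rmr+})$ and $(X_-,X_{\rmr-},Y_-,Y_{\rmr-})$ each arise from the original channel law, with the total power preserved, and the joint independence $(Z_1,Z_2,Z_{\rmr1},Z_{\rmr2})\perp(V_1,V_2,X_1,X_2,X_{\rmr1},X_{\rmr2},Y_{\rmr1},Y_{\rmr2})$ transfers to give $\tilde V_\pm\to(X_\pm,X_{\rmr\pm},Y_{\rmr\pm})\to Y_\pm$ for the substitute auxiliaries $\tilde V_+=(V_1,V_2,X_-,X_{\rmr-})$ and $\tilde V_-=(V_1,V_2,X_+,X_{\rmr+})$. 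Using the invertible change of variables $(X_1,X_2)\leftrightarrow(X_+,X_-)$ inside conditional entropies, submodularity of entropy, and these Markov chains, I would establish the doubling inequality
\[
L(X_1,X_{\rmr1},V_1)+L(X_2,X_{\rmr2},V_2)\ \le\ L(X_+,X_{\rmr+},\tilde V_+)+L(X_-,X_{\rmr-},\tilde V_-).
\]
At a maximizer the left side is twice the optimum and each right-hand term is at most the optimum, so both rotated configurations are themselves maximizers; iterating the rotation (whose iterates have $(X,X_\rmr)$-marginals converging to a Gaussian by the central limit theorem) together with upper semicontinuity of $L$ yields a maximizer with jointly Gaussian $(X,X_\rmr,Y_\rmr,Y)$, and a conditional entropy-power-inequality argument (or a further rotation applied to $V$ alone) then makes $V$ conditionally Gaussian, establishing the claim.

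I expect the main obstacle to be the term $c(X,X_\rmr)$ inside $L$: since it enters with a positive sign, the doubling inequality forces $c(X_1,X_{\rmr1})+c(X_2,X_{\rmr2})\le c(X_+,X_{\rmr+})+c(X_-,X_{\rmr-})$. On the branch where $I(X_\rmr;Y_\rmr)$ is the active term this is immediate, as $I(X_\rmr;Y_\rmr)$ is preserved under the rotation of i.i.d.\ copies. On the branch where the inner maximum $\max_{p(u|x,x_\rmr)}(I(U;Y_\rmr)-I(U;Y))$ is active, one writes it as $I(X,X_\rmr;Y_\rmr)-I(X,X_\rmr;Y)+\max_{p(u|x,x_\rmr)}\bigl(I(X,X_\rmr;Y|U)-I(X,X_\rmr;Y_\rmr|U)\bigr)$ and argues that the $U$-maximum itself obeys a doubling inequality (the same submodularity argument applied to the inner auxiliary), so that $\max_{p(u|x,x_\rmr)}(I(U;Y_\rmr)-I(U;Y))$ is super-additive along the rotation; alternatively, since by~\cite[Remark~13]{ITpaper} this inner maximum has an explicit Gaussian form, one may split the entire argument into the two cases determined by which term of the minimum is active and run the doubling on each. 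Pinning down this behaviour of $c$ under the rotation, and checking that $\tilde V_\pm$ keep the relaxed problem consistent, is essentially the whole content; the remaining steps are routine manipulation of mutual-information identities.
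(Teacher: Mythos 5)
Your overall strategy (doubling plus rotation in the style of Geng--Nair) is the same as the paper's, but the central step --- the doubling inequality for the functional as you have written it --- does not go through, and the device the paper uses to make it go through is absent from your proposal. The objective contains the difference $I(X;Y|V,X_\rmr)-I(X;Y_\rmr|V,X_\rmr)$. After doubling, rotating, and chain-ruling (the $Y$-terms in one order and the $Y_\rmr$-terms in the opposite order, which is what forces the substitute auxiliaries to contain the \emph{other branch's channel outputs}, $V_+=(V_1,V_2,X_{\rmr,-},Y_{\rmr,\eps,-})$ and $V_-=(V_1,V_2,X_{\rmr,+},Y_{\eps,+})$, not $(X_-,X_{\rmr,-})$ as you propose), one is left with residual cross terms of the form $I(X_-;Y_{\rmr,+}|V_+,X_{\rmr,+},X_+)-I(X_-;Y_{+}|V_+,X_{\rmr,+},X_+)$, which have no definite sign; submodularity alone does not dispose of them. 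The paper handles this by first passing to a perturbed problem in which $Y$ and $Y_\rmr$ are both augmented by a \emph{common} component $\eps Z_\rmr+W$ (together with small $-\eps h(\cdot)$ regularizers in the objective): with the shared component present, the two cross terms collapse into the single nonnegative quantity $I(X_-;Y_{\rmr,+}|V_+,X_{\rmr,+},X_+,\eps Z_{\rmr,+}+W_+)$, and only afterwards does one let $\eps\to0$. Your proposal never introduces this perturbation, so the inequality you describe as ``routine manipulation'' is precisely the non-routine part, and I do not see how to establish it in your setup.

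There are two further divergences worth flagging. First, your endgame (iterate the rotation, invoke the CLT for $(X,X_\rmr)$, then fix up $V$ by a conditional EPI or a second rotation) is a different and considerably more delicate route than the paper's: the paper reads off, from the equality conditions at a maximizer of the perturbed problem, that $I(X_{\rmr,+},Y_{\rmr,+};Y_{\rmr,-}|X_{\rmr,-})=0$ and $I(Y_{\rmr,\eps,-};Y_{\eps,+}|V_1,V_2,X_{\rmr,-},X_{\rmr,+})=0$, and then applies the Skitovic--Darmois characterization to obtain Gaussianity of $P_{X|X_\rmr}$ \emph{and} of $P_{X,Y_\rmr|X_\rmr,V}$ (with covariance not depending on $V$) in one shot, with no iteration, no limit of ever-larger auxiliaries, and no separate argument for $V$. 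Second, your treatment of the constraint is shaky: replacing the constrained problem by a Lagrangian is not justified for this non-convex optimization, and your claim that $I(X_\rmr;Y_\rmr)$ is preserved under the rotation of i.i.d.\ copies is false in general (since $Y_\rmr=g_{12}X+Z_\rmr$, the quantity $I(X_{\rmr,\pm};Y_{\rmr,\pm})$ depends on the joint law of $(X_\pm,X_{\rmr,\pm})$, which the rotation changes unless it is already Gaussian). The paper instead keeps the constraint explicitly and verifies that the rotated and time-shared variables remain feasible.
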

\begin{proof}
We follow the ideas in \cite{gen14} and further focus only on the key steps that are unique here.
Let us consider a class of problems in which we replace $Y_\rmr$ with $\begin{bmatrix} Y_\rmr \\ \eps Z_\rmr + W  \end{bmatrix}$
and $Y$ by $\begin{bmatrix} Y \\ \eps Z_\rmr + W \end{bmatrix}$, where $W \sim \Nc(0,1)$ is independent of previously defined random variables. Let us call these random variables $Y_{\rmr,\eps}$ and $Y_\eps$, respectively. The main proof step is to show that for the optimization problem described below, Gaussian distributions are the maximizers for any $\eps > 0$. Following this step, one can move to the limit $\eps=0$ (the original problem) by using the power constraints, the additive Gaussian noise model, and other arguments found in the Appendix of \cite{gen14}. These are rather standard in analysis and hence the details are omitted.

Consider the following optimization problem: for $\eps>0$, we wish to compute the supremum of $R_\eps$ satisfying the following:
\begin{align*}
R_\eps&\leq
I(X;Y_{\mathrm{r}}|X_{\mathrm{r}})
-I(X;Y_{\mathrm{r},\eps}|V,X_{\mathrm{r}})+I(X;Y_\eps|V,X_{\mathrm{r}}) - \eps h(Y_\eps|V,X_\rmr),\\
R_\eps&\leq I(X,X_{\mathrm{r}};Y)-h(Y_{\mathrm{r}}|X_{\mathrm{r}},X,Y) + h(Y_{\mathrm{r},\eps}|X_{\mathrm{r}},X,Y,V) \\
& \qquad + \eps \big( I(X;Y_{\mathrm{r}}|X_{\mathrm{r}})
-I(X;Y_{\mathrm{r},\eps}|V,X_{\mathrm{r}})+I(X;Y_\eps|V,X_{\mathrm{r}}) - \eps h(Y_\eps|V,X_\rmr)  \big)
 \end{align*}
for some $P_{X,X_{\mathrm{r}}}P_{Y,Y_{\mathrm{r}}|X,X_{\mathrm{r}}}P_{V|X,X_{\mathrm{r}},Z_{\mathrm{r}}}$ such that $\E(X^2)\le P$, $\E(X_\mathrm{r}^2)\le P$,
\begin{align*}
    h(Y_\eps|V,X_\rmr)-h(Y_{\rmr,\eps}|V,X_\rmr)&\leq \max_{P_{U|X,X_{\mathrm{r}}}}\left[h(Y_\eps|U) - h(Y_{\rmr,\eps}|U) \right].
\end{align*}

Suppose $P^\eps$ achieves the supremum of $R_\eps$ (the existence of maximizer follows from reasonably standard arguments - please refer to the Appendix of \cite{gen14} for an illustration of the ideas involved), and under $P^\eps$ let the values of the right-hand-sides of the constraints be $A,B$. Then by taking the usual doubling followed by rotation we obtain
\begin{align*}
    A &= \frac 12 \big( I(X_+;Y_{\mathrm{r},+}|X_{\mathrm{r},+})
-I(X_+;Y_{\mathrm{r},\eps,+}|V_+,X_{\mathrm{r},+})+I(X_+;Y_{\eps,+}|V_+,X_{\mathrm{r},+}) - \eps h(Y_{\eps,+}|V_+,X_{\mathrm{r},+}),\\
& \quad + I(X_-;Y_{\mathrm{r},-}|X_{\mathrm{r},-})
-I(X_-;Y_{\mathrm{r},\eps,-}|V_-,X_{\mathrm{r},-})+I(X_-;Y_{\eps,-}|V_-,X_{\mathrm{r},-}) - \eps h(Y_{\eps,-}|V_-,X_{\mathrm{r},-})\big)\\
& \quad - \frac 12 \big( I(X_{\mathrm{r},-};Y_{\mathrm{r},+}|X_{\mathrm{r},+}) + {\color{olive}I(X_-;Y_{\mathrm{r},\eps,+}|V_+,X_{\mathrm{r},+},X_+)} - {\color{blue}I(X_-;Y_{\eps,+}|V_+,X_{\rmr +},X_+) } +  I(X_{\mathrm{r},+},Y_{\mathrm{r},+};Y_{\mathrm{r},-}|X_{\mathrm{r},-})    \\
& \qquad + {\color{olive}I(X_+;Y_{\mathrm{r},\eps,-}|V_-,X_{\mathrm{r},-},X_-)} - {\color{blue} I(X_+;Y_{\eps,-}|V_-,X_{\rmr -}, X_-) } + \eps I( Y_{\mathrm{r},\eps,-}; Y_{\eps,+} |V_1,V_2,X_{\mathrm{r},-},X_{\mathrm{r},+})\big),
\end{align*}
where $V_+=(V_1,V_2,X_{\mathrm{r},-},Y_{\mathrm{r},\eps,-}  ), V_-=(V_1,V_2,X_{\mathrm{r},+},Y_{\eps,+}  ) $. 
Observe that
\begin{align*}
& {\color{olive}I(X_-;Y_{\mathrm{r},\eps,+}|V_+,X_{\mathrm{r},+},X_+)} - {\color{blue}I(X_-;Y_{\eps,+}|V_+,X_{\rmr +},X_+) }   
 = I(X_-; Y_{r,+}|V_+,X_{\mathrm{r},+},X_+,\eps Z_{\rmr_+} + W_+).  
\end{align*}
A similar equality also holds for the second pair of terms colored in olive and blue.

Now let $Q$ be a uniformly distributed binary random variable taking values $+$ and $-$ each with probability $0.5$, and define $V_\dagger = (V_Q,Q)$, $X_\dagger= X_Q$ and other  variables similar to that of $X$. Then we obtain 
\begin{align*}
    A &\leq   I(X_\dagger;Y_{\mathrm{r},\dagger}|X_{\mathrm{r},\dagger})
-I(X_\dagger;Y_{\mathrm{r},\eps,\dagger}|V_\dagger,X_{\mathrm{r},\dagger})+I(X_\dagger;Y_{\eps,\dagger}|V_+,X_{\mathrm{r},\dagger}) - \eps h(Y_{\eps,\dagger}|V_\dagger,X_{\mathrm{r},\dagger}),\\
& \quad - \frac 12 \big( I(X_{\mathrm{r},-};Y_{\mathrm{r},+}|X_{\mathrm{r},+})  +  I(X_{\mathrm{r},+},Y_{\mathrm{r},+};Y_{\mathrm{r},-}|X_{\mathrm{r},-})     + \eps I( Y_{\mathrm{r},\eps,-}; Y_{\eps,+} |V_1,V_2,X_{\mathrm{r},-},X_{\mathrm{r},+})\big)
\end{align*}

A similar inequality for B can also be written. Further observe that the dagger variables satisfy the constraint as well. This implies that for $P^\eps$ to be optimal, it is necessary (from the Skitovic-Darmois characterization, see \cite{gen14} for details) that
\[
I(X_{\mathrm{r},+},Y_{\mathrm{r},+};Y_{\mathrm{r},-}|X_{\mathrm{r},-})=0, \quad I( Y_{\mathrm{r},\eps,-}; Y_{\eps,+} |V_1,V_2,X_{\mathrm{r},-},X_{\mathrm{r},+})=0,
\]
implying Gaussianity of the conditional distributions $P_{X|X_\rmr}$ and $P_{X,Y_\rmr|X_\rmr,V}$; and further that the covariance of $P_{X,Y_\rmr|X_\rmr,V}$ does not depend on the conditioned variables. 
\end{proof}

\subsection{Evaluation of Proposition \ref{prop:mainprirel} for the Gaussian product-form relay channel}
\label{appndxC}

\begin{lemma}
\label{le:gauoptcor2}
To evaluate Proposition \ref{prop:mainprirel}  for the Gaussian product-form relay channel, we can assume that the random variables are jointly Gaussian satisfying the requisite Markov Chains.
\label{lem:gausuffcor}
\end{lemma}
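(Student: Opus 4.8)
The plan is to follow the standard recipe of \cite{gen14}, as was done in Lemma~\ref{le:gauoptcor}, adapting it to the simpler structure of the primitive relay channel (where there are no inputs $X_\rmr$ to carry around, and the only auxiliary is $V$ with the Markov chain $V\to(X,Y_\rmr)\to Y_1$). First I would regularize the problem: replace $Y_\rmr$ and $Y_1$ by $Y_{\rmr,\eps}=(Y_\rmr,\eps W_\rmr+W)$ and $Y_{1,\eps}=(Y_1,\eps W_\rmr+W)$ with $W\sim\Nc(0,1)$ fresh, so that the objective picks up a strictly concave penalty term $-\eps h(Y_{1,\eps}\,|\,V)$ (and analogously for the constrained form in \eqref{Prop1neqn2}), guaranteeing existence of a maximizer $P^\eps$ with a finite-variance $V$ for every $\eps>0$. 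The reduction $\eps\downarrow0$ back to Proposition~\ref{prop:mainprirel} is by the now-routine continuity/power-constraint arguments in the appendix of \cite{gen14}, which I would invoke rather than reproduce.

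Next I would run the \emph{doubling trick}: take two i.i.d.\ copies $(V_1,X_1,Y_{\rmr,1},Y_{1,1})$ and $(V_2,X_2,Y_{\rmr,2},Y_{1,2})$ of the maximizing distribution, form the rotated variables $X_\pm=(X_1\pm X_2)/\sqrt2$ and similarly for $Y_\rmr,Y_1$ (the Gaussian noises rotate into fresh independent Gaussians), and write the two constraint right-hand-sides $A,B$ under the product distribution as $\tfrac12(A_1+A_2)$. I then re-expand these in terms of the rotated coordinates with the auxiliary identifications $V_+=(V_1,V_2,Y_{\rmr,-,\eps})$, $V_-=(V_1,V_2,Y_{1,+,\eps})$ (mirroring the choice in Lemma~\ref{le:gauoptcor} with $X_\rmr$ dropped), so that after collecting terms $A$ equals a single-letter expression in the ``$\dagger$'' variables (time-shared over $Q$ uniform on $\{+,-\}$) \emph{minus} a sum of nonnegative mutual-information slack terms such as $I(Y_{\rmr,-};Y_{1,+}\,|\,V_1,V_2)$ and similar cross terms; the constraint is likewise preserved under the $\dagger$ identification. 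Optimality of $P^\eps$ then forces every such slack term to vanish.

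Finally I would translate ``slack terms vanish'' into Gaussianity. Each vanishing conditional mutual information is of the form $I(U;L_1+\text{noise}\,|\,\text{rest})=0$ where $L_1$ is an affine function of the rotated variables and the noise is an independent Gaussian; by the Skitovic--Darmois characterization (exactly as invoked in \cite{gen14} and in the proof of Lemma~\ref{le:gauoptcor}), simultaneous independence of two linear forms of the jointly-defined variables forces those variables to be jointly Gaussian, and moreover forces the conditional covariance $K_{X,Y_\rmr\,|\,V}$ to be non-random, i.e.\ $P_{X,Y_\rmr\,|\,V}$ is a Gaussian whose covariance does not depend on the value of $V$. Combined with the fact that the objective and constraint in \eqref{propeq0.5o}--\eqref{propconstraint} depend only on the first and second moments of $(V,X,Y_\rmr,Y_1)$, this shows a jointly Gaussian $(V,X,Y_1,Y_\rmr)$ respecting the Markov chain $V\to(X,Y_\rmr)\to Y_1$ is optimal. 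The main obstacle, as in \cite{gen14}, is purely bookkeeping: making sure the bouquet of cross mutual-information terms produced by the rotation is genuinely nonnegative and that the $\dagger$-variables satisfy the \eqref{propconstraint} constraint with no loss — once those are in hand the Skitovic--Darmois step is mechanical. Since these manipulations are identical in spirit to those already carried out in Lemma~\ref{le:gauoptcor}, I would present only the differences and refer to \cite{gen14} for the standard pieces.
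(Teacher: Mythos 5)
Your proposal is correct and follows essentially the same route as the paper's proof in Appendix \ref{appndxC}: the same $\eps$-perturbation of $Y_\rmr$ and $Y_1$ with the penalty $-\eps h(Y_{1,\eps}|V)$, the same doubling-and-rotation with the identifications $V_+=(V_1,V_2,Y_{\rmr,\eps,-})$ and $V_-=(V_1,V_2,Y_{1,\eps,+})$, the same verification that the time-shared variables satisfy \eqref{propconstraint}, and the same Skitovic--Darmois conclusion that $P_X$ and $P_{X,Y_\rmr|V}$ are Gaussian with $V$-independent conditional covariance, followed by the limit $\eps\downarrow 0$. The only cosmetic difference is that you track two rate constraints $A,B$ (as in the unconstrained form), whereas the paper works with the single constrained form; this changes nothing of substance.
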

\begin{proof}
As in the proof of Lemma \ref{le:gauoptcor} we follow the ideas in \cite{gen14} and the steps are very similar to those of the proof of the previous lemma. Consider a class of problems in which we replace $Y_\rmr$ with $\begin{bmatrix} Y_\rmr \\ \eps Z_\rmr + W  \end{bmatrix}$
and $Y_1$ by $\begin{bmatrix} Y_1 \\ \eps Z_\rmr + W \end{bmatrix}$, where $W \sim \Nc(0,1)$ is independent of previously defined random variables. Let us call these random variables $Y_{\rmr,\eps}$ and $Y_{1,\eps}$, respectively. For $\eps>0$, we wish to compute the supremum of $R_\eps$ satisfying the following:
\begin{align*}
R_\eps&\leq I(X;Y_{\mathrm{r}})+I( X;Y_{1,\eps}|V)-I(X;Y_{\mathrm{r},\eps}| V) - \eps h(Y_{1,\eps}|V)
 \end{align*}
for some $P_{X}P_{Y_1,Y_{\mathrm{r}}|X}P_{V|X,Z_{\mathrm{r}}}$ 
such that
\begin{align*}
 I(V;Y_{\mathrm{r},\eps})-I(V;Y_{1,\eps})&\leq C_0,
\end{align*}
and identical power constraints.
Suppose $P^\eps$ achieves the supremeum of $R_\eps$ (existence justified using routine arguments), and under $P^\eps$ let the value of the right-hand-side be  $A$. Then by taking the usual doubling followed by rotation we obtain
    \begin{align*}
    A &= \frac 12 \Big( I(X_+;Y_{\mathrm{r},+})
-I(X_+;Y_{\mathrm{r},\eps,+}|V_+)+I(X_+;Y_{1,\eps,+}|V_+) - \eps h(Y_{1,\eps,+}|V_+),\\
& \quad + I(X_-;Y_{\mathrm{r},-})
-I(X_-;Y_{\mathrm{r},\eps,-}|V_-)+I(X_-;Y_{1,\eps,-}|V_-) - \eps h(Y_{1,\eps,-}|V_-)\Big)\\
& \quad - \frac 12 \Big(  {\color{olive}I(X_-;Y_{\mathrm{r},\eps,+}|V_+,X_+)} - {\color{blue}I(X_-;Y_{1,\eps,+}|V_+,X_+) } +  I(Y_{\mathrm{r},+};Y_{\mathrm{r},-})    \\
& \qquad + {\color{olive}I(X_+;Y_{\mathrm{r},\eps,-}|V_-,X_-)} - {\color{blue} I(X_+;Y_{1,\eps,-}|V_-, X_-) } + \eps I( Y_{\mathrm{r},\eps,-}; Y_{1,\eps,+} |V_1,V_2)\Big)
\end{align*}
where $V_+=(V_1,V_2,Y_{\mathrm{r},\eps,-}  ), V_-=(V_1,V_2,Y_{1,\eps,+}  ) $. 
Observe that
\begin{align*}
& {\color{olive}I(X_-;Y_{\mathrm{r},\eps,+}|V_+,X_+)} - {\color{blue}I(X_-;Y_{1,\eps,+}|V_+,X_+) }   
 = I(X_-; Y_{r,+}|V_+,X_+,\eps Z_{\rmr_+} + W_+).  
\end{align*}
As before, let $Q$ be a uniformly distributed binary random variable taking values $+$ and $-$ each with probability $0.5$, and define $V_\dagger = (V_Q,Q)$, $X_\dagger= X_Q$ and other  variables similar to that of $X$. Using this we obtain that
\begin{align*}
    A &\leq   I(X_\dagger;Y_{\mathrm{r},\dagger})
-I(X_\dagger;Y_{\mathrm{r},\eps,\dagger}|V_\dagger)+I(X_\dagger;Y_{1,\eps,\dagger}|V_+) - \eps h(Y_{\eps,\dagger}|V_\dagger),\\
& \quad - \frac 12 \Big(  I(Y_{\mathrm{r},+};Y_{\mathrm{r},-})    
+ \eps I( Y_{\mathrm{r},\eps,-}; Y_{\eps,+} |V_1,V_2)\Big).
\end{align*}
Further observe that the dagger variables satisfy 
$$ I(V_\dagger;Y_{\mathrm{r},\eps,\dagger})-I(V_\dagger;Y_{1,\eps,\dagger})\leq C_0$$
as well. Since the dagger variables are also a feasible choice, for $P^\eps$ to be optimal, it is necessary (from the Skitovic-Darmois characterization, see \cite{gen14} for details) that $ I(Y_{\mathrm{r},+};Y_{\mathrm{r},-}) =0 $ and $I( Y_{\mathrm{r},\eps,-}; Y_{\eps,+} |V_1,V_2)=0 $, implying that $P_X$ and $P_{X,Y_\rmr|V}$ are both Gaussians and  that the covariance of the latter does not depend on $V$.
\end{proof}

\begin{lemma}\label{NewLemma}
In evaluating  Proposition 
\ref{prop:mainprirel}
in the space of jointly Gaussian distributions, the covariance of $K_{X,Y_\rmr|V}$ is unique for the maximizing distribution and equals
$$
K_{X,Y_\rmr|{V}}=
\begin{bmatrix} K^*_1 & \rho^* \sqrt{K^*_1K^*_2}\\
\rho^* \sqrt{K^*_1K^*_2}& K^*_2
\end{bmatrix}
$$
where{
\begin{align}K_1^*&=\begin{cases}-N_1\frac{2^{2C_0}(P+N_{\rmr})-(P+N_{\rmr})}{2^{2C_0}(P+N_1)-(P+N_{\rmr})}+P\frac{2^{2C_0}(N_1-N_{\rmr})}{2^{2C_0}(P+N_1)-(P+N_{\rmr})}& S_{12}\geq S_{13}+S_{23}+S_{13}S_{23},\\
    \\
    P\left(1-\frac{P(N_1+P)^2(2^{2C_0}-1)}{
(P+N_{\rmr})(2^{2C_0}-1)((N_1+P)^2-N_1^22^{-2C_0})+(N_{\rmr}-N_1)P^2}\right)&\text{otherwise}.
    \end{cases}
\\
K^*_2&=\frac{(K^*_1+N_1)(P+N_{\rmr})}{(P+N_1)2^{2C_0}}\\
    \rho^*&= \frac{P-\sqrt{(P-K^*_1)(P+N_{\rmr}-K^*_2)}}{\sqrt{K^*_1K^*_2}}.
\end{align}}
\end{lemma}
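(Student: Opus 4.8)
The plan is to treat this as a three–variable optimization that, by two structural properties of any maximizer, reduces to a one–dimensional calculus problem. After Lemma~\ref{lem:gausuffcor} restricts us to jointly Gaussian $(V,X,Y_1,Y_\rmr)$, Proposition~\ref{prop:mainprirel} amounts (as in the proof of Proposition~\ref{Proposition5}) to maximizing $\frac12\log\frac{P+N_\rmr}{N_\rmr}+\frac12\log\frac{K_1+N_1}{N_1}+\frac12\log(1-\rho^2)$ over $0\le K_1\le P$, $0\le K_2\le P+N_\rmr$, $\rho\in[-1,1]$ subject to the semidefinite constraint $(P-K_1)(P+N_\rmr-K_2)\ge (P-\rho\sqrt{K_1K_2})^2$, i.e.\ $K_{X,Y_\rmr|V}\preceq K_{X,Y_\rmr}$, and the rate constraint $\frac12\log\frac{(P+N_\rmr)(K_1+N_1)}{(P+N_1)K_2}\le C_0$.

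First I would reduce to one free variable. \emph{(i) The rate constraint~\eqref{propconstraint} is tight at any maximizer.} If it were slack, one can append to $V$ a slightly noisy observation of $Y_\rmr$ (replace $V$ by $(V,Y_\rmr+\sqrt t\,W)$ with $W$ an independent standard Gaussian and $t>0$ chosen so the constraint becomes an equality); this keeps $V\to(X,Y_\rmr)\to Y_1$ and, since it conveys extra information that $X$ shares with $Y_\rmr$, strictly raises $I(X;Y_1|V)-I(X;Y_\rmr|V)$ — a contradiction. Tightness pins $K_2=\frac{(K_1+N_1)(P+N_\rmr)}{(P+N_1)2^{2C_0}}$, an increasing function of $K_1$. \emph{(ii) The semidefinite constraint is tight (so $K_{X,Y_\rmr}-K_{X,Y_\rmr|V}$ has rank one) and $\rho\ge 0$.} The objective is strictly decreasing in $\rho^2$ and the rate constraint is $\rho$-free, so for fixed $(K_1,K_2)$ one should take $\rho$ as small in magnitude as feasibility allows; if that optimal $\rho$ were $0$ with the semidefinite inequality still strict, one could increase $K_1$ — and $K_2$ along the now-tight rate constraint — until that inequality becomes an equality, strictly increasing $\frac12\log(K_1+N_1)$. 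In all cases the maximizer satisfies $(P-K_1)(P+N_\rmr-K_2)=(P-\rho\sqrt{K_1K_2})^2$ with $\sqrt{(P-K_1)(P+N_\rmr-K_2)}\le P$, hence $\rho=\tfrac{P-\sqrt{(P-K_1)(P+N_\rmr-K_2)}}{\sqrt{K_1K_2}}$. Only $K_1$, ranging over the closed interval cut out by $K_1\in[0,P]$ and $\rho\in[0,1]$, remains free.

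Next I would carry out the one–dimensional optimization. Writing $s=P-K_1$ and $\phi(s)=(P+N_\rmr)\tfrac{(P+N_1)(2^{2C_0}-1)+s}{(P+N_1)2^{2C_0}}$ (so $P+N_\rmr-K_2=\phi(s)$ by (i)), the rank-one decomposition $K_{X,Y_\rmr}-K_{X,Y_\rmr|V}=\mathbf w\mathbf w^\top$ with $\mathbf w=(\sqrt s,\sqrt{\phi(s)})$ and the matrix determinant lemma give $\det K_{X,Y_\rmr|V}=(\det K_{X,Y_\rmr})(1-\mathbf w^\top K_{X,Y_\rmr}^{-1}\mathbf w)=PN_\rmr-(P+N_\rmr)s+2P\sqrt{s\,\phi(s)}-P\phi(s)$, so the objective becomes $\frac12\log\frac{(P+N_1)2^{2C_0}\bigl(PN_\rmr-(P+N_\rmr)s+2P\sqrt{s\,\phi(s)}-P\phi(s)\bigr)}{N_1N_\rmr(P-s)}$. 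Differentiating, isolating the single square root and squaring yields a polynomial equation in $s$ whose root in the relevant window gives the ``otherwise'' value of $K_1^\ast$. When $S_{12}\ge S_{13}+S_{23}+S_{13}S_{23}$ that interior stationary point is no longer the global maximizer, and the maximum is attained at the endpoint where the optimal auxiliary degenerates — conditionally independent of $Y_\rmr$ given $X$, i.e.\ $K_2=K_1+N_\rmr$; combining this with the tight rate constraint gives the first expression for $K_1^\ast$, and a direct check confirms the two expressions agree at the threshold, so the bound is continuous there. In either regime $K_2^\ast$ and $\rho^\ast$ follow from the two tight constraints; and since every maximizer obeys (i)--(ii) while the reduced one–dimensional problem has a unique maximizer, $K_{X,Y_\rmr|V}$ is unique. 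Substituting $(K_1^\ast,K_2^\ast,\rho^\ast)$ into~\eqref{eqnValPNNr} and replacing $2^{2C_0}=1+S_{23}$, $S_{12}=P/N_\rmr$, $S_{13}=P/N_1$ recovers the bound of Proposition~\ref{Proposition5}.

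The step I expect to be the main obstacle is this last one. Because the cross-covariance enters through $\rho\sqrt{K_1K_2}$, the stationarity condition carries a square root, so after squaring one must certify that the intended root — and, in the complementary regime, the degenerate ``$V$ a function of $X$'' endpoint — is actually the global maximizer over the feasibility window, which means carefully tracking $\rho\in[0,1]$, $K_1\in[0,P]$ and the sign conditions introduced by squaring. A more localized but still delicate point is reduction (ii): one must exclude the scenario in which a maximizer has $\rho=0$ while the semidefinite constraint is strictly slack, which is exactly what the monotonicity-in-$K_1$ argument is designed to handle.
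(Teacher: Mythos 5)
Your reductions (i) and (ii) are essentially the paper's own: the rate constraint \eqref{propconstraint} is forced tight by perturbing $V$ toward $Y_\rmr$, any maximizer has $\rho\ge 0$, and the semidefinite constraint $K_{X,Y_\rmr|V}\preceq K_{X,Y_\rmr}$ is forced tight, leaving a one-dimensional problem in $K_1$ whose ``otherwise'' value is obtained, exactly as in the paper, by differentiating, isolating the square root and squaring. The genuine gap is your treatment of the regime $S_{12}\ge S_{13}+S_{23}+S_{13}S_{23}$. You claim the maximum there is attained ``at the endpoint where the optimal auxiliary degenerates,'' i.e.\ at $K_2=K_1+N_\rmr$. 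But that point is not an endpoint of the reduced feasible set: along the curve cut out by the two tight constraints, none of $K_1\in[0,P]$, $K_2\le P+N_\rmr$, or $0\le\rho\le 1$ is active at $K_2=K_1+N_\rmr$ (there $\rho=\sqrt{K_1/(K_1+N_\rmr)}$, strictly between $0$ and $1$ when $K_1^*>0$). The Markov chain $V\to X\to Y_\rmr$ picks out an interior slice of the admissible $p(v|x,y_\rmr)$, not a face of it, so your outline supplies no reason why the objective should peak there, and the step ``the interior stationary point is no longer the global maximizer, hence go to the endpoint'' does not go through.

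What actually happens --- and what the paper's proof uses --- is that the squared stationarity equation is a quadratic in $K_1$ with two roots $K_{1a}$ and $K_{1b}$; the root $K_{1a}$ coincides algebraically with the point $K_2=K_1+N_\rmr$ combined with the tight rate constraint (this is exactly the first displayed expression for $K_1^*$), and it is a genuine \emph{interior critical point} of the reduced objective, as one can confirm numerically (e.g.\ $P=N_1=1$, $N_\rmr=1/4$, $2^{2C_0}=2$ gives $K_1^*=1/11$ with the objective strictly smaller on both sides). The paper then checks that exactly one of the two roots lies in $[0,P]$ in each regime and that the reduced objective increases up to that root and decreases after it, which gives both the maximizer and the uniqueness claim. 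So the first regime is resolved by the same calculus as the second, not by a boundary argument; to repair your plan you would need to verify that $K_{1a}$ satisfies the unsquared first-order condition and is the unique feasible root precisely when $S_{12}\ge S_{13}+S_{23}+S_{13}S_{23}$, rather than appeal to a degeneracy of the auxiliary.
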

\begin{proof}
Let
$$
K_{X,Y_\rmr|{V}}=
\begin{bmatrix} K_1 & \rho \sqrt{K_1K_2}\\
\rho \sqrt{K_1K_2}& K_2
\end{bmatrix}\preceq \begin{bmatrix} P & P\\
P& P+N_{\rmr}
\end{bmatrix}
$$
for some $0 \leq K_1\leq P$ and $0 \leq K_2\leq N_{\rmr}+P$ and $\rho\in[-1,1]$ satisfying
\begin{align}(P-K_1)(P+N_{\rmr}-K_2)\geq (P-\rho\sqrt{K_1K_2})^2.\label{eq:rhocon}\end{align}
Then, the bound becomes 
\begin{align*}
    R\leq \frac12 \log\left(\frac{P+ N_{\rmr}}{ N_{\rmr}}\right)
+\frac12\log\left(\frac{K_1+N_1}{N_1}\right)
+\frac12\log\left(1-\rho^2\right)
\end{align*}
subject to
    $$\frac12\log(P+ N_{\rmr})-\frac12\log(K_2)-\frac12\log(P+N_1)+\frac12 \log(K_1+N_1)\leq C_0.$$
The optimizer $\rho$ is non-negative; otherwise if the optimizer $\rho$ is negative, moving to $\rho=0$ strictly increases the expression. This is because if we decrease $\rho^2$ while fixing $K_1$ and $K_2$, the  constraint 
$$(P-K_1)(P+N_{\rmr}-K_2)\geq (P-\rho\sqrt{K_1K_2})^2$$
will still hold. 
The expression we wish to maximize
\begin{align*}
    (K_1+N_1) (1-\rho^2)
\end{align*}
will also increase as we decrease $\rho^2$.
Increasing $K_2$ while decreasing $\rho$ such that $\rho\sqrt{K_2}$ is preserved, shows that either $K_2=P+N_{\rmr}$ or else 
$$(P-K_1)(P+N_{\rmr}-K_2)= (P-\rho\sqrt{K_1K_2})^2.$$
Even in the case of $K_2=P+N_{\rmr}$, the  inequality
\begin{align}
    (P-K_1)(P+N_{\rmr}-K_2)&\geq (P-\rho\sqrt{K_1K_2})^2\label{eqnrho2}
\end{align}
must hold with equality. Thus, any optimal choice for $\rho$ must satisfy $\rho\geq 0$ and
\begin{align}
    \rho= \frac{P-\sqrt{(P-K_1)(P+N_{\rmr}-K_2)}}{\sqrt{K_1K_2}}.\label{eqnrhot1}
\end{align}
Note that the other solution for $\rho$ in \eqref{eqnrho2} is
\begin{equation}
    \rho= \frac{P+\sqrt{(P-K_1)(P+N_{\rmr}-K_2)}}{\sqrt{K_1K_2}}.\label{eqnrhot2}
\end{equation}
However, the optimizer $\rho$ must satisfy \eqref{eqnrhot1}. If instead \eqref{eqnrhot2} holds, reducing $\rho$ to $P/\sqrt{K_1K_2}$ would strictly increase the objective function while continuing to satisfy \eqref{eq:rhocon}.

Next, we can infer that any the maximizing auxiliary random variable $V$ must satisfy $I(V;Y_\rmr)-I(V;Y_1)=C_0$, otherwise time-sharing between $V$ and $Y_\rmr$ would strictly improve the upper bound on the rate $R$. In other words, the maximizing distribution is such that \eqref{propconstraint}
holds with equality. This yields
$$K_2=\frac{(K_1+N_1)(P+N_{\rmr})}{(P+N_1)2^{2C_0}}$$
and
\begin{align*}\rho&= \frac{P-\sqrt{(P-K_1)(P+N_{\rmr}-K_2)}}{\sqrt{K_1K_2}}
\\&
= \frac{P - \sqrt{(P-K_1)(P+N_{\rmr}-(K_1+N_1)\frac{P+N_{\rmr}}{P+N_1}2^{-2C_0})  }}{\sqrt{K_1(K_1+N_1)\frac{P+N_{\rmr}}{P+N_1}2^{-2C_0}}}.
\end{align*}
We wish to maximize $(K_1+N_1)(1-\rho^2)$. 
Letting $\zeta=\frac{P+N_{\rmr}}{P+N_1}\geq 1$, we seek to maximize, subject to $K_1 \in [0, P]$, the following expression:
\begin{align*}
   &\frac{K_1(K_1+N)\zeta2^{-2C_0} - \left(P - \sqrt{(P-K_1)((P+N_1)\zeta-(K_1+N_1)\zeta2^{-2C_0})  }\right)^2}{K_1\zeta 2^{-2C_0} }\\
   & = \frac{-P^2 - P(P+N_1)\zeta + K_1(P+N_1)\zeta + P(K_1+N_1)\zeta2^{-2C_0} + 2P\sqrt{(P-K_1)((P+N_1)\zeta-(K_1+N_1)\zeta2^{-2C_0})  }}{K_1 \zeta2^{-2C_0}}.
\end{align*}
Taking the derivative with respect to $K_1$, and simplifying it we obtain
\begin{align}&
K_1\zeta (-2^{-2C_0}N_1+2^{-2C_0}P+N_1+P)+2\zeta P(2^{-2C_0} N_1-N_1-P)\nonumber \\
&=(2^{-2C_0} N_1\zeta -N_1\zeta -P\zeta -P)\sqrt{(P-K_1)((P+N_1)\zeta -(K_1+N_1)\zeta 2^{-2C_0} )}.\label{eqnDiff}
\end{align}

If we raise both sides to power two, we get a quadratic equation. {This quadratic equation has the following two roots:
\begin{align*}
    K_{1a}&=-N_1\frac{\zeta-2^{-2C_0}\zeta}{1-2^{-2C_0}\zeta}+P\frac{1-\zeta}{1-2^{-2C_0}\zeta},\\
    K_{1b}&=P\left(1-\frac{P(N_1+P)(2^{-2C_0}-1)}{
\zeta(2^{-2C_0}-1)((N_1+P)^2-N_1^22^{-2C_0})+(1-\zeta)2^{-2C_0}P^2}\right).
\end{align*}
One can verify that 
in the case of
$$S_{13}+S_{23}+S_{13}S_{23}\leq S_{12}
$$
only the root $K_{1a}$ is in $[0,P]$ while when
$$S_{13}+S_{23}+S_{13}S_{23}> S_{12}
$$ 
only the root $K_{1b}$ is in $[0,P]$. Let $K_1^*$ denote the unique root that lies in $[0,P]$. The function
\begin{align*}
   & \frac{-P^2 - P(P+N_1)\zeta + K_1(P+N_1)\zeta + P(K_1+N_1)\zeta2^{-2C_0} + 2P\sqrt{(P-K_1)((P+N_1)\zeta-(K_1+N_1)\zeta2^{-2C_0})  }}{K_1 \zeta2^{-2C_0}}.
\end{align*}
is increasing for $K_1\in (0,K_1^*]$ and decreasing for $K_1\in [K_1^*, P]$. Therefore, it reaches its maximum at $K_1=K_1^*$.
}
\end{proof}


\subsection{Evaluation of Corollary \ref{prop2} and Proposition \ref{prop:mainprirel} for the Gaussian  relay  channel  with  orthogonal  receiver  components  and  i.i.d.output}
\label{appndxEval}
{

\begin{lemma}
\label{le:Gausuffiorth}
It suffices to consider jointly Gaussian distributions when evaluating the outer bounds in Corollary \ref{prop2} or Proposition \ref{prop:mainprirel} for the channel described by \eqref{GaussianExampleEq1}, \eqref{GaussianExampleEq2}.
\end{lemma}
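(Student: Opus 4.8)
The plan is to reuse the sub-additivity/doubling/rotation machinery of \cite{gen14} essentially verbatim, exactly as in the proofs of Lemmas \ref{le:gauoptcor} and \ref{le:gauoptcor2}; only the auxiliary-variable bookkeeping changes. I would argue the claim for the bound of Corollary \ref{prop2}; the bound of Proposition \ref{prop:mainprirel} follows from the same argument with the auxiliaries $W$ and $T$ deleted, and is in fact easier here since $Y_\rmr=Z_\rmr$ is independent of $X$.

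First, I would introduce an $\epsilon$-perturbed family of problems. Because $Y_\rmr=Z_\rmr$ is pure noise for this channel, I append a faint common observation of that noise to both receivers: replace $Y_\rmr$ by $Y_{\rmr,\epsilon}=(Y_\rmr,\ \epsilon Z_\rmr+W)$ and $Y_1$ by $Y_{1,\epsilon}=(Y_1,\ \epsilon Z_\rmr+W)$, where $W\sim\Nc(0,1)$ is fresh and independent of all other variables. For $\epsilon>0$, consider the supremum of $R_\epsilon$ subject to
\[
R_\epsilon\le I(X;Y_{1,\epsilon}|T,W,V)-I(X;Y_{\rmr,\epsilon}|T,W,V)-\epsilon\, h(Y_{1,\epsilon}|T,W,V)
\]
over $p(t,x)\,p(w|t,y_\rmr)\,p(v|t,x,y_\rmr,w)$ with the corresponding $\epsilon$-version of the Markov relations, $\E(X^2)\le P$, and the constraint $I(V,W;Y_{\rmr,\epsilon}|T)-I(V,W;Y_{1,\epsilon}|T)\le I(W;Y_{\rmr,\epsilon}|T)\le C_0$. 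The $-\epsilon\, h(\cdot)$ regularizer, together with the power constraint and the additive Gaussian structure, yields existence of a maximizer $P^\epsilon$ by the standard compactness argument in the appendix of \cite{gen14}.

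Second, I would run the doubling/rotation step on $P^\epsilon$: take two i.i.d.\ copies indexed by $+$ and $-$, form the rotated variables $X_\pm=(X_1\pm X_2)/\sqrt2$ and likewise for $Z_\rmr,Z_1,W$, set $V_+=(V_1,V_2,W_1,W_2,T_1,T_2,Y_{\rmr,\epsilon,-})$ and $V_-=(V_1,V_2,W_1,W_2,T_1,T_2,Y_{1,\epsilon,+})$, and form the \emph{dagger} variables $X_\dagger=X_Q$, $V_\dagger=(V_Q,Q)$, and $W_\dagger,T_\dagger$ analogously, from a uniform bit $Q$. The bookkeeping is identical to the displays in the proofs of Lemmas \ref{le:gauoptcor} and \ref{le:gauoptcor2}: it shows that the value $A$ of the objective at $P^\epsilon$ is at most the objective evaluated at the dagger variables minus a nonnegative slack of the form
\[
\tfrac12\Bigl(I(Y_{\rmr,\epsilon,+};Y_{\rmr,\epsilon,-})+\epsilon\, I(Y_{\rmr,\epsilon,-};Y_{1,\epsilon,+}|V_1,V_2,W_1,W_2,T_1,T_2)\Bigr),
\]
while the dagger variables satisfy the same constraint. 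Optimality of $P^\epsilon$ forces the slack to vanish, so $Y_{\rmr,\epsilon,+}$ and $Y_{\rmr,\epsilon,-}$ are independent and $Y_{\rmr,\epsilon,-}$ and $Y_{1,\epsilon,+}$ are independent given $(V_1,V_2,W_1,W_2,T_1,T_2)$. By the Skitovic-Darmois characterization (see \cite{gen14}), the former forces $P_X$ to be Gaussian and the latter forces $P_{X,Z_\rmr|T,W,V}$ to be Gaussian with conditional covariance independent of $(T,W,V)$; hence $(V,W,T,X,Y_1,Y_\rmr)$ is jointly Gaussian in the sense needed to evaluate the bound.

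Finally, I would take $\epsilon\downarrow 0$: using the power constraint and the additive-noise model, the perturbed optimal values converge to the value of the unperturbed bound, and a weak limit of the $P^\epsilon$ is a jointly Gaussian maximizer of the original bound, by the same continuity/tightness argument as in the appendix of \cite{gen14}. For Proposition \ref{prop:mainprirel}, delete $W$ and $T$, take $V_+=(V_1,V_2,Y_{\rmr,\epsilon,-})$ and $V_-=(V_1,V_2,Y_{1,\epsilon,+})$, use $I(X;Y_\rmr)=0$, and repeat; the constraint $I(V;Y_{\rmr,\epsilon})-I(V;Y_{1,\epsilon})\le C_0$ is preserved by the dagger variable. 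I expect the only genuinely delicate point to be the rotation bookkeeping — verifying that after the $45^\circ$ rotation the objective splits cleanly into (dagger objective) minus a manifestly nonnegative slack, and that this slack involves exactly the two independence statements needed to pin down both $P_X$ and $P_{X,Z_\rmr|T,W,V}$; this is precisely why the perturbing observation $\epsilon Z_\rmr+W$ must be appended to \emph{both} outputs rather than only to $Y_1$.
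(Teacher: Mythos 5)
Your overall strategy (perturb, regularize, double and rotate, extract independence statements, invoke Skitovic--Darmois/double--Markovity, then let the perturbation vanish) is the same as the paper's, but there is a genuine gap at the step where you claim Gaussianity of the input marginal. For this channel $Y_\rmr=Z_\rmr$ is \emph{independent of} $X$, so after rotation $Y_{\rmr,\eps,+}=(Z_{\rmr,+},\eps Z_{\rmr,+}+W_+)$ and $Y_{\rmr,\eps,-}=(Z_{\rmr,-},\eps Z_{\rmr,-}+W_-)$ are built from independent Gaussians and the slack term $I(Y_{\rmr,\eps,+};Y_{\rmr,\eps,-})$ is identically zero for \emph{every} input law; it carries no information about $P_X$ and cannot force $P_X$ to be Gaussian. (In Lemma~\ref{le:gauoptcor2} that term works because there $Y_\rmr=X+Z_\rmr$.) Moreover, the objective here, $I(X;Y_1|T,W,V)-I(X;Y_\rmr|T,W,V)$, has no unconditional term in $X$, so no such slack even arises from doubling it. The paper instead obtains Gaussianity of $P_X$ from a separate max-entropy argument on the \emph{constraint}: once $(X,Y_\rmr)$ given the auxiliaries is shown Gaussian with covariance independent of the conditioning, replacing $X$ by a Gaussian of the same covariance leaves the objective unchanged and can only relax $I(V;Y_\rmr)-I(V;Y_{1,\eps})\le C_0$, since it maximizes $h(Y_{1,\eps})$. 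This step is absent from your proposal. Relatedly, your closing claim that the perturbation must be appended to both outputs is backwards for this channel: appending a noisy copy of $Z_\rmr$ to $Y_\rmr=Z_\rmr$ is vacuous, and the paper perturbs only $Y_1$, to $\bigl(Y_1,\ \eps Y_\rmr+Z_2\bigr)$.

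For the Corollary~\ref{prop2} case there is a second gap: you fold $W$ and $T$ into $V_\pm$, but the bound has a two-layer constraint $I(V,W;Y_\rmr|T)-I(V,W;Y_1|T)\le I(W;Y_\rmr|T)\le C_0$ together with the Markov structure $W\to(T,Y_\rmr)\to X$, and one must exhibit identifications (the paper uses $W_+=W$, $W_-=(W,Y_{\rmr,+})$, $T_\pm=T$, $V_+=(V,Y_{1,\eps,-})$, $V_-=(V,Y_{\rmr,+})$) and verify that both layers survive the doubling; this is a nontrivial computation, not mere bookkeeping. The paper also adds a second regularizer $-\delta I(Y_\rmr;\delta Y_\rmr+Z_3|W,T)$ whose vanishing slack yields Gaussianity of $Y_\rmr$ given $(W,T)$ alone --- a fact your argument does not produce but which is exactly what the downstream proof of Theorem~\ref{comparisonBD} uses (it needs $K_{X,Y_\rmr|W,T}$ to be a diagonal Gaussian covariance). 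Without these two repairs the proposal does not establish the lemma.
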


\begin{proof}
As with the other arguments, we will first define a perturbed $Y_1$ defined according to
\begin{align} Y_{1,\eps}=\begin{bmatrix} X+Y_\rmr + Z_1 \\ \eps Y_\rmr + Z_2 \end{bmatrix}.\label{defY1eps}\end{align}
Here $Z_2 \sim \Nc(0,1)$ is mutually independent of all other random variables.

Let us consider the argument for Proposition \ref{prop:mainprirel} first. Consider the maximum of
\begin{align}
I(X;Y_{1,\eps}|V)-I(X;Y_\rmr|V)-\delta I(Y_{1,\eps};\delta Y_{1,\eps} + Z_3|V)
 \end{align}
 subject to   $p(x)p(y_{1,\eps},y_{\mathrm{r}}|x)p(v|x,y_{\mathrm{r}})p(z_3)$ and \(
 I( V;Y_{\mathrm{r}})-I( V;Y_{1,\eps})\leq C_0.\)
 Here $Z_3 \sim \Nc(0,I_2)$ is mutually independent of all other random variables.
 The idea is to show that the maximum is attained by a jointly Gaussian distribution, and then take the limit and $\eps, \delta \to 0$. 
 
 Take two independent copies of the maximizer
 $(X_1,Y_{\rmr,1},Y_{1,\eps,1},V_1,Z_{3,1})$ and $(X_2,Y_{\rmr,2},Y_{1,\eps,2},V_2,Z_{3,2})$, 
 and perform two orthogonal rotations to obtain the $+$ and $-$ random variables $(X_+,Y_{\rmr,+},Y_{1,\eps,+},Z_{3,+})$ and $(X_-,Y_{\rmr,-},Y_{1,\eps,-},Z_{3,-})$. Let $V=(V_1,V_2)$. Now observe that
  \begin{align*}
& I(X_+,X_-;Y_{1,\eps,+},Y_{1,\eps,-}|V)-I(X_+,X_-;Y_{\rmr,+},Y_{\rmr,-}|V)
\\
&\quad = I(X_+;Y_{1,\eps,+}|V,Y_{1,\eps,-})-I(X_+;Y_{\rmr,+}|V,Y_{1,\eps,-}) \\
& \qquad + I(X_-;Y_{1,\eps,-}|V,Y_{\rmr,+})-I(X_-;Y_{\rmr,-}|V,Y_{\rmr,+}) \\
& \qquad +I(X_-;Y_{1,\eps,+}|V,Y_{1,\eps,-},X_+)-I(X_-;Y_{\rmr,+}|V,Y_{1,\eps,-},X_+) \\
&\qquad
+I(X_+;Y_{1,\eps,-}|V,Y_{\rmr,+},X_-)-I(X_+;Y_{\rmr,-}|V,Y_{\rmr,+},X_-) 
\\
&\quad = I(X_+;Y_{1,\eps,+}|V,Y_{1,\eps,-})-I(X_+;Y_{\rmr,+}|V,Y_{1,\eps,-}) \\
& \qquad + I(X_-;Y_{1,\eps,-}|V,Y_{\rmr,+})-I(X_-;Y_{\rmr,-}|V,Y_{\rmr,+}) \\
& \qquad - I(X_-;Y_{\rmr,+}|V,Y_{1,\eps,-},Y_{1,\eps,+},X_+) - I(X_+;Y_{\rmr,-}|V,Y_{\rmr,\eps,+},Y_{1,\eps,-},X_-).
 \end{align*}
The last step follows from
\[
I(X_-;Y_{1,\eps,+}|Y_{\rmr,+},V,Y_{1,\eps,-},X_+)=I(X_+;Y_{1,\eps,-}|Y_{\rmr,-},V,Y_{\rmr,+},X_-)=0
\]
which follows from the fact that  $(Z_{1,+}, Z_{1,-})$ and $(Z_{2,+}, Z_{2,-})$ are pairs of independent random variables. 
Next, we also have
\begin{align}
& I(Y_{1,\eps,+},Y_{1,\eps,-};\delta Y_{1,\eps,+} + W_{3+},\delta Y_{1,\eps,-} + Z_{3,-}|V) \nonumber
\\
&\quad =  I(Y_{1,\eps,+};\delta Y_{1,\eps,+} + Z_{3,+}|V,Y_{1,\eps,-}) + I(Y_{1,\eps,-};\delta Y_{1,\eps,-} + Z_{3,-}|V)\nonumber \\
& \qquad +I(Y_{1,\eps,-};\delta Y_{1,\eps,+} + Z_{3,+}|V,\delta Y_{1,\eps,-}+Z_{3,-}) \nonumber
\\
&\quad =  I(Y_{1,\eps,+};\delta Y_{1,\eps,+} + Z_{3,+}|V,Y_{1,\eps,-}) + I(Y_{1,\eps,-};\delta Y_{1,\eps,-} + Z_{3,-}|V,Y_{\rmr,+})\label{eq:vdec}\\
& \qquad +I(Y_{1,\eps,-};\delta Y_{1,\eps,+} + Z_{3,+}|V,\delta Y_{1,\eps,-}+Z_{3,-}) + I(Y_{\rmr,+};\delta Y_{1,\eps,-} + Z_{3,-}|V). \nonumber
 \end{align}
Thus, we obtain
 \begin{align}
& I(X_+,X_-;Y_{1,\eps,+},Y_{1,\eps,-}|V)-I(X_+,X_-;Y_{\rmr,+},Y_{\rmr,-}|V)-\delta I(Y_{1,\eps,+},Y_{1,\eps,-};\delta Y_{1,\eps,+} + Z_{3,+},\delta Y_{1,\eps,-} + Z_{3,-}|V)\nonumber\\
&\quad = I(X_+;Y_{1,\eps,+}|V,Y_{1,\eps,-})-I(X_+;Y_{\rmr,+}|V,Y_{1,\eps,-}) -\delta I(Y_{1,\eps,+};\delta Y_{1,\eps,+} + Z_{3,+}|V,Y_{1,\eps,-})\nonumber\\
& \qquad + I(X_-;Y_{1,\eps,-}|V,Y_{\rmr,+})-I(X_-;Y_{\rmr,-}|V,Y_{\rmr,+}) -\delta I(Y_{1,\eps,-};\delta Y_{1,\eps,-} + Z_{3,-}|V,Y_{\rmr,+})\nonumber\\
& \qquad - I(X_-;Y_{\rmr,+}|V,Y_{1,\eps,-},Y_{1,\eps,+},X_+) - I(X_+;Y_{\rmr,-}|V,Y_{\rmr,\eps,+},Y_{1,\eps,-},X_-)\nonumber\\
& \qquad - \delta \textcolor{blue}{I(Y_{1,\eps,-};\delta Y_{1,\eps,+} + Z_{3,+}|V,\delta Y_{1,\eps,-}+Z_{3,-})} - \delta I(Y_{\rmr,+};\delta Y_{1,\eps,-} + Z_{3,-}|V)\label{eqnbluede}
 \end{align}
 Further observe that
 \begin{align*}
     &I(V;Y_{\rmr,+},Y_{\rmr,-})- I(V;Y_{1,\eps,+},Y_{1,\eps,-})\\
     & \quad = I(V;Y_{\rmr,+})+ I(V,Y_{\rmr,+};Y_{\rmr,-}) - I(V;Y_{1,\eps,-}) - I(V,Y_{1,\eps,-};Y_{1,\eps,+}) + I(Y_{1,\eps,-};Y_{1,\eps,+})\\
     & \quad = I(V,Y_{1,\eps,-};Y_{\rmr,+})+ I(V,Y_{\rmr,+};Y_{\rmr,-}) - I(V,Y_{\rmr,+};Y_{1,\eps,-}) - I(V,Y_{1,\eps,-};Y_{1,\eps,+}) + I(Y_{1,\eps,-};Y_{1,\eps,+})
 \end{align*}
 implying that
 $$ I(V,Y_{1,\eps,-};Y_{\rmr,+})+ I(V,Y_{\rmr,+};Y_{\rmr,-}) - I(V,Y_{\rmr,+};Y_{1,\eps,-}) - I(V,Y_{1,\eps,-};Y_{1,\eps,+}) \leq 2C_0. $$
 We identify $V_+ = (V,Y_{1,\eps,-}) $ and $V_-=(V,Y_{\rmr,+})$.
 Now by taking a uniform binary $Q$ independent, and taking each of $+$ distribution and $-$ distribution with probability $\frac{1}{2}$, we get a new maximizer, which satisfies the constraint and yields a value at least as large as the original maximizer. For the maximum value to be equal we must have certain terms to be zero, in particular the blue term in \eqref{eqnbluede} must be zero, \textit{i.e.,}
 \begin{align}
     I(Y_{1,\eps,-};\delta Y_{1,\eps,+} + Z_{3,+}|V,\delta Y_{1,\eps,-}+Z_{3,-})=0.\label{bterm1}
 \end{align}

From \eqref{bterm1} and $I(\delta Y_{1,\eps,-}+Z_{3,-};\delta Y_{1,\eps,+} + Z_{3,+}|V,Y_{1,\eps,-})=0$, we obtain a double-Markovity condition as defined in Lemma \ref{le:dbma}. We obtain that conditioned on $V$, we have $Y_{1,\eps,-}, Y_{1,\eps,+}$ are independent. This in particular implies that $X_+,Y_{\rmr_+}$ and $X_-,Y_{\rmr_-}$ are independent conditioned on $V$, yielding the desired Gaussianity of $(X,Y_\rmr)$ given $V$. Next we claim that taking $X$ to be Gaussian minimizes $I(V;Y_\rmr) - I(V;Y_1)$. To see this, observe the constraint $h(Y_\rmr)-h(Y_{1,\eps}) - h(Y_\rmr|V)+h(Y_{1,\eps}|V) = I(V;Y_\rmr) - I(V;Y_{1,\eps}) \leq C_0$. Note that, we have established that conditioned on $V$, $(X,Y_\rmr)$ are Gaussians. We are given that $Y_\rmr$ is Gaussian. Therefore taking $X$ to be Gaussian maximizes the term $ h(Y_{1,\eps})$ and thus keeps the constraint satisfied.

Now let us consider the argument for Corollary \ref{prop2}. Using the definition of 
$Y_{1,\eps}$ and $Z_3$ as before (see \eqref{defY1eps}), consider a maximizer of
\begin{align}
    R &\leq I(X;Y_{1,\eps}|V,W,T)-I(X;Y_\rmr|V,W,T)-\delta I(Y_{1,\eps};\delta Y_{1,\eps} + Z_3|V,W,T) -  { \delta I(Y_{\rmr};\delta Y_{\rmr} + Z_3|W,T)}\label{eqntg1}
\end{align}
over $p(t,x)p(y_{1,\eps}|x,y_{\mathrm{r}})p(y_{\mathrm{r}})p(w|t,y_{\mathrm{r}})p(v|t,x,y_{\mathrm{r}},w)p(z_3)$ subject to
\begin{align*}
  I(V,W;Y_\rmr|T)-I(V,W;Y_{1,\eps}|T)&\leq I(W;Y_{\mathrm{r}}|T) \leq  C_0.
\end{align*}

As before, 
take two copies of the maximizer and perform two orthogonal rotations to obtain the $+$ and $-$ random variables. Set $V=(V_1,V_2), W=(W_1,W_2), T=(T_1,T_2)$. We have
\begin{align}
    &I(Y_{\rmr,+},Y_{\rmr,-};\delta Y_{\rmr,+} + Z_{3,+},\delta Y_{\rmr,-} + Z_{3,-}|W,T) \nonumber \\
    &\quad = I(Y_{\rmr,+};\delta Y_{\rmr,+} + Z_{3,+}|W,T) + I(Y_{\rmr,-};\delta Y_{\rmr,-} + Z_{3,-}|W,T,\delta Y_{\rmr,+} + Z_{3,+}) \nonumber \\
    & \quad = I(Y_{\rmr,+};\delta Y_{\rmr,+} + Z_{3,+}|W,T) + I(Y_{\rmr,-};\delta Y_{\rmr,-} + Z_{3,-}|W,T, Y_{\rmr,+}) \nonumber \\
    & \qquad + I(Y_{\rmr,+};\delta Y_{\rmr,-} + Z_{3,-}|\delta Y_{\rmr,+} + Z_{3,+},W,T). \label{eq:wdec}
\end{align}
The exact manipulation for the first constraint above that we did in \eqref{eqnbluede} goes through with  $(V,W,T)$ replacing $V$, and further using \eqref{eq:wdec} we obtain the following
\begin{align}
 & I(X_+,X_-;Y_{1,\eps,+},Y_{1,\eps,-}|V,W,T)-I(X_+,X_-;Y_{\rmr,+},Y_{\rmr,-}|V,W,T) \nonumber\\
 & \quad -\delta I(Y_{1,\eps,+},Y_{1,\eps,-};\delta Y_{1,\eps,+} + Z_{3,+},\delta Y_{1,\eps,-} + Z_{3,-}|V,W,T) - {\delta I(Y_{\rmr,+},Y_{\rmr,-};\delta Y_{\rmr,+} + Z_{3,+},\delta Y_{\rmr,-} + Z_{3,-}|W,T)}\nonumber\\
&\quad = I(X_+;Y_{1,\eps,+}|V,W,T,Y_{1,\eps,-})-I(X_+;Y_{\rmr,+}|V,W,T,Y_{1,\eps,-}) -\delta I(Y_{1,\eps,+};\delta Y_{1,\eps,+} + Z_{3,+}|V,W,T,Y_{1,\eps,-})\nonumber\\
& \qquad - {\delta I(Y_{\rmr,+};\delta Y_{\rmr,+} + Z_{3,+}|W,T)}\nonumber \\
& \qquad + I(X_-;Y_{1,\eps,-}|V,W,T,Y_{\rmr,+})-I(X_-;Y_{\rmr,-}|V,W,T,Y_{\rmr,+}) -\delta I(Y_{1,\eps,-};\delta Y_{1,\eps,-} + Z_{3,-}|V,W,T,Y_{\rmr,+})\nonumber\\
& \qquad - {\delta I(Y_{\rmr,-};\delta Y_{\rmr,-} + Z_{3,-}|W,T, Y_{\rmr,+})}\nonumber \\
& \qquad - I(X_-;Y_{\rmr,+}|V,W,T,Y_{1,\eps,-},Y_{1,\eps,+},X_+) - I(X_+;Y_{\rmr,-}|V,W,T,Y_{\rmr,\eps,+},Y_{1,\eps,-},X_-)\nonumber\\
& \qquad - \delta \textcolor{blue}{I(Y_{1,\eps,-};\delta Y_{1,\eps,+} + Z_{3,+}|V,W,T,\delta Y_{1,\eps,-}+Z_{3,-})} - \delta I(Y_{\rmr,+};\delta Y_{1,\eps,-} + Z_{3,-}|V,W,T)\label{eqnbluede2}  \\
&\qquad - {\delta  I(Y_{\rmr,+};\delta Y_{\rmr,-} + Z_{3,-}|\delta Y_{\rmr,+} + Z_{3,+},W,T).  } \nonumber
\end{align}

Let us identify $V_+ = (V,Y_{1,\eps,-}) $, $V_-=(V,Y_{\rmr,+})$, $W_+=W$, $W_-=(W,Y_{\rmr,+})$, and $T_+=T_-=T$. 
To verify the constraint, we need to show that
\begin{align*}
&I(V_+,W_+;Y_{\rmr,+}|T_+) - I(V_+,W_+;Y_{1,\eps,+}|T_+) + I(V_-,W_-;Y_{\rmr,-}|T_-) - I(V_-,W_-;Y_{1,\eps,-}|T_-) \\
&\leq I(W_+;Y_{\rmr,+}|T_+) + I(W_-;Y_{\rmr,-}|T_-).  
\end{align*}
Observe that
\begin{align*}
&   I(V_+,W_+;Y_{\rmr,+}|T_+) - I(V_+,W_+;Y_{1,\eps,+}|T_+) + I(V_-,W_-;Y_{\rmr,-}|T_-) - I(V_-,W_-;Y_{1,\eps,-}|T_-)\\
& = I(V,W,Y_{1,\eps,-};Y_{\rmr,+}|T) - I(V,W,Y_{1,\eps,-};Y_{1,\eps,+}|T) + I(V,W,Y_{\rmr,+};Y_{\rmr,-}|T) - I(V,W,Y_{\rmr,+};Y_{1,\eps,-}|T) \\
&= I(V,W;Y_{\rmr,+},Y_{\rmr,-}|T) - I(V,W;Y_{1,\eps,+},Y_{1,\eps,-}|T) - I(Y_{1,\eps,-};Y_{1,\eps,+}|T)\\
&\stackrel{(a)}{\leq} I(W;Y_{\rmr,+},Y_{\rmr,-}|T)- I(Y_{1,\eps,-};Y_{1,\eps,+}|T)\\
& \leq I(W;Y_{\rmr,+}|T) + I(W,Y_{\rmr,+};Y_{\rmr,-}|T)\\
& = I(W_+;Y_{\rmr,+}|T_+) + I(W_-;Y_{\rmr,-}|T_-),
\end{align*}
where (a) follows from the fact that random variables $(W_1, T_1)$ and $(W_2,T_2)$ of the two copies of the maximizer satisfy the constraints
\begin{align*}
  I(V_1,W_1;Y_{\rmr,1}|T_1)-I(V_1,W_1;Y_{1,\eps,1}|T_1)&\leq I(W_1;Y_{\mathrm{r},1}|T_1),\\
  I(V_2,W_2;Y_{\rmr,2}|T_2)-I(V_2,W_2;Y_{1,\eps,2}|T_2)&\leq I(W_2;Y_{\mathrm{r},2}|T_2).
\end{align*}
Thus the first part of the constraint is satisfied for $+$ and $-$ variables.
Finally observe that
\begin{align*}
   &I(W_+;Y_{\rmr,+}|T_+) + I(W_-;Y_{\rmr,-}|T_-)\\
   & \quad = I(W;Y_{\rmr,+}|T) + I(W,Y_{\rmr,+};Y_{\rmr,-}|T)\\
   & \quad = I(W;Y_{\rmr,+},Y_{\rmr,-}|T)
   \\&\quad\leq 2 C_0.
\end{align*}
This shows that the constraint is satisfied.
Now we denote $T=(T,Q)$ where $Q$ is uniform binary and conditioned on $Q=0$ we use the $+$ distribution and conditioned on $Q=-$ we use the - distribution. Thus, we get a new maximizer, which satisfies the constraint and yields a value at least as large as the original maximizer.

For the maximizing distribution \eqref{eqntg1} is tight and we must have the following
\begin{align*}
  I(Y_{\rmr,+};\delta Y_{\rmr,-} + Z_{3,-}|\delta Y_{\rmr,+} + Z_{3,+},W,T)&=0, \\
     I(Y_{1,\eps,-};\delta Y_{1,\eps,+} + Z_{3,+}|V,W,T,\delta Y_{1,\eps,-}+Z_{3,-})&=0.  
\end{align*}
Now using double-Markovity, we conclude that conditioned on $V,W,T$, the random variables  $X,Y_\rmr$ are jointly Gaussian for the maximizer and the covariance matrix of $X,Y_\rmr$ does not depend on the values of $V,W,T$. Moreover, and conditioned on $W,T$, the random variable $Y_\rmr$ is Gaussian for the maximizer. As observed earlier taking $X$ to be Gaussian conditioned on $T$ maximizes $h(Y_{1,\eps}|T)$ and hence keeps the $C_0$ constraint satisfied. This implies that we have a Gaussian maximizer.
\end{proof}}

\end{document}